\definecolor{processblue}{cmyk}{0.96,0,0,0}
\author{
  Alexandr Andoni\thanks{Research supported in part by Simons Foundation (\#491119 to
Alexandr Andoni), NSF (CCF-1617955, CCF-1740833), and Google
Research Award.}\\
  \texttt{andoni@cs.columbia.edu}\\
  Columbia University
  \and
  Clifford Stein\thanks{Research supported in part by NSF grants CCF-1714818 and CCF-1822809.} \\
  \texttt{cliff@cs.columbia.edu}\\
  Columbia University
  \and
  Peilin Zhong\thanks{Research supported in part by NSF grants
    CCF-1740833, CCF-1703925, CCF-1714818 and CCF-1822809, as well as
    the Google PhD Fellowship.}\\
  \texttt{peilin.zhong@columbia.edu}\\
  Columbia University
}
\date{}
\title{Parallel Approximate Undirected Shortest Paths\\ Via Low Hop Emulators}
\newtheorem{theorem}{Theorem}[section]
\newtheorem{question}[theorem]{Question}
\newtheorem{lemma}[theorem]{Lemma}
\newtheorem{definition}[theorem]{Definition}
\newtheorem{corollary}[theorem]{Corollary}
\newtheorem{observation}[theorem]{Observation}
\newtheorem{fact}[theorem]{Fact}
\newtheorem{claim}[theorem]{Claim}
\newcommand{\wh}{\widehat}
\newcommand{\wt}{\widetilde}
\newcommand{\eps}{\epsilon}
\renewcommand{\varepsilon}{\epsilon}
\renewcommand{\tilde}{\wt}
\renewcommand{\hat}{\wh}
\DeclareMathOperator*{\E}{{\bf {E}}}
\DeclareMathOperator{\OPT}{OPT}
\DeclareMathOperator{\poly}{poly}
\DeclareMathOperator{\nnz}{nnz}
\DeclareMathOperator{\p}{par}
\DeclareMathOperator{\rt}{root}
\DeclareMathOperator{\map}{map}
\DeclareMathOperator{\dist}{dist}
\DeclareMathOperator{\EMD}{EMD}
\DeclareMathOperator{\B}{Ball}
\DeclareMathOperator{\diam}{diam}
\DeclareMathOperator{\lev}{level}
\DeclareMathOperator{\sgn}{sgn}
\DeclareMathOperator{\one}{ \mathbf{1}}
\newcommand*{\RN}[1]{\expandafter\@slowromancap\romannumeral #1@}
\newcommand{\define}[4][ignore]{%
  \ifstrequal{#1}{ignore}{}{
  \@namedef{thmtitle@#2}{#1}}%
  \@namedef{thm@#2}{#4}%
  \@namedef{thmtypen@#2}{lemma}%
  \newtheorem{thmtype@#2}[theorem]{#3}%
  \newtheorem*{thmtypealt@#2}{#3~\ref{#2}}%
}
\newcommand{\state}[1]{%
  \@namedef{curthm}{#1}
  \@ifundefined{thmtitle@#1}{
  \begin{thmtype@#1}
    }{
  \begin{thmtype@#1}[\@nameuse{thmtitle@#1}]
  }
    \label{#1}
    \@nameuse{thm@#1}
  \end{thmtype@#1}
  \@ifundefined{thmdone@#1}{
  \@namedef{thmdone@#1}{stated}%
  }{}
}
\newcommand{\restate}[1]{%
  \@namedef{curthm}{#1}
  \@ifundefined{thmtitle@#1}{
    \begin{thmtypealt@#1}
    }{
  \begin{thmtypealt@#1}[\@nameuse{thmtitle@#1}]
  }
    \@nameuse{thm@#1}
  \end{thmtypealt@#1}
  \@ifundefined{thmdone@#1}{
  \@namedef{thmdone@#1}{stated}%
  }{}
}
\newcommand{\thmlabel}[1]{
  \@ifundefined{thmdone@\@nameuse{curthm}}{\label{#1}
    }{\tag*{\eqref{#1}}}
}
\begin{document}

\begin{titlepage}
  \maketitle
%
\begin{abstract}
	We present a $(1+\varepsilon)$-approximate parallel algorithm for
computing shortest paths in undirected graphs, achieving $\poly(\log
n)$ depth and $m\poly(\log n)$ work for $n$-nodes $m$-edges graphs.
Although sequential algorithms with (nearly) optimal running time have
been known for several decades, near-optimal parallel algorithms have
turned out to be a much tougher challenge.  For
$(1+\varepsilon)$-approximation, all prior algorithms with $\poly(\log
n)$ depth perform at least $\Omega(mn^{c})$ work for some constant
$c>0$.  Improving this long-standing upper bound obtained by
Cohen (STOC'94) has been open for $25$ years.

We develop several new tools of independent interest. One of them is a
new notion beyond hopsets --- low hop emulator --- a $\poly(\log n)$-approximate emulator graph in which every shortest path has at most $O(\log\log n)$ hops (edges).
Direct applications of the low hop emulators are parallel
algorithms for $\poly(\log n)$-approximate single source shortest path
(SSSP), Bourgain's embedding, metric tree embedding, and low diameter
decomposition, all with $\poly(\log n)$ depth and $m\poly(\log n)$ work.

To boost the approximation ratio to $(1+\varepsilon)$, we introduce
compressible preconditioners and apply it inside Sherman's framework
(SODA'17) to solve the more general problem of uncapacitated minimum
cost flow (a.k.a., transshipment problem).  Our algorithm computes a
$(1+\varepsilon)$-approximate uncapacitated minimum cost flow in
$\poly(\log n)$ depth using $m\poly(\log n)$ work.  As a consequence,
it also improves the state-of-the-art sequential running time from
$m\cdot 2^{O(\sqrt{\log n})}$ to $m\poly(\log n)$.

\end{abstract}

  \thispagestyle{empty}
\end{titlepage}











\section{Introduction}
The problem of finding the shortest path between two vertices in an
undirected weighted graph is one of the most fundamental problems in computer science. 
Standard sequential algorithms with (nearly) optimal running time have been known for several decades~\cite{ft87,t99,hktz15}.
In contrast, parallelizing these algorithms has been a challenge, and existing parallel algorithms are far from attaining the
efficiency we would like.
Two standard measures of the efficiency of a parallel algorithm in the
standard PRAM model of parallelism are work (total time over the processors)\footnote{More precisely, 
	the work is the running time required when only one processor can be
	used, i.e., the sequential running time when the algorithm is
	implemented in the vanilla RAM model.
}
and depth (parallel time).
The exact shortest path can be computed by the standard path-doubling
(Floyd-Warshall) algorithm in $\poly(\log n)$ parallel time using
$O(n^3)$ total work, for an $n$-node $m$-edge graph.
This result has been improved in a long line of work~\cite{s97,ks97,c97,btz98,ss99,bgst16,sd18}.
Nevertheless, the state-of-the-art algorithms have either
$\Omega(n^{2.1})$ work or $\Omega(n^{0.1})$ depth.



 In order to achieve algorithms with better
bounds on work and depth, researchers have turned to approximation
algorithms.  Building on the idea of hopsets \cite{c94}, a series
of papers, including~\cite{ks92,c94,c00,mpvx15,en16} give $(1 +
\varepsilon)$-approximation algorithms.
Yet again, every prior algorithm with $m\poly(\log n)$ work has at
least $\Omega(n^{\rho})$ depth, and the ones with $\poly(\log n)$ depth do $\Omega(mn^{\rho})$ work, where $\rho>0$ is an arbitrary small constant. 
In particular, none of the prior algorithms achieve $\poly(\log n)$ depth and $m\poly(\log n)$ work simultaneously.
In fact, there was no known parallel algorithm with $\poly(\log
n)$ parallel time and $m\poly(\log n)$ work that approximates the
shortest path even up to a $\poly(\log n)$ factor. Hence, after~\cite{c94},
a major question  which remained open for more than 25 years  is:
\begin{question}
	Is there a parallel algorithm which computes an approximate
        shortest path in $\poly(\log n)$ depth and $m\poly(\log n)$ work?
\end{question}
In this paper, we answer this question positively by developing a
parallel $(1+\varepsilon)$-approximate shortest path algorithm with
$\poly(\log n)$ depth and $m\poly(\log n)$ work. 

	\subsection{Our Results and Comparison to Prior Approaches}

        To obtain our main result, we develop new tools, which we
        present next and which may be of independent interest. It is most natural to
        present these results in the context of two related approaches
        to parallel shortest path algorithms --- hopsets and
        continuous optimization techniques.

We note that some of our results have new consequences beyond parallel algorithms,
including faster sequential algorithms and constructions where none
were previously known. 
Our input is a connected $n$-vertex $m$-edge undirected weighted graph $G=(V,E,w)$ with weights $w:E\rightarrow \mathbb{Z}_{\geq 0}$ and $\max_{e\in E}w(e)\leq \poly(n)$.
The parallel algorithms from this paper are in the EREW PRAM model.


\paragraph{Hopsets.}

 One iteration of Bellman-Ford can be implemented efficiently in parallel,
	and therefore, for graphs in which an approximate shortest path has a small number of hops (edges) we already have an efficient algorithm. 
	Motivated by this insight, researchers have proposed adding edges to a graph in order to make an approximate shortest path with a small number of edges between every pair of vertices. 
Formally, for a given graph $G=(V,E,w)$ with weights
$w:E\to \mathbb{R}_{\geq 0}$, a hopset is an edge-set $H$ with weights
$w_H:H\rightarrow \mathbb{R}_{\geq 0}$. Let $\wt{G}$ be the union
graph $(V,E\cup H,w\cup w_H)$.  We define $\dist_{\wt{G}}^{(h)}(u,v)$,
the $h$-hop distance in $\wt{G}$, to be the length of the shortest
path between $u,v\in V$ which uses at most $h$ hops (edges) in
$\wt{G}$.  Then $H$ is an {\em $(h,\varepsilon)$-hopset} of $G$ if
$\forall u,v\in V$, $\dist^{(h)}_{\wt{G}}(u,v)$ is always a
$(1+\varepsilon)$-approximation to the shortest distance between $u$
and $v$ in the graph $G$. There is a three-way trade-off between $h$,
$\varepsilon$, and $|H|$, which was studied
in~\cite{c94,tz06,mpvx15,en16,hp19}, leading to some of the aforementioned
 algorithms.

Surprisingly, a hard barrier arose: \cite{abp18} showed that the size
of $H$ must be $\Omega(n^{1+\rho})$ for any $h\leq \poly(\log n)$,
$\varepsilon<\tfrac{1}{\log n}$ and some constant $\rho > 0$.  Thus, it is
impossible to directly apply hopsets to compute a 
$(1+\varepsilon)$-approximate shortest path in $\poly(\tfrac{\log
	n}{\eps})$ parallel time using 
 $m\poly(\frac{\log n}{\varepsilon})$ work
for sparse
graphs $G$, when $|E|=O(n)$.

 \paragraph{Low Hop Emulator.}
To bypass this hardness, we introduce a new notion --- {\em low
	hop emulator} --- which has a  weaker approximation guarantee than hopsets, but has stronger guarantees in other ways.
A low hop emulator $G'=(V,E',w')$ of $G$ is a sparse graph with $n\poly(\log n)$ edges satisfying two properties.
First, the distance between every pair of vertices in $G'$ is a $\poly(\log n)$ approximation to the distance in $G$.
The second property is that $G'$ has a low hop diameter, i.e., a shortest path between every pair of two vertices in $G'$ only contains $O(\log\log n)$ number of hops (edges). 


	We give an efficient parallel sparse low hop emulator construction
	algorithm.  To the best of our knowledge, it was not even clear
	whether sparse low hop emulators exist, and thus no previous
	algorithm was known even in the sequential setting.
	
	\begin{theorem}[Low hop emulator, restatement of Theorem~\ref{thm:parallel_low_hop_emulator}]
		For any $k\geq 1$, any graph $G$ admits a low hop emulator
		$G'$, with expected size of $\wt{O}\left(n^{1+\frac{1}{k}}\right)$,\footnote{$\wt{O}(f(n))$ denotes $f(n)\cdot \poly\log(f(n))$.} satisfying:
		\begin{align*}
		\forall u,v\in V, \dist_{G}(u,v)\leq \dist_{G'}(u,v)\leq \poly(k)\cdot \dist_G(u,v),
		\end{align*}
		and with the hop diameter at most $O(\log k)$.
		Furthermore, there is a PRAM algorithm computing the emulator $G'$ in
		$\poly\log(n)$ parallel time using $\wt{O}(m+n^{1+\frac{2}{k}})$ expected work.
	\end{theorem}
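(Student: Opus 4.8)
The plan is to build $G'$ by a Thorup--Zwick--style hierarchical sampling, to read the sparsity off the standard bunch-size bound, to get the distance stretch from the oracle-routing argument, and to obtain both the small hop diameter and the near-linear parallel work by growing the hierarchy iteratively, so that each stage already enjoys the low-hop structure produced by the previous ones. Concretely, set $p=n^{-1/k}$ and sample a hierarchy $V=A_0\supseteq A_1\supseteq\cdots\supseteq A_k=\emptyset$, where each vertex of $A_i$ is kept in $A_{i+1}$ independently with probability $p$, so $\E|A_i|\approx n^{1-i/k}$. For each $v$ let $p_i(v)$ be the closest vertex of $A_i$ and let the bunch be $B(v)=\bigcup_i\{w\in A_i:\dist_G(v,w)<\dist_G(v,A_{i+1})\}$, whose expected size is $\wt O(1/p)=\wt O(n^{1/k})$ by the usual geometric argument. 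The emulator $G'=(V,E',w')$ has an edge $(v,w)$ for every $w\in B(v)$ (this already includes all pivots $p_i(v)$), with $w'(v,w)$ an $O(1)$-approximation of $\dist_G(v,w)$ that never underestimates it. Then $\sum_v|B(v)|=\wt O(n^{1+1/k})$ in expectation, which is the size bound; and since every emulator edge has weight at least the corresponding $G$-distance, any $G'$-path projects to a $G$-walk that is no longer, giving the lower bound $\dist_{G'}(u,v)\ge\dist_G(u,v)$ for free.

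\textbf{Stretch and hop diameter.} For the upper bound on $\dist_{G'}$ one runs the oracle-style argument: for any $u,v$ there is a level $i$ with $p_i(u)\in B(v)$ or $p_i(v)\in B(u)$, and the least such $i$ satisfies $\dist_G(u,p_i(u))=O(i)\cdot\dist_G(u,v)$, so the two-hop path $u\to p_i(u)\to v$ has length $O(i)\cdot\dist_G(u,v)$ in $G'$. Since exact bunches cannot be computed cheaply in parallel, the hierarchy is instead grown in $O(\log k)$ stages, where stage $j$ uses a bounded, $\poly\log n$-round, approximate relaxation on the partial emulator already built --- which suffices precisely because that partial emulator is guaranteed to have small hop diameter. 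The routing then proceeds through $O(\log k)$ such stages, each contributing one emulator hop and only a constant multiplicative loss in the distance estimate; the doubly-geometric shrinking of the $A_i$'s is what lets a constant number of levels collapse into one stage, so the hop diameter is $O(\log k)$ and the accumulated stretch is $2^{O(\log k)}=\poly(k)$. (The top few levels, where the radius-growing estimate degrades, are handled directly by connecting the tiny set $A_{k-1}$ densely, at the cost of $O(1)$ extra hops.)

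\textbf{Parallel cost.} Apart from the sampling, the only real computation is, per stage, gathering each vertex's $\wt O(n^{1/k})$ nearest sampled candidates together with $O(1)$-approximate distances; this is done by $\poly\log n$ rounds of a Bellman--Ford--type relaxation on the current partial emulator, restricted to the relevant candidates. Locating those candidates among the non-sampled vertices costs an extra $n^{1/k}$ factor, so one stage takes $\wt O(n^{1+2/k})$ work and $\poly\log n$ depth; summing over the $O(\log k)$ stages, plus $\wt O(m)$ for the initial relaxation over $G$, yields $\wt O(m+n^{1+2/k})$ expected work and $\poly\log n$ depth. All the high-probability events (bunch sizes, each ball meeting the next level's sample) are handled by a union bound, re-running the sampling if any fails, which affects only the expectation.

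\textbf{Main obstacle.} I expect the crux to be obtaining the stretch and the hop bound simultaneously: forcing $O(\log k)$ hops means the routing must jump across many hierarchy levels at once, which in a naive analysis costs a factor proportional to the number of levels skipped; showing that each stage loses only a constant factor --- so that $O(\log k)$ stages accumulate only $\poly(k)$ stretch --- is exactly where the geometric shrinking of the $A_i$'s must be combined with a careful meet-in-the-middle routing argument. A secondary difficulty is the work bound: carrying out the whole construction with no exact or global shortest-path computation, which is what forces the approximate hop-bounded searches and introduces the $n^{2/k}$ (rather than $n^{1/k}$) term.
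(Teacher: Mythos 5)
The central gap is the hop diameter claim. The paper defines the hop diameter of $G'$ as the smallest $h$ such that $\dist_{G'}(u,v)=\dist^{(h)}_{G'}(u,v)$ for \emph{all} pairs: the genuine shortest path in $G'$ must itself use $O(\log k)$ edges, so that running $O(\log k)$ Bellman--Ford iterations on $G'$ returns \emph{exact} $G'$-distances (which then satisfy the triangle inequality --- the property all the downstream applications rely on, and the reason this notion is introduced instead of a hopset). Your Thorup--Zwick routing argument establishes something strictly weaker: that there \emph{exists} a $2$-hop (or $O(\log k)$-hop) path of length $\poly(k)\cdot\dist_G(u,v)$. In your bunch graph, $\dist_{G'}(u,v)$ can be far smaller than the length of that pivot path and be attained only by a walk through many short bunch edges, so the $O(\log k)$-hop distance need not equal $\dist_{G'}(u,v)$ at all. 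The paper forces the true shortest path to be short by multiplying every level-$i$ edge weight by a penalty factor $27^{t-i}$ and then proving, via the ``strong subemulator'' inequality $\dist_H(q(u),q(v))\le\dist_G(u,q(u))+\dist_G(v,q(v))+22\dist_G(u,v)$, that any shortest path in $G'$ must monotonically climb and then descend the level hierarchy with no three consecutive vertices on the same level (Claims~\ref{cla:level_cannot_continue_too_long} and~\ref{cla:level_should_be_a_peek}). Your proposal contains no analogous mechanism, and without one the hop-diameter bound --- the heart of the theorem --- is not proved.

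A secondary problem is an internal inconsistency in your hierarchy. The bunch-size bound $\wt{O}(n^{1/k})$ per vertex requires $k$ levels each thinned by the single-geometric factor $n^{-1/k}$, while your hop and stretch argument requires $O(\log k)$ stages with ``doubly-geometric shrinking''; a single hierarchy cannot do both, and the claim that ``a constant number of levels collapse into one stage'' while losing only a constant factor in stretch is exactly the step that is asserted rather than argued. The paper sidesteps bunches entirely: it recursively builds $t=O(\log k)$ subemulators $H_{i+1}$ on roughly $|V_i|/b_i$ vertices with $b_{i+1}=b_i^{1.25}$, each an $8$-approximation of the previous one, and connects each vertex only to its leader one level up and to its $b_i$ nearest neighbors within its own level. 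The constant per-level loss then comes from the subemulator guarantee rather than from a pivot argument, and the $\wt{O}(n^{1+1/k})$ size and $\wt{O}(m+n^{1+2/k})$ work follow because $|V_i|b_i$ and $|V_i|b_i^2$ decay geometrically in $i$. Your work accounting happens to land on the right totals, but the construction it is attached to does not yield the stated guarantees.
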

	
	Notice that, setting $k=\log n$, we can compute a low hop emulator with expected size $\wt{O}(n)$ and hop diameter $O(\log\log n)$ in $\poly\log(n)$ parallel time using $\wt{O}(m)$ expected work.
	The approximation ratio in this case is $\poly\log(n)$.

	We now highlight two main features that make a low hop emulator stronger than hopsets. 
	Firstly, the low hop emulator can be computed in $\poly(\log n)$ parallel time using $m\poly(\log n)$ work while the same guarantees cannot be simultaneously achieved by hopsets.
	Secondly, the $O(\log\log n)$-hop distances in low hop emulator $G'$
	satisfy the {\em triangle inequality} while the $h$-hop distances in the union graph $\wt{G}$ of original graph $G$ and the $(h,\varepsilon)$-hopset do not.


An immediate application of the first feature is a $\poly(\log n)$-approximate single source shortest path (SSSP) algorithm in $\poly(\log n)$ parallel time using $m\poly(\log n)$ work 
 (Corollary~\ref{cor:parallel_sssp}).
{We remark that when we use the hop-distance to approximate the exact distance in $G$, we only need to use the edges from the low hop emulator while we also need to use original edges if we use hopsets.}

The second feature is crucial for designing 
 parallel algorithms for Bourgain's embedding~\cite{b85}  (Corollary~\ref{cor:parallel_embed_into_l1}), metric tree embedding~\cite{frt04,fl18}  (Corollary~\ref{cor:parallel_metric_tree}) and low diameter decomposition~\cite{mpx13} (Corollary~\ref{cor:parallel_ldd}), using $\poly(\log n)$ depth and $m\poly(\log n)$ work. 
\cite{fl18} introduced a notion similar to low hop emulators, and 
it also has the second feature mentioned above.
In contrast, their emulator graph is a complete graph, and the construction is based on $\left(\poly(\log n),\tfrac{1}{\poly(\log n)}\right)$-hopsets.

\paragraph{Continuous optimization.} To boost the
approximation ratio of shortest path from $\poly(\log n)$ to
$(1+\varepsilon)$, we employ continuous optimization techniques.
Recently, continuous optimization techniques have been successfully
applied to design new efficient algorithms for many classic
combinatorial graph problems,
e.g.,~\cite{ds08,ckmst11,s13,mad13,klos14,ls14,cmsv17,s17,s17b,knp19}.
Most of them can be seen as ``boosting'' a coarse approximation
algorithm to a more accurate approximation algorithm.  Oftentimes, to
fit into a general optimization framework, the ``coarse''
approximation must be for a more general problem --- in our case, for
the uncapacitated minimum cost flow, also known as the transshipment
problem.  Following this approach, the work of \cite{bkkl17} develops
near-optimal uncapacitated min-cost flow algorithms in the distributed and streaming
settings based on the gradient descent algorithm.  Their algorithm can
be seen as boosting a $\poly(\log n)$ approximate solver for the
uncapacitated min-cost flow problem to an $(1+\varepsilon)$
approximate solver, but with one crucial difference: it requires a
$\poly(\log n)$ approximate solver for the {\em dual problem}.  Hence
it is not clear how to leverage their algorithm for our goal as the
aforementioned techniques do not seem applicable to the dual of
uncapacitated min-cost flow.

We develop an algorithm for the uncapacitated min-cost flow problem by opening up Sherman's framework~\cite{s17} and combining it with new techniques. 
There is a fundamental challenge in adopting Sherman's framework,
beyond implementing it in the parallel setting.
Sherman's original algorithm solves the uncapacitated minimum cost
flow problem in $m\cdot 2^{O(\sqrt{\log n})}$ sequential time. Hence,
if we obtain
a parallel uncapacitated min-cost flow algorithm with $m\poly(\log n)$ total work,
we cannot avoid improving this best-known running time of $m\cdot 2^{O(\sqrt{\log n})}$ to $m\poly(\log n)$. 

\paragraph{Uncapacitated minimum cost flow and approximate $s-t$ shortest path.}
To handle  the challenge mentioned above, we develop a novel {\em compressible} preconditioner.
By using our compressible preconditioner inside Sherman's framework,
we improve the running time of $(1+\varepsilon)$-approximate uncapacitated
min-cost flow from $m\cdot 2^{O(\sqrt{\log n})}$ to $m\poly(\log n)$.
Furthermore, we show that such compressible preconditioner can be
computed in $\poly(\log n)$ parallel time using $m\poly(\log n)$ work.
This preconditioner relies crucially on our low hop emulator ideas.


 Formally, in the uncapacitated minimum cost flow problem, given a demand vector $b\in\mathbb{R}^n$ satisfying $\sum_{v\in V} b_v = 0$,
the goal is to determine the flow on each edge such that the demand of
each vertex is satisfied and the cost of the flow is minimized. 
\begin{theorem}[Parallel uncapacitated minimum cost flow, restatement of Theorem~\ref{thm:parallel_min_cost_flow}]
	Given a graph $G=(V,E,w)$, a demand vector $b\in\mathbb{R}^{n}$ and an error parameter $\varepsilon\in(0,0.5)$, there is a PRAM algorithm which outputs an $(1+\varepsilon)$-approximate solution to the uncapacitated minimum cost flow problem with probability at least $0.99$ in $\varepsilon^{-2}\poly\log(n)$ parallel time using $\wt{O}(\varepsilon^{-2}m)$ expected work.
\end{theorem}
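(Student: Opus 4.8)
The plan is to cast uncapacitated minimum cost flow as an $\ell_1$-minimization problem and solve it with a parallel implementation of Sherman's preconditioned first-order method, using a \emph{compressible} preconditioner built from a low hop emulator. Writing $B$ for the signed vertex--edge incidence matrix of $G$ and $W$ for the diagonal matrix of edge weights, the transshipment problem with demand $b$ is $\min\{\|Wf\|_1 : Bf=b\}$, which under the substitution $x=Wf$ becomes $\min\{\|x\|_1 : Ax=b\}$ with $A=BW^{-1}$. Sherman's framework reduces computing a $(1+\varepsilon)$-approximate solution to $\wt{O}(\kappa^2\varepsilon^{-2})$ iterations of a multiplicative-weights/gradient-type update, given a \emph{preconditioner} $P$ of condition number $\kappa$, i.e.\ a linear map such that (forward) every row of $P$, viewed as a vertex potential, is $1$-Lipschitz with respect to $\dist_G$, and (backward) $\|Pb\|_\infty \ge \kappa^{-1}\,\OPT_G(b)$ for every demand $b$. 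Each iteration performs only vector operations together with matrix--vector products against $A$, $A^\top$, $P$, and $P^\top$; since $A$ is $\wt{O}(m)$-sparse, products with $A,A^\top$ cost $\wt{O}(m)$ work and $\poly\log(n)$ depth via a parallel scan over the graph. So it suffices to produce a preconditioner that is (i) of condition number $\kappa=\poly\log(n)$ and (ii) \emph{compressible}: $P$ and $P^\top$ admit $\wt{O}(m)$-work, $\poly\log(n)$-depth application.

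To build such a $P$, first invoke the low hop emulator theorem with $k=\log n$ to get a graph $G'$ with $\wt{O}(n)$ edges, $\dist_G\le\dist_{G'}\le\poly\log(n)\cdot\dist_G$, and hop diameter $O(\log\log n)$, constructed in $\poly\log(n)$ depth and $\wt{O}(m)$ work. Then construct in parallel a metric embedding of $(V,\dist_{G'})$ --- Bourgain's embedding into $\ell_1^{O(\log^2 n)}$ or an $O(\log n)$-size collection of dominating tree embeddings (Corollaries~\ref{cor:parallel_embed_into_l1} and~\ref{cor:parallel_metric_tree}) --- and let the rows of $P$ be the resulting potentials, scaled down by the $\poly\log(n)$ factor that converts a $1$-Lipschitz-in-$G'$ potential into a $1$-Lipschitz-in-$G$ potential. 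The forward condition then holds by construction, and the backward condition holds because the embedding certifies $\OPT_{G'}(b)$ up to its distortion while $\OPT_{G'}(b)\ge\OPT_G(b)$; hence $\kappa$ is the product of the emulator distortion, the embedding distortion, and the Lipschitz rescaling, all $\poly\log(n)$. Compressibility is exactly where the emulator is essential: the only expensive ingredient of the embedding is computing distances $\dist_{G'}(\cdot,S)$ to $\poly\log(n)$ random seed sets $S$, and since $G'$ is sparse with hop diameter $O(\log\log n)$ this is $O(\log\log n)$ rounds of Bellman--Ford on $\wt{O}(n)$ edges --- $\wt{O}(n)$ work, $\poly\log(n)$ depth --- while the embedding matrix itself is $\wt{O}(n)$-sparse (or, in the tree case, supports $\wt{O}(n)$-work, $\poly\log(n)$-depth application via Euler-tour/scan primitives), so $P$ and $P^\top$ stay within budget.

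Assembling the pieces: the preconditioner is built in $\poly\log(n)$ depth and $\wt{O}(m)$ work; Sherman's method then runs for $\wt{O}(\kappa^2\varepsilon^{-2})=\varepsilon^{-2}\poly\log(n)$ iterations, each of $\wt{O}(m)$ work and $\poly\log(n)$ depth, for a total of $\varepsilon^{-2}\poly\log(n)$ depth and $\wt{O}(\varepsilon^{-2}m)$ work. Two loose ends need care. First, the iterate returned by the first-order method is only approximately feasible, so I would add a cleanup step that routes the small residual demand along the emulator or tree embedding; this costs an extra $O(\varepsilon)\cdot\OPT$ and runs in the same budget, yielding a genuinely feasible flow of cost $(1+O(\varepsilon))\OPT$ (rescale $\varepsilon$). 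Second, the randomness in the emulator size and in the embedding distortion is handled as usual --- the emulator achieves the stated \emph{expected} work and size, and the embedding has distortion $\poly\log(n)$ with probability $1-n^{-\Omega(1)}$ --- so the algorithm succeeds with probability at least $0.99$ within the claimed expected work.

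The main obstacle is point (ii): making the preconditioner have $\poly\log(n)$ condition number \emph{and} be compressible at the same time. A Bourgain or FRT embedding of the \emph{original} graph $G$ has the right distortion but is not known to be computable in $m\poly\log(n)$ work and $\poly\log(n)$ depth --- precisely the obstruction behind Sherman's $m\cdot 2^{O(\sqrt{\log n})}$ running time --- whereas cheap-to-build hierarchical structures have condition number $2^{\Omega(\sqrt{\log n})}$. The low hop emulator is what breaks the impasse, since it turns every distance-to-a-set computation underlying the embedding into $O(\log\log n)$ Bellman--Ford rounds on a sparse graph at the price of only a $\poly\log(n)$ distortion loss. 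A secondary obstacle is implementing Sherman's framework faithfully in the EREW PRAM model --- in particular the primal-flow recovery --- without sacrificing efficiency.
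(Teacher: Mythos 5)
Your overall architecture matches the paper's (Sherman's framework plus a polylog-condition-number preconditioner whose construction is unlocked by the low hop emulator), but the preconditioner you actually propose does not work, and the missing piece is precisely the paper's main technical contribution for this theorem. You take $P$ to be the Bourgain embedding matrix itself --- $O(\log^2 n)$ rows, each row the $1$-Lipschitz potential $v\mapsto \dist_{G'}(S_{i,j},v)$ --- and claim the ``backward'' condition $\|Pb\|\geq \OPT(b)/\poly\log(n)$ follows from the embedding's distortion. It does not: a low-distortion embedding of the \emph{metric} $(V,\dist_{G'})$ into $\ell_1^d$ says nothing about whether the fixed linear sketch $b\mapsto Pb$ captures the \emph{transportation cost} $\OPT(b)$ for all demands $b$. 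Concretely, on a unit-weight path with alternating demand $b=(+1,-1,+1,-1,\dots)$ one has $\OPT(b)=\Theta(n)$, while each $\langle \dist(S,\cdot),b\rangle$ telescopes into $n/2$ terms of magnitude at most $1$ with essentially random signs, so $\|Pb\|_1=\wt{O}(\sqrt{n})$; the condition number of your $P$ is polynomial in $n$, not polylogarithmic. What the paper does after Bourgain's embedding is a second stage you omit entirely: it embeds the Earth Mover's Distance over $[\Delta]^d$ into $\ell_1$ via randomly shifted grids (Lemma~\ref{lem:random_shift_embed}), derandomized by summing over all $2^i$ shifts at each level. The resulting $P$ has up to $\poly(n)$ rows, and the whole point of the ``compressible preconditioner'' is that each column decomposes into $O(dL)=\poly\log(n)$ constant-valued segments, so $Px$ and $y^\top P$ can be applied by prefix sums (Lemmas~\ref{lem:matrix_vector} and~\ref{lem:vector_matrix}). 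Your claim that $P$ is simply ``$\wt{O}(n)$-sparse'' is only true of the small Bourgain matrix, which is not a valid preconditioner.

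Your parenthetical fallback --- using a collection of dominating tree embeddings as the preconditioner --- is closer in spirit (routing $b$ on an FRT tree is an $O(\log n)$-approximation of $\OPT(b)$ in expectation, and the paper's grid construction is itself a tree embedding of $\ell_1^d$), but as stated it also has a gap: the FRT guarantee $\E[\OPT_T(b)]\leq O(\log n)\OPT_G(b)$ is per-demand, whereas Lemma~\ref{lem:property_of_preconditioner} requires a \emph{fixed} linear map satisfying the two-sided bound for \emph{every} $b$ simultaneously, including the adaptively generated residual demands inside the recursion. Taking a minimum over trees is not linear, and averaging trees does not convert a per-$b$ expectation into a uniform guarantee without further argument. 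The shifted-grid construction resolves exactly this by replacing the random shift with an exact average over all shifts, at the cost of the exponential blow-up in rows that the compressed representation then absorbs. The remaining ingredients of your write-up (MWU iteration count $\wt{O}(\kappa^2\varepsilon^{-2})$, sparse products with $AW^{-1}$, boosting plus an exact cleanup solver for the residual) are consistent with the paper's Corollaries~\ref{cor:boost_error} and~\ref{cor:exact_solver} and Lemma~\ref{lem:exact_solver_with_large_approx}.
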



While the above techniques are sufficient for estimating the {\em value} of
the shortest path, one additional challenge arises when we want to
compute an $(1+\varepsilon)$-approximate shortest {\em path}. In particular, the continuous optimization framework
produces an approximate shortest path {\em flow}, which is not
necessary integral and, more crucially, may contain cycles.
We address this challenge by developing a novel recursive algorithm
based on random walks, and which uses a coupling argument.

\begin{theorem}[Parallel $(1+\varepsilon)$-approximate $s-t$ shortest path, restatement of Theorem~\ref{thm:st_shortest_path}]
	Given a graph $G=(V,E,w)$, two vertices $s,t\in V$ and an error parameter $\varepsilon\in (0,0.5)$, there is a PRAM algorithm which can output a $(1+\varepsilon)$-approximate $s-t$ shortest path with probability at least $0.99$ in $\varepsilon^{-2}\poly\log(n)$ parallel time using expected $\wt{O}(\varepsilon^{-3}m)$ work.
\end{theorem}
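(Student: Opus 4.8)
The plan is to reduce the $s$-$t$ shortest path to a single invocation of the uncapacitated minimum cost flow solver of the previous theorem, and then to \emph{round} the resulting fractional flow into an honest path by a randomized pointer‑doubling procedure. First I would preprocess $G$ by contracting each connected component of the subgraph of zero‑weight edges into a supernode; with a parallel connectivity routine this costs $\poly\log(n)$ depth and $\wt{O}(m)$ work, and afterwards every edge weight is a positive integer. On the contracted graph I would run the min cost flow algorithm with demand $b=\one_s-\one_t$ and error parameter $\varepsilon'=\Theta(\varepsilon)$, obtaining (with probability $0.99$) a flow $f$ with $\sum_e w(e)|f(e)|\le (1+\varepsilon')\dist_G(s,t)$. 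Orienting each edge in its flow direction turns $f$ into a nonnegative flow of value one from $s$ to $t$, which defines a Markov chain: from any $v\neq t$ move to an out‑neighbor $u$ with probability proportional to $f(\{u,v\})$. A flow‑conservation argument shows this random walk $X_0=s,X_1,X_2,\dots$ reaches $t$ almost surely and that $\E\!\left[\,\sum_i w(\{X_{i-1},X_i\})\,\right]=\sum_e w(e)|f(e)|\le (1+\varepsilon')\dist_G(s,t)$, the sum running until the first hit of $t$. When the support of $f$ is acyclic this walk is a \emph{simple} path, so rather than simulating it step by step (which could take $\Omega(n)$ hops) I would sample once a single random out‑edge $g(v)$ at every $v$ and extract the path $s,g(s),g^2(s),\dots,t$ by pointer doubling $g,g^2,g^4,\dots$ in $O(\log n)$ depth and $\wt{O}(m)$ work.

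The hard part is that the flow returned by the continuous optimizer need not be acyclic: its support may contain directed cycles, so the walk may circulate without reaching $t$, and even conditioning on the event that it does reach $t$, Markov's inequality gives only an $O(1)$‑approximate path, not a $(1+\varepsilon)$‑approximate one. To handle this I would use a recursive procedure: at each level it runs the flow solver on the current endpoint pair, attempts the pointer‑doubling extraction, and on the failure event (the sampled functional graph enters a cycle instead of reaching the target) recurses, coupling the true flow‑walk with a truncated walk that is stopped at its first repeated vertex — this truncated walk is precisely the pointer‑doubling path, hence simulable. The coupling is then used to argue that the path finally returned is, on the relevant event, no longer than the flow‑walk, so its expected length is still at most $(1+O(\varepsilon))\dist_G(s,t)$ and it has at most $n$ hops; keeping the recursion depth at $\poly\log(n)$ is the crux. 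I expect this cycle‑handling‑by‑recursion‑and‑coupling to be the main technical obstacle; the rest is bookkeeping.

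Finally, to upgrade the approximation ratio, note that one rounding attempt produces a genuine simple $s$-$t$ path whose length is at most $(1+\varepsilon)\dist_G(s,t)$ with probability $\Omega(\varepsilon)$, by Markov's inequality applied to the $(1+O(\varepsilon))\dist_G(s,t)$ bound on its expectation. I would therefore repeat the whole ``solve flow, then round'' step $\Theta(\varepsilon^{-1}\log n)$ times with independent randomness — in parallel, so the depth remains $\varepsilon^{-2}\poly\log(n)$ — and output the shortest path ever produced; with probability at least $0.99$ some attempt succeeds. Each attempt uses $\wt{O}(\varepsilon^{-2}m)$ work for the flow and $\wt{O}(\varepsilon^{-1}m)$ work for the rounding, for a total of $\wt{O}(\varepsilon^{-3}m)$ expected work. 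The last step is to undo the initial contraction: inside each contracted zero‑weight component traversed by the output path, fill in an arbitrary path between the used entry and exit vertices (a parallel spanning‑tree computation), which adds nothing to the length and yields an $s$-$t$ path of $G$.
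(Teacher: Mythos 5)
Your high-level architecture matches the paper's: solve the uncapacitated min-cost flow for the demand $\one_s-\one_t$, sample one out-edge per vertex with probability proportional to out-flow, use the fact that the expected length of the flow-induced random walk equals the cost of the flow (which the paper proves holds even when the flow has cycles), handle cycles by recursion plus a coupling argument, and finally amplify via Markov's inequality over $\Theta(\varepsilon^{-1}\log n)$ independent attempts. However, there is a genuine gap exactly where you flag it: you never specify what the recursion is \emph{on}, and "on the failure event, recurse" does not reduce the problem size, so nothing bounds the recursion depth. The missing idea is a \emph{contraction}: the sampled functional graph decomposes $V$ into components, each a tree containing $t$ or a tree plus one extra edge, and each component not containing $t$ has at least two vertices. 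Contracting every component to a designated root (for non-$t$ components, an endpoint of the non-tree edge) yields a new graph $G''$ with $|V''|\le\lceil n/2\rceil$, whose edge weights $l(u)+w(u,v)+l(v)$ encode the tree-distances to the roots; one recurses on $(G'',\rt(s),t)$ and lifts the returned path back. The halving is what gives recursion depth $\lceil\log n\rceil$, and the coupling argument (classifying walk edges as ``crucial'' or ``redundant,'' charging the lifted path once to crucial edges and twice to redundant ones, and bounding the total redundant weight by $\varepsilon'\dist_G(s,t)$) is what certifies $\E[l(s)+\dist_{G''}(\rt(s),t)]\le(1+2\varepsilon')\dist_G(s,t)$ per level.

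A second, consequential error: you propose calling the flow solver with $\varepsilon'=\Theta(\varepsilon)$. Since the $(1+O(\varepsilon'))$ loss compounds over the $\log n$ recursion levels (each level re-solves a flow instance on the contracted graph), you would end up with $(1+\Theta(\varepsilon))^{\log n}$, which is not $1+O(\varepsilon)$. You must take $\varepsilon'=\Theta(\varepsilon/\log n)$; each flow call then still costs $\wt{O}(\varepsilon^{-2}m)$ work, and the overall budget of $\wt{O}(\varepsilon^{-3}m)$ after the $\Theta(\varepsilon^{-1}\log n)$-fold repetition survives. Your zero-weight preprocessing and the final Markov amplification are fine and agree with the paper.
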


\paragraph{Massive Parallel Computing (MPC).} Although we present our
parallel algorithms in the PRAM model, they can also be implemented in
the Massive Parallel Computing (MPC) model~\cite{FMSSS10-mad, ksv10,
	gsz11, bks13, ANOY14-parallel}
 which is an abstract of massively parallel computing systems such as MapReduce~\cite{dg08}, Hadoop~\cite{w12}, Dryad~\cite{ibybf07}, Spark~\cite{zcfss10}, and others. 
In particular, MPC model allows $m^{\delta}$ space per machine for some $\delta\in(0,1)$.
The computation in MPC proceeds in rounds, where each machine can perform unlimited local computation and exchange up to $m^{\delta}$ data in one round.

By applying the simulation methods~\cite{ksv10,gsz11}, our PRAM algorithm can be directly simulated in MPC.
The obtained MPC algorithm has $\poly(\log n)$ rounds and only needs $m\cdot \poly(\log n)$ total space. 
Furthermore, it is also fully scalable, i.e., the memory size per machine can be allowed to be $m^\delta$ for any constant $\delta\in(0,1)$.
To the best of our knowledge, this is the first MPC algorithm which computes $(1+\varepsilon)$-approximate shortest path using $\poly(\log n)$ rounds and $m\poly(\log n)$ total space when the memory of each machine is upper bounded by $n^{1-\Omega(1)}$. 
{
Previous work on shortest paths in the MPC model include \cite{dn19}
when the  memory size per machine is $o(n)$, and simulations of shortest path
algorithms from the Congested Clique model~\cite{lpspp05,knpr15,hkn16,bkkl17,chdkl19,hkn19}  when the memory size per machine is $\Omega(n)$~\cite{bdh18}.
}

\subsection{Our Techniques}
In this section, we give an overview of techniques that we use in our algorithms.
Figure~\ref{fig:tech_summary} sketches the dependencies between our techniques and the main results mentioned in this paper. 
\begin{figure}[t!]
	\centering
	\includegraphics[width=0.8\textwidth]{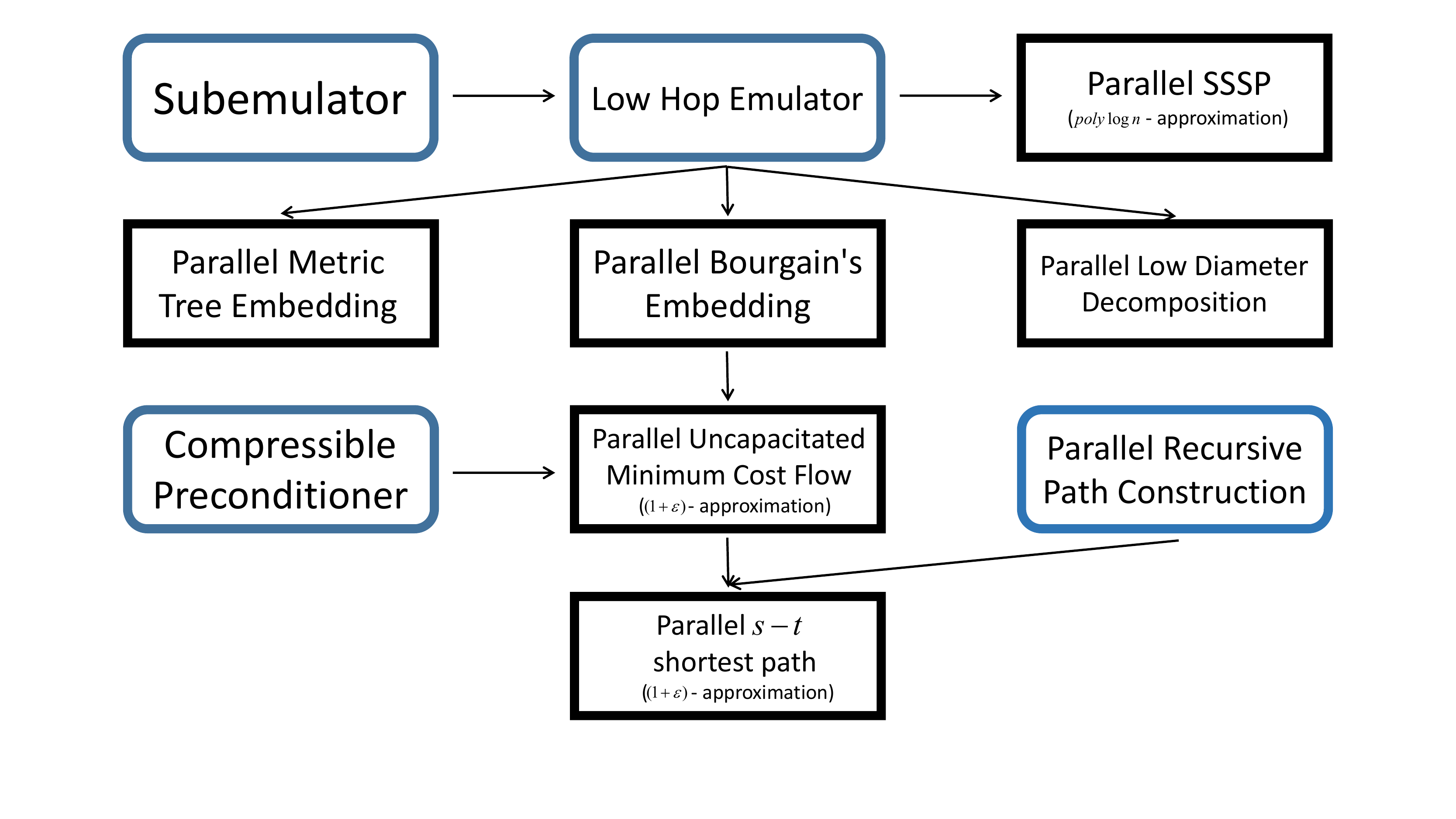}
	\vspace{-10mm}
	\caption{ \small A summary of techniques and results mentioned in this paper. 
	Blue rounded rectangles indicate new techniques developed in this paper. 
}\label{fig:tech_summary}
	\vspace{-3mm}
\end{figure}

\subsubsection{Low Hop Emulator} 
A concept closely related to low hop emulator are hopsets~\cite{c94}.
A hopset is a set of weighted shortcut edges such that for any two
vertices $s$ and $t$ we can always find an approximate shortest path
connecting them using small number of edges from the hopset and the
original graph.
Many hopset construction
methods~\cite{c94,ks97,tz05,tz06,mpvx15,b09,hkn14,hkn16,en16,hp19}
share some common features --- they all choose a layer or multiple
layers of leader vertices, and the hopset edges are some shortcut
edges connecting to these leader vertices.  However, when connecting
shortcut edges to a layer of leader vertices, none of these algorithms
can avoid processing information for all $n$ vertices from the
original graph, even though there may be a large fraction of vertices
which are not connecting any this layer's leader vertex in the final
hopset.  Furthermore, each of these algorithms needs either $n\cdot
\log^{\omega(1)} n$ work (sequential time) or $\log^{\omega(1)} n$
depth to process $n$ vertices for constructing shortcut edges for some
layers.  To improve the efficiency of these algorithms, a natural
question is: can we reduce the number of vertices needed to be
processed when constructing the shortcut edges?

\paragraph{Subemulator.} 
Motivated by the above question, we introduce a new concept called subemulator. 
For $\alpha\geq 1$ and an integer $b\geq 1$, we say $H=(V',E',w')$ is an $(\alpha,b)$-subemulator of $G=(V,E,w)$ if 1) $V'$ is a subset of $V$; 2) for any vertex $v$ in $G$, at least one of the $b$-closest neighbors\footnote{We assume that $v$ is also a neighbor of $v$ itself. Thus the closest (or $1$-closest) neighbor of $v$ is $v$ itself.} of $v$ is in $V'$; 3) for any two vertices $u,v$ in $H$, $\dist_H(u,v)$ $\alpha$-approximates $\dist_G(u,v)$. 
In addition, if we can assign each vertex $v\in V$ a leader $q(v)\in V'$ such that $q(v)$ is one of the $b$-closest neighbors of $v$ and for any two vertices $ u,v\in V$ it always satisfies
\begin{align}\label{eq:strong_property}
\dist_G(q(u),q(v))\leq \dist_H(q(u),q(v))\leq \dist_G(q(u),u) + \beta \cdot \dist_G(u,v) + \dist_G(v,q(v)) 
\end{align}
for some $\beta \geq 1$, we call $H$ a strong $(\alpha,b,\beta)$-subemulator of $G$.
A subemulator $H$ can be regarded as a sparsification of vertices of $G$.
{
	Two notions related to subemulators are vertex sparsifiers~\cite{m09,lm10} and distance-preserving minors~\cite{knz14}.
	The major difference between subemulators and vertex sparisfiers is that the vertex sparsifier approximately preserves flow/cut properties for the subset of vertices while the subemulator approximately preserves distances for the subset of vertices.
	Furthermore, both vertex sparsifiers and distance-preserving minors have given fixed vertex sets, whereas the vertex set of the subemulator is not given but should satisfy the condition 2) mentioned above, i.e., each vertex in $G$ has a $b$-closest neighbor which is in the subemulator.
	
}

To construct a strong subemulator $H=(V',E',w')$, we need to construct both a vertex set $V'$ and a edge set $E'$.
For convenience, let us consider the case for $b\gg \log n$.
Constructing $V'$ is relatively easy.
We can add each vertex of $V$ to $V'$ with probability $\Theta(\log (n) / b)$.
By Chernoff bound, with high probability, each vertex has at least one of the $b$-closest neighbors in $V'$ and the size of $V'$ is roughly $\tilde{O}(n/b)$.
For each vertex $v\in V$, it is natural to set the leader vertex $q(v)$ to be the vertex in $V'$ which is the closest vertex to $v$.
The challenge remaining is to construct the edge set $E'$ such that condition 3) and Equation~\eqref{eq:strong_property} can be satisfied.
In our construction, we add two categories of edges to $E'$:
\begin{enumerate}
\item For each edge $\{u,v\}\in E$, add an edge $\{q(u),q(v)\}$ with weight $\dist_G(q(u),u)+w(u,v)+\dist_G(v,q(v))$ to $E'$.
\item For each $v\in V$ and for each $u$ which is a $b$-closest neighbor of $v$, we add an edge $\{q(u),q(v)\}$ with weight $\dist_G(q(u),u)+\dist_G(u,v)+\dist_G(q(v),v)$ to $E'$.
\end{enumerate}
The first category of edges looks natural --- for an edge $\{u,v\}$ of which two end points $u,v$ are assigned to different leader vertices $q(u),q(v)$, we add a shortcut edge connecting those two leader vertices with a weight which is equal to the smallest length of the $q(u)-q(v)$ path crossing edge $\{u,v\}$. 
However, if we only have the edges from the first category, it is not good enough to preserve the distances between leader vertices (see Section~\ref{sec:necessity_edges_subemulator} for examples).
To fix this, we add the second category of edges.
We now sketch the analysis.
It follows from our construction that each edge in $H$ corresponds to a path in $G$.
Thus, $\forall u',v'\in V'$, $\dist_{G}(u',v')\leq \dist_H(u',v')$.
We only need to upper bound $\dist_H(u',v')$.
Let us fix a shortest path $u'=z_0\rightarrow z_1\rightarrow \cdots\rightarrow z_h=v'$ between $u',v'$ in the original graph $G$.
We want to construct a path in $H$ with a short length.
We use the following procedure to find some crucial vertices on the shortest path $z_0\rightarrow \cdots \rightarrow z_h$:
\begin{enumerate}
	\item $y_0\gets u',k\gets 0$. Repeat the following two steps:
	\item Let $x_{k+1}$ be the last vertex on the path $z_0\rightarrow \cdots\rightarrow z_h$ such that $x_{k+1}$ is one of the $b$-closest neighbors of $y_k$. 
	If $x_{k+1}$ is $z_h$, finish the procedure.
	\item Set $y_{k+1}$ to be the next vertex of $x_{k+1}$ on the path $z_0\rightarrow \cdots\rightarrow z_h$. $k\gets k + 1$.
\end{enumerate}
It is obvious that 
\begin{align*}
\dist_G(u',v')=\dist_G(y_k,x_{k+1})+\sum_{i=0}^{k-1} (\dist_G(y_i,x_{i+1})+w(x_{i+1},y_{i+1})).
\end{align*}
For $i=0,1,\cdots,k$, $x_{i+1}$ is a $b$-closest neighbor of $y_i$.
Thus, there is an edge $\{q(y_i),q(x_{i+1})\}$ in $H$ from the second category of the edges.
For $i=1,2,\cdots,k$, $y_{i}$ is adjacent to $x_i$.
Thus, there is an edge $\{q(x_i),q(y_i)\}$ in $H$ from the first category of the edges.
Thus $u'=q(y_0)\rightarrow q(x_1)\rightarrow q(y_1)\rightarrow q(x_2)\rightarrow q(y_2)\rightarrow \cdots \rightarrow q(x_{k+1})=v'$ is a valid path (see Figure~\ref{fig:path_in_subemulator}) in $H$ and the length is 
\begin{align*}
\dist_G(u',v') + 2\cdot \sum_{i=1}^k \left(\dist_G(x_i,q(x_i))+\dist_G(y_i,q(y_i))\right).
\end{align*}
\begin{figure}[t!]
	\centering
	\includegraphics[width=\textwidth]{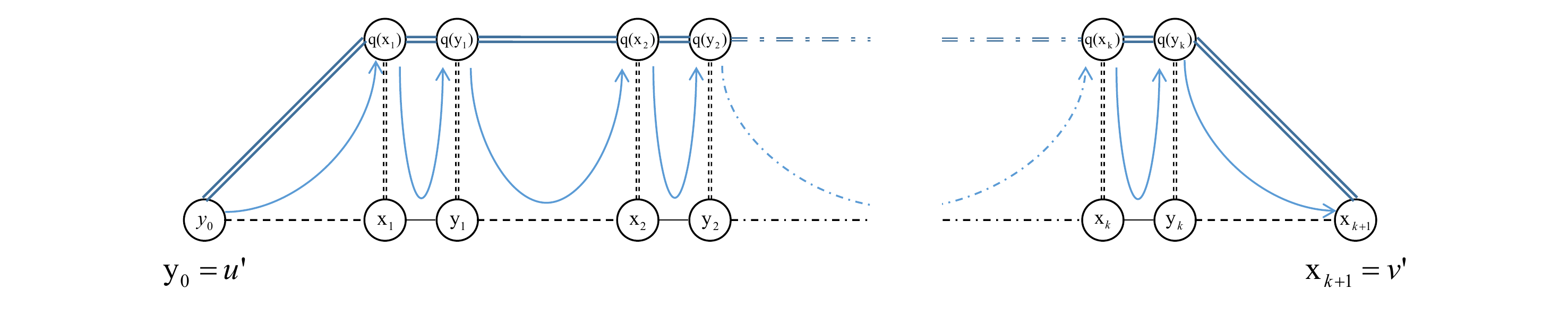}
	\vspace{-10mm}
	\caption{ \small For $u',v'\in V'$ and a shortest path between $u',v'$ in $G$, we can find a corresponding path between $u',v'$ in the subemulator $H$.
	A single dashed line denotes a shortest path in $G$ between $y_{i-1}$ and $x_i$.
	A single solid line denotes an edge $\{x_i,y_i\}$ in $G$.
	A double dashed line denotes a shortest path in $G$ between a vertex and its leader vertex. 
	A double solid blue line denotes an edge in the subemulator $H$ with a weight which is equal to the length of the path in $G$ represented by the corresponding blue arc.
	}\label{fig:path_in_subemulator}
	\vspace{-3mm}
\end{figure}
By our choice of $q(\cdot)$, we have $\forall v\in V,\dist_G(v,q(v))=\dist_G(v,V')$.
So,  
\begin{align*} 
\forall i=1,2,\cdots,k, \dist_G(x_i,q(x_i))\leq \dist_G(y_{i-1},q(y_{i-1}))+\dist_G(y_{i-1},x_i).
\end{align*}
Since $y_i$ is not a $b$-closest neighbor of $y_{i-1}$ but $q(y_{i-1})$ is a $b$-closest neighbor of $y_{i-1}$, 
\begin{align*}
\forall i=1,2,\cdots,k,\dist_G(y_{i-1},q(y_{i-1}))\leq \dist_G(y_{i-1},x_i)+w(x_i,y_i).
\end{align*}
Since $x_{k+1}\in V'$, we have $\dist_G(y_k,q(y_k))\leq \dist_G(y_k,x_{k+1})$.
Then we know $\sum_{i=1}^k \dist_G(x_i,q(x_i))\leq 2\cdot \dist_G(u',v')$ and $\sum_{i=1}^k \dist_G(y_i,q(y_i))\leq \dist_G(u',v')$.
Thus, we can conclude $\dist_H(u',v')\leq 8\cdot \dist_G(u',v')$.
We now argue that our construction of $E'$ also satisfies Equation~\eqref{eq:strong_property} with $\beta = 22$.
There are two cases.
The first case is that either $u$ is a $b$-closest neighbor of $v$ or $v$ is a $b$-closest neighbor of $u$.
In this case, $E'$ contains an edge from the second category with weight $\dist_G(q(u),u)+\dist_G(u,v)+\dist_G(v,q(v))$ which implies Equation~\eqref{eq:strong_property}.
The second case is that neither $u$ is a $b$-closest neighbor of $v$ nor $v$ is a $b$-closest neighbor of $u$.
In this case, we have 
\begin{align*}
\dist_H(q(u),q(v))&\leq 8\dist_G(q(u),q(v))\leq 8(\dist_G(q(u),u)+\dist_G(u,v)+\dist_G(v,q(v)))\\
&\leq \dist_G(q(u),u)+\dist_G(v,q(v)) + 22\dist_G(u,v),
\end{align*}
where the last step follows from $\dist_G(u,q(u)),\dist_G(v,q(v))\leq \dist_G(u,v)$.

The bottleneck of computing a subemulator is to obtain $b$-closest neighbors for each vertex.
We can use the truncated broadcasting technique~\cite{ss99,asswz18} to handle this in $\poly(\log n)$ parallel time using $\tilde{O}(m+nb^2)$ total work. 
The output subemulator has $\tilde{O}(n/b)$ vertices and $O(m+nb)$ edges.
As we can see, there is a trade-off between total work used and the number of vertices in the subemulator: if we can afford more work for the construction of the subemulator, fewer vertices appear in the subemulator. 

\paragraph{Low hop emulator via subemulator.}
Now, 
we describe how to use strong subemulators to construct a low hop emulator.
Consider a weighted undirected graph $G=(V,E,w)$. 
Suppose we obtain a sequence of subemulators $H_0 = (V_0,E_0,w_0),H_1 = (V_1,E_1,w_1),\cdots,H_t = (V_t,E_t,w_t)$ where $H_0=G$ and $\forall i=0,\cdots,t-1,$ $H_{i+1}$ is a strong $(8,b_i,22)$-subemulator of $H_i$ for some integer $b_i\geq 1$.
We have $V=V_0\supseteq V_1\supseteq V_2\supseteq \cdots \supseteq V_t$.
For $v\in V_i$, let us denote $q_i(v)\in V_{i+1}$ as the corresponding assigned leader vertex of $v$ in the subemulator $H_{i+1}$ satisfying Equation~\eqref{eq:strong_property}.
We add following three types of edges to the graph $G'=(V,E',w')$ and we will see that $G'$ is a low hop emulator of $G$:
\begin{enumerate}
	\item $\forall i=0,\cdots, t-1,~\forall v\in V_i$, add an edge $\{v,q_i(v)\}$ with weight $27^{t-i-1}\cdot \dist_{H_i}(v,q_i(v))$ to $G'$.
	\item $\forall i=0,\cdots,t,$ for each edge $\{u,v\}\in E_i$, add an edge $\{u,v\}$ with weight $27^{t-i}\cdot w_i(u,v)$ to $G'$.
	\item $\forall i=0,\cdots,t,~\forall v\in V_i$, add an edge $\{v,u\}$ with weight $27^{t-i}\cdot \dist_{H_i}(v,u)$ to $G'$ for each $u$ which is one of the $b_i$-closest neighbors of $v$ in $H_i$ (define $b_t=|V_t|$).
\end{enumerate}
Roughly speaking, we can imagine that $G'$ is obtained from flattening a graph with $t+1$ layers.
Each layer corresponds to a subemulator.
The lowest layer corresponds to the original graph $G$, and the highest layer corresponds to the last subemulator $H_t$. 
The first type of edges connect the vertices in the lower layer to the leader vertices in the higher layer.
The second type of edges correspond to the subemulators on all layers.
The third type of edges shortcut the close vertices from the same layer.
Furthermore, the weights of the edges on the lower layers have larger penalty factor, i.e., the penalty factor of the edges on the layer $i$ is $27^{t-i}$.

By Equation~\eqref{eq:strong_property} of strong subemulator, we can show that $\forall u,v\in V,\dist_G(u,v)\leq \dist_{G'}(u,v)$.
Consider the first layer.
By the second type edges, we know that $\forall u,v\in V,\dist_{G'}(u,v)\leq 27^t\dist_G(u,v)$.
In particular, for $t= O(\log\log n)$, $G'$ preserves the distances in $G$ up to a $\poly(\log n)$ factor.
Now we want to show that $\forall u,v\in V$, there is always a shortest path connecting $u,v$ in $G'$ such that the number of hops (edges) of the path is at most $4t$.
For convenience, we conceptually split each vertex of $G'$ into vertices on different layers based on the construction of $G'$.
Consider a shortest path $u=z_0\rightarrow z_1\rightarrow z_2\rightarrow \cdots \rightarrow z_h= v$ using the smallest number of hops in $G'$ with splitting vertices.
 By the constructions of three types of edges we know that $\forall j = 0,1,\cdots, h-1,$  $z_j,z_{j+1}$ are either on the same layer or on the adjacent layers
 , and $z_0,z_h$ are on the lowest layer which is corresponding to $H_0=G$.
We will claim two properties of the shortest path $z_0\rightarrow \cdots\rightarrow z_h$.
Suppose $z_j,z_{j+1}$ are on the same layer corresponding to $H_i$.
We claim that $z_{j+2}$ cannot be on the same layer as $z_j$ and $z_{j+1}$.
Intuitively, this is because if $z_{j+2}$ is on the same layer of $z_j$ then there are two cases which both lead to contradictions: in the first case, $z_{j+2}$ is close to $z_j$ such that there is a third type edge connecting $z_j,z_{j+2}$ which implies that $z_{j+1}$ is redundant; in the second case, $z_{j+2}$ is far away from $z_j$ such that $\dist_{H_i}(z_j,q_i(z_j))+\dist_{H_{i+1}}(q_i(z_j),q_i(z_{j+2}))+\dist_{H_i}(q_i(z_{j+2}),z_{j+2})$ is a good approximation to $\dist_{H_{i}}(z_j,z_{j+2})$, and due to a smaller penalty factor, the length of the path $z_j\rightarrow q_i(z_j)\rightarrow (\text{shortest path})\rightarrow q_i(z_{j+2})\rightarrow z_{j+2}$ is smaller than the length of $z_j\rightarrow z_{j+1}\rightarrow z_{j+2}$.
We claim another property of $z_0\rightarrow \cdots \rightarrow z_h$ as the following.
If the layer of $z_{j+1}$ is lower than the layer of $z_j$
, the layer of any of $z_{j+2},z_{j+3},\cdots,z_{h}$ must be lower than the layer of $z_j$.
At a high level, this is because of Equation~\eqref{eq:strong_property} and the smaller penalty factor for higher layers: if we move from higher layer to lower layer then come back to the higher layer, it is always worse than we only move in the higher layers.
Due to these two claims, the shortest path in $G'$ should have the following shape: the path starts from the lowest layer, then keeps moving to the non-lower layers until reach some vertex, and finally keeps moving to the non-higher layers until reach the target. 
Furthermore, there are no three consecutive vertices on the path which are on the same layer.
Hence we can conclude that the shortest path has number of hops at most $4t$.
Based on above analysis, the shortest path in $G'$ will never use the second type edges. 
Thus, in our final construction of $G'$, we only need the first type and the third type of edges.

The size of $G'$ is at most $\sum_{i=0}^t |V_i|\cdot b_i$.
The bottleneck of the construction of $G'$ is to compute the third type edges.
This can be done by truncated broadcasting technique~\cite{ss99,asswz18} in $t\cdot \poly(\log n)$ parallel time using $\sum_{i=0}^t \left(|E_i|+|V_i|\cdot b_i^2\right)\cdot \poly(\log n)$ total work.
The problem remaining is to determine the sequence of $b_i$.
As we discussed previously, we are able to use $\poly(\log |V_i|)$ parallel time and $\tilde{O}(|E_i|+|V_i|b_i)$ total work to construct a subemulator $H_{i+1}$ with $\tilde{O}(|V_i|/b_i)$ vertices and $O(|E_i|+|V_i| b_i)$ edges. 
By double exponential problem size reduction technique~\cite{asswz18}, we can make $b_i$ grow double exponentially fast in this situation.
More precisely, if we set $b_0\gets \poly(\log n)$, $b_{i+1}\gets b_i^{1.25}$, and $t\gets O(\log\log n)$, then in this case, the result low hop emulator can be computed in $\poly(\log n)$ parallel time and $\wt{O}(m+n)$ total work.
Furthermore, the size of the result low hop emulator is $\tilde{O}(n)$, the approximation ratio is $\poly(\log n)$, and the hop diameter is $O(\log\log n)$.

\paragraph{Applications of low hop emulator.} 
We can build a useful oracle based on a low hop emulator: given a query subset $S$ of vertices, the oracle can output a $\poly(\log n)$ approximations to $\dist_G(v,S)$ for all $v\in V$. 
Furthermore, the output approximate distances always satisfy triangle inequality.
To implement the such oracle, we preprocess an $\wt{O}(n)$ size low hop emulator $G'$ with $\poly(\log n)$ approximation ratio and $O(\log\log n)$ hop diameter in $\poly(\log n)$ parallel time using $\wt{O}(m+n)$ work. 
For each oracle query, we can just run Bellman-Ford on $G'$ with source $S$.
The work needed for each Bellman-Ford iteration is at most $\wt{O}(n)$.
Since the hop diameter is $O(\log\log n)$, the number of iterations needed is $O(\log\log n)$.
Therefore, each query can be handled in $\poly(\log n)$ parallel time and $\wt{O}(n)$ total work.
The triangle inequality is always satisfied since the output approximate distances are exact distances in the graph $G'$.
Several parallel applications such as Bourgain's embedding, metric tree embedding and low diameter decomposition directly follow the oracle. 
We refer readers to see Section~\ref{sec:parallel_app} for more details of these applications.

\subsubsection{Minimum Cost Flow and Shortest Path}

\paragraph{Uncapacitated minimum cost flow.}
At a high level, our uncapacitated minimum cost flow algorithm is based on Sherman's framework~\cite{s17}.
Sherman's algorithm has several recursive iterations.
 It first uses the multiplicative weights update method~\cite{ahk12} to find a flow which almost satisfies the demands and has nearly optimal cost.
 If the unsatisfied parts of demands are sufficiently small, it routes them naively to make the flow truly feasible without increasing the cost by too much. 
Otherwise, it updates the demands to be the unsatisfied parts of the original demands and recursively routes the new demands.  
\cite{s17} shows that if the problem is well conditioned, then the final solution can be computed by the above process efficiently.
However, most of the time the natural form of the uncapacitated minimum cost flow problem is not well-conditioned.
Thus, a preconditioner, i.e., a linear operator $P\in\mathbb{R}^{r\times n}$ applied to the flow constraints, is needed to make the problem well-condtitioned. 
Consider a given graph $G=(V,E,w)$.
Sherman shows that if for any valid demands $b\in \mathbb{R}^n$ we always have
\begin{align*}
\OPT(b)\leq \|Pb\|_1\leq \gamma \cdot \OPT(b),
\end{align*}
then $P$ can make the condition number of the flow problem on $G$ be upper bounded by $\gamma$, where $\OPT(b)$ denotes the optimal cost of the flow on $G$ satisfying the demands $b$.
Sherman gives a method to construct such $P$.
However, to have a smaller approximation ratio $\gamma$, the time of computing matrix-vector multiplication with $P$ must increase such that the running time of the multiplicative weights update step increases.
To balance the trade-off, Sherman constructs $P$ with $\gamma = 2^{O(\sqrt{\log n})}$ approximation ratio and $\nnz(x)\cdot 2^{O(\sqrt{\log n})}$ time for matrix-vector multiplication $P\cdot x$, where $\nnz(x)$ denotes the number of non-zero entries of $x$.
Thus, its final running time is $m\cdot 2^{O(\sqrt{\log n})}$.
To design a parallel minimum cost flow algorithm using $\poly(\log n)$ parallel time and $m\poly(\log n)$ work, we cannot avoid improving the sequential running time of minimum cost flow to $m\poly(\log n)$ time in sequential setting.
By above discussion, a natural way is to find a linear transformation $P$ which can embed the minimum cost flow into $\ell_1$ with $\poly(\log n)$ approximation ratio and the running time for matrix-vector multiplication $P\cdot x$ needs to be $\nnz(x)\cdot \poly(\log n)$. 
Next, we will introduce how to construct such embedding $P$.

First, we compute a mapping $\varphi$ which embeds the vertices into $\ell_1^d$ for $d=O(\log^2 n)$ such that $\forall u,v\in V$, $\|\varphi(u)-\varphi(v)\|_1$ is a $\poly(\log n)$ approximation to $\dist_G(u,v)$.
This step can be done by Bourgain's embedding.
 The parallel version of Bourgain's embedding is one of the applications of low hop emulator as we mentioned previously.
Then we can reduce the minimum cost flow problem to the geometric transportation problem.
The geometric transportation problem is also called Earth Mover's Distance (EMD) problem.
Specifically, it is the following minimization problem:
\begin{align*}
&\min_{\pi:V\times V\rightarrow \mathbb{R}_{\geq 0}} \sum_{(u,v)\in V\times V} \pi(u,v)\cdot \|\varphi(u)-\varphi(v)\|_1\\
s.t.~&\forall u\in V, \sum_{v\in V}\pi(u,v) - \sum_{v\in V}\pi(v,u) = b_u.\notag
\end{align*}
We denote $\OPT_{\EMD}(b)$ as the optimal cost of the above EMD problem.
It is easy to see that $\OPT_{\EMD}(b)$ is a $\poly(\log n)$ approximation to $\OPT(b)$.
Therefore, it suffices to construct $P$ such that for any valid demand vector $b\in\mathbb{R}^n$, 
\begin{align*}
\OPT_{\EMD}(b)\leq\|Pb\|_1\leq \poly(\log n)\cdot \OPT_{\EMD}(b).
\end{align*}
One known embedding of EMD into $\ell_1$ is based on randomly shifted grids~\cite{it03}.
We can without loss of generality assume that the coordinates of $\varphi(v)$ are integers in $\{1,\cdots,\Delta\}$ for some $\Delta$ which is a power of $2$ and upper bounded by $\poly(n)$. 
We create $1+\log \Delta$ levels of cells.
We number each level from $0$ to $\log \Delta$.
Each cell in level $\log\Delta$ has side length $\Delta$.
Each cell in level $i+1$ is partitioned into $2^d$ equal size cells in level $i$ and thus each cell in level $i$ has side length $2^i$.
Therefore each cell in level $0$ can contain at most one point $\varphi(v)$ for $v\in V$.
According to~\cite{it03}, for any valid demand vector $b\in\mathbb{R}^n$,
\begin{align}\label{eq:random_shift_eval}
\E_{\tau\sim\{0,1,\cdots,\Delta-1\}}\left[~\sum_{i=0}^{\log \Delta} \sum_{C:\text{ a cell in level }i}  
2^i\cdot \left|\sum_{v\in V:\varphi(v)+\tau\cdot \one_d\text{ is in the cell }C} b_v\right|~\right]
\end{align}
is always a $\poly(\log n)$ approximation to $\OPT_{\EMD}(b)$, where $\tau$ is drawn uniformly at random from $\{0,1,\cdots,\Delta-1\}$, and $\varphi(v)+\tau\cdot\one_d$ is the point obtained after shifting each coordinate of $\varphi(v)$ by $\tau$.
Since each cell in level $i$ has side length $2^i$, Equation~\eqref{eq:random_shift_eval} is equal to 
\begin{align}\label{eq:random_shift_eval2}
&\sum_{i=0}^{\log \Delta} \frac{1}{2^i}\sum_{\tau=0}^{2^i-1}  \sum_{C:\text{ a cell in level }i} 2^i\cdot \left|\sum_{v\in V:\varphi(v)+\tau\cdot \one_d\text{ is in the cell }C} b_v\right|\notag\\
=& \sum_{i=0}^{\log \Delta} \sum_{C:\text{ a cell in level }i} \sum_{\tau=0}^{2^i-1}   \left|\sum_{v\in V:\varphi(v)+\tau\cdot \one_d\text{ is in the cell }C} b_v\right|.
\end{align}
Equation~\eqref{eq:random_shift_eval2} can be written in the from of $\|Pb\|_1$ where each row of $P$ corresponds to a cell $C$ and a shift value $\tau$, and each column of $P$ corresponds to a vertex $v$.
\begin{figure}[t!]
	\centering
	\includegraphics[width=0.8\textwidth]{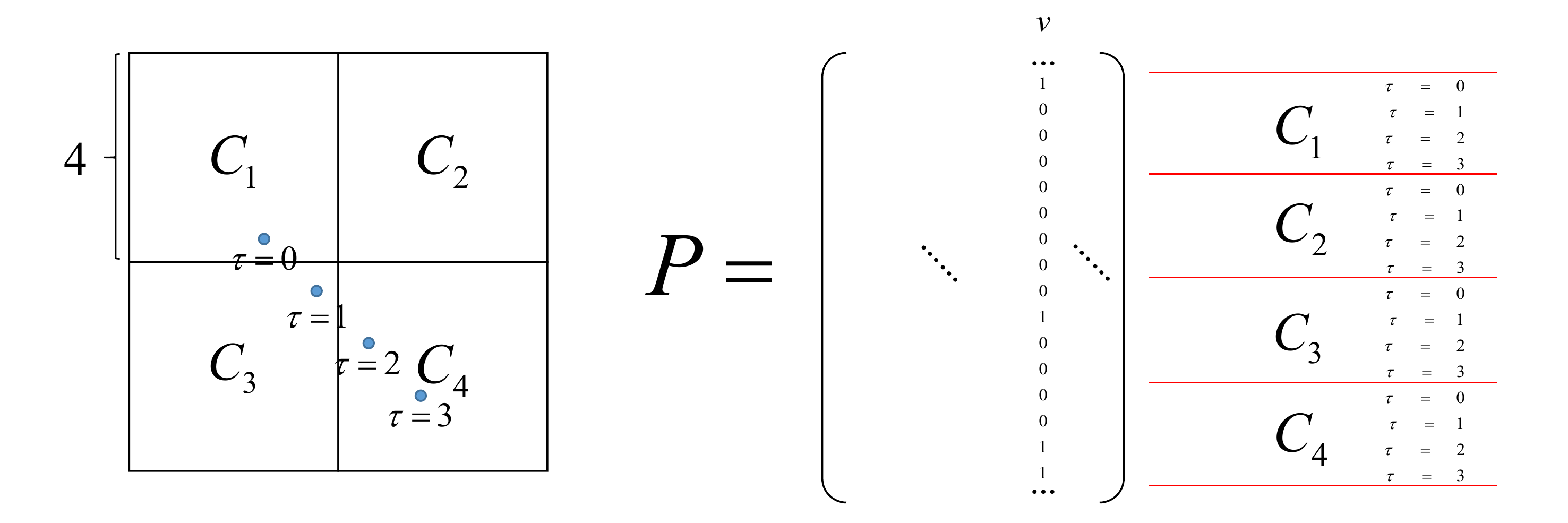}
	\vspace{-3mm}
	\caption{ \small Consider cells $C_1,C_2,C_3,C_4$ shown above with side length $4$. 
		Blue dots denote the positions of $\varphi(v)+\tau\cdot \one_d$ for some vertex $v$ and $\tau =0,1,2,3$.
		The entries of $P$ in the column corresponding to $v$ and in the rows corresponding to $(C,\tau)$ for $C=C_1,C_2,C_3,C_4$ and $\tau=0,1,2,3$ are shown on the right.
	}\label{fig:preconditioner}
\vspace{-5mm}
\end{figure}
Figure~\ref{fig:preconditioner} shows how does $P$ look like:
for a entry $P_{i,j}$ corresponding to a cell $C$, a shift value $\tau$ and a vertex $v$, we have $P_{i,j}=1$ if the point $\varphi(v)+\tau\cdot \one_d$ is in the cell $C$ and $P_{i,j}=0$ otherwise.
Therefore, $P$ can be used to precondition the minimum cost flow problem on $G$ with condition number at most $\poly(\log n)$.
However, such matrix $P$ is dense and have $\poly(n)$ number of rows.
It is impossible to naively write down the whole matrix.
Fortunately, we will show that $P$ has a good structure and we can write down a compressed representation of $P$.
Consider a cell $C$ in level $i$ and a vertex $v$.
If there exists $\tau\in \{0,1,\cdots,2^i-1\}$ such that $\varphi(v)+\tau\cdot \one_d$ is in the cell $C$, then there must exist $\tau_1,\tau_2$ such that $\varphi(v)+\tau\cdot \one_d$ is in the cell $C$ if and only if $\tau\in\{\tau_1,\tau_1+1,\cdots,\tau_2\}$.
In other words, the shift values $\tau$ that can make $\varphi(v)+\tau\cdot\one_d$ be in $C$ are consecutive.
Another important property that we can show is that the number of cells in level $i$ that can contain at least one of the shifted points $\varphi(v), \varphi(v)+ \one_d, \varphi(v)+2\cdot\one_d,\cdots,\varphi(v)+(2^i-1)\cdot \one_d$ is at most $d+1$.
Now consider a column of $P$ corresponding to some vertex $v$.
The entries with value $1$ in this column should be in several consecutive segments.
The number of such segments is at most $(d+1)\cdot(1+\log\Delta)\leq \poly(\log n)$.
Thus, for each column of $P$, we can just store the beginning and the ending positions of these segments.
The whole matrix $P$ can be represented by $n\poly(\log n)$ segments.
The only problem remaining is to use this compressed representation to do matrix-vector multiplication.
Suppose we want to compute $y=P\cdot x$ for some $x\in \mathbb{R}^n$.
It is equivalent to the following procedure:
\begin{enumerate}
\item Initialize $y$ to be an all-zero vector.
\item For each column $i$ and for each segment $[l,r]$ in column $i$, increase all $y_l,y_{l+1},\cdots,y_{r}$ by $x_i$.
\end{enumerate}
We can reduce the above procedure to the following one:
\begin{enumerate}
\item Initialize $z$ to be an all-zero vector.
\item For each column $i$ and for each segment $[l,r]$ in column $i$, increase $z_l$ by $x_i$ and increase $z_{r+1}$ by $-x_i$.
\item Compute $y_j\gets \sum_{k=1}^j z_k$.
\end{enumerate}
In the above procedure, we only need to compute a prefix sum for $z$. 
Since each column of $P$ has at most $\poly(\log n)$ segments, the total number of segments involved is at most $\nnz(x)\cdot \poly(\log n)$.
The total running time is $\tilde{O}(\nnz(x)\cdot\poly(\log n))$.
Notice that even though $y$ has a large dimension, it can be decomposed into $\wt{O}(\nnz(x)\cdot\poly(\log n))$ segments where the entries of each segment have the same value.
Thus, we just store the beginning and the ending positions of each segment of $y$.

Each step of computing the compressed representation can be implemented in $\poly(\log n)$ parallel time and each step of computing the matrix-vector multiplication can also be implemented in $\poly(\log n)$ parallel time.
We obtained a desired preconditioner. 
By plugging this preconditioner into Sherman's framework, we can obtain a parallel $(1+\varepsilon)$-approximate uncapacitated minimum cost flow algorithm with $\poly(\log n)$ depth and $\varepsilon^{-2}m\cdot\poly(\log n)$ work.

\paragraph{Parallel $(1+\varepsilon)$-approximate $s-t$ shortest path.}
$s-t$ Shortest path is closely related to uncapacitated minimum cost flow.
If we set demand $b_s=1,b_t=-1$ and $b_v=0$ for $v\not=s,t\in V$, then the optimal cost of the flow is exactly $\dist_G(s,t)$.
Thus, computing a $(1+\varepsilon)$-approximation to $\dist_G(s,t)$ can be achieved by our flow algorithm.
However, the flow algorithm can only output a flow but not a path. 
We need more effort to find a path from $s$ to $t$ with length at most $(1+\varepsilon)\cdot\dist_G(s,t)$.
As mentioned by~\cite{bkkl17}, if the $(1+\varepsilon)$-approximate flow does not contain any cycles, then for each vertex $v\not=t$ we can choose an out edge with probability proportional to the magnitude of its out flow, and the expected length of the path found from $s$ to $t$ is exactly the cost of the flow which is $(1+\varepsilon)\cdot\dist_G(s,t)$.
Unfortunately, the flow outputted by our flow algorithm may create cycles.
If we randomly choose an out edge for each vertex $v\not=t$ with probability proportional to the magnitude of the out flow, we may stuck in some cycle and may not find a path from $s$ to $t$.
To handle cycles, we propose the following procedure to find a path from $s$ to $t$.
\begin{enumerate}
\item If the graph only has constant number of vertices, find the shortest path from $s$ to $t$ directly.
\item Otherwise, compute the $(1+\varepsilon')$-approximate minimum cost flow from $s$ to $t$ for $\varepsilon'=\Theta(\varepsilon/\log n)$.
\item For each vertex except $t$, choose an out edge with probability proportional to its out flow.
\item Consider the graph with $n-1$ chosen edges. 
Each connected component in the graph is either a tree or a tree plus an edge. 
A component is a tree if and only if $t$ is in the component. 
For each component, we compute a spanning tree.
If the component contains $t$, we set $t$ as the root of the spanning tree.
Otherwise, we set an arbitrary end point of the non-tree edge as the root of the spanning tree.
\item Construct a new graph of which vertices are roots of spanning trees.
For each edge $\{u,v\}$ in the original graph, we add an edge connecting the root of $u$ and the root of $v$ with weight 
\begin{align*}
(\text{distance from }u\text{ to the root of }u\text{ on the spanning tree})+w(u,v)\\
+(\text{distance from }v\text{ to the root of }u\text{ on the spanning tree}).
\end{align*}
\item Recursively find a $(1+\varepsilon')$-approximate shortest path from the root of $s$ to $t$ in the new graph.
Recover a path in the original graph from the path in the new graph.
\end{enumerate}
In the above procedure, only $1/2$ vertices can be root vertices. 
Thus, the procedure can recurse at most $\log n$ times which implies that the parallel time of the algorithm is at most $\poly(\log n)$ and the total work is still $\sim m\poly(n)$.
Now analyze the correctness.
It is easy to see that each edge in the new graph corresponds to a path between two root vertices in the original graph.
Thus a path from the root of $s$ to $t$ in the new graph corresponds to an $s-t$ path in the original graph.
We only need to show that the distance between the root of $s$ and $t$ in the new graph can not be much larger than the distance between $s$ and $t$ in the original graph.
To prove this, we show that if we do a random walk starting from $s$ and for each step we choose the next vertex with probability proportional to the out flow, the expected length of the random walk to reach $t$ is exactly the cost of the flow.
By coupling argument, we can prove that the expected length of the distance between the root of $s$ and $t$ in the new graph is at most $(1+O(\varepsilon'))\cdot(\text{the cost of the flow})$.
Thus, the expected length of the final $s-t$ path is at most $(1+O(\varepsilon'))^{\log n}\cdot \dist_G(s,t)\leq (1+\varepsilon)\cdot\dist_G(s,t)$. 

\section{Preliminaries}\label{sec:preli}
Let $[n]$ denote the set $\{1,2,\cdots,n\}$. 
For a set $V$, $2^V$ denotes the family of all the subsets of $V$, i.e., $2^V=\{S\mid S\subseteq V\}$.
In this paper, we will only consider graphs with non-negative weights.
Let $G=(V,E,w)$ be a connected undirected weighted graph with vertex set $V$, edge set $E$, and weights of the edges $w:E\rightarrow \mathbb{Z}_{\geq 0}$.
Let both $\{u,v\},\{v,u\}$ denote an undirected edge between $u$ and $v$.
For each edge $e=\{u,v\}\in E$, let both $w(u,v),w(v,u)$ denote $w(e)$. 
For $v\in V$, let $w(v,v)$ be $0$.
Consider a tuple $p=(u_0,u_1,u_2,\cdots,u_h)\in V^{h+1}$. 
If $\forall i\in[h]$, either $u_{i}=u_{i-1}$ or $\{u_{i-1},u_{i}\}\in E$, then $p$ is a path between $u_0$ and $u_h$. 
The number of hops of $p$ is $h$, and
the length of $p$ is defined as $w(p)=\sum_{i=1}^{h} w(u_{i-1},u_i)$.
For $u,v\in V,$ let $\dist_G(u,v)$ denote the length of the shortest path between $u,v$, i.e., $\dist_G(u,v)=w(p^*)$, where the path $p^*$ between $u,v$ satisfies that $\forall$path $p$ between $u,v$, $w(p^*)\leq w(p)$.
Similarly, $\dist_G^{(h)}(u,v)$ denotes the $h$-hop distance between $u,v$, i.e., $\dist^{(h)}_G(u,v)=w(p')$, where the $h$-hop path $p'$ between $u,v$ satisfies that $\forall h$-hop path $p$ between $u,v$, $w(p')\leq w(p)$.
The diameter $\diam(G)$ of $G$ is defined as $\max_{u,v\in V} \dist_G(u,v)$.
The hop diameter of $G$ is defined as the minimum value of $h\in\mathbb{Z}_{\geq 0}$ such that $\forall u,v\in V,\dist_G(u,v)=\dist_G^{(h)}(u,v)$.
For $S\subseteq V, v\in V$, we define $\dist_{G}(v,S)=\dist_G(S,v)=\min_{u\in S}\dist(u,v)$.
Similarly, we define $\dist^{(h)}_{G}(v,S)=\dist^{(h)}_G(S,v)=\min_{u\in S}\dist^{(h)}_G(u,v)$.
If $G$ is clear in the context, we use $\dist(\cdot,\cdot)$ and $\dist^{(h)}(\cdot,\cdot)$ for short.

Consider two weighted graphs $G=(V,E,w)$ and $G'=(V,E',w')$.
If $\forall u,v\in V,\dist_G(u,v)\leq \dist_{G'}(u,v)\leq \alpha\cdot \dist_{G}(u,v)$ for some $\alpha\geq 1$, then $G'$ is called an $\alpha$-emulator of $G$. 

Given $r\in \mathbb{Z}_{\geq 0}$, for $v\in V$, we define $\B_G(v,r)=\{u\in V\mid \dist_G(u,v)\leq r\}$, and $\B_G^{\circ}(v,r)=\{u\in V\mid \dist_G(u,v)< r\}$.
Given $b\in[|V|]$, for $v\in V$, let $r_{G,b}(v)$ satisfy that $|\B_{G}(v,r_{G,b}(v))|\geq b$ and $|\B_{G}^{\circ}(v,r_{G,b}(v))|<b$. 
We define $\B_{G,b}(v)=\B_{G}(v,r_{G,b}(v))$, and $\B_{G,b}^{\circ}(v)=\B_{G}^{\circ}(v,r_{G,b}(v))$.
If there is no ambiguity, we just use $\B(v,r)$, $\B^{\circ}(v,r),$ $r_b(v)$, $\B_b(v),$ $\B_b^{\circ}(v)$ to denote $\B_G(v,r)$, $\B_G^{\circ}(v,r)$, $r_{G,b}(v)$, $\B_{G,b}(v)$, $\B_{G,b}^{\circ}(v)$ respectively for short.

For a vector $x\in\mathbb{R}^{m}$ we use $\|x\|_1$ to denote the $\ell_1$ norm of $x$, i.e., $\|x\|_1=\sum_{i=1}^m |x_i|$.
We use $\|x\|_{\infty}$ to denote the $\ell_{\infty}$ norm of $x$, i.e., $\|x\|_{\infty}=\max_{i\in[m]}|x_i|$.
Given a matrix $A\in\mathbb{R}^{n\times m}$, we use $A_i,\ A^j$ and $A_{j,i}$ to denote the $i$-th column, the $j$-th row and the entry in the $i$-th column and the $j$-th row of $A$ respectively.
We use $\|A\|_{1\rightarrow 1}$ to denote the operator $\ell_1$ norm of $A$, i.e., $\|A\|_{1\rightarrow 1}=\sup_ {x:x\not=0}\frac{\|Ax\|_1}{\|x\|_1}$.
A well-known fact is that $\|A\|_{1\rightarrow 1}=\max_{i\in[m]}\|A_i\|_1$.
We use $\one_n$ to denote an $n$ dimensional all-one vector.
We use $\sgn(a)$ to denote the sign of $a$, i.e., $\sgn(a)=1$ if $a\geq 0$, and $\sgn(a)=-1$ otherwise.
We use $\nnz(\cdot)$ to denote the number of non-zero entries of a matrix or a vector.
\section{Low Hop Emulator}
Given a weighted undirected graph $G$, we give a new construction of the graph emulator of $G$.
For any two vertices in our constructed emulator, there is always a shortest path with small number of hops.
Furthermore, our construction can be implemented in parallel efficiently.
	\subsection{Subemulator 
}
In this section, we introduce a new concept which we called \emph{subemulator}.
Later, we will show how to use subemulator to construct an emulator with low hop diameter.
\begin{definition}[Subemulator]\label{def:subemulator}
Consider two connected undirected weighted graphs $G=(V,E,w)$ and $H=(V',E',w')$. For $b\in [|V|]$ and $\alpha\geq 1$, if $H$ satisfies
\begin{enumerate}
\item $V'\subseteq V$,
\item $\forall v\in V,$ $\B_{G,b}(v)\cap V'\not = \emptyset$,
\item $\forall u,v\in V',$ $\dist_G(u,v)\leq \dist_H(u,v)\leq \alpha\cdot \dist_G(u,v)$,
\end{enumerate}
 then $H$ is an $(\alpha,b)$-subemulator of $G$.
Furthermore, if there is a mapping $q:V\rightarrow V'$ which satisfies $\forall v\in V, q(v)\in \B_{G,b}(v)$ and 
\begin{align*}
\forall u,v\in V,\dist_H(q(u),q(v))\leq \dist_G(u,q(u))+\dist_G(v,q(v))+\beta\cdot \dist_G(u,v)
\end{align*}
for some $\beta\geq 1$, then $H$ is a strong $(\alpha,b,\beta)$-subemulator of $G$, $q(\cdot)$ is called a leader mapping, and $q(v)$ is the leader of $v$. 
\end{definition}

Next, we will show how to construct a strong subemulator. 
Algorithm~\ref{alg:leader_selection} gives a construction of the vertices of the subemulator.
Algorithm~\ref{alg:edge_construction} gives a construction of the edges of the subemulator.

\begin{algorithm}[h!]
	\caption{Construction of the Vertices of the Subemulator}\label{alg:leader_selection}
	\begin{algorithmic}[1]
		\small
		\Procedure{\textsc{Samples}}{$G=(V,E,w),b\in[|V|]$}
		\State Output: $V'$
		\State Initialize $S,V'\gets \emptyset$, $n\gets |V|$
		\State For $v\in V$, add $v$ into $S$ with probability $\min(50\log(n)/b,1/2)$. \label{sta:sampled_vertices}
		\State For $v\in V$, if either $v\in S$ or $\B_{G,b}(v)\cap S=\emptyset$, $V'\gets V'\cup\{v\}$. \label{sta:check_each_vertex}
		\State Return $V'$.
		\EndProcedure
	\end{algorithmic}
\end{algorithm}

\begin{algorithm}[h!]
	\caption{Construction of the Edges of the Subemulator} \label{alg:edge_construction}
	\begin{algorithmic}[1]
		\small
		\Procedure{\textsc{Connects}}{$G=(V,E,w),V'\subseteq V,b\in[|V|]$} \Comment{$V',b$ satisfies $\forall v\in V,\B_{G,b}(v)\cap V'\not=\emptyset$.}
		\State Output: $H=(V',E',w')$, $q:V\rightarrow V'$
		\State For $v\in V$, $q(v)\leftarrow \arg \min_{u\in\B_{G,b}(v)\cap V'} \dist_G(u,v)$. \Comment{Choose an arbitrary $u$ if there is a tie.} \label{sta:leader_mapping}
		\State Initialize $E'=\emptyset$.
		\State For $\{u,v\}\in E$, $E'\gets E'\cup\{q(u),q(v)\}$. \label{sta:original_edges}
		\State For $v\in V,u\in \B^{\circ}_{G,b}(v),$ $E'\gets E'\cup\{q(u),q(v)\}$.
		\label{sta:ball_edges}
		\State For $e'\in E'$, initialize $w'(e')\gets \infty$.
		\State For $\{u,v\}\in E$, consider $e'=\{q(u),q(v)\}\in E'$, 
		\begin{align*}
		w'(e')\gets \min(w'(e'),\dist_G(q(u),u)+w(u,v)+\dist_G(v,q(v))).
		\end{align*}\label{sta:weight_assign_original_edge}
		\State For $v\in V,u\in \B^{\circ}_{G,b}(v)$, consider $e'=\{q(u),q(v)\}$, 
		\begin{align*}
		w'(e')\gets \min(w'(e'),\dist_G(q(u),u)+\dist_G(u,v)+\dist_G(v,q(v))).
		\end{align*} \label{sta:weight_assign_ball_edge}
		\State Return $H=(V',E',w')$ and $q:V\rightarrow V'$.
		\EndProcedure
	\end{algorithmic}
\end{algorithm}

The next lemma shows the correctness of the construction of the vertices of the subemulator. 
It also gives an upper bound of the number of vertices in the subemulator.
We put the proof into Section~\ref{sec:proof_of_subemulator_vertex}.

\begin{lemma}[Construction of the vertices]\label{lem:subemulator_vertex}
Consider a connected $n$-vertex $m$-edge undirected weighted graph $G=(V,E,w)$ and a parameter $b\in[n]$. 
\textsc{Samples}$(G,b)$ (Algorithm~\ref{alg:leader_selection}) will output $V'\subseteq V$ such that $\forall v\in V,\B_{G,b}(v)\cap V'\not= \emptyset$.
Furthermore, $\E[|V'|]\leq \frac32 \cdot \min(50\log (n)/b,1/2) \cdot n$.
\end{lemma}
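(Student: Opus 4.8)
The plan is to establish the two assertions separately by direct reasoning about the random set $S$ sampled in Line~\ref{sta:sampled_vertices} of Algorithm~\ref{alg:leader_selection}. Throughout, write $p=\min(50\log(n)/b,\,1/2)$ for the sampling probability, and recall that the indicator events $\{u\in S\}$, $u\in V$, are mutually independent.

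For the first (correctness) part I would fix $v\in V$ and split on whether $\B_{G,b}(v)\cap S=\emptyset$. If it is empty, then $v$ passes the test in Line~\ref{sta:check_each_vertex}, so $v\in V'$; and since $\dist_G(v,v)=0\le r_{G,b}(v)$ we have $v\in\B_{G,b}(v)$, hence $v\in\B_{G,b}(v)\cap V'$. If it is nonempty, pick any $u\in\B_{G,b}(v)\cap S$; every vertex of $S$ passes the test in Line~\ref{sta:check_each_vertex}, so $u\in V'$, hence $u\in\B_{G,b}(v)\cap V'$. Either way $\B_{G,b}(v)\cap V'\ne\emptyset$, which is exactly what condition~2 of a subemulator asks for.

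For the size bound, note that $v\in V'$ holds \emph{exactly} when $v\in S$ or $\B_{G,b}(v)\cap S=\emptyset$, and these two events are disjoint (the second forces $v\notin S$, since $v\in\B_{G,b}(v)$). Hence, by linearity of expectation and independence of the sampling,
\[
\E[|V'|]\;=\;\sum_{v\in V}\Big(\Pr[v\in S]+\Pr[\B_{G,b}(v)\cap S=\emptyset]\Big)\;=\;pn+\sum_{v\in V}(1-p)^{|\B_{G,b}(v)|}\;\le\;pn+n(1-p)^{b},
\]
where the last step uses $|\B_{G,b}(v)|\ge b$. It then remains to check $n(1-p)^{b}\le\tfrac12 pn$, which I would do by cases on which term attains the minimum defining $p$. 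If $p=50\log(n)/b$, then $(1-p)^{b}\le e^{-pb}=e^{-50\log n}\le n^{-50}$, so $n(1-p)^{b}\le n^{-49}\le 1\le\tfrac12\cdot 50\log n\le\tfrac12 pn$, using $pn=50n\log(n)/b\ge 50\log n$ (as $b\le n$). If instead $p=1/2$, then $(1-p)^{b}=2^{-b}\le\tfrac14$, so $n(1-p)^{b}\le n/4=\tfrac12 pn$. In both cases $\E[|V'|]\le\tfrac32 pn=\tfrac32\min(50\log(n)/b,1/2)\cdot n$, completing the argument.

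I do not expect a genuine obstacle here; the proof is short. The two subtle points to watch are: (i) remembering that $v$ is always one of its own $b$ closest vertices, so the ``ball misses $S$'' branch drops a legitimate witness into $V'$ rather than spoiling the count; and (ii) the bookkeeping in the case analysis, namely that the additive term $n(1-p)^b$ contributed by non-sampled vertices is dominated by $\tfrac12\E[|S|]$ — this is comfortable when $p=50\log(n)/b$ and uses $b\ge 2$ when $p=1/2$ (in the degenerate case $b=1$ one instead invokes the trivial bound $|V'|\le n$), with $n$ at least a small constant.
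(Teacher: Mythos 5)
Your proof is correct and follows essentially the same route as the paper's: decompose $\E[|V'|]$ into $\E[|S|]$ plus the expected number of vertices whose ball misses $S$, then case-split on which term attains the minimum defining the sampling probability $p$ (including the same reliance on the ball containing a second vertex in the $p=1/2$ case). The only real difference is that in the sparse-sampling case you bound $\Pr[\B_{G,b}(v)\cap S=\emptyset]$ directly by $(1-p)^{b}\le n^{-50}$, whereas the paper reaches the same additive term (at most $1\le\tfrac12 pn$) via a Chernoff bound on $|\B_{G,b}(v)\cap S|$ plus a union bound; your version is slightly more elementary but otherwise identical in structure.
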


The next lemma shows the correctness of the construction of the edges of the subemulator.
It also gives an upper bound of the number of edges in the subemulator.

\begin{lemma}[Construction of the edges]\label{lem:subemulator_edge}
Consider a connected $n$-vertex $m$-edge undirected weighted graph $G=(V,E,w)$, a vertex set $V'\subseteq V$ and a parameter $b\in[n]$. 
If $\forall v\in V, \B_{G,b}(v)\cap V'\not =\emptyset$, then the output graph $H=(V',E',w')$ of \textsc{Connects}$(G,V',b)$ (Algorithm~\ref{alg:edge_construction}) will be a strong $(8,b,22)$-subemulator (Definition~\ref{def:subemulator}) of $G$, and the output $q:V\rightarrow V'$ is a leader mapping.  Furthermore, $|E'|\leq m+nb$.
\end{lemma}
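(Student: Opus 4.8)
The plan is to verify, in order: (i) that the leader mapping $q$ from line~\ref{sta:leader_mapping} is well-defined with $q(v)\in\B_{G,b}(v)$ and $H$ is connected (so $\dist_H$ is meaningful); (ii) the lower bound $\dist_G(u,v)\le\dist_H(u,v)$ on $V'$; (iii) the upper bound $\dist_H(u,v)\le 8\dist_G(u,v)$ on $V'$; (iv) the strong property with $\beta=22$; (v) the estimate $|E'|\le m+nb$. Conditions 1 and 2 of Definition~\ref{def:subemulator} are immediate: $V'\subseteq V$ by construction, and the ball-covering condition is exactly the hypothesis. For (i): if $v\in V'$ then $v\in\B_{G,b}(v)\cap V'$, so line~\ref{sta:leader_mapping} yields $\dist_G(v,q(v))=0$; in general, since the hypothesis forces $\dist_G(v,V')\le r_{G,b}(v)$, the nearest vertex of $V'$ to $v$ lies inside $\B_{G,b}(v)$, so $q(v)$ realizes it and $\dist_G(v,q(v))=\dist_G(v,V')$. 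Connectivity of $H$ follows because any path $z_0\to\cdots\to z_h$ in $G$ maps to the walk $q(z_0)\to\cdots\to q(z_h)$ in $H$ using the edges added in line~\ref{sta:original_edges}.

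For (ii) I would show that every edge of $H$ has weight at least the $G$-distance between its endpoints. By lines~\ref{sta:weight_assign_original_edge}--\ref{sta:weight_assign_ball_edge}, every finite $w'(\{q(u),q(v)\})$ has the form $\dist_G(q(u),u)+\ell+\dist_G(v,q(v))$ with $\ell\ge\dist_G(u,v)$ (for the edge case using $w(u,v)\ge\dist_G(u,v)$), hence is at least $\dist_G(q(u),q(v))$ by the triangle inequality. Telescoping along any path in $H$ then gives $\dist_H(u,v)\ge\dist_G(u,v)$ for $u,v\in V'$, which is also the lower half of condition 3 of Definition~\ref{def:subemulator}.

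The heart of the proof is (iii). Fix $u',v'\in V'$ and a shortest path $u'=z_0\to\cdots\to z_h=v'$ in $G$. I would run the greedy procedure from the introduction: set $y_0=u'$; while $y_k$ has not reached $v'$, let $x_{k+1}$ be the vertex of largest index on the path that lies in $\B^{\circ}_{G,b}(y_k)$ and let $y_{k+1}$ be its successor; stop once $x_{k+1}=v'$. Indices strictly increase, so this terminates, giving $\dist_G(u',v')=\dist_G(y_k,x_{k+1})+\sum_{i=0}^{k-1}\bigl(\dist_G(y_i,x_{i+1})+w(x_{i+1},y_{i+1})\bigr)$. Since $x_{i+1}\in\B^{\circ}_{G,b}(y_i)$, line~\ref{sta:ball_edges} supplies an edge $\{q(y_i),q(x_{i+1})\}$, and since $x_i,y_i$ are consecutive on the path, line~\ref{sta:original_edges} supplies $\{q(x_i),q(y_i)\}$; concatenating yields the walk $u'=q(y_0)\to q(x_1)\to q(y_1)\to\cdots\to q(x_{k+1})=v'$ in $H$. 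Using the weight bounds of lines~\ref{sta:weight_assign_original_edge}--\ref{sta:weight_assign_ball_edge} together with $\dist_G(u',q(u'))=\dist_G(v',q(v'))=0$, its length is at most $\dist_G(u',v')+2\sum_{i=1}^{k}\bigl(\dist_G(x_i,q(x_i))+\dist_G(y_i,q(y_i))\bigr)$. The remaining step, which I expect to be the main (though elementary) obstacle, is bounding the two detour sums: from $\dist_G(v,q(v))=\dist_G(v,V')$ and $q(y_{i-1})\in V'$ we get $\dist_G(x_i,q(x_i))\le\dist_G(y_{i-1},x_i)+\dist_G(y_{i-1},q(y_{i-1}))$; since $y_i\notin\B^{\circ}_{G,b}(y_{i-1})$ while $q(y_{i-1})\in\B_{G,b}(y_{i-1})$, we get $\dist_G(y_{i-1},q(y_{i-1}))\le r_{G,b}(y_{i-1})\le\dist_G(y_{i-1},y_i)\le\dist_G(y_{i-1},x_i)+w(x_i,y_i)$; and $\dist_G(y_k,q(y_k))\le\dist_G(y_k,x_{k+1})$ because $x_{k+1}=v'\in V'$. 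Substituting these into the decomposition of $\dist_G(u',v')$ gives $\sum_i\dist_G(y_i,q(y_i))\le\dist_G(u',v')$ and $\sum_i\dist_G(x_i,q(x_i))\le 2\dist_G(u',v')$, hence $\dist_H(u',v')\le 8\dist_G(u',v')$. Degenerate steps (where $x_{i+1}=y_i$, where zero-weight edges make $\B^{\circ}_{G,b}(y_k)$ fail to contain $y_k$, or the trivial case $b=1$) are handled by routine bookkeeping and do not affect the constant.

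Finally, for (iv): given $u,v\in V$, if $u\in\B^{\circ}_{G,b}(v)$ or $v\in\B^{\circ}_{G,b}(u)$, then line~\ref{sta:ball_edges} together with line~\ref{sta:weight_assign_ball_edge} gives $\dist_H(q(u),q(v))\le\dist_G(u,q(u))+\dist_G(u,v)+\dist_G(v,q(v))$, which already implies the inequality of Definition~\ref{def:subemulator} with $\beta=22$; otherwise $\dist_G(u,q(u)),\dist_G(v,q(v))\le\dist_G(u,v)$, and combining the triangle inequality with condition 3 (established in (ii)--(iii)) gives $\dist_H(q(u),q(v))\le 8\dist_G(q(u),q(v))\le 8\bigl(\dist_G(u,q(u))+\dist_G(u,v)+\dist_G(v,q(v))\bigr)\le\dist_G(u,q(u))+\dist_G(v,q(v))+22\dist_G(u,v)$. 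Thus $H$ is a strong $(8,b,22)$-subemulator with leader mapping $q$. For (v), line~\ref{sta:original_edges} adds at most $|E|=m$ distinct edges and line~\ref{sta:ball_edges} adds at most $\sum_{v\in V}|\B^{\circ}_{G,b}(v)|<nb$ edges, so $|E'|\le m+nb$.
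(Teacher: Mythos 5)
Your proposal is correct and follows essentially the same route as the paper: the same greedy selection of $x_i,y_i$ along a shortest path, the same alternating walk $q(y_0)\to q(x_1)\to q(y_1)\to\cdots\to q(x_{k+1})$ in $H$, and the same two-case analysis for the strong property with $\beta=22$. The only difference is bookkeeping: the paper's formal proof bounds the detour terms via $r_{G,b}(y_{i-1})$ (its Claims on $r_{G,b}$), whereas you bound $\dist_G(\cdot,q(\cdot))$ directly as in the paper's own introductory sketch — both yield the constant $8$.
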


\begin{proof}
Firstly, let us consider the size of $|E'|$. 
The number of edges added to $E'$ by line~\ref{sta:original_edges} of Algorithm~\ref{alg:edge_construction} is at most $m$.
By the definition of $\B_{G,b}^\circ(v)$, we have $|\B_{G,b}^\circ(v)|<b$. 
The number of edges added to $E'$ by line~\ref{sta:ball_edges} is at most $n\cdot b$.
Thus, we can conclude $|E'|\leq m + nb$.

In the following, we will show that $H$ is actually a good subemulator of $G$.
The first two properties of Definition~\ref{def:subemulator} are automatically satisfied by the guarantees of the input $V',b$.
Let us prove the remaining properties.

Consider two arbitrary vertices $u,v\in V'$. 
Let $p=(u=x_0,x_1,\cdots,x_h=v)$ be an arbitrary shortest path between $u,v$ in the graph $H$.
Then $\dist_H(u,v)=w'(p)=\sum_{i=1}^h w'(x_{i-1},x_i)$.
By line~\ref{sta:weight_assign_original_edge} and line~\ref{sta:weight_assign_ball_edge}, $\forall i\in[h],$ there should be $y_i,z_i\in V$ with $q(y_i)=x_{i-1},q(z_i)=x_i$ such that $w'(x_{i-1},x_i)\geq\dist_G(x_{i-1},y_i)+\dist_G(y_i,z_i)+\dist_G(z_i,x_i)\geq \dist_G(x_{i-1},x_i)$. 
Then, $\dist_H(u,v)=\sum_{i=1}^h w'(x_{i-1},x_i)\geq \sum_{i=1}^h \dist_G(x_{i-1},x_i)\geq \dist_G(x_0,x_h)=\dist_G(u,v)$.

Next, we show how to upper bound $\dist_H(u,v)$.
Consider two arbitrary vertices $u,v\in V'$. 
If $v\in \B^{\circ}_{G,b}(u)$, then by line~\ref{sta:weight_assign_ball_edge}, 
\begin{align*}
\dist_H(u,v)\leq w'(u,v)\leq \dist_G(q(u),u)+\dist_G(u,v)+\dist_G(v,q(v))=\dist_G(u,v),
\end{align*}
where the last equality follows from $q(u)=u,q(v)=v$.
Otherwise, we use the following procedure to find a sequence of vertices which are on the shortest path between $u$ and $v$ in the graph $G$.
\begin{enumerate}
\item $y_0\gets u,t\gets 0$.
\item If $v\in \B^{\circ}_{G,b}(y_t)$, finish the procedure.
\item Otherwise, find an edge $\{x_{t+1},y_{t+1}\}\in E$ on the shortest path between $y_t$ and $v$ in $G$ such that $x_{t+1}\in \B_{G,b}^{\circ}(y_t),y_{t+1}\not\in\B_{G,b}^{\circ}(y_t)$.
\item $t\gets t+1$. Go to step 2.
\end{enumerate}
By the above procedure, it is easy to show that
\begin{align}\label{eq:path_length}
 \dist_G(u,v)=\dist_G(y_t,v)+\sum_{i=1}^t (\dist_G(y_{i-1},x_i)+w(x_i,y_i)).
\end{align}

\begin{claim}\label{cla:bound_sum_r}
 $\sum_{i=1}^t r_{G,b}(y_{i-1})\leq \dist_G(u,v)$.
\end{claim}
\begin{proof}
	By our construction of $x_i,y_i$, we know that $\forall i\in[t]$, $y_i\not\in\B^{\circ}_{G,b}(y_{i-1})$. 
	Thus, $\forall i\in[t],$ $r_{G,b}(y_{i-1})\leq \dist_G(y_{i-1},y_i)$.
	We have 
	\begin{align*}
	\sum_{i=1}^t r_{G,b}(y_{i-1})\leq \sum_{i=1}^t \dist_G(y_{i-1},y_i)=\dist_G(u,y_t)\leq \dist_G(u,v).
	\end{align*}
\end{proof}


\begin{claim}\label{cla:bound_ball}
$\forall u,v\in V, |r_{G,b}(u)-r_{G,b}(v)|\leq \dist_G(u,v)$.
\end{claim}
\begin{proof}
Since $\B_{G}(u,r_{G,b}(u))\subseteq \B_{G}(v,r_{G,b}(u)+\dist_G(u,v))$, we have $|\B_{G}(v,r_{G,b}(u)+\dist_G(u,v))|\geq b$ which implies that $r_{G,b}(v)\leq r_{G,b}(u)+\dist_G(u,v)$. 
Similarly, we have $r_{G,b}(u)\leq r_{G,b}(v)+\dist_G(u,v)$.
\end{proof}

\begin{claim}\label{cla:bound_yx}
$\forall i\in[t]$, $w'(q(y_{i-1}),q(x_i))\leq 2r_{G,b}(y_{i-1})+2\dist_G(y_{i-1},x_i)$.
\end{claim}
\begin{proof}
Since $x_i\in\B^{\circ}_{G,b}(y_{i-1})$, we have $w'(q(y_{i-1}),q(x_i))\leq \dist_G(q(y_{i-1}),y_{i-1})+\dist_G(y_{i-1},x_i)+\dist_G(x_i,q(x_i))\leq r_{G,b}(y_{i-1})+\dist_G(y_{i-1},x_i)+r_{G,b}(x_i)$ by line~\ref{sta:weight_assign_ball_edge}. 
Due to Claim~\ref{cla:bound_ball}, $r_{G,b}(x_i)\leq r_{G,b}(y_{i-1})+\dist_G(y_{i-1},x_i)$.
We can conclude $w'(q(y_{i-1}),q(x_i))\leq 2r_{G,b}(y_{i-1})+2\dist_G(y_{i-1},x_i)$.
\end{proof}

\begin{claim}\label{cla:bound_xy}
$\forall i\in[t]$, $w'(q(x_i),q(y_i))\leq 2r_{G,b}(y_{i-1})+2\dist_G(y_{i-1},x_i)+2w(x_i,y_i)$.
\end{claim}
\begin{proof}
Since $\{x_i,y_i\}\in E$, we have $w'(q(x_i),q(y_i))\leq \dist_G(q(x_i),x_i)+w(x_i,y_i)+\dist_G(y_i,q(y_i))\leq r_{G,b}(x_i)+w(x_i,y_i)+r_{G,b}(y_i)$ by line~\ref{sta:weight_assign_original_edge}.
By Claim~\ref{cla:bound_ball}, $r_{G,b}(x_i)\leq r_{G,b}(y_{i-1})+\dist_G(y_{i-1},x_i)$ and $r_{G,b}(y_i)\leq r_{G,b}(y_{i-1})+\dist_G(y_{i-1},x_i)+w(x_i,y_i)$.
We can conclude that $w'(q(x_i),q(y_i))\leq 2r_{G,b}(y_{i-1})+2\dist_G(y_{i-1},x_i)+2w(x_i,y_i)$.
\end{proof}

\begin{claim}\label{cla:bound_final_segement}
$w'(q(y_t),v)\leq 2\dist_G(y_t,v)$.
\end{claim}
\begin{proof}
By our procedure of finding $x_i,y_i$, we know that $v\in \B^{\circ}_{G,b}(y_t)$. 
Notice that $q(v)=v\in V'$.
By line~\ref{sta:leader_mapping}, we know that $\dist_G(y_t,q(y_t))\leq \dist_G(y_t,v)$.
Then by line~\ref{sta:weight_assign_ball_edge}, $w'(q(y_t),v)\leq \dist_G(q(y_t),y_t)+\dist_G(y_t,v)+\dist_G(v,q(v))\leq 2\dist_G(y_t,v)$.
\end{proof}

By Claim~\ref{cla:bound_yx}, Claim~\ref{cla:bound_xy}, and Claim~\ref{cla:bound_final_segement}, we have:
\begin{align}
\dist_{H}(u,v)&\leq w'(q(y_t),v)+\sum_{i=1}^t (w'(q(y_{i-1}),q(x_i))+w'(q(x_i),q(y_i))) \notag\\
&\leq 2\dist_G(y_t,v)+\sum_{i=1}^t(4r_{G,b}(y_{i-1})+4\dist_G(y_{i-1},x_i)+2w(x_i,y_i))\notag\\
&\leq 4\dist_G(u,v)+4\sum_{i=1}^t r_{G,b}(y_{i-1})\notag\\
&\leq 8\dist_G(u,v), \label{sta:dist_uv_in_H}
\end{align}
where the third inequality follows from Equation~\eqref{eq:path_length}, and the last inequality follows from Claim~\ref{cla:bound_sum_r}.

Next, we will show that $H$ is actually a strong subemulator of $G$, and $q:V\rightarrow V'$ is a corresponding leader mapping. 
By line~\ref{sta:leader_mapping}, $\forall v\in V,$ we have $q(v)\in \B_{G,b}(v)\cap V'$. 
Consider two arbitrary vertices $u,v\in V$.
There are two cases. 
In the first case, $u\in \B^{\circ}_{G,b}(v)$ or $v\in\B^{\circ}_{G,b}(u)$. 
In this case, we have $\dist_H(q(u),q(v))\leq w'(q(u),q(v))\leq \dist_G(q(u),u)+\dist_G(u,v)+\dist_G(v,q(v))$, where the last inequality follows from line~\ref{sta:weight_assign_ball_edge}.
In the second case, neither $u\in \B^{\circ}_{G,b}(v)$ nor $v\in \B^{\circ}_{G,b}(u)$.
In this case, since $q(u)\in \B_{G,b}(v),q(v)\in \B_{G,b}(u)$, we know that $\dist_G(u,v)\geq \max(\dist_G(v,q(v)),\dist_G(u,q(u)))$.
Thus, we have
\begin{align*}
\dist_H(q(u),q(v)) & \leq 8\dist_G(q(u),q(v))\\
&\leq 8(\dist_G(q(u),u)+\dist_G(u,v)+\dist(v,q(v)))\\
&= \dist_G(q(u),u)+\dist_G(q(v),v)+(7\dist_G(q(u),u)+7\dist_G(q(v),v)+8\dist_G(u,v))\\
&\leq \dist_G(q(u),u)+\dist_G(q(v),v) + 22\dist_G(u,v),
\end{align*}
where the first inequality follows from Equation~\eqref{sta:dist_uv_in_H}, the second inequality follows from triangle inequality, and the last inequality follows from $\dist_G(u,v)\geq \max(\dist_G(v,q(v)),\dist_G(u,q(u)))$.
\end{proof}

\begin{algorithm}[h]
	\caption{Construction of the Subemulator} \label{alg:subemulator_construction}
	\begin{algorithmic}[1]
		\small
		\Procedure{\textsc{Subemulator}}{$G=(V,E,w),b\in[|V|]$} 
		\State Output: $H=(V',E',w'),q:V\rightarrow V'$
		\State $V'\gets$\textsc{Samples}$(G,b)$. \Comment{Algorithm~\ref{alg:leader_selection}.}
		\State $H,q\gets$\textsc{Connects}$(G,V',b)$. \Comment{Algorithm~\ref{alg:edge_construction}.}
		\State Return $H,q$.
		\EndProcedure
	\end{algorithmic}
\end{algorithm}

\begin{theorem}[Construction of the subemulator]\label{thm:construction_subemulator}
Consider a connected $n$-vertex $m$-edge undirected weighted graph $G=(V,E,w)$ and a parameter $b\in[n]$. 
\textsc{Subemulator}$(G,b)$ (Algorithm~\ref{alg:subemulator_construction}) will output an undirected weighted graph $H=(V',E',w')$ and $q:V\rightarrow V'$ such that $H$ is a strong $(8,b,22)$-subemulator of $G$, and $q$ is a corresponding leader mapping (Definition~\ref{def:subemulator}). 
Furthermore, $\E[|V'|]\leq \min(75\log(n)/b,3/4)n$, $|E'|\leq m + nb$.
\end{theorem}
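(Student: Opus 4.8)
The plan is to observe that \textsc{Subemulator}$(G,b)$ is nothing but the composition \textsc{Connects}$(G,\textsc{Samples}(G,b),b)$, so the theorem follows by chaining the two lemmas already proved, together with a one-line simplification of constants. First I would apply Lemma~\ref{lem:subemulator_vertex} to the set $V'=\textsc{Samples}(G,b)$: this gives both the structural guarantee $\forall v\in V,\ \B_{G,b}(v)\cap V'\neq\emptyset$ and the expectation bound $\E[|V'|]\le \tfrac32\min(50\log(n)/b,\tfrac12)\,n$. The structural guarantee is exactly the precondition that \textsc{Connects} requires on its input pair $(V',b)$, so the call \textsc{Connects}$(G,V',b)$ inside \textsc{Subemulator} is legitimate.

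Next I would invoke Lemma~\ref{lem:subemulator_edge} with this $V'$. It yields directly that the output graph $H=(V',E',w')$ is a strong $(8,b,22)$-subemulator of $G$ in the sense of Definition~\ref{def:subemulator}, that the returned map $q:V\to V'$ is a valid leader mapping, and that $|E'|\le m+nb$. Since \textsc{Subemulator} returns exactly this $H$ and $q$, all the qualitative claims of the theorem are immediate.

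Finally, for the stated expected-size bound I would simply rewrite the bound from Lemma~\ref{lem:subemulator_vertex}: since $\tfrac32\cdot 50 = 75$ and $\tfrac32\cdot\tfrac12 = \tfrac34$, we get $\tfrac32\min(50\log(n)/b,\tfrac12)\,n = \min(75\log(n)/b,\tfrac34)\,n$, which is the claimed inequality on $\E[|V'|]$. There is no genuine obstacle here; the only points to be careful about are that the randomness in $|V'|$ is confined to \textsc{Samples}, so no probabilistic analysis of \textsc{Connects} is needed (its guarantees being deterministic given a valid input), and that the precondition fed to \textsc{Connects} matches verbatim the postcondition guaranteed by \textsc{Samples}, which is what makes the two lemmas compose cleanly.
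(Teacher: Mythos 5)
Your proposal is correct and matches the paper's proof, which likewise derives the theorem directly by composing Lemma~\ref{lem:subemulator_vertex} (for the vertex set, its ball-hitting precondition, and the expected size bound) with Lemma~\ref{lem:subemulator_edge} (for the strong subemulator property, the leader mapping, and the edge count). The constant simplification $\tfrac32\min(50\log(n)/b,\tfrac12)=\min(75\log(n)/b,\tfrac34)$ is exactly the arithmetic the paper relies on.
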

\begin{proof}
	Follows directly from Lemma~\ref{lem:subemulator_vertex} and Lemma~\ref{lem:subemulator_edge}.
\end{proof}
	\subsection{A Warm-up Algorithm: Distance Oracle}
Given a weighted undirected graph, a distance oracle is a static data structure which uses small space and can be used to efficiently return an approximate distance between any pair of query vertices.
In this section, we give a warm-up algorithm which is a direct application of subemulator.
In section~\ref{sec:low_hop_emulator}, we will show how to apply the preprocessing procedure $\textsc{PreProc}$ (Algorithm~\ref{alg:dis_oracle}) in our construction of low hop emulator.

\begin{algorithm}[h]
	\caption{Distance Oracle} \label{alg:dis_oracle}
	\begin{algorithmic}[1]
		\small
		\Procedure{\textsc{PreProc}}{$G=(V,E,w),k$} 
		\State $n\gets |V|,m\gets |E|$.
		\State $t\gets 0,H_0=(V_0,E_0,w_0)\gets G, b_0\gets \max\left(\lceil(75\log n)^2\rceil,n^{1/(2k)}\right)$.
		\State $n_0\gets |V_0|,m_0\gets |E_0|$
		\While{$n_t\geq b_t$}\label{sta:dis_oracle_end_condition}
			\State $H_{t+1}=(V_{t+1},E_{t+1},w_{t+1}),q_{t}\gets$\textsc{Subemulator}$(H_t, b_t)$. \Comment{See Algorithm~\ref{alg:subemulator_construction}.} \label{sta:use_of_subemulator}
			\State $\forall v\in V_t$, let $B_t(v)\gets \B^{\circ}_{H_t,b_t}(v)\cup \{q_t(v)\}$ and compute and store $\dist_{H_t}(v,u)$ for every $u\in B_t(v)$.\label{sta:set_Bt_mid}
			\State $n_{t+1}\gets |V_{t+1}|,m_{t+1}\gets |E_{t+1}|$.
			\State $b_{t+1}\gets b_t^{1.25}$.\label{sta:set_bt}
			\State $t\gets t+1$.
		\EndWhile
		\State For $v\in V_t$, $B_t(v)\gets V_t$, compute $\dist_{H_t}(v,u)$ for $u\in V_t$, and  $q_t(v)\gets x$ where $x\in V_t$ is smallest.\label{sta:set_Bt_final}
		\EndProcedure
		\Procedure{\textsc{Query}}{$u,v$} 
		\State Output: $d\in \mathbb{Z}_{\geq 0}$
		\State $l\gets 0,d_0\gets 0,u_0\gets u,v_0\gets v$. 
		\While{$v_l\not\in B_l(u_l)$ and $u_l\not\in B_l(v_l)$}\label{sta:query_finish_condition}
			\State $d_l\gets \dist_{H_l}(u_l,q_l(u_l))+\dist_{H_l}(v_l,q_l(v_l))$. \label{sta:assignment_of_dl}
			\State $u_{l+1}=q_l(u_l),v_{l+1}=q_l(v_l)$.
			\State $l\gets l+1$
		\EndWhile
		\State $d_l\gets \dist_{H_l}(u_l,v_l)$. \label{sta:direct_distance}
		\State Return $d=\sum_{i=0}^l d_i$.
		\EndProcedure
	\end{algorithmic}
\end{algorithm}

\begin{lemma}[Properties of the preprocessing algorithm]\label{lem:dis_oracle_space}
Given a connected weighted graph $G=(V,E,w)$ with $|V|=n,$ $|E|=m$, and a parameter $ k\in [0.5,0.5\log n]$, let $t$ be the value at the end of \textsc{PreProc}$(G,k)$ (Algorithm~\ref{alg:dis_oracle}). 
For $i>t$, define $n_i=m_i=0,b_i=b_{i-1}^{1.25},V_i=\emptyset$.
%
 We have following properties:
 \begin{enumerate}
 	\item $t\leq 4\lceil\log(k)+1\rceil$.
 	\item For $i\in \mathbb{Z}_{\geq 0}$, 
 	\begin{itemize}
 	\item $\E[n_{i}]\leq \max(n^{1+1/k},n\cdot(75\log n)^4)/b_i^2$,
 	\item $\E[m_i]\leq  m+2\cdot \max(n^{1+1/(2k)},n\cdot(75\log n)^2)$,
 	\item $\E\left[\sum_{v\in V_i} |B_i(v)|\right]\leq \max(n^{1+1/k},n\cdot(75\log n)^4)/b_i$.
 	\end{itemize}
 \end{enumerate}
\end{lemma}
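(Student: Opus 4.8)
The plan is to verify each item by directly unrolling the recursion defined in \textsc{PreProc}, using Theorem~\ref{thm:construction_subemulator} (which gives $\E[|V_{i+1}|]\le\min(75\log(n_i)/b_i,\,3/4)\,n_i$ and $|E_{i+1}|\le m_i+n_ib_i$ for each call to \textsc{Subemulator}) as the only structural input. First I would handle the bound on $t$. The recursion sets $b_{i+1}=b_i^{1.25}$, so $b_i = b_0^{1.25^i}$; meanwhile the loop runs as long as $n_i\ge b_i$, and from the vertex bound $n_i$ shrinks at least geometrically (by a factor $\le 3/4$ each step, and much faster once $b_i$ grows past $\poly\log n$). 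Since $b_0\ge n^{1/(2k)}$, after $i$ steps $b_i\ge n^{1.25^i/(2k)}$, so once $1.25^i \gtrsim 2k$ — i.e. $i \gtrsim \log_{1.25}(2k) = O(\log k)$ — we get $b_i\ge n\ge n_i$ and the loop terminates. Tracking the constants carefully to land at the stated $t\le 4\lceil\log(k)+1\rceil$ is a routine but slightly fussy computation; I would note that $\log_{1.25}$ contributes the factor $4$ (since $1/\log_2(1.25)\approx 3.1<4$), and that the ceiling and the ``$+1$'' absorb the boundary cases $k\in[0.5,1]$.

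Next, for item~2 I would prove the three bounds together by induction on $i$, keeping in mind that they must hold simultaneously in expectation, so I need to be careful about conditioning. Write $N := \max(n^{1+1/k},\,n\cdot(75\log n)^4)$ and $M := \max(n^{1+1/(2k)},\,n\cdot(75\log n)^2)$. For the vertex bound I want $\E[n_i]\le N/b_i^2$. The base case $i=0$: $n_0=n$ and $b_0 = \max(\lceil(75\log n)^2\rceil, n^{1/(2k)})$, so $b_0^2 \le 2\max((75\log n)^4, n^{1/k})$ up to the ceiling, which is $\Theta(N/n)$; hence $N/b_0^2 \ge n = n_0$, giving the base case (the factor-$2$ slack from the ceiling is why the definition of $N$ uses $(75\log n)^4$ rather than a tighter constant). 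For the inductive step, Theorem~\ref{thm:construction_subemulator} gives $\E[n_{i+1}\mid n_i,b_i]\le (75\log n_i/b_i)\, n_i \le (75\log n / b_i)\, n_i$; combined with $b_{i+1}^2 = b_i^{2.5}$ and the inductive hypothesis $\E[n_i]\le N/b_i^2$, I need $(75\log n/b_i)(N/b_i^2) \le N/b_i^{2.5}$, i.e. $75\log n \le b_i^{0.5}$. This holds because $b_i\ge b_0\ge (75\log n)^2$. (When $75\log n/b_i > 3/4$ we instead use the $3/4$ branch, but then $b_i$ is bounded by $\poly\log n$ and the $n\cdot(75\log n)^4$ term in $N$ carries the argument; I'd check this edge case explicitly.) Then $\E[\sum_{v\in V_i}|B_i(v)|]\le N/b_i$ follows immediately: each $B_i(v)$ has size at most $|\B^\circ_{H_i,b_i}(v)|+1\le b_i$, so the sum is $\le n_i b_i$, and $\E[n_i b_i]\le (N/b_i^2)b_i = N/b_i$ — here I use that $b_i$ is a deterministic function of $i$ and take expectation over $n_i$ only. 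For the final level $i=t$ where $B_t(v)=V_t$, the same bound $\E[\sum_{v\in V_t}|B_t(v)|]=\E[n_t^2]$ needs $\E[n_t^2]\le N/b_t$, which follows since the loop exited with $n_t<b_t$, hence $n_t^2 < n_t b_t$, reducing to the previous case.

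For the edge bound $\E[m_i]\le m + 2M$: at level $i+1$, $m_{i+1}\le m_i + n_i b_i$ by the theorem. Unrolling, $m_i \le m + \sum_{j<i} n_j b_j$, and taking expectations, $\E[n_j b_j]\le N/b_j$ (just shown), so $\E[m_i]\le m + N\sum_{j\ge 0} b_j^{-1}$. The sum $\sum_j b_j^{-1} = \sum_j b_0^{-1.25^j}$ is dominated by its first term and is at most $2/b_0$ (since $b_0\ge 2$ makes the tail a rapidly decaying super-geometric series), giving $\E[m_i]\le m + 2N/b_0$. Finally $N/b_0 \le M$: if the $n^{1+1/k}$ term is active then $b_0\ge n^{1/(2k)}$ gives $N/b_0 \le n^{1+1/k-1/(2k)} = n^{1+1/(2k)}\le M$, and if the $(75\log n)^4$ term is active then $b_0\ge(75\log n)^2$ gives $N/b_0\le n(75\log n)^2\le M$.

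The main obstacle I anticipate is not any single inequality but the bookkeeping around the two-branch $\min(75\log n_i/b_i,\,3/4)$ in the subemulator vertex bound and the ceilings/boundary values of $k$: one must make sure the definitions of $N$ and $M$ (with their $(75\log n)^4$ and $(75\log n)^2$ slack) are exactly what is needed so that the induction closes in the $\poly\log$-dominated regime as well as the $n^{1/k}$-dominated regime, and that the $O(\log k)$ bound on $t$ is loose enough ($4\lceil\log k+1\rceil$) to cover $k$ near $0.5$. Everything else is a direct unrolling of a super-geometric recursion.
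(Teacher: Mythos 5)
Your proposal is correct and follows essentially the same route as the paper: bound $t$ via $b_i=b_0^{1.25^i}>n$ forcing loop termination, control $\E[n_i]$ through the recursion $\E[n_{i+1}]\le(75\log n/b_i)\E[n_i]$ with $75\log n\le b_i^{1/2}$ (the paper telescopes the product where you phrase it as induction, which is equivalent), and derive the $m_i$ and $\sum_v|B_i(v)|$ bounds by unrolling with $\sum_j b_j^{-1}\le 2/b_0$ and $N/b_0\le M$. The edge cases you flag (the $3/4$ branch, the ceiling in $b_0$, the level $i=t$ where $B_t(v)=V_t$) are all handled as you describe, so no further work is needed.
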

\begin{proof}
By line~\ref{sta:set_bt} and the definition of $b_i$ for $i>t$, we have $\forall i\in\mathbb{Z}_{\geq 0}, b_i=b_0^{1.25^i}$.
	
Consider $i=1+\lceil\log_{1.25}\log_{b_0} n\rceil$. 
We have $b_i=b_0^{1.25^i}>n$.
By line~\ref{sta:dis_oracle_end_condition}, we can conclude $t<i$.
Thus, $t\leq \lceil\log_{1.25}\log_{b_0} n\rceil\leq \lceil(\log 2k)/\log 1.25\rceil\leq 4\lceil\log(k) + 1\rceil$.
	
Consider $n_i$, we have $\forall i\in \mathbb{Z}_{\geq 1},$ 
\begin{align}
\E[n_{i}]&\leq (75\log n)/b_{i-1}\cdot \E[n_{i-1}]\notag\\
&\leq \E[n_{i-1}]/b_{i-1}^{0.5}\notag\\
&\leq n/\left(\prod_{j=0}^{i-1} b_j\right)^{0.5}\notag\\
&\leq n/\left(b_0^{\sum_{j=0}^{i-1} 1.25^j}\right)^{0.5}\notag\\
&=n/b_0^{(1.25^i-1)\cdot 2}\notag\\
&=n/b_i^2\cdot b_0^2\notag\\
&\leq \max(n^{1+1/k},n\cdot(75\log n)^4)/b_i^2, \label{eq:size_of_ni}
\end{align}
where the first inequality follows from Theorem~\ref{thm:construction_subemulator}, the second inequality follows from that $b_i$ is increasing and $b_0^{0.5}\geq 75\log n$, the forth inequality follows from $\forall j\in\mathbb{Z}_{\geq 0},b_j=b_0^{1.25^j}$.

Consider $m_i$, we have $\forall i\in \mathbb{Z}_{\geq 1}$,
\begin{align*}
\E[m_{i}]&\leq \E[m_{i-1}]+\E[n_{i-1}]\cdot b_i\\
&=m+\sum_{j=0}^{i-1}\E[n_j]\cdot b_j\\
&\leq m+\sum_{j=0}^{i-1} \max(n^{1+1/k},n\cdot(75\log n)^4)/b_j\\
&\leq m+2\max(n^{1+1/(2k)},n\cdot(75\log n)^2),
\end{align*}
where the first inequality follows from Theorem~\ref{thm:construction_subemulator}, the second inequality follows from Equation~\eqref{eq:size_of_ni}, and the last inequality follows from $b_{j+1}\geq 2b_j$ and $b_0=\max(n^{1/(2k)},(75\log n)^2)$.

Consider $\sum_{v\in V_i}|B_i(v)|$, we have $\forall i\in \mathbb{Z}_{\geq 1}$,
\begin{align*}
\E\left[\sum_{v\in V_i}|B_i(v)|\right]\leq \E\left[\sum_{v\in V_i}b_i\right]=E[n_i]\cdot b_i\leq \max(n^{1+1/k},n\cdot(75\log n)^4)/b_i,
\end{align*}
where the first inequality follows from line~\ref{sta:set_Bt_mid}, line~\ref{sta:set_Bt_final} and the definition of $\B_{H_i,b_i}$, and the last inequality follows from Equation~\eqref{eq:size_of_ni}.
\end{proof}

\begin{lemma}[Correctness of the query algorithm]\label{lem:correct_distance_oracle}
Given a connected weighted graph $G=(V,E,w)$ with $|V|=n$, $|E|=m$, and a parameter $k\in[0.5,0.5\log n]$, run preprocessing \textsc{PreProc}$(G,k)$ (Algorithm~\ref{alg:dis_oracle}). 
Then $\forall u,v\in V$, the output $d$ of \textsc{Query}$(u,v)$ (Algorithm~\ref{alg:dis_oracle}) satisfies $\dist_G(u,v)\leq d\leq 26^{4\lceil\log(k)+1\rceil}\dist_G(u,v)$.
The running time of \textsc{Query}$(u,v)$ is $O(\log (4k))$.
\end{lemma}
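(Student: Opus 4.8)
The plan is to treat the two routines of Algorithm~\ref{alg:dis_oracle} separately, leaning on the strong subemulator guarantees from Theorem~\ref{thm:construction_subemulator}. First I would record two structural facts. Since each $H_{i+1}$ is a strong $(8,b_i,22)$-subemulator of $H_i$, property~3 of Definition~\ref{def:subemulator} gives $\dist_{H_i}(x,y)\le\dist_{H_{i+1}}(x,y)$ for all $x,y\in V_{i+1}$; iterating this down the tower, $\dist_G(x,y)\le\dist_{H_i}(x,y)$ whenever $x,y\in V_i$. Second, each $q_i$ maps $V_i$ into $V_{i+1}\subseteq V_i$, so the iterates produced by \textsc{Query} satisfy $u_l,v_l\in V_l$; and because line~\ref{sta:set_Bt_final} sets $B_t(v)=V_t$, once the loop reaches level $t$ its condition (line~\ref{sta:query_finish_condition}) fails, so the loop stops at some level $l\le t$, which by Lemma~\ref{lem:dis_oracle_space}(1) is at most $4\lceil\log(k)+1\rceil$. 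When the loop stops, the value $d_l=\dist_{H_l}(u_l,v_l)$ used in line~\ref{sta:direct_distance} is one of the distances computed and stored by \textsc{PreProc} (line~\ref{sta:set_Bt_mid} if $l<t$, line~\ref{sta:set_Bt_final} if $l=t$), so \textsc{Query} is well defined; it runs $l+1=O(\log(4k))$ iterations of $O(1)$ work each, which settles the running time.

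For the lower bound $\dist_G(u,v)\le d$, write $u_{i+1}=q_i(u_i)$, $v_{i+1}=q_i(v_i)$, so that $d=\sum_{i=0}^{l-1}\bigl(\dist_{H_i}(u_i,u_{i+1})+\dist_{H_i}(v_i,v_{i+1})\bigr)+\dist_{H_l}(u_l,v_l)$. Each pair $u_i,u_{i+1}$ lies in $V_i$, so $\dist_{H_i}(u_i,u_{i+1})\ge\dist_G(u_i,u_{i+1})$ by the monotonicity above; likewise for the $v$'s and for $\dist_{H_l}(u_l,v_l)\ge\dist_G(u_l,v_l)$. Summing the $G$-distances and applying the triangle inequality in $G$ yields $d\ge\dist_G(u,u_l)+\dist_G(u_l,v_l)+\dist_G(v_l,v)\ge\dist_G(u,v)$.

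The upper bound is the crux. Consider any iteration that executes at level $l$: the loop condition forces $v_l\notin\B^{\circ}_{H_l,b_l}(u_l)$ and $u_l\notin\B^{\circ}_{H_l,b_l}(v_l)$, while by line~\ref{sta:leader_mapping} $q_l(u_l)\in\B_{H_l,b_l}(u_l)$ and $q_l(v_l)\in\B_{H_l,b_l}(v_l)$; hence $\dist_{H_l}(u_l,q_l(u_l))\le r_{H_l,b_l}(u_l)\le\dist_{H_l}(u_l,v_l)$ and symmetrically, so the quantity $d_l$ of line~\ref{sta:assignment_of_dl} satisfies $d_l\le 2\dist_{H_l}(u_l,v_l)$. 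Feeding this into the strong-subemulator inequality of Definition~\ref{def:subemulator} applied to $q_l$ (with $H_l$ as the base graph) gives $\dist_{H_{l+1}}(u_{l+1},v_{l+1})\le d_l+22\dist_{H_l}(u_l,v_l)\le 24\dist_{H_l}(u_l,v_l)$. I would then induct on the number $s$ of iterations still to come: if \textsc{Query} started at level $j$ from $x,y\in V_j$ stops after exactly $s$ more iterations, its returned value is at most $f(s)\cdot\dist_{H_j}(x,y)$, where $f(0)=1$ and $f(s)=2+24f(s-1)$ (the ``$2$'' from $d_j\le 2\dist_{H_j}(x,y)$, the ``$24$'' from the recursion on $\dist_{H_{j+1}}(q_j x,q_j y)$); a one-line induction shows $f(s)\le 26^s$ since $2+24\cdot 26^{s-1}\le 26\cdot 26^{s-1}$. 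Applying this with $j=0$, $x=u$, $y=v$, $s=l$, and $\dist_{H_0}=\dist_G$ gives $d\le 26^l\dist_G(u,v)\le 26^{4\lceil\log(k)+1\rceil}\dist_G(u,v)$, using $l\le t\le 4\lceil\log(k)+1\rceil$ from Lemma~\ref{lem:dis_oracle_space}(1).

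The main obstacle is keeping the blow-up constant exactly $26$: the naive approach of bounding $\dist_{H_i}(u_i,v_i)\le 24^i\dist_G(u,v)$ and summing the $d_i$ as a geometric series overshoots $26^l$ (it gives roughly $\tfrac{25}{23}24^l$), so the estimate must be organized recursively as $f(s)=2+24f(s-1)$ and one must use $d_l\le 2\dist_{H_l}(u_l,v_l)$ rather than looser bookkeeping. Everything else — verifying that each stored quantity the query reads was actually produced by \textsc{PreProc}, and that the loop never runs past level $t$ — is routine.
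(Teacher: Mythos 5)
Your proof is correct and follows essentially the same route as the paper: the same per-level induction with recursion $f(s)=2+24f(s-1)\le 26^{s}$, which is exactly the paper's inductive claim $\sum_{j=i}^{l}d_j\le 26^{\,l-i}\dist_{H_i}(u_i,v_i)$. The only cosmetic difference is that you invoke the strong-subemulator bound with $\beta=22$ together with $d_l\le 2\dist_{H_l}(u_l,v_l)$, whereas the paper re-derives the same $24\cdot$ factor inline from the $(8,b_i)$-subemulator property and the triangle inequality.
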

\begin{proof}
Let $t$ be the value at the end of the preprocessing procedure \textsc{PreProc}$(G,k)$.
Let $l$ be the value at the end of the query procedure \textsc{Query}$(u,v)$.
By induction, we can show that $\forall i\in\{0,1,\cdots,l\},u_i,v_i\in V_i$.
Since $\forall v\in V_t, B_t(v)=V_t$ and the condition of line~\ref{sta:query_finish_condition}, we have $l\leq t$, i.e., the query procedure will terminate. 
By Lemma~\ref{lem:dis_oracle_space}, we have $t\leq 4\lceil\log(k) + 1\rceil$.
Thus, the running time of \textsc{Query}$(u,v)$ is $O(\log (4k))$.

In the following we will show that $\forall i\in\{0,1,\cdots,l\},$ $\dist_{H_i}(u_i,v_i)\leq \sum_{j=i}^l d_j\leq 26^{l-i}\dist_{H_i}(u_i,v_i)$.
Our proof is by induction.
The base case is $i=l$. 
By line~\ref{sta:direct_distance}, $d_l=\dist_{H_l}(u_l,v_l)$.
Suppose $\dist_{H_{i+1}}(u_{i+1},v_{i+1})\leq \sum_{j=i+1}^l d_j\leq 26^{l-i-1}\dist_{H_{i+1}}(u_{i+1},v_{i+1})$. 
For the contraction,
\begin{align*}
\sum_{j=i}^l d_j &= \dist_{H_i}(u_i,u_{i+1})+\dist_{H_i}(v_i,v_{i+1})+\sum_{j=i+1}^l d_j\\
&\geq  \dist_{H_i}(u_i,u_{i+1})+\dist_{H_i}(v_i,v_{i+1})+\dist_{H_{i+1}}(u_{i+1},v_{i+1})\\
&\geq \dist_{H_i}(u_i,u_{i+1})+\dist_{H_i}(v_i,v_{i+1})+\dist_{H_i}(u_{i+1},v_{i+1})\\
&\geq \dist_{H_i}(u_i,v_i),
\end{align*}
where the first equality follows from line~\ref{sta:assignment_of_dl}, the first inequality follows from $\sum_{j=i+1}^l d_j\geq \dist_{H_{i+1}}(u_{i+1},v_{i+1})$, the second inequality follows from Theorem~\ref{thm:construction_subemulator} that $H_{i+1}$ is a subemulator of $H_i$ and Definition~\ref{def:subemulator}, and the last inequality follows from triangle inequality.

For the expansion,
\begin{align*}
\sum_{j=i}^l d_j &= \dist_{H_i}(u_i,u_{i+1})+\dist_{H_i}(v_i,v_{i+1})+\sum_{j=i+1}^l d_j\\
&\leq  \dist_{H_i}(u_i,u_{i+1})+\dist_{H_i}(v_i,v_{i+1})+26^{l-i-1}\dist_{H_{i+1}}(u_{i+1},v_{i+1})\\
&\leq \dist_{H_i}(u_i,u_{i+1})+\dist_{H_i}(v_i,v_{i+1})+8\cdot 26^{l-i-1}\dist_{H_{i}}(u_{i+1},v_{i+1})\\
&\leq \dist_{H_i}(u_i,u_{i+1})+\dist_{H_i}(v_i,v_{i+1})+8\cdot 26^{l-i-1}(\dist_{H_{i}}(u_{i+1},u_{i})+\dist_{H_{i}}(u_{i},v_{i})+\dist_{H_{i}}(v_{i},v_{i+1}))\\
&= (8\cdot 26^{l-i-1}+1)(\dist_{H_i}(u_i,u_{i+1})+\dist_{H_i}(v_i,v_{i+1})) + 8\cdot 26^{l-i-1}\dist_{H_i}(u_i,v_i)\\
&\leq 24\cdot 26^{l-i-1}\dist_{H_i}(u_i,v_i) + 2\dist_{H_i}(u_i,v_i)\\
&\leq 26\cdot 26^{l-i-1}\dist_{H_i}(u_i,v_i)=26^{l-i}\dist_{H_i}(u_i,v_i),
\end{align*}
where the first equality follows from line~\ref{sta:assignment_of_dl}, the first inequality follows from 
\begin{align*}
\sum_{j=i+1}^l d_j\leq 26^{l-i-1}\dist_{H_{i+1}}(u_{i+1},v_{i+1}),
\end{align*}
the second inequality from Theorem~\ref{thm:construction_subemulator} that $H_{i+1}$ is an $(8,b_i)$-subemulator of $H_i$ and Definition~\ref{def:subemulator},
the third inequality follows from triangle inequality, and the forth inequality follows from that $\dist_{H_i}(u_i,v_i)\geq \max(\dist_{H_i}(u_i,u_{i+1}),\dist_{H_i}(v_i,v_{i+1}))$ since neither $u_i\in\B^{\circ}_{H_i,b_i}(v_i)$ nor $v_i\in\B^{\circ}_{H_i,b_i}(u_i)$.

By Lemma~\ref{lem:dis_oracle_space}, we have $t\leq 4\lceil\log(k) + 1\rceil$.
Since $l\leq t\leq 4\lceil\log(k) + 1\rceil$ and $\dist_G(u,v)=\dist_{H_0}(u_0,v_0)$, $\dist_{G}(u,v)\leq d\leq 26^{4\lceil\log(k) + 1\rceil} \dist_G(u,v)$.
\end{proof}

	\subsection{Low Hop Emulator}\label{sec:low_hop_emulator}
\subsubsection{Distance Preserving Graph with Low Hop Diameter}\label{sec:distance_preserving_graph}
In this section, we construct a new graph with more vertices and edges such that the distance is approximately preserved and there always exists a low hop shortest path between any pair of vertices in the new graph.
In the next section, we will show how to refine the construction to make the new graph be an emulator.

\begin{algorithm}[h]
	\caption{Low Hop Diameter Distance Preserving Graph} \label{alg:low_hop_graph}
	\begin{algorithmic}[1]
		\small
		\Procedure{\textsc{LowHopDimGraph}}{$G=(V,E,w),k$} 
		\State Output: $G'=(V',E',w')$
		\State Run the processing procedure \textsc{PreProc}$(G,k)$, and let $t$ be the value at the end of the procedure. $\forall i\in\{0,1,\cdots,t\},$ let $H_i=(V_i,E_i,w_i)$, $q_i:V_i\rightarrow V_{i+1},B_i:V_i\rightarrow 2^{V_i}$, $b_i$ be computed by the such procedure. \Comment{See Algorithm~\ref{alg:dis_oracle}.}
		\State Initialize $V'\gets \emptyset,E'\gets \emptyset$.
		\State For $v\in V$, if $v\in V_i,v\not\in V_{i+1},$ add $i+1$ copies of $v$ into $V'$, i.e., $V'\gets V'\cup\{v^{(0)},v^{(1)},\cdots,v^{(i)}\}$.
		\State For $i\in\{0,1,\cdots,t-1\}$, for each $v\in V_i$, $E'\gets E'\cup\{v^{(i)},u^{(i+1)}\}$, where $u=q_i(v)$. \label{sta:setup_edge_different_level}
		\State For $i\in\{0,1,\cdots,t\}$, for each $\{u,v\}\in E_i$, $E'\gets E'\cup\{u^{(i)},v^{(i)}\}$. \label{sta:setup_original_edge_same_level}
		\State For $i\in\{0,1,\cdots,t\}$, for each $v\in V_i$, for each $u\in B_i(v)$, $E'\gets E'\cup\{u^{(i)},v^{(i)}\}$.  \label{sta:setup_ball_edge_same_level}
		\State For each $e'\in E'$, initialize $w'(e')\gets \infty$.
		\State For $i\in\{0,1,\cdots,t-1\}$, for each $v\in V_i$, consider $e'=\{v^{(i)},u^{(i+1)}\}$ where $u=q_i(v)$. Let
		\begin{align*}
		w'(e')\gets \min(w'(e'),27^{t-i-1}\cdot \dist_{H_i}(u,v)).
		\end{align*} \label{sta:setup_weight_different_level}
		\State For $i\in\{0,1,\cdots,t\}$, for each $\{u,v\}\in E_i$, consider $e'=\{u^{(i)},v^{(i)}\}$. Let
		\begin{align*}
		w'(e')\gets \min(w'(e'),27^{t-i}\cdot w_i(u,v)).
		\end{align*}\label{sta:setup_original_weight_same_level}
		\State For $i\in\{0,1,\cdots,t\}$, for each $v\in V_i$, for each $u\in B_i(v)$, consider $e'=\{u^{(i)},v^{(i)}\}$. Let
		\begin{align*}
		w'(e')\gets \min(w'(e'),27^{t-i}\cdot \dist_{H_i}(u,v)).
		\end{align*}\label{sta:setup_ball_weight_same_level}
		\State Output $G'=(V',E',w')$.
		\EndProcedure
	\end{algorithmic}
\end{algorithm}

\begin{lemma}[Size of the graph]
Consider a connected undirected weighted graph $G=(V,E,w)$ and  $k\in[0.5,0.5\log n]$, where $n=|V|$. 
Let $G'=(V',E',w')$ be the output of \textsc{LowHopDimGraph}$(G,k)$ (Algorithm~\ref{alg:low_hop_graph}).
Then, $\E[|V'|]\leq O(n)$, $\E[|E'|]\leq O((n^{1+1/(2k)}+n\log^2 n+m)\log k)$.
\end{lemma}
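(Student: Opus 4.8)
The plan is to bound $|V'|$ and $|E'|$ separately by summing the per-layer contributions coming from \textsc{PreProc}, using the expectations established in Lemma~\ref{lem:dis_oracle_space}. For the vertex count: each vertex $v\in V$ is placed in $V'$ with exactly one copy per layer it survives to, i.e. if $v\in V_i\setminus V_{i+1}$ then $v$ contributes $i+1$ copies. Summing over all vertices, $|V'|=\sum_{i=0}^{t}|V_i|$ (each $v\in V_i$ contributes exactly one copy at layer $i$), so $\E[|V'|]=\sum_{i=0}^{t}\E[n_i]$. By Lemma~\ref{lem:dis_oracle_space}, $\E[n_i]\leq \max(n^{1+1/k},n(75\log n)^4)/b_i^2$, and since $k\le \tfrac12\log n$ we have $n^{1/k}\ge n^{2/\log n}=4$... more to the point, whichever term dominates, $\E[n_0]=O(n)$ because $b_0^2=\max(n^{1/k},(75\log n)^4)$ exactly cancels the numerator; and $b_i=b_0^{1.25^i}$ grows doubly-exponentially, so $\sum_{i\ge0} b_0^2/b_i^2$ is a geometric-type series dominated by its first term. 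Hence $\E[|V'|]=\sum_i \E[n_i]=O(n)$.

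For the edge count, I would separate $E'$ into the three families created in lines~\ref{sta:setup_edge_different_level}--\ref{sta:setup_ball_edge_same_level} of Algorithm~\ref{alg:low_hop_graph}. The ``across-layer'' edges $\{v^{(i)},q_i(v)^{(i+1)}\}$ number exactly $\sum_{i=0}^{t-1}|V_i|\le \sum_i n_i = O(n)$ in expectation by the vertex bound above. The ``original'' same-layer edges, one per edge of each $H_i$, number $\sum_{i=0}^{t}|E_i|=\sum_i m_i$; by Lemma~\ref{lem:dis_oracle_space} each $\E[m_i]\le m+2\max(n^{1+1/(2k)},n(75\log n)^2)$, and since there are $t+1=O(\log k)$ layers this contributes $O((m+n^{1+1/(2k)}+n\log^2 n)\log k)$. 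The ``ball'' same-layer edges, one per pair $(v,u)$ with $u\in B_i(v)$, number $\sum_{i=0}^{t}\sum_{v\in V_i}|B_i(v)|$; Lemma~\ref{lem:dis_oracle_space} gives $\E\!\left[\sum_{v\in V_i}|B_i(v)|\right]\le \max(n^{1+1/k},n(75\log n)^4)/b_i$, and again the doubly-exponential growth of $b_i$ makes $\sum_i \max(n^{1+1/k},n(75\log n)^4)/b_i$ a geometric-type series dominated by the $i=0$ term, which is $\max(n^{1+1/k},n(75\log n)^4)/b_0=\max(n^{1+1/(2k)},n(75\log n)^2)=O(n^{1+1/(2k)}+n\log^2 n)$. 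Adding the three families and absorbing lower-order terms yields $\E[|E'|]\le O((n^{1+1/(2k)}+n\log^2 n+m)\log k)$.

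The only mildly delicate points are: (i) confirming that the geometric-series sums over layers are genuinely dominated by the first term — this follows from $b_{i+1}=b_i^{1.25}\ge 2b_i$ (already used in the proof of Lemma~\ref{lem:dis_oracle_space}), so $\sum_i 1/b_i \le 2/b_0$ and $\sum_i 1/b_i^2 \le 2/b_0^2$; (ii) checking the algebra $n/b_0^2=\max(n^{1+1/k},n(75\log n)^4)/\!\max(n^{1/k},(75\log n)^4)^{?}$ — here one should just note $b_0=\max(\lceil(75\log n)^2\rceil,n^{1/(2k)})$, so $b_0^2\ge \max((75\log n)^4,n^{1/k})$ up to the ceiling, giving $n/b_0^2 = O(n)$ and $\max(n^{1+1/k},n(75\log n)^4)/b_0 = O(n\cdot b_0) = O(n^{1+1/(2k)}+n\log^2 n)$; (iii) verifying the $t=O(\log k)$ bound is exactly property 1 of Lemma~\ref{lem:dis_oracle_space}. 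I expect the main (very minor) obstacle to be bookkeeping the $\max(\cdot,\cdot)$ expressions consistently so that the final bound is stated cleanly as a single $O(\cdot)$; there is no real mathematical difficulty, since all the probabilistic content is already packaged in Lemma~\ref{lem:dis_oracle_space}.
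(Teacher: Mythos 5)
Your proposal is correct and follows essentially the same route as the paper's proof: both bound $\E[|V'|]=\sum_i\E[n_i]$ via Lemma~\ref{lem:dis_oracle_space} and the doubly-exponential growth $b_i\geq b_0\cdot 2^i$, and both split $E'$ into the same three edge families, with the $\log k$ factor coming only from the $\sum_i\E[m_i]$ term and the other two families summing to geometric series dominated by their first terms. The bookkeeping identities you flag in (ii) indeed resolve exactly as you guess, since $\max(n^{1+1/k},n(75\log n)^4)=n\cdot b_0^2$ up to the ceiling.
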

\begin{proof}
For $i\in\{0,1,\cdots,t\}$, let $n_i=|V_i|,m_i=|E_i|$.
We have
\begin{align}
\E[|V'|]=\sum_{i=0}^t \E[n_i]\leq \sum_{i=0}^t\max(n^{1+1/k},n(75\log n)^4)/b_i^2\leq \sum_{i=0}^t\max(n^{1+1/k},n(75\log n)^4)/(b_0^2\cdot 2^i)\leq 2n, \label{eq:bound_size_Vprime}
\end{align}
where the first inequality follows from Lemma~\ref{lem:dis_oracle_space}, the second inequality follows from $b_i/b_0>2^i$, and the last inequality follows from $b_0=\max\left((75\log n)^2,n^{1/(2k)}\right)$.

Consider $|E'|$, we have
\begin{align*}
\E[|E'|]&\leq \sum_{i=0}^{t-1}\E[n_i] + \sum_{i=0}^t\E\left[\sum_{v\in V_i} |B_i(v)|\right]+\sum_{i=0}^t \E[m_i]\\
&\leq 2n + \sum_{i=0}^t\E\left[\sum_{v\in V_i} |B_i(v)|\right]+\sum_{i=0}^t \E[m_i]\\
&\leq 2n + \sum_{i=0}^t \max(n^{1+1/k},n\cdot (75\log n)^4)/b_i+\sum_{i=0}^t \E[m_i]\\
&\leq 2n + \sum_{i=0}^t \max(n^{1+1/k},n\cdot (75\log n)^4)/(2^i\cdot b_0)+\sum_{i=0}^t \E[m_i]\\
&\leq 2n + 2\cdot\max(n^{1+1/(2k)},n\cdot (75\log n)^2)+\sum_{i=0}^t \E[m_i]\\
&\leq 2n + 2\cdot\max(n^{1+1/(2k)},n\cdot (75\log n)^2)+ 4(\lceil\log(k) + 1\rceil\cdot(m+2\cdot \max(n^{1+1/(2k)},n\cdot (75\log n)^2))\\
&\leq 8 \lceil\log (k) + 1\rceil\cdot(m+2\cdot \max(n^{1+1/(2k)},n\cdot (75\log n)^2)),
\end{align*}
where the first inequality follows from line~\ref{sta:setup_edge_different_level}, line~\ref{sta:setup_original_edge_same_level} and line~\ref{sta:setup_ball_edge_same_level}, the second inequality follows from Equation~\eqref{eq:bound_size_Vprime}, the third inequality follows from Lemma~\ref{lem:dis_oracle_space}, the forth inequality follows from $b_i>b_0\cdot 2^i$, the sixth inequality follows from
that $t\leq 4\lceil\log(k)+1\rceil$ and 
\begin{align*}
\E[m_i]\leq m+2\cdot \max(n^{1+1/(2k)},n\cdot (75\log n)^2)
\end{align*}
 by Lemma~\ref{lem:dis_oracle_space}.
\end{proof}

\begin{lemma}[Low hop diameter]\label{lem:low_hop_diameter}
Consider a connected undirected weighted graph $G=(V,E,w)$ and $k\in[0.5,0.5\log n]$, where $n=|V|$. 
Let $G'=(V',E',w')$ be the output of \textsc{LowHopDimGraph}$(G,k)$ (Algorithm~\ref{alg:low_hop_graph}).
Then, $\forall u,v\in V,$
\begin{align*}
\dist_G(u,v)\leq \dist_{G'}(u^{(0)},v^{(0)})\leq 27^{4\lceil\log(k) + 1\rceil}\cdot \dist_G(u,v).
\end{align*} 
Furthermore, $\forall x,y\in V'$,
\begin{align*}
\dist_{G'}(x,y)=\dist_{G'}^{(16\lceil\log(k) + 1\rceil)}(x,y).
\end{align*}
\end{lemma}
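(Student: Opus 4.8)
The plan is to prove the two displayed claims separately, but both will hinge on understanding the structure of shortest paths in $G'$ in terms of the ``layers'' $0,1,\dots,t$ coming from the copies $v^{(i)}$. For the distance-approximation claim, the lower bound $\dist_G(u,v)\le \dist_{G'}(u^{(0)},v^{(0)})$ is the easy direction: every edge of $G'$ was assigned a weight which, after dividing out the penalty factor $27^{t-i}$, is at least a genuine $H_i$-distance, and since each $H_{i+1}$ is a subemulator of $H_i$ (Theorem~\ref{thm:construction_subemulator}), $H_i$-distances dominate $G$-distances; so any $x^{(i)}\!-\!y^{(j)}$ path in $G'$ projects to a walk in $G$ of no greater length. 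More carefully, I would show by induction on $j$ that for every pair of vertices $a^{(i)}, b^{(j)}$ in $V'$, $\dist_{G'}(a^{(i)},b^{(j)}) \ge 27^{t-\max(i,j)}\dist_G(a,b)$, or something to that effect, which in particular gives the $\dist_G(u,v)\le\dist_{G'}(u^{(0)},v^{(0)})$ bound at $i=j=0$. For the upper bound, I would exhibit an explicit path: mimic the \textsc{Query} routine of Algorithm~\ref{alg:dis_oracle}. Starting from $u^{(0)}$ and $v^{(0)}$, repeatedly hop up a level along the type-1 edges $\{w^{(\ell)},q_\ell(w)^{(\ell+1)}\}$ until, at some level $l\le t$, one of the two endpoints lies in the $B_l(\cdot)$-ball of the other, then close the gap with a single type-3 edge (or, at the top level, a type-3 edge since $B_t(v)=V_t$). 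The length of this path is exactly $\sum_{i=0}^{l} d_i$ with the $d_i$ as in \textsc{Query}, except each term carries its penalty factor $27^{t-i}$; invoking the expansion estimate already proved inside Lemma~\ref{lem:correct_distance_oracle} (the ``$\sum_{j=i}^l d_j\le 26^{l-i}\dist_{H_i}(u_i,v_i)$'' bound), together with $l\le t\le 4\lceil\log(k)+1\rceil$ and $27>26$, yields $\dist_{G'}(u^{(0)},v^{(0)})\le 27^{4\lceil\log(k)+1\rceil}\dist_G(u,v)$. The penalty factors are designed precisely so that moving up a level is never worse than the $26$-factor blowup, which is why $27$ appears.

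For the low-hop-diameter claim, I would analyze a shortest path $x=z_0\to z_1\to\cdots\to z_h=y$ in $G'$ that uses the fewest hops among all shortest paths. The goal is to show $h\le 16\lceil\log(k)+1\rceil = O(t)$. Assign to each $z_j$ its level $\lev(z_j)\in\{0,\dots,t\}$. By construction (lines~\ref{sta:setup_edge_different_level}--\ref{sta:setup_ball_edge_same_level}) consecutive vertices are on the same level or on adjacent levels. I would establish two structural claims, exactly as sketched in the paper's technique overview. \textbf{Claim A:} no three consecutive vertices $z_j,z_{j+1},z_{j+2}$ lie on the same level; if they did and were on level $i$, then either $z_{j+2}$ is within the $b_i$-ball of $z_j$ (so a type-3 edge shortcuts, contradicting hop-minimality), or it is not, in which case the detour $z_j\to q_i(z_j)^{(i+1)}\to(\text{shortest }H_{i+1}\text{ path})\to q_i(z_{j+2})^{(i+1)}\to z_{j+2}$ has strictly smaller length because the higher level carries a penalty factor smaller by $27$ and the subemulator property (Eq.~\eqref{eq:strong_property}/Definition~\ref{def:subemulator}, $8$-approximation) costs only a factor $\le 26<27$; contradiction. \textbf{Claim B:} the level sequence is ``unimodal'' — once the path steps down a level it never returns to that level or higher; again the argument is that going up-then-down-then-up through higher levels is strictly cheaper because of the smaller penalty on higher levels combined with Eq.~\eqref{eq:strong_property}. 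Combining Claim A and Claim B: the level sequence first (weakly) increases from $0$, plateauing at most two-in-a-row, up to some peak level $p\le t$, then (weakly) decreases, again plateauing at most two-in-a-row, down to $0$. Hence $h\le 2(2(t+1))\le 16\lceil\log(k)+1\rceil$ using $t\le 4\lceil\log(k)+1\rceil$, which is the claimed bound. Finally, since such a hop-minimal shortest path exists between every $x,y\in V'$, we get $\dist_{G'}(x,y)=\dist_{G'}^{(16\lceil\log(k)+1\rceil)}(x,y)$.

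The main obstacle will be making Claims A and B fully rigorous, in particular handling the boundary cases cleanly: when $z_j$ is on level $i$ but $z_{j+1}$ is on level $i+1$ via a type-1 edge, we have the rigid identity $z_{j+1}=q_i(z_j)^{(i+1)}$, and one must use this to argue that an immediate step back down forces $z_{j+2}=z_j$ (so the path is not hop-minimal) and that a step two-down-then-up is dominated by a type-2 edge $\{q_i(\cdot)^{(i+1)},q_i(\cdot)^{(i+1)}\}$ on level $i+1$; keeping track of exactly which of the three edge types is available in each sub-case, and verifying the arithmetic $27^{t-i-1}\cdot 26 < 27^{t-i}$ lines up with the subemulator's $(8,b,22)$ guarantees, is the delicate part. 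I would also need the auxiliary fact (implicit in Definition~\ref{def:subemulator} and Claim~\ref{cla:bound_ball}-style reasoning) that $\dist_{H_i}(v,q_i(v))\le \dist_{H_i}(v,u)$ whenever $u\notin \B^\circ_{H_i,b_i}(v)$, since $q_i(v)\in\B_{H_i,b_i}(v)$ — this is what powers the triangle-inequality steps in both claims. Everything else is bookkeeping, and the penalty-factor design does the real work.
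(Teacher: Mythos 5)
Your handling of the hop-diameter claim and of the lower bound $\dist_G(u,v)\le\dist_{G'}(u^{(0)},v^{(0)})$ is essentially the paper's own proof: the paper proves exactly your Claim~A (no three consecutive path vertices on one level, by either a type-3 shortcut or a detour through level $i+1$ of cost at most $27^{t-i-1}\cdot 26\,\dist_{H_i}<27^{t-i}\dist_{H_i}$) and your Claim~B (unimodality of the level sequence, via the strong-subemulator inequality with $\beta=22$ and the level-$(c+1)$ edges), and then counts hops the same way. The only quibble there is bookkeeping: your count $2\cdot2(t+1)=4t+4$ overshoots the $4t$ one needs to land exactly on $16\lceil\log(k)+1\rceil$ when $t=4\lceil\log(k)+1\rceil$.

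The genuine divergence is the upper bound $\dist_{G'}(u^{(0)},v^{(0)})\le 27^{4\lceil\log(k)+1\rceil}\dist_G(u,v)$. The paper gets this in one line from the type-2 edges you never use for this purpose: since $H_0=G$, every $\{u,v\}\in E$ yields an edge $\{u^{(0)},v^{(0)}\}$ of weight at most $27^{t}w(u,v)$ in $G'$, so any shortest path of $G$ lifts verbatim into level $0$ with length at most $27^t\dist_G(u,v)$. Your \textsc{Query}-path route can be made to work, but not by the estimate you cite. The $G'$-length of that path is $\sum_{i=0}^{l-1}27^{t-i-1}d_i+27^{t-l}d_l$ with \emph{non-uniform} penalty factors, so you cannot multiply the aggregated oracle bound $\sum_j d_j\le 26^{l}\dist_G(u,v)$ by a single power of $27$; pulling out the largest factor $27^{t-1}$ gives only $27^{t-1}26^{l}\dist_G(u,v)$, which already exceeds $27^t\dist_G(u,v)$ for $l\ge 2$. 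What does work is the per-level recursion from inside the proof of Lemma~\ref{lem:correct_distance_oracle}: $d_i\le 2D_i$ and $D_{i+1}\le 24D_i$ where $D_i=\dist_{H_i}(u_i,v_i)$, so that
\begin{align*}
\sum_{i=0}^{l-1}27^{t-i-1}d_i+27^{t-l}d_l\;\le\;27^t\left(\tfrac{2}{3}+\tfrac{1}{3}(8/9)^l\right)D_0\;\le\;27^t\dist_G(u,v),
\end{align*}
i.e.\ the geometric decay of the penalties must be played against the geometric growth of the $D_i$, level by level. Either patch this in or, more simply, use the type-2 edges as the paper does.
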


\begin{proof}
Consider two vertices $u,v\in V$.
Since $H_0=G$ and the construction of the weights in line~\ref{sta:setup_original_weight_same_level}, we have $\dist_{G'}(u^{(0)},v^{(0)})\leq 27^t\cdot \dist_G(u,v)$.
Due to Lemma~\ref{lem:dis_oracle_space}, $t\leq 4\lceil\log(k) + 1\rceil$.
We have $\dist_{G'}(u^{(0)},v^{(0)})\leq 27^{4\lceil\log(k) + 1\rceil}\cdot \dist_G(u,v)$.
Because $\forall i\in[t]$, $H_i$ is a subemulator of $H_{i-1}$, we have $\dist_{G'}(u^{(0)},v^{(0)})\geq \dist_G(u,v)$ by Definition~\ref{def:subemulator} and the construction of the weights in line~\ref{sta:setup_weight_different_level}, line~\ref{sta:setup_original_weight_same_level} and line~\ref{sta:setup_ball_weight_same_level}.

For a vertex $v^{(i)}\in V'$, we define the level of $v^{(i)}$ as $i$, i.e., $\lev(v^{(i)})=i$.
Consider two arbitrary vertex $x,y\in V'$.
Let $p=(x=z_0,z_1,z_2,\cdots,z_h=y)$ be a shortest path with minimum number of hops between $x$ and $y$ in $G'$.

\begin{claim}\label{cla:continuous_change_of_level}
$\forall j\in[h]$, $|\lev(z_{j-1})-\lev(z_j)|\leq 1$.
\end{claim}
\begin{proof}
It follows directly from the construction of $E'$ in Algorithm~\ref{alg:low_hop_graph}.
\end{proof}

\begin{claim}\label{cla:level_cannot_continue_too_long}
$\forall j\in[h-1]$, either $\lev(z_{j-1})\not = \lev(z_j)$ or $\lev(z_j)\not = \lev(z_{j+1})$.
\end{claim}
\begin{proof}
We prove by contradiction.
Suppose $\lev(z_{j-1})=\lev(z_j)=\lev(z_{j+1})=c$.
Suppose $z_{j-1},z_j,z_{j+1}$ are copies of $u,a,v$ respectively, i.e, $z_{j-1}=u^{(c)},z_j=a^{(c)},z_{j+1}=v^{(c)}$.
By the construction, we know that $w'(z_{j-1},z_j)\geq 27^{t-c} \dist_{H_c}(u,a)$ and $w'(z_j,z_{j+1})\geq 27^{t-c}\dist_{H_c}(a,v)$.
Thus, $w'(z_{j-1},z_j)+w'(z_j,z_{j+1})\geq 27^{t-c}\dist_{H_c}(u,v)$.
There are two cases.
In the first case, either $v\in B_c(u)$ or $u\in B_c(v)$.
In this case, $\{z_{j-1},z_{j+1}\}\in E'$ by line~\ref{sta:setup_ball_edge_same_level} and $w'(z_{j-1},z_{j+1})\leq 27^{t-c}\dist_{H_c}(u,v)$ by line~\ref{sta:setup_ball_weight_same_level}.
Thus, $\{z_{j-1},z_{j+1}\}\in E',w'(z_{j-1},z_{j+1})\leq w'(z_{j-1},z_j)+w'(z_j,z_{j+1})$, and it contradicts to that $p$ has the minimum number of hops.
In the second case, neither $v\in \B^{\circ}_{H_c,b_c}(u)$ nor $u\in \B^{\circ}_{H_c,b_c}(v)$. 
Consider this case.
Let $u'=q_c(u),v'=q_c(v)$. 
We have
\begin{align*}
&w'(z_{j-1},u'^{(c+1)})+\dist_{G'}(u'^{(c+1)},v'^{(c+1)})+w'(v'^{(c+1)},z_{j+1})\\
\leq~& 27^{t-c-1}\cdot (\dist_{H_c}(u,u')+\dist_{H_{c+1}}(u',v')+\dist_{H_c}(v',v))\\
\leq~& 27^{t-c-1}\cdot(2\dist_{H_c}(u,v)+\dist_{H_{c+1}}(u',v'))\\
\leq~& 27^{t-c-1}\cdot(2\dist_{H_c}(u,v)+8\dist_{H_{c}}(u',v'))\\
\leq~& 27^{t-c-1}\cdot(2\dist_{H_c}(u,v)+8(\dist_{H_{c}}(u,u')+\dist_{H_{c}}(u,v)+\dist_{H_{c}}(v',v)))\\
\leq~& 27^{t-c-1}\cdot 26 \dist_{H_c}(u,v)\\
<~& 27^{t-c}\dist_{H_c}(u,v)\\
\leq~& w'(z_{j-1},z_j)+w'(z_j,z_{j+1}),
\end{align*}
where the first inequality follows from the construction of the edges and the corresponding weights in $G'$, the second inequality follows from that $\dist_{H_c}(u,v)\geq \max(\dist_{H_c}(u,u'),\dist_{H_c}(v,v'))$ implied by neither $v\in \B^{\circ}_{H_c,b_c}(u)$ nor $u\in \B^{\circ}_{H_c,b_c}(v)$, the third inequality follows from $H_{c+1}$ is a $(8,b_c)$-subemulator of $H_c$ (see Theorem~\ref{thm:construction_subemulator} and Definition~\ref{def:subemulator}), the forth inequality follows from triangle inequality, the fifth inequality follows from $\dist_{H_c}(u,v)\geq \max(\dist_{H_c}(u,u'),\dist_{H_c}(v,v'))$ again.
In this case, we can find a shorter path which contradicts to that $p$ is the shortest path.
\end{proof}
\begin{claim}\label{cla:level_should_be_a_peek}
$\forall j\in\{0,1,\cdots,h-2\}$, if $\lev(z_{j+1})<\lev(z_j)$, then $\forall j'\in\{j+2,\cdots,h\},\lev(z_{j'})<\lev(z_j)$.
\end{claim}
\begin{proof}
We prove it by contradiction.
Suppose $\lev(z_{j+1})<\lev(z_j)$ and $\exists j''>j$ such that $\lev(z_{j''})\geq \lev(z_j)$.
We can find an $z_{j'}$ such that $\lev(z_{j'})$ is the minimum among $\lev(z_j),\lev(z_{j+1}),$ $\cdots,\lev(z_{j''})$.
Let $f<j'$ be the largest value such that $\lev(z_f)>\lev(z_{j'})$.
Let $g>j'$ be the smallest value such that $\lev(z_g)>\lev(z_{j'})$.
Due to Claim~\ref{cla:continuous_change_of_level}, we have $\lev(z_f)=\lev(z_{f+1})+1=\cdots=\lev(z_{j'})+1=\cdots=\lev(z_{g-1})+1=\lev(z_g)$.
Suppose $\lev(z_{j'})=c$.
Let $z_{f+1},z_{g-1}$ be the copies of $u,v$ respectively, i.e., $z_{f+1}=u^{(c)}$ and $z_{g-1}=v^{(c)}$.
Let $u'=q_c(u),v'=q_c(v)$.
Then $z_f=u'^{(c+1)}$ and $z_g=v'^{(c+1)}$.
We have
\begin{align*}
\dist_{G'}(u',v')&=\sum_{a=f}^{g-1} w'(z_a,z_{a+1})\\
&=27^{t-c-1}\dist_{H_c}(u',u)+27^{t-c}\dist_{H_c}(u,v)+27^{t-c-1}\dist_{H_c}(v,v')\\
&=27^{t-c-1}(\dist_{H_c}(u',u)+\dist_{H_c}(v',v)+27\dist_{H_c}(u,v)).
\end{align*}
According to Theorem~\ref{thm:construction_subemulator}, $H_{c+1}$ is a strong $(8,b_c,22)$-subemulator. 
By Definition~\ref{def:subemulator}, we have $\dist_{H_{c+1}}(u',v')\leq \dist_{H_c}(u',u)+\dist_{H_c}(v',v)+22\dist_{H_c}(u,v)$.
Thus, we have
\begin{align*}
\dist_{G'}(u',v')\leq 27^{t-c-1}\dist_{H_{c+1}}(u',v')
\leq 27^{t-c-1}(\dist_{H_c}(u',u)+\dist_{H_c}(v',v)+22\dist_{H_c}(u,v))
\end{align*}
which leads to a contradiction.
\end{proof}
By Claim~\ref{cla:level_should_be_a_peek}, there must exist $j\in[h]$ such that $\forall j'\in\{0,1,\cdots,j-1\},\lev(z_{j'})\leq \lev(z_{j'+1})$ and $\forall j'\in\{j,j+1,\cdots,h-1\},\lev(z_{j'})\geq\lev(z_{j'+1})$.
Together with Claim~\ref{cla:level_cannot_continue_too_long}, we can conclude that $h\leq 4\cdot t\leq 16\lceil\log(k)+1\rceil$ where the last inequality follows from $t\leq 4\lceil\log(k)+1\rceil$ by lemma~\ref{lem:dis_oracle_space}.
\end{proof}

\subsubsection{Low Hop Emulator}
In this section, we show how to simplify Algorithm~\ref{alg:low_hop_graph} to obtain an emulator which has smaller size than the graph obtained in the previous section.
We have two observations.
The first observation is that line~\ref{sta:setup_original_edge_same_level} and line~\ref{sta:setup_original_weight_same_level} of Algorithm~\ref{alg:low_hop_graph} are useless.
\begin{observation}\label{obs:useless_edges}
Consider a connected undirected weighted graph $G=(V,E,w)$ and $k\in[0.5,0.5\log n]$, where $n=|V|$. Let $G'=(V',E',w')$ be the output of \textsc{LowHopDimGraph}$(G,k)$ (Algorithm~\ref{alg:low_hop_graph}), and let $t,H_i=(V_i,E_i,w_i),B_i(\cdot)$ be the same as described in Algorithm~\ref{alg:low_hop_graph}.
Then for any $i\in\{0,1,\cdots,t\}$ and for any $\{u,v\}\in E_i$, we have either $\dist_{G'}(u^{(i)},v^{(i)})<27^{t-i}\cdot w_i(u,v)$, $u\in B_i(v),$ or $v\in B_i(u)$.
\end{observation}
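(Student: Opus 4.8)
The plan is to fix $i\in\{0,1,\dots,t\}$ and an edge $\{u,v\}\in E_i$, assume both $u\notin B_i(v)$ and $v\notin B_i(u)$, and exhibit a $u^{(i)}$--$v^{(i)}$ path in $G'$ of length strictly below $27^{t-i}w_i(u,v)$. The first step is to extract the metric content of the assumption. Since $B_t(\cdot)=V_t$, we necessarily have $i<t$. As $\B^{\circ}_{H_i,b_i}(v)\subseteq B_i(v)$ and $q_i(v)\in B_i(v)$ (line~\ref{sta:leader_mapping} of Algorithm~\ref{alg:edge_construction}), and symmetrically for $u$, the hypothesis gives $u\notin\B^{\circ}_{H_i,b_i}(v)$ and $v\notin\B^{\circ}_{H_i,b_i}(u)$, hence $\dist_{H_i}(u,v)\ge\max\bigl(r_{H_i,b_i}(u),r_{H_i,b_i}(v)\bigr)$. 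Since $q_i(u)\in\B_{H_i,b_i}(u)$ and $q_i(v)\in\B_{H_i,b_i}(v)$, writing $u'=q_i(u)$, $v'=q_i(v)\in V_{i+1}$ and $D:=\dist_{H_i}(u,v)$, this bounds both leader detours: $\dist_{H_i}(u,u')\le D$ and $\dist_{H_i}(v,v')\le D$.

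Next I would show that the level-$(i+1)$ copy of $G'$ already ``simulates'' $H_{i+1}$ up to the penalty factor $27^{t-i-1}$: for any $x,y\in V_{i+1}$, following a shortest $x$--$y$ path in $H_{i+1}$ and using only the level-$(i+1)$ edges added by lines~\ref{sta:setup_original_edge_same_level}/\ref{sta:setup_original_weight_same_level} (whose $G'$-weights are at most $27^{t-i-1}w_{i+1}(\cdot,\cdot)$) yields
\[
\dist_{G'}(x^{(i+1)},y^{(i+1)})\ \le\ 27^{t-i-1}\,\dist_{H_{i+1}}(x,y),
\]
which is just the upper bound of Lemma~\ref{lem:low_hop_diameter} read one level higher in the chain $H_{i+1},\dots,H_t$. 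Since Theorem~\ref{thm:construction_subemulator} makes $H_{i+1}$ an $(8,b_i)$-subemulator of $H_i$, applying this with $x=u',y=v'$ and then the triangle inequality in $H_i$ gives $\dist_{G'}(u'^{(i+1)},v'^{(i+1)})\le 27^{t-i-1}\cdot 8\bigl(\dist_{H_i}(u,u')+D+\dist_{H_i}(v,v')\bigr)$.

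Then I would route $u^{(i)}\to u'^{(i+1)}\to(\text{a shortest }G'\text{-path})\to v'^{(i+1)}\to v^{(i)}$, the first and last hops being the level-change edges of lines~\ref{sta:setup_edge_different_level}/\ref{sta:setup_weight_different_level} of cost $27^{t-i-1}\dist_{H_i}(u,u')$ and $27^{t-i-1}\dist_{H_i}(v,v')$. Combining the estimates and $\dist_{H_i}(u,u'),\dist_{H_i}(v,v')\le D$,
\begin{align*}
\dist_{G'}(u^{(i)},v^{(i)})
&\le 27^{t-i-1}\Bigl(\dist_{H_i}(u,u')+8\bigl(\dist_{H_i}(u,u')+D+\dist_{H_i}(v,v')\bigr)+\dist_{H_i}(v,v')\Bigr)\\
&= 27^{t-i-1}\bigl(9\dist_{H_i}(u,u')+9\dist_{H_i}(v,v')+8D\bigr)\ \le\ 26\cdot 27^{t-i-1}\,D\ <\ 27^{t-i}\,D\ \le\ 27^{t-i}w_i(u,v),
\end{align*}
using at the end that $\{u,v\}\in E_i$ is a $1$-hop path, so $D=\dist_{H_i}(u,v)\le w_i(u,v)$.

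The one point that needs care is the closing arithmetic $9a+9b'+8D\le 26D$: it holds precisely because the standing assumption forces both leader detours $a=\dist_{H_i}(u,u')$ and $b'=\dist_{H_i}(v,v')$ to be at most $D$, so the detour factor $26$ stays strictly below the per-level penalty factor $27$ --- this is exactly why the construction of $G'$ uses the base $27$. The only degenerate scenario is $w_i(u,v)=0$, which forces $D=0$ and the path above to have length $0$, turning the final strict inequality into an equality; this case is harmless (the edge is then genuinely redundant) and is in any event avoided if one assumes the relevant distances are positive.
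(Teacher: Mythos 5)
Your proof is correct and follows essentially the same route as the paper's: route $u^{(i)}\to q_i(u)^{(i+1)}\to\cdots\to q_i(v)^{(i+1)}\to v^{(i)}$, bound the middle segment via the level-$(i+1)$ edges and the $(8,b_i)$-subemulator property, and use $\dist_{H_i}(u,q_i(u)),\dist_{H_i}(v,q_i(v))\le\dist_{H_i}(u,v)$ to land at $26\cdot 27^{t-i-1}<27^{t-i}$. Your extra observations (that the hypothesis forces $i<t$, and that $w_i(u,v)=0$ degenerates the strict inequality) are refinements the paper glosses over, not a different argument.
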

\begin{proof}
Suppose $u\not\in B_i(v)$ and $v\not\in B_i(u)$.
Let $u'=q_i(u),v'=q_i(v)$.
We have 
\begin{align*}
\dist_{G'}(u^{(i)},v^{(i)})& \leq w'(u^{(i)},u'^{(i+1)}) + \dist_{G'}(u'^{(i+1)},v'^{(i+1)}) + w'(v'^{(i+1)},v^{(i)})\\ 
&\leq 27^{t-i-1}\cdot (\dist_{H_i}(u,u')+\dist_{H_{i+1}}(u',v')+\dist_{H_i}(v',v))\\
&\leq 27^{t-i-1}\cdot (2\dist_{H_i}(u,v)+\dist_{H_{i+1}}(u',v'))\\
&\leq 27^{t-i-1}\cdot (2\dist_{H_i}(u,v)+8\dist_{H_{i}}(u',v'))\\
&\leq 27^{t-i-1}\cdot (2\dist_{H_i}(u,v)+8(\dist_{H_{i}}(u,u')+\dist_{H_{i}}(u,v)+\dist_{H_{i}}(v',v)))\\
&\leq 27^{t-i-1}\cdot 26\dist_{H_i}(u,v)\\
&< 27^{t-i} \cdot\dist_{H_i}(u,v)\\
&\leq 27^{t-i}\cdot w_i(u,v),
\end{align*}
where the second inequality follows from the construction of the edges and their corresponding weights in Algorithm~\ref{alg:low_hop_graph}, the third inequality follows from $\dist_{H_i}(u,u'),\dist_{H_i}(v',v)\leq \dist_{H_i}(u,v)$ since $v\not\in B_i(u)$ and $u\not\in B_i(v)$, the forth inequality follows from that $H_{i+1}$ is a $(8,b_i)$-subemulator of $H_i$ (see Theorem~\ref{thm:construction_subemulator} and Definition~\ref{def:subemulator}), the fifth inequality follows from triangle inequality, the sixth inequality follows from $\dist_{H_i}(u,u'),\dist_{H_i}(v',v)\leq \dist_{H_i}(u,v)$ again.
\end{proof}
If $u\in B_i(v)$ or $v\in B_i(u)$, then by line~\ref{sta:setup_ball_edge_same_level} and line~\ref{sta:setup_ball_weight_same_level} of Algorithm~\ref{alg:low_hop_graph}, there is an edge in $G'$ with weight at most $27^{t-i}\cdot \dist_{H_i}(u,v)\leq 27^{t-i}\cdot w_i(u,v)$. 
Otherwise $\dist_{G'}(u^{(i)},v^{(i)})<27^{t-i}\cdot w_i(u,v)$.
Together with the above observation, we can avoid using the edges constructed by line~\ref{sta:setup_original_edge_same_level} of Algorithm~\ref{alg:low_hop_graph} in any  shortest path.
The second observation is that $\forall i\in [t],v\in V_i$, $w'(v^{(i-1)},v^{(i)})=0$. 
Thus, we can use a single vertex $v$ to denote $v^{(0)},v^{(1)},\cdots,v^{(i)}$.

By removing line~\ref{sta:setup_original_edge_same_level} and line~\ref{sta:setup_original_weight_same_level} of Algorithm~\ref{alg:low_hop_graph}, and contraction of the vertices from different levels, we obtain Algorithm~\ref{alg:low_hop_graph_2}.

\begin{algorithm}[h]
	\caption{Low Hop Emulator} \label{alg:low_hop_graph_2}
	\begin{algorithmic}[1]
		\small
		\Procedure{\textsc{LowHopDimEmulator}}{$G=(V,E,w),k$} 
		\State Output: $G'=(V',E',w')$
		\State Run the processing procedure \textsc{PreProc}$(G,k)$, and let $t$ be the value at the end of the procedure. $\forall i\in\{0,1,\cdots,t\},$ let $H_i=(V_i,E_i,w_i)$, $q_i:V_i\rightarrow V_{i+1},B_i:V_i\rightarrow 2^{V_i}$, $b_i$ be computed by the such procedure. \Comment{See Algorithm~\ref{alg:dis_oracle}.} \label{sta:preproc_in_low_hop_emu}
		\State Initialize $E'\gets \emptyset$.
		\State For $i\in\{0,1,\cdots,t-1\}$, for each $v\in V_i$, $E'\gets E'\cup\{v,u\}$, where $u=q_i(v)$.\label{sta:setup_edge_different_level_emulator}
		\State For $i\in\{0,1,\cdots,t\}$, for each $v\in V_i$, for each $u\in B_i(v)$, $E'\gets E'\cup\{u,v\}$.  \label{sta:setup_original_edge_same_level_emulator}
		\State For each $e'\in E'$, initialize $w'(e')\gets \infty$.
		\State For $i\in\{0,1,\cdots,t-1\}$, for each $v\in V_i$, consider $e'=\{v,u\}$ where $u=q_i(v)$. Let
		\begin{align*}
		w'(e')\gets \min(w'(e'),27^{t-i-1}\cdot \dist_{H_i}(u,v)).
		\end{align*} \label{sta:setup_edge_weight_different_level_emulator}
		\State For $i\in\{0,1,\cdots,t\}$, for each $v\in V_i$, for each $u\in B_i(v)$, consider $e'=\{u,v\}$. Let
		\begin{align*}
		w'(e')\gets \min(w'(e'),27^{t-i}\cdot \dist_{H_i}(u,v)).
		\end{align*}\label{sta:setup_original_edge_weight_same_level_emulator}
		\State Output $G'=(V,E',w')$.
		\EndProcedure
	\end{algorithmic}
\end{algorithm}

\begin{theorem}[Low hop emulator]\label{thm:low_hop_emulator}
	Consider a connected undirected weighted graph $G=(V,E,w)$ and $k\in[0.5,0.5\log n]$, where $n=|V|$. 
	Let $G'=(V,E',w')$ be the output of \textsc{LowHopDimEmulator}$(G,k)$ (Algorithm~\ref{alg:low_hop_graph_2}).
	Then, $\E[|E'|]\leq O(n^{1+1/(2k)}+n\log^2 n)$ and $\forall u,v\in V,$
	\begin{align*}
	\dist_G(u,v)\leq \dist_{G'}(u,v)\leq 27^{4\lceil\log(k) + 1\rceil}\cdot \dist_G(u,v).
	\end{align*} 
	Furthermore, $\forall u,v\in V$,
	\begin{align*}
	\dist_{G'}(u,v)=\dist_{G'}^{(16\lceil\log(k) + 1\rceil)}(u,v).
	\end{align*}
\end{theorem}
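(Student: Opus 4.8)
The plan is to avoid reanalyzing Algorithm~\ref{alg:low_hop_graph_2} from scratch and instead piggyback on the analysis of Algorithm~\ref{alg:low_hop_graph}. The key structural remark is that Algorithm~\ref{alg:low_hop_graph_2} is exactly Algorithm~\ref{alg:low_hop_graph} after (i) deleting the ``original-graph'' edges of line~\ref{sta:setup_original_edge_same_level}/line~\ref{sta:setup_original_weight_same_level}, and (ii) contracting, for each $v$, the zero-weight copy-edges so that $v^{(0)},v^{(1)},\dots$ collapse to a single vertex $v$ (these edges carry weight $0$ because $q_i(v)=v$ whenever $v\in V_{i+1}$, hence the edge $\{v^{(i)},q_i(v)^{(i+1)}\}$ has weight $27^{t-i-1}\dist_{H_i}(q_i(v),v)=0$). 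Writing $G_{\mathrm{LHDG}}$ for the output of \textsc{LowHopDimGraph}$(G,k)$, I would first establish that $\dist_{G'}(u,v)=\dist_{G_{\mathrm{LHDG}}}(u^{(0)},v^{(0)})$ for all $u,v\in V$, and then read off all three claims from Lemma~\ref{lem:low_hop_diameter}.

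For the size bound, I would simply count: line~\ref{sta:setup_edge_different_level_emulator} adds at most $\sum_{i=0}^{t-1}|V_i|$ edges and line~\ref{sta:setup_original_edge_same_level_emulator} adds at most $\sum_{i=0}^{t}\sum_{v\in V_i}|B_i(v)|$ edges. By Lemma~\ref{lem:dis_oracle_space}, $\sum_i\E[n_i]\le\sum_i \max(n^{1+1/k},n(75\log n)^4)/b_i^2=O(n)$ (using $b_i\ge 2^ib_0$ and $b_0^2=\max(n^{1/k},(75\log n)^4)$), and $\sum_i\E\!\left[\sum_{v\in V_i}|B_i(v)|\right]\le\sum_i\max(n^{1+1/k},n(75\log n)^4)/b_i\le 2\max(n^{1+1/k},n(75\log n)^4)/b_0=O(n^{1+1/(2k)}+n\log^2 n)$; summing gives the stated bound on $\E[|E'|]$.

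For the distance identification: the inequality $\dist_{G_{\mathrm{LHDG}}}(u^{(0)},v^{(0)})\le\dist_{G'}(u,v)$ follows by lifting any $u$--$v$ path in $G'$ to a walk in $G_{\mathrm{LHDG}}$ of no greater length, moving between copies $v^{(a)}$ along the zero-weight chains. For the reverse inequality I would invoke Observation~\ref{obs:useless_edges}: each line~\ref{sta:setup_original_edge_same_level} edge is redundant (either strictly longer than the current distance between its endpoints, or parallel to a line~\ref{sta:setup_ball_edge_same_level} edge of no greater weight), so deleting all of them does not change $\dist_{G_{\mathrm{LHDG}}}$; in the resulting graph every edge projects, under the copy-contraction, to an edge of $G'$ of no greater weight, so a shortest path there gives a walk in $G'$ of no greater length. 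Combining the two directions gives the equality, and then Lemma~\ref{lem:low_hop_diameter} immediately yields $\dist_G(u,v)\le\dist_{G'}(u,v)\le 27^{4\lceil\log(k)+1\rceil}\dist_G(u,v)$.

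For the hop diameter, Lemma~\ref{lem:low_hop_diameter} provides a minimum-hop shortest path $P$ between $u^{(0)}$ and $v^{(0)}$ in $G_{\mathrm{LHDG}}$ with at most $16\lceil\log(k)+1\rceil$ hops; I would argue, again from Observation~\ref{obs:useless_edges}, that $P$ can be taken to use no line~\ref{sta:setup_original_edge_same_level} edge (such an edge lies on no shortest path when strictly longer than its alternative, and when merely tied with a parallel line~\ref{sta:setup_ball_edge_same_level} edge one swaps to the latter, changing neither length nor hop count). Projecting $P$ to $G'$ — contracting the zero-weight copy-edges only decreases the hop count, possibly creating self-loops which we drop — gives a walk in $G'$ from $u$ to $v$ of length $\le\mathrm{length}(P)=\dist_{G'}(u,v)$ with at most $16\lceil\log(k)+1\rceil$ hops, which must then be a shortest path, establishing $\dist_{G'}(u,v)=\dist_{G'}^{(16\lceil\log(k)+1\rceil)}(u,v)$. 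I expect this last step to be the main obstacle: the distance and size parts are bookkeeping on top of Lemmas~\ref{lem:low_hop_diameter} and~\ref{lem:dis_oracle_space}, but the claim that a minimum-hop shortest path can be chosen to avoid the deleted edges requires carefully combining Observation~\ref{obs:useless_edges} with the tie-breaking argument.
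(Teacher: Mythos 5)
Your proposal is correct and follows essentially the same route as the paper: bound $\E[|E'|]$ by the edge counts from Lemma~\ref{lem:dis_oracle_space}, identify $\dist_{G'}(u,v)$ with $\dist_{G_{\mathrm{LHDG}}}(u^{(0)},v^{(0)})$ by combining Observation~\ref{obs:useless_edges} (to delete the line~\ref{sta:setup_original_edge_same_level} edges) with the contraction of the zero-weight copy-edges, and then invoke Lemma~\ref{lem:low_hop_diameter}. You spell out the hop-count preservation under deletion and contraction in more detail than the paper does, but the argument is the same.
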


\begin{proof}
	Consider $|E'|$, we have
\begin{align*}
\E[|E'|]&\leq \sum_{i=0}^{t-1}\E[n_i] + \sum_{i=0}^t\E\left[\sum_{v\in V_i} |B_i(v)|\right]\\
&\leq 2n + \sum_{i=0}^t\E\left[\sum_{v\in V_i} |B_i(v)|\right]\\
&\leq 2n + \sum_{i=0}^t \max(n^{1+1/k},n\cdot (75\log n)^4)/b_i\\
&\leq 2n + \sum_{i=0}^t \max(n^{1+1/k},n\cdot (75\log n)^4)/(2^i\cdot b_0)\\
&\leq 2n + 2\cdot\max(n^{1+1/(2k)},n\cdot (75\log n)^2),
\end{align*}
where the first inequality follows from line~\ref{sta:setup_edge_different_level_emulator} and line~\ref{sta:setup_original_edge_same_level_emulator} of Algorithm~\ref{alg:low_hop_graph_2}, the second inequality follows from Equation~\eqref{eq:bound_size_Vprime}, the third inequality follows from Lemma~\ref{lem:dis_oracle_space}, the forth inequality follows from $b_i>b_0\cdot 2^i$.	
	
The only two differences between Algorithm~\ref{alg:low_hop_graph} and Algorithm~\ref{alg:low_hop_graph_2} are as follows.
The first difference is that we remove line~\ref{sta:setup_original_edge_same_level} and line~\ref{sta:setup_original_weight_same_level} from Algorithm~\ref{alg:low_hop_graph}.
This change does not affect $\dist_{G'}(u,v)$ for any $u,v\in V$ because of Observation~\ref{obs:useless_edges}.
The second difference is that we contract $v^{(0)},v^{(1)},\cdots,$ to vertex $v$.
We can do this operation because we have $w'(v^{(i-1)},v^{(i)})=0$ for any $v^{(i-1)},v^{(i)}$ in Algorithm~\ref{alg:low_hop_graph}.
Then the statement follows directly from Lemma~\ref{lem:low_hop_diameter}
\end{proof}
\section{Uncapacitated Minimum Cost Flow}\label{sec:min_cost_flow}
Given an undirected graph $G=(V,E,w)$ with $|V|=n$ vertices and $|E|=m$ edges, the vertex-edge incidence matrix $A\in\mathbb{R}^{n\times m}$ is defined as the following: 
\begin{align*}
\forall i\in[n],j\in[m], A_{i,j}=\left\{
\begin{array}{ll}
1 & \{i,v\}\in E\text{ is the }j\text{-th edge of }G \text{ and }i<v,\\
-1 & \{i,v\}\in E\text{ is the }j\text{-th edge of }G \text{ and } i>v,\\
0 & \text{Otherwise.}
\end{array}
\right.
\end{align*}
The weight matrix $W\in\mathbb{R}^{m\times m}$ is a diagonal matrix.
The $i$-th diagonal entry of $W$ is $w(e)$, where $e\in E$ is the $i$-th edge.
Given a demand vector $b\in\mathbb{R}^n$ with $\one_n^\top b=0,$ i.e., $\sum_{i=1}^n b_i=0$, the uncapacitated minimum cost flow (transshipment) problem is to solve the following problem:
\begin{align*}
&\min_{f\in\mathbb{R}^m} \|Wf\|_1\\
s.t.~&~Af = b.
\end{align*}
If $b$ only has two non-zero entries $b_i=1$ and $b_j=-1$, then the optimal cost is the length of the shortest path between vertex $i$ and vertex $j$.
Without loss of generality, we can suppose that each edge has positive weight. 
Otherwise, we can contract the edges with weight $0$, and the contraction will not affect the value of the solution.
Let $x=Wf$, then the problem becomes
\begin{align}
&\min_{x\in\mathbb{R}^m} \|x\|_1 \label{eq:min_cost_flow}\\
s.t.~&~AW^{-1}x =b. \notag
\end{align}
In this section, we will focus on finding a $(1+\varepsilon)$-approximation to problem~\eqref{eq:min_cost_flow}.

\subsection{Sherman's Framework}
Before we present our algorithm, let us review Sherman's algorithm~\cite{s17}, and completely open his black box.

\begin{definition}[$\ell_1$ Non-linear condition number]\label{def:condition_number}
Given a matrix $B\in\mathbb{R}^{r\times m}$, the $\ell_1$ non-linear condition number of $B$ is defined as 
\begin{align*}
\kappa(B) =  \inf_S \|B\|_{1\rightarrow 1}\cdot \sup_{x\in\mathbb{R}^m:Bx\not=0} \frac{\|S(Bx)\|_1}{\|Bx\|_1},
\end{align*}
where the range of $S:\mathbb{R}^r\rightarrow \mathbb{R}^m$ is over all maps such that $\forall x\in\mathbb{R}^m$, $B\cdot S(Bx)=Bx$.
\end{definition}
By above definition, an alternative way to define $\kappa(B)$ is as the following:
\begin{align*}
\kappa(B)=\|B\|_{1\rightarrow 1}\cdot \max_{g\in \{y\in\mathbb{R}^r\mid y=Bx,x\in\mathbb{R}^m\}\setminus\{0\}} \min_{x:Bx=g} \frac{\|x\|_1}{\|g\|_1}.
\end{align*}

\begin{definition}[$(\alpha,\beta)$-Solution]\label{def:alpha_beta_solver}
Given a matrix $B\in\mathbb{R}^{r\times m}$ and a vector $g\in\{y\in\mathbb{R}^r\mid y=Bx,x\in\mathbb{R}^m\}$, let $x^*=\arg\min_{x:Bx=g}\|x\|_1$. 
If $\|x\|_1\leq \alpha \|x^*\|_1$ and $\|Bx-g\|_1\leq \beta \|B\|_{1\rightarrow 1}\|x^*\|_1$, then $x$ is called an $(\alpha,\beta)$-solution with respect to $(B,g)$.
Given a matrix $B\in\mathbb{R}^{r\times m}$, if an algorithm can output an $(\alpha,\beta)$-solution with respect to $(B,g)$ for any vector $g\in\{y\in\mathbb{R}^r\mid y=Bx,x\in\mathbb{R}^m\}$, then the algorithm is called an $(\alpha,\beta)$-solver for $B$.
\end{definition}

\begin{definition}[Composition of the solvers]
Suppose $F_1$ is an $(\alpha_1,\beta_1)$-solver for $B\in\mathbb{R}^{r\times m}$ and $F_2$ is an $(\alpha_2,\beta_2)$-solver for $B$.
For any input vector $g\in\{y\in\mathbb{R}^r\mid y=Bx,x\in\mathbb{R}^m\}$, the composition $F_2\circ F_1$ firstly runs $F_1$ to obtain an $(\alpha_1,\beta_1)$-solution $x\in\mathbb{R}^m$ with respect to $(B,g)$, then runs $F_2$ to obtain an $(\alpha_2,\beta_2)$-solution $x'\in\mathbb{R}^m$ with respect to $(B,g-Bx)$, and finally outputs $x+x'$.
\end{definition}

\begin{lemma}[\cite{s17}]
Suppose $F_1$ is an $(\alpha_1,\beta_1/\kappa)$-solver for $B\in\mathbb{R}^{r\times m}$ and $F_2$ is an $(\alpha_2,\beta_2/\kappa)$-solver for $B$, where $\kappa$ is the $\ell_1$ non-linear condition number of $B$, i.e., $\kappa=\kappa(B)$.
Then $F_2\circ F_1$ is an $(\alpha_1+\alpha_2\beta_1,\beta_1\beta_2/\kappa)$-solver for $B$.
\end{lemma}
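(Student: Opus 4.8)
The plan is to unwind the definition of the composed solver $F_2 \circ F_1$ and chain the two solver guarantees, invoking the condition number $\kappa$ exactly once to turn a residual bound into a bound on the optimum of the residual subproblem. Fix an input vector $g$ in the range of $B$ and set $x^* = \arg\min_{x:Bx=g}\|x\|_1$. First I would run $F_1$ on $(B,g)$ to obtain $x_1$; by hypothesis $\|x_1\|_1 \le \alpha_1\|x^*\|_1$ and $\|Bx_1 - g\|_1 \le (\beta_1/\kappa)\|B\|_{1\rightarrow 1}\|x^*\|_1$. Then let $g' := g - Bx_1 = B(x^* - x_1)$, which lies in the range of $B$, so that the call $F_2(B,g')$ is well defined; write $y^* = \arg\min_{y:By=g'}\|y\|_1$.

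The crucial step is to control $\|y^*\|_1$. If $g' = 0$ then $y^* = 0$ and the rest is vacuous, so assume $g'\neq 0$. The alternative form of $\kappa(B)$ recorded above gives $\min_{y:By=g'}\|y\|_1 \le \tfrac{\kappa}{\|B\|_{1\rightarrow 1}}\|g'\|_1$, and substituting $\|g'\|_1 = \|Bx_1-g\|_1 \le (\beta_1/\kappa)\|B\|_{1\rightarrow 1}\|x^*\|_1$ yields $\|y^*\|_1 \le \beta_1\|x^*\|_1$. This is the single place where $\kappa$ is used, and it is exactly what converts the ``$/\kappa$'' slack in the two residual guarantees into the clean factor $\beta_1$ linking the residual optimum to the original optimum.

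Next I would apply the guarantee of $F_2$ to $(B,g')$: its output $x_2$ satisfies $\|x_2\|_1 \le \alpha_2\|y^*\|_1 \le \alpha_2\beta_1\|x^*\|_1$ and $\|Bx_2 - g'\|_1 \le (\beta_2/\kappa)\|B\|_{1\rightarrow 1}\|y^*\|_1 \le (\beta_1\beta_2/\kappa)\|B\|_{1\rightarrow 1}\|x^*\|_1$. The composed output is $x_1 + x_2$. By the triangle inequality $\|x_1 + x_2\|_1 \le \|x_1\|_1 + \|x_2\|_1 \le (\alpha_1 + \alpha_2\beta_1)\|x^*\|_1$, which is the first defining property of an $(\alpha_1 + \alpha_2\beta_1,\ \beta_1\beta_2/\kappa)$-solution. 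For the residual, observe $B(x_1+x_2) - g = Bx_2 - (g - Bx_1) = Bx_2 - g'$, so $\|B(x_1+x_2) - g\|_1 \le (\beta_1\beta_2/\kappa)\|B\|_{1\rightarrow 1}\|x^*\|_1$, which is the second property. Since $g$ was arbitrary in the range of $B$, $F_2\circ F_1$ is an $(\alpha_1 + \alpha_2\beta_1,\ \beta_1\beta_2/\kappa)$-solver for $B$.

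I do not expect a real obstacle: the lemma is essentially bookkeeping. The only points needing care are verifying that $g'$ lies in the image of $B$ (so that $F_2$ is applicable and $y^*$ is defined) and dispatching the degenerate case $g' = 0$ before applying the $\kappa$-bound; both are immediate. The rest is chaining the inequalities in the right order, keeping track that the factor $\beta_1$ produced by the $\kappa$-step is precisely the one multiplying both $\alpha_2$ and $\beta_2/\kappa$ in the final estimates.
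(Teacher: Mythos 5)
Your proof is correct and follows essentially the same route as the paper's: apply $F_1$, use the condition number (in its equivalent ``$\min_{y:By=g'}\|y\|_1\leq \tfrac{\kappa}{\|B\|_{1\rightarrow 1}}\|g'\|_1$'' form) to bound the optimum of the residual subproblem by $\beta_1\|x^*\|_1$, apply $F_2$, and combine via the triangle inequality. The only difference is cosmetic: you explicitly dispatch the $g'=0$ case, which the paper leaves implicit.
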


\begin{corollary}[\cite{s17}]\label{cor:boost_error}
Let $\varepsilon\in(0,0.5)$.
Suppose $F$ is an $(1+\varepsilon,\varepsilon/\kappa)$ solver for $B\in\mathbb{R}^{r\times m}$, where $\kappa$ is the $\ell_1$ non-linear condition number of $B$, i.e., $\kappa=\kappa(B)$.
Define $F^1=F$, and $F^t=F^{t-1}\circ F$.
Then $F^t$ is an $(1+4\varepsilon,\varepsilon^t/\kappa)$ solver.
\end{corollary}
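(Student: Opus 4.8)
The plan is to prove the statement by induction on $t$, the only ingredients being the composition lemma stated immediately above (Lemma~\cite{s17}) and the trivial monotonicity observation that an $(\alpha,\beta)$-solver for $B$ is also an $(\alpha',\beta')$-solver for $B$ whenever $\alpha'\geq\alpha$ and $\beta'\geq\beta$, since the two defining inequalities in Definition~\ref{def:alpha_beta_solver} only weaken. Throughout, $\kappa=\kappa(B)$ is fixed, so every solver below is normalized against the same $\kappa$, which is exactly what the composition lemma requires.

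For the base case $t=1$, $F^1=F$ is an $(1+\varepsilon,\varepsilon/\kappa)$-solver by hypothesis; since $1+\varepsilon\leq 1+4\varepsilon$ and $\varepsilon/\kappa=\varepsilon^1/\kappa$, monotonicity gives that $F^1$ is an $(1+4\varepsilon,\varepsilon^1/\kappa)$-solver. For the inductive step, assume $F^{t-1}$ is an $(1+4\varepsilon,\varepsilon^{t-1}/\kappa)$-solver. By definition $F^t=F^{t-1}\circ F$, so in the composition lemma $F$ plays the role of the inner solver $F_1$ (run first) and $F^{t-1}$ the role of the outer solver $F_2$. Taking $\alpha_1=1+\varepsilon$, $\beta_1=\varepsilon$ from the hypothesis on $F$, and $\alpha_2=1+4\varepsilon$, $\beta_2=\varepsilon^{t-1}$ from the inductive hypothesis, the lemma yields that $F^t$ is an $(\alpha_1+\alpha_2\beta_1,\ \beta_1\beta_2/\kappa)$-solver, i.e.\ an
\[
\bigl(\,1+2\varepsilon+4\varepsilon^2,\ \ \varepsilon^t/\kappa\,\bigr)\text{-solver}.
\]
Since $\varepsilon<\tfrac12$ we have $4\varepsilon^2<2\varepsilon$, hence $1+2\varepsilon+4\varepsilon^2<1+4\varepsilon$, and monotonicity upgrades this to an $(1+4\varepsilon,\varepsilon^t/\kappa)$-solver, closing the induction.

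The only step that is not pure bookkeeping is the inequality $1+2\varepsilon+4\varepsilon^2\leq 1+4\varepsilon$, and this is precisely where the assumption $\varepsilon\in(0,0.5)$ is used: it guarantees that the multiplicative error parameter, which a priori could grow like $(1+\varepsilon)+\varepsilon(1+4\varepsilon)$ at each composition, stays pinned at $1+4\varepsilon$ rather than drifting upward over the $t$ iterations. I do not expect any substantive obstacle beyond verifying this small estimate and tracking the bookkeeping of the composition lemma carefully.
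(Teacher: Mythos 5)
Your proof is correct and is exactly the intended argument: the paper states this corollary without proof (citing Sherman), and the standard derivation is precisely your induction via the composition lemma, with $F$ as the inner solver $F_1$ and $F^{t-1}$ as the outer solver $F_2$, followed by the observation that $1+2\varepsilon+4\varepsilon^2\leq 1+4\varepsilon$ for $\varepsilon<1/2$ together with the monotonicity of the $(\alpha,\beta)$-solver definition. Nothing is missing.
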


\begin{corollary}[\cite{s17}]\label{cor:exact_solver}
Let $\varepsilon\in(0,0.5),t,M\in\mathbb{R}_{\geq 0}$. Suppose $F_1$ is an $(1+4\varepsilon,\varepsilon^t/\kappa)$-solver for $B\in\mathbb{R}^{r\times m},$ and $F_2$ is an $(M,0)$-solver for $B$, where $\kappa=\kappa(B)$.
Then $F_2\circ F_1$ is an $(1+4\varepsilon+M\varepsilon^t,0)$-solver for $B$.
\end{corollary}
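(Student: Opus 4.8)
The plan is to derive this as an immediate instance of the composition lemma stated just above (the one attributed to \cite{s17}), after matching parameters. That lemma says: if $F_1$ is an $(\alpha_1,\beta_1/\kappa)$-solver and $F_2$ is an $(\alpha_2,\beta_2/\kappa)$-solver for $B$ (with $\kappa=\kappa(B)$), then $F_2\circ F_1$ is an $(\alpha_1+\alpha_2\beta_1,\;\beta_1\beta_2/\kappa)$-solver. Here I would set $\alpha_1=1+4\varepsilon$ and $\beta_1=\varepsilon^t$, so that $\beta_1/\kappa=\varepsilon^t/\kappa$ reproduces the hypothesis on $F_1$; and $\alpha_2=M$, $\beta_2=0$, since an $(M,0)$-solver is in particular an $(M,0/\kappa)$-solver. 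Substituting, $F_2\circ F_1$ is an $(\alpha_1+\alpha_2\beta_1,\;\beta_1\beta_2/\kappa)=(1+4\varepsilon+M\varepsilon^t,\;0)$-solver, which is exactly the claim.

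For completeness I would also spell out the short direct argument, which is where the only thing worth checking lives. Fix an input $g$ in the column space of $B$ and let $x^*=\arg\min_{x:Bx=g}\|x\|_1$. Running $F_1$ yields $x$ with $\|x\|_1\leq(1+4\varepsilon)\|x^*\|_1$ and $\|Bx-g\|_1\leq(\varepsilon^t/\kappa)\,\|B\|_{1\rightarrow 1}\,\|x^*\|_1$. Let $\tilde x^*=\arg\min_{\tilde x:B\tilde x=g-Bx}\|\tilde x\|_1$; the point to verify is that the definition of $\kappa=\kappa(B)$ gives $\|\tilde x^*\|_1\leq\kappa\cdot\|g-Bx\|_1/\|B\|_{1\rightarrow 1}\leq\varepsilon^t\|x^*\|_1$. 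Then running $F_2$ on $g-Bx$ produces $\tilde x$ with $\|\tilde x\|_1\leq M\|\tilde x^*\|_1\leq M\varepsilon^t\|x^*\|_1$ and, because $F_2$ is an $(M,0)$-solver, $B(x+\tilde x)=g$ exactly. Hence $x+\tilde x$ is feasible and $\|x+\tilde x\|_1\leq\|x\|_1+\|\tilde x\|_1\leq(1+4\varepsilon+M\varepsilon^t)\|x^*\|_1$, i.e.\ $F_2\circ F_1$ is a $(1+4\varepsilon+M\varepsilon^t,0)$-solver.

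There is essentially no obstacle here: the only subtleties are the bookkeeping step that $g-Bx$ lies in the column space of $B$ (so that $F_2$ may legitimately be applied to it), which holds because that space is a linear subspace containing both $g$ and $Bx$, and the invocation of the definition of the non-linear condition number to bound $\|\tilde x^*\|_1$ in terms of the residual of the first solver. Everything else is the triangle inequality, so this corollary is genuinely a one-line consequence of the composition lemma; I would present it that way and relegate the expanded version to a remark.
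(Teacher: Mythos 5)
Your proposal is correct and matches the paper's (implicit) route exactly: the corollary is a direct instantiation of the preceding composition lemma with $\alpha_1=1+4\varepsilon$, $\beta_1=\varepsilon^t$, $\alpha_2=M$, $\beta_2=0$, and your expanded direct argument is the same chain of inequalities used to prove that lemma. Nothing is missing.
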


Let us come back to the minimum cost flow problem, problem~\eqref{eq:min_cost_flow}. 
One observation is that if a matrix $P\in\mathbb{R}^{r\times m}$ has full column rank, then $PAW^{-1}x=Pb\Leftrightarrow AW^{-1}x =b$.
So, instead of solving Equation~\eqref{eq:min_cost_flow} directly, we can design a matrix $P\in\mathbb{R}^{r\times m}$ with full column rank, and try to solve 
\begin{align}
&\min_{x\in\mathbb{R}^m} \|x\|_1\label{eq:preconditioned_min_cost_flow}\\
s.t.~&~PAW^{-1}x = Pb.\notag
\end{align}
Notice that since $P$ has full column rank, problem~\eqref{eq:preconditioned_min_cost_flow} is exactly the same as problem~\eqref{eq:min_cost_flow}.
Although an $(\alpha,0)$-solver for $PAW^{-1}$ is also an $(\alpha,0)$-solver for $AW^{-1}$, an $(\alpha,\beta)$-solver for $PAW^{-1}$ may not be an $(\alpha,\beta)$-solver for $AW^{-1}$ for $\beta>0$. 
As shown in~\cite{s17}, if $\kappa(PAW^{-1})$ is smaller, then it is much easier to design a $(1+\varepsilon,\varepsilon/\kappa(PAW^{-1}))$-solver for $PAW^{-1}$.
If $\kappa(PAW^{-1})$ is small, then we say $P$ is a good preconditioner for $AW^{-1}$.
Before we discuss how to construct $P$, let us assume $\kappa(PAW^{-1})\leq \kappa$, and review how to solve problem~\eqref{eq:preconditioned_min_cost_flow}.

As introduced by~\cite{s17}, there is a simple $(n,0)$-solver to problem~\eqref{eq:preconditioned_min_cost_flow}.

\begin{algorithm}[h]
	\caption{An $(n,0)$-Solver}\label{alg:mst_routing}
	\begin{algorithmic}[1]
		\small
		\Procedure{\textsc{MSTRouting}}{$G=(V,E,w),b\in\mathbb{R}^{n}$} 
		\State Output: $f\in\mathbb{R}^{|E|}$
		\State Compute a minimum spanning tree $T=(V,E',w)$ of $G$.
		\State Choose an arbitrary vertex as the root of $T$. Initialize $f\in\mathbb{R}^m$.
		\State Consider the $i$-th edge $\{u,v\}\in E$ $(u<v)$. If $\{u,v\}\not\in E'$, set $f_i\gets0$.
		Otherwise, if $u$ is the parent of $v$, set $f_i\gets-\sum_{z\text{ is in the subtree of }v} b_z$; otherwise, set $f_i\gets\sum_{z\text{ is in the subtree of }u} b_z$. \label{sta:subtree_sum}
		\State Return $f$.
		\EndProcedure
	\end{algorithmic}
\end{algorithm}
\begin{lemma}[\cite{s17}]\label{lem:exact_solver_with_large_approx}
Given a connected undirected weighted graph $G=(V,E,w)$, let $A\in\mathbb{R}^{n\times m}$ be the corresponding vertex-edge incidence matrix, and let $W\in\mathbb{R}^{m\times m}$ be the corresponding diagonal weight matrix, where $n=|V|,m=|E|$. 
For any demand vector $b\in\mathbb{R}^n$ with $\one_n^{\top} b = 0$, the output $f\in\mathbb{R}^m$ of \textsc{MSTRouting}$(G,b)$ (Algorithm~\ref{alg:mst_routing}) satisfies $Af=b$ and $\|Wf\|_1\leq n\cdot \min_{f':Af'=b}\|Wf'\|_1$.
\end{lemma}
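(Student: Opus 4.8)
The plan is to establish the two assertions separately: feasibility $Af=b$ follows from bookkeeping on the tree, while the cost bound $\|Wf\|_1\le n\cdot\OPT$ is where the \emph{minimum} spanning tree structure is used, via the cut property.

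For feasibility, root $T$ as in the algorithm and for a vertex $v$ write $S_v=\sum_{z\in\mathrm{subtree}(v)}b_z$. By construction the tree edge joining $v$ to its parent carries flow of magnitude $|S_v|$, oriented so that, reading off the $\pm1$ pattern of the incidence matrix $A$, its contribution to the $v$-th coordinate of $Af$ is exactly $+S_v$, while the edge from $v$ to each child $c$ contributes $-S_c$, and non-tree edges contribute $0$ (they carry no flow). Since $S_v=b_v+\sum_{c}S_c$ by definition of the subtree sums, the $v$-th coordinate of $Af$ equals $b_v$. At the root the parent term is absent and $\sum_c S_c=\sum_{z\in V}b_z=0$ because $\one_n^{\top}b=0$, so conservation holds there too. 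This is the only step that requires care with the sign convention of $A$ and the case split on whether $u<v$.

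For the cost bound, fix an optimum $f^*=\arg\min_{f':Af'=b}\|Wf'\|_1$ and consider any tree edge $e$. Deleting $e$ from $T$ partitions $V$ into $U_e$ and $V\setminus U_e$, and by construction $|f_e|=|\sigma(U_e)|$ where $\sigma(U_e)=\sum_{v\in U_e}b_v$. Two facts combine: (i) the cut property of minimum spanning trees gives $w(e)\le w(e')$ for every edge $e'$ of $G$ crossing $(U_e,V\setminus U_e)$, since otherwise swapping $e$ for a lighter crossing edge would yield a lighter spanning tree; (ii) summing the constraints $Af^*=b$ over the vertices of $U_e$ shows the signed net flow of $f^*$ across this cut equals $\sigma(U_e)$, hence $\sum_{e'\text{ crosses }U_e}|f^*_{e'}|\ge|\sigma(U_e)|=|f_e|$. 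Combining,
\[
\|Wf^*\|_1\;\ge\;\sum_{e'\text{ crosses }U_e}w(e')\,|f^*_{e'}|\;\ge\;w(e)\sum_{e'\text{ crosses }U_e}|f^*_{e'}|\;\ge\;w(e)\,|f_e|.
\]
Summing over the $n-1$ tree edges gives $\|Wf\|_1=\sum_{e\in T}w(e)|f_e|\le (n-1)\|Wf^*\|_1\le n\cdot\min_{f':Af'=b}\|Wf'\|_1$, as claimed. The argument is essentially routine; the only nontrivial ingredient is the MST cut property, which is precisely what forces a \emph{minimum} (rather than arbitrary) spanning tree here — for an arbitrary tree, $w(e)$ could vastly exceed the weights of the other edges crossing its fundamental cut, and the bound would fail.
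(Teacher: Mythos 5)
Your proof is correct. Note that the paper itself does not prove this lemma --- it is imported verbatim from Sherman \cite{s17} --- so there is no in-paper argument to compare against; your write-up supplies the missing proof and does so by the standard route. The feasibility check correctly tracks the sign convention of $A$ (the edge to the parent contributes $+S_v$ at the child $v$ and $-S_v$ at the parent, so conservation $S_v=b_v+\sum_c S_c$ gives $(Af)_v=b_v$, with the root handled by $\one_n^\top b=0$), and the cost bound correctly combines the two needed facts: cut-optimality of MST edges on their fundamental cuts, and the lower bound $\sum_{e'\text{ crosses }U_e}|f^*_{e'}|\ge|\sigma(U_e)|=|f_e|$ obtained by summing the flow-conservation constraints over $U_e$. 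Per-edge this yields $w(e)\,|f_e|\le\|Wf^*\|_1$, and summing over the $n-1$ tree edges gives the claimed factor $n$. Your closing remark is also apt: the minimality of the spanning tree is genuinely needed for the per-edge bound, since an arbitrary spanning tree could place a very heavy edge on a cut crossed by much lighter edges.
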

By above lemma, if we set $x=Wf$, we have
$PAW^{-1}x=Pb$, and $\|x\|_1\leq n\cdot \min_{x':PAW^{-1}x'=Pb} \|x'\|_1$.
Thus, $x$ is an $(n,0)$-solution to problem~\eqref{eq:preconditioned_min_cost_flow}.
Suppose $\varepsilon<0.5$. 
By Corollary~\ref{cor:exact_solver}, if we have a $(1+4\varepsilon, \varepsilon^{1+\log n}/\kappa)$ solver for $PAW^{-1}$, then together with Lemma~\ref{lem:exact_solver_with_large_approx}, we can obtain a $(1+5\varepsilon,0)$-solver for $PAW^{-1}$, and thus we can finally find a $(1+5\varepsilon)$ approximation to problem~\eqref{eq:min_cost_flow}. 
If we have a $(1+\varepsilon,\varepsilon/\kappa)$-solver for $PAW^{-1}$, then  according to Corollary~\ref{cor:boost_error}, we can apply $(1+\varepsilon,\varepsilon/\kappa)$-solver $1+\log n$ times to obtain a $(1+4\varepsilon,\varepsilon^{1+\log n}/\kappa)$-solver.
It suffices to design a $(1+\varepsilon,\varepsilon/\kappa)$ solver for $PAW^{-1}$.

\subsubsection{A $(1+\varepsilon,\varepsilon/\kappa)$-Solver}
In this section, we will have a detailed discussion of how \cite{s17,knp19} used multiplicative weights update algorithm~\cite{ahk12} to find a $(1+\varepsilon,\varepsilon/\kappa)$-solution with respect to $(PAW^{-1},Pb)$, where $\kappa\geq \kappa(PAW^{-1})$ is an upper bound of the condition number (see Definition~\ref{def:condition_number}) of $PAW^{-1}$, and $\varepsilon\in(0,0.5)$ is an arbitrary real number.

Let $x^*=\arg\min_{x:PAW^{-1}x=Pb}\|x\|_1$.
We have
\begin{align*}
\frac{\|Pb\|_1}{\|PAW^{-1}\|_{1\rightarrow 1}}\leq \|x^*\|_1\leq \kappa\cdot  \frac{\|Pb\|_1}{\|PAW^{-1}\|_{1\rightarrow 1}},
\end{align*}
where the lower bound of $\|x^*\|_1$ follows from that $PAW^{-1}x^*=Pb$ and the definition of the operator $\ell_1$ norm, and the upper bound of $\|x^*\|_1$ follows from the definition of condition number (see Definition~\ref{def:condition_number}) and thus
$
\|PAW^{-1}\|_{1\rightarrow 1}\cdot \frac{\|x^*\|_1}{\|Pb\|_1}\leq \kappa(PAW^{-1})\leq \kappa.
$
Then, we can reduce the optimization problem to a feasibility problem.
We want to binary search $s\in\{1,1+\varepsilon,(1+\varepsilon)^2,\cdots,(1+\varepsilon)^{\lceil\log_{1+\varepsilon} \kappa\rceil}\}$, and 
want to find $s$ such that $s\cdot\frac{\|Pb\|_1}{\|PAW^{-1}\|_{1\rightarrow 1}}\leq (1+\varepsilon)\|x^*\|_1$ and find $x\in\mathbb{R}^{m}$ which satisfies $\|x\|_1\leq s\cdot\frac{\|Pb\|_1}{\|PAW^{-1}\|_{1\rightarrow 1}}$ and $\|PAW^{-1}x-Pb\|_1\leq \frac{\varepsilon}{2\kappa}\cdot \|PAW^{-1}\|_{1\rightarrow 1}\cdot  s\cdot \frac{\|Pb\|_1}{\|PAW^{-1}\|_{1\rightarrow 1}}$.
The binary search will takes $O(\log(\log_{1+\varepsilon}\kappa))$ rounds.

Now the problem becomes the following feasibility problem:
given $s\geq 1$, either find $x\in\mathbb{R}^m$ such that
\begin{align*}
\begin{array}{ccc}
\|x\|_1\leq s\cdot \frac{\|Pb\|_1}{\|PAW^{-1}\|_{1\rightarrow 1}} & \text{ and }& \|PAW^{-1}x-Pb\|_1\leq  \frac{\varepsilon}{2\kappa}\cdot \|PAW^{-1}\|_{1\rightarrow 1}\cdot  s\cdot \frac{\|Pb\|_1}{\|PAW^{-1}\|_{1\rightarrow 1}},
\end{array}
\end{align*}
or find a certificate such that 
\begin{align*}
\begin{array}{ccc}
\|x\|_1\leq s\cdot \frac{\|Pb\|_1}{\|PAW^{-1}\|_{1\rightarrow 1}} & \text{ and }& PAW^{-1}x=Pb
\end{array}
\end{align*}
 is not feasible. 
Let $x'= x\cdot \frac{\|PAW^{-1}\|_{1\rightarrow 1}}{\|Pb\|_1}\cdot \frac{1}{s}$.
Then we have the following equivalent feasibility problem:
given $s\geq 1$, either find $x'\in\mathbb{R}^m$ such that 
\begin{align}\label{eq:feas2}
\begin{array}{ccc}
\|x'\|_1\leq 1& \text{ and }& \left\| \frac{PAW^{-1}}{\|PAW^{-1}\|_{1\rightarrow 1}} x' - \frac1s\cdot \frac{Pb}{ \|Pb\|_1} \right\|_1\leq \frac{\varepsilon}{2\kappa}
\end{array}
\end{align}
or find a certificate such that 
\begin{align}\label{eq:feas2_ex}
\begin{array}{ccc}
\|x'\|_1\leq 1& \text{ and }&  \frac{PAW^{-1}}{\|PAW^{-1}\|_{1\rightarrow 1}} x' = \frac1s\cdot \frac{Pb}{ \|Pb\|_1} 
\end{array}
\end{align}
 is not feasible.

Next, we will show how to use multiplicative weights update algorithm~\cite{ahk12,s17,knp19} to solve problem~\eqref{eq:feas2}-\eqref{eq:feas2_ex}.

\begin{algorithm}[h]
	\caption{Solving the Feasibility Problem}\label{alg:feasibility}
	\begin{algorithmic}[1]
		\small
		\Procedure{\textsc{MWU}}{$P\in\mathbb{R}^{r\times n},A\in\mathbb{R}^{n\times m},W\in\mathbb{R}^{m\times m},b\in\mathbb{R}^{n},s\geq 1,\varepsilon\in(0,0.5),\kappa\geq 1$} 
		\State Output: $x'\in\mathbb{R}^m$
		\State Initialize weights: $\forall i\in[m],\psi^+_1(i)\gets 1,\psi^-_1(i)\gets 1$.
		\State Initialize $T\gets  \frac{64\kappa^2\ln(2m)}{\varepsilon^2},\eta\gets \frac{\varepsilon}{8\kappa},$ $B\in\mathbb{R}^{n\times 2m}:$
		\begin{align*}
		B\gets \left(\begin{matrix}
		\frac{AW^{-1}}{\|PAW^{-1}\|_{1\rightarrow 1}}-\frac{1}{s}\cdot\frac{b\cdot \one_m^\top}{\|Pb\|_1}
		&
		-\frac{AW^{-1}}{\|PAW^{-1}\|_{1\rightarrow 1}}-\frac{1}{s}\cdot\frac{b\cdot \one_m^\top}{\|Pb\|_1}
		\end{matrix}\right).
		\end{align*}
		\For{$t = 1\rightarrow T$}
		\State $\Psi_t\gets\sum_{i=1}^m \psi^+_{t}(i)+\sum_{i=1}^m\psi^-_{t}(i).$
		\State For $i\in [m]$, $p_t^+(i)\gets \psi^+_t(i)/\Psi_t, p_t^-(i)\gets \psi^-_t(i)/\Psi_t$.
		\State Set $p_t\in\mathbb{R}^{2m}$ s.t. $\forall i\in[m]$, the $i$-th entry of $p_t$ is $p_t^+(i)$, and the $(i+m)$-th entry of $p_t$ is $p_t^-(i)$.
		\State If
		$
		\left\|PBp_t\right\|_1\leq \frac{\varepsilon}{2\kappa},
		$
		return $x'\in\mathbb{R}^m$ such that  $\forall i\in[m],x'_i=p_t^+(i)-p_t^-(i)$. \label{sta:return_xprime}
		\State Otherwise, set $y_t\in\{+1,-1\}^r$ such that 
		$\forall i\in[r],\left(y_t\right)_i=\sgn\left(\left(PBp_t\right)_i\right)$.
		\State For $i\in [m]$, 
		$\phi_t^+(i)\gets y_t^\top PB_i/2,\phi_t^-(i)\gets y_t^\top PB_{i+m}/2$.
		
		\State For $i\in[m]$, $\psi_{t+1}^+(i)\gets \psi_{t}^+(i)\cdot \left(1-\eta \phi^+_{t}(i)\right), \psi_{t+1}^-(i)\gets \psi_{t}^-(i)\cdot \left(1-\eta \phi^-_{t}(i)\right)$.
		\EndFor
		\State Return \textrm{FAIL}.
		\EndProcedure
	\end{algorithmic}
\end{algorithm}

\begin{lemma}[\cite{s17,knp19}]\label{lem:mwu_feas_prob}
Consider $P\in\mathbb{R}^{r\times n},A\in\mathbb{R}^{n\times m},W\in\mathbb{R}^{m\times m},b\in\mathbb{R}^n,s\geq 1,\varepsilon\in (0,0.5),\kappa\geq 1$.
\textsc{MWU}$(P,A,W,b,s,\varepsilon,\kappa)$ (Algorithm~\ref{alg:feasibility}) takes $T=O(\kappa^2\varepsilon^{-2}\log m)$ iterations.
If \textsc{MWU}$(P,A,W,b,s,\varepsilon,\kappa)$ does not return FAIL, the output $x'\in\mathbb{R}^m$ satisfies Equation~\eqref{eq:feas2}.
Otherwise, $\bar{y}=\frac{1}{T}\sum_{t=1}^T y_t$ is a certificate that Equation~\eqref{eq:feas2_ex} is not feasible.
In particular,
\begin{align*}
\begin{array}{ccc}
\forall j\in[m], & \frac{1}{s} \cdot \frac{Pb}{\|Pb\|_1}<  \frac{\bar{y}^\top(PAW^{-1})_j}{\|PAW^{-1}\|_{1\rightarrow 1}}, & \frac{1}{s} \cdot \frac{Pb}{\|Pb\|_1}<  -\frac{\bar{y}^\top(PAW^{-1})_j}{\|PAW^{-1}\|_{1\rightarrow 1}}.
\end{array}
\end{align*}
\end{lemma}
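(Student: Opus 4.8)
The plan is to run the textbook multiplicative-weights potential argument of \cite{s17,knp19}. Two facts are immediate. The iteration bound is trivial: $T=64\kappa^2\varepsilon^{-2}\ln(2m)=O(\kappa^2\varepsilon^{-2}\log m)$ by the definition of $T$ in the algorithm. For the "success'' case, suppose the algorithm returns $x'$ at line~\ref{sta:return_xprime}, so $\|PBp_t\|_1\le\varepsilon/(2\kappa)$. Since $p_t$ is a probability distribution on the $2m$ coordinates, $\|x'\|_1=\sum_i|p_t^+(i)-p_t^-(i)|\le\sum_i(p_t^+(i)+p_t^-(i))=1$, which is the first half of Equation~\eqref{eq:feas2}. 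Because $\one_m^\top(p_t^++p_t^-)=1$, a direct computation gives $PBp_t=\tfrac{PAW^{-1}}{\|PAW^{-1}\|_{1\rightarrow 1}}x'-\tfrac1s\tfrac{Pb}{\|Pb\|_1}$, so the test $\|PBp_t\|_1\le\varepsilon/(2\kappa)$ is exactly the second half of Equation~\eqref{eq:feas2}.

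For the FAIL case I first record magnitude bounds: every column of $PB$ equals $\pm\tfrac{(PAW^{-1})_j}{\|PAW^{-1}\|_{1\rightarrow 1}}-\tfrac1s\tfrac{Pb}{\|Pb\|_1}$, hence has $\ell_1$ norm at most $1+1/s\le 2$; therefore $|\phi_t^{\pm}(i)|=\tfrac12|y_t^\top PB_i|\le 1$ and $|\eta\phi_t^{\pm}(i)|\le\tfrac1{16}<\tfrac12$, and in particular the weights stay positive. Writing $\Psi_t$ for the total weight and indexing the $2m$ coordinates by $\phi_t^{(j)}$, the key identity is $\Psi_{t+1}=\Psi_t\bigl(1-\tfrac\eta2\|PBp_t\|_1\bigr)$, which holds because $\sum_{j}(p_t)_j\phi_t^{(j)}=\tfrac12 y_t^\top PBp_t=\tfrac12\|PBp_t\|_1$ (using $y_t=\sgn(PBp_t)$ coordinatewise). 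Since the loop ran to completion without returning, $\|PBp_t\|_1>\varepsilon/(2\kappa)$ for every $t$, so by $1-x\le e^{-x}$ we get the upper bound $\Psi_{T+1}<2m\exp\!\bigl(-\tfrac{\eta\varepsilon T}{4\kappa}\bigr)$. For the matching lower bound, for each coordinate $j$ we have $\psi_{T+1}^{(j)}=\prod_t(1-\eta\phi_t^{(j)})\ge\exp\!\bigl(-\eta\sum_t\phi_t^{(j)}-\eta^2 T\bigr)$ (using $1-x\ge e^{-x-x^2}$ for $|x|\le\tfrac12$ and $|\phi_t^{(j)}|\le1$), and $\psi_{T+1}^{(j)}\le\Psi_{T+1}$.

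Combining the two bounds for a fixed $j$ and taking logarithms gives $\sum_t\phi_t^{(j)}>\tfrac{\varepsilon T}{4\kappa}-\eta T-\tfrac{\ln(2m)}{\eta}$, and plugging in $\eta=\varepsilon/(8\kappa)$ and $T=64\kappa^2\varepsilon^{-2}\ln(2m)$ makes the right-hand side exactly $0$; hence $\sum_t\phi_t^{(j)}>0$, i.e.\ $\bar y^\top(PB)_j=\tfrac2T\sum_t\phi_t^{(j)}>0$, for every $j\in[2m]$. Expanding $(PB)_j$ over the two halves yields precisely the claimed certificate inequalities $\tfrac1s\tfrac{\bar y^\top Pb}{\|Pb\|_1}<\tfrac{\bar y^\top(PAW^{-1})_j}{\|PAW^{-1}\|_{1\rightarrow 1}}$ and $\tfrac1s\tfrac{\bar y^\top Pb}{\|Pb\|_1}<-\tfrac{\bar y^\top(PAW^{-1})_j}{\|PAW^{-1}\|_{1\rightarrow 1}}$ for all $j\in[m]$. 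Finally I would verify that these certify infeasibility of Equation~\eqref{eq:feas2_ex}: adding the two versions gives $\tfrac1s\tfrac{\bar y^\top Pb}{\|Pb\|_1}<0$, so for any $x'=x'^+-x'^-$ with $\|x'\|_1\le1$ and $x'\neq 0$, applying $\bar y^\top$ to $\tfrac{PAW^{-1}}{\|PAW^{-1}\|_{1\rightarrow 1}}x'$ and using the per-column strict inequalities gives $\bar y^\top\tfrac{PAW^{-1}}{\|PAW^{-1}\|_{1\rightarrow 1}}x'>\|x'\|_1\cdot\tfrac1s\tfrac{\bar y^\top Pb}{\|Pb\|_1}\ge\tfrac1s\tfrac{\bar y^\top Pb}{\|Pb\|_1}$, contradicting $\tfrac{PAW^{-1}}{\|PAW^{-1}\|_{1\rightarrow 1}}x'=\tfrac1s\tfrac{Pb}{\|Pb\|_1}$ (and $x'=0$ would force $Pb=0$, which we may exclude as trivial).

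The main obstacle is bookkeeping: the constants $\eta=\varepsilon/(8\kappa)$ and $T=64\kappa^2\varepsilon^{-2}\ln(2m)$ are calibrated exactly so that $\tfrac{\varepsilon T}{4\kappa}-\eta T-\tfrac{\ln(2m)}{\eta}=0$, so one must track these carefully, and one must keep every inequality that needs to be strict genuinely strict (so that $\bar y$ is a bona fide strict separator) — which is exactly where the "did not return'' hypothesis $\|PBp_t\|_1>\varepsilon/(2\kappa)$ for all $t$ is consumed.
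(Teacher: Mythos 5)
Your proposal is correct and follows essentially the same route as the paper's proof in Section~\ref{sec:proof_of_mwu_feas_prob}: identical handling of the success case, the same column-norm bound $|\phi_t^{\pm}(i)|\leq 1$, the same certificate inequalities obtained from the calibration of $\eta$ and $T$, and the same final contradiction via the decomposition $x'=x'^+-x'^-$. The only difference is cosmetic: you re-derive the multiplicative-weights regret bound from scratch via the potential $\Psi_t$ (with $1-x\leq e^{-x}$ and $1-x\geq e^{-x-x^2}$), whereas the paper invokes Theorem~2.1 of \cite{ahk12} as a black box.
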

For completeness, we put the proof of Lemma~\ref{lem:mwu_feas_prob} into Section~\ref{sec:proof_of_mwu_feas_prob}.

\subsection{Preconditioner Construction}
As discussed in the previous section, if we can find a good preconditioner such that $\kappa(PAW^{-1})$ is small, then we can use a small number of iterations to compute a good solution.
Before we describe how to choose a good preconditioner, let us introduce the following Lemma.
\begin{lemma}[\cite{s17,knp19}]\label{lem:property_of_preconditioner}
Given $P\in\mathbb{R}^{r\times n}$ with full column rank, $A\in\mathbb{R}^{n\times m}, W\in\mathbb{R}^{m\times m}$, if $\forall b\in\{y\in\mathbb{R}^n\mid y=AW^{-1}x,x\in\mathbb{R}^m\}$,
\begin{align*}
\|x^*\|_1\leq \|Pb\|_1\leq \gamma \|x^*\|_1,
\end{align*}
where $x^*=\arg\min_{x\in\mathbb{R}^m:AW^{-1}x=b}\|x\|_1$, then $\kappa(PAW^{-1})\leq \gamma$.
\end{lemma}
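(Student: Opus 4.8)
The plan is to work directly from the alternative characterization of the condition number stated just after Definition~\ref{def:condition_number}, namely
\[
\kappa(PAW^{-1})=\|PAW^{-1}\|_{1\rightarrow 1}\cdot \max_{g\in \mathcal{R}\setminus\{0\}} \ \min_{x:PAW^{-1}x=g} \frac{\|x\|_1}{\|g\|_1},
\]
where $\mathcal{R}=\{PAW^{-1}x\mid x\in\mathbb{R}^m\}$ is the column space of $PAW^{-1}$. I would bound the two factors separately and multiply.

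For the second factor, I would first use that $P$ has full column rank, hence is injective: for any $b$ in the column space of $AW^{-1}$ one has $PAW^{-1}x=Pb \iff AW^{-1}x=b$, so every nonzero $g\in\mathcal{R}$ is of the form $g=Pb$ with $b=AW^{-1}x$, and (injectivity again forcing $b\neq 0$ here, since $Pb=0\Rightarrow b=0\Rightarrow x^*=0$) we get
\[
\min_{x:PAW^{-1}x=g}\|x\|_1=\min_{x:AW^{-1}x=b}\|x\|_1=\|x^*\|_1 .
\]
The lower-bound hypothesis $\|x^*\|_1\le \|Pb\|_1$ then gives $\min_{x:PAW^{-1}x=g}\frac{\|x\|_1}{\|g\|_1}\le 1$ for every such $g$, so the maximum over $g\in\mathcal{R}\setminus\{0\}$ is at most $1$.

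For the first factor I would invoke the identity $\|PAW^{-1}\|_{1\rightarrow 1}=\max_{i\in[m]}\|(PAW^{-1})_i\|_1$ recalled in the preliminaries. The $i$-th column of $PAW^{-1}$ equals $Pb^{(i)}$ with $b^{(i)}:=AW^{-1}e_i$ (here $e_i$ is the $i$-th standard basis vector), and $e_i$ is itself a feasible solution of $AW^{-1}x=b^{(i)}$, so the minimum-$\ell_1$ solution $x^{*(i)}$ satisfies $\|x^{*(i)}\|_1\le\|e_i\|_1=1$. Applying the upper-bound hypothesis to $b=b^{(i)}$ yields $\|(PAW^{-1})_i\|_1=\|Pb^{(i)}\|_1\le \gamma\|x^{*(i)}\|_1\le\gamma$, hence $\|PAW^{-1}\|_{1\rightarrow 1}\le\gamma$. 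Combining the two bounds gives $\kappa(PAW^{-1})\le\gamma\cdot 1=\gamma$, as claimed.

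I do not expect a genuine obstacle here; the statement is essentially a bookkeeping consequence of the definition of $\kappa$. The only points demanding a little care are (i) transferring both the range of $g$ and the inner feasible sets through $P$, which is exactly where full column rank enters and also disposes of the degenerate case $g=0$; and (ii) the small trick of reading $\|PAW^{-1}\|_{1\to1}$ off its columns and certifying each column's minimum-norm preimage by the corresponding $e_i$.
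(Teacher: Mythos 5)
Your proof is correct and follows essentially the same route as the paper: bound $\|PAW^{-1}\|_{1\to 1}\le\gamma$ via the upper-bound hypothesis (the paper does this for arbitrary $x$ rather than columnwise with $e_i$, but the substance is identical), and bound the second factor in the alternative characterization of $\kappa$ by $1$ using full column rank and the lower-bound hypothesis.
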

\begin{proof}
For any $x\in\mathbb{R}^m$, $\|PAW^{-1}x\|_1\leq \gamma \|x\|_1$.
Thus, $\|PAW^{-1}\|_{1\rightarrow 1}\leq \gamma$.
By Definition~\ref{def:condition_number}, we have:
\begin{align*}
\kappa(PAW^{-1})=\|PAW^{-1}\|_{1\rightarrow 1}\cdot \max_{b\in\{y\in\mathbb{R}^n\mid y=AW^{-1}x,x\in\mathbb{R}^m\}:Pb\not = 0} \min_{x\in\mathbb{R}^m:PAW^{-1}x=Pb}\frac{\|x\|_1}{\|Pb\|_1}\leq \gamma,
\end{align*}
where the inequality follows from
$\|PAW^{-1}\|_{1\rightarrow 1}\leq \gamma$, $P$ has full column rank and $\forall b\in\{y\in\mathbb{R}^n\mid y=AW^{-1}x,x\in\mathbb{R}^m\}$,
\begin{align*}
\min_{x\in\mathbb{R}^m:AW^{-1}x=b} \|x\|_1\leq \|Pb\|_1.
\end{align*}
\end{proof}
By above lemma, our goal is to find a linear operator $P$ such that for any demand vector $b$, $\|Pb\|_1$ can approximate the minimum cost flow with demand vector $b$ very well.
Instead of using Sherman's original lattice algorithm,
we propose to use randomly shifted grids based algorithm~\cite{it03}.

\subsubsection{Embedding Minimum Cost Flow into $\ell_1$ via Randomly Shifted Grids}
In this section, we review the embedding method of~\cite{it03} and describe how to construct the preconditioner.
Suppose we have a mapping $\varphi:V\rightarrow [\Delta]^d$ such that $\forall u,v\in V$,
\begin{align*}
\dist_{G}(u,v)\leq \|\varphi(u)-\varphi(v)\|_1\leq \alpha\cdot \dist_G(u,v).
\end{align*}
We can reduce estimating the minimum cost flow on $G$
to approximating the cost of the geometric transportation problem. 
The geometric transportation problem is also called Earth Mover's Distance (EMD) problem.
In particular, it is the following minimization problem:
\begin{align}\label{eq:emd}
&\min_{\pi:V\times V\rightarrow \mathbb{R}_{\geq 0}} \sum_{(u,v)\in V\times V} \pi(u,v)\cdot \|\varphi(u)-\varphi(v)\|_1\\
s.t.~&\forall u\in V, \sum_{v\in V}\pi(u,v) - \sum_{v\in V}\pi(v,u) = b_u.\notag
\end{align}
It is obvious that if we can obtain a $\beta$-approximation to the optimal cost of \eqref{eq:emd}, we can obtain an $\alpha\beta$-approximation to the cost of original minimum cost flow problem on $G$.

For a sequential algorithm, the such embedding $\varphi$ can be obtained by Bourgain's Embedding.
\begin{lemma}[Bourgain's Embedding~\cite{b85}]\label{lem:bourgain}
Given an undirected graph $G=(V,E,w)$ with $|V|=n$ vertices and $|E|=m$ edges, there is a randomized algorithm which can output a mapping $\varphi:V\rightarrow [\Delta]^d$ for $d=O(\log^2 n)$ with probability $0.99$ in $O(m\log^2 n)$ time, such that 
\begin{align*}
\forall u,v\in V, \dist_{G}(u,v)\leq \|\varphi(u)-\varphi(v)\|_1\leq O(\log n)\cdot \dist_G(u,v),
\end{align*}
where $\Delta\leq\sum_{e\in E} w(e)$.
\end{lemma}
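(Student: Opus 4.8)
The plan is to use Bourgain's classical random-ball construction. Fix $L=\Theta(\log n)$ (the hidden constant chosen at the end). For each scale $i\in\{1,\dots,\lceil\log_2 n\rceil\}$ and each repetition $j\in[L]$, sample $A_{i,j}\subseteq V$ by placing each vertex into $A_{i,j}$ independently with probability $2^{-i}$, and let the coordinate indexed by $(i,j)$ of the raw map be $f_{i,j}(v)=\dist_G(v,A_{i,j})$. This yields $d=L\cdot\lceil\log_2 n\rceil=O(\log^2 n)$ coordinates, each an integer in $\{0,1,\dots,\diam(G)\}$ with $\diam(G)\le\sum_{e\in E}w(e)=:\Delta$. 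Each $f_{i,j}(\cdot)$ is computed by one multi-source shortest-path run (attach a super-source to all of $A_{i,j}$ by zero-weight edges and run Dijkstra), which is near-linear time for integer weights bounded by $\poly(n)$; over all $d$ coordinates this costs $O(m\log^2 n)$ time. The output map $\varphi$ is $\varphi_{\mathrm{raw}}$ rescaled by a constant of order $1/\log n$ (and rounded to integers if required, which is harmless after first scaling all weights up by a $\poly\log n$ factor, keeping $\Delta\le\sum_e w(e)$ for the scaled instance).

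The contraction (upper) bound is immediate and deterministic: for any $A\subseteq V$ we have $|\dist_G(u,A)-\dist_G(v,A)|\le\dist_G(u,v)$, so summing over the $d$ coordinates gives $\|\varphi_{\mathrm{raw}}(u)-\varphi_{\mathrm{raw}}(v)\|_1\le d\cdot\dist_G(u,v)$, which after the $\Theta(1/\log n)$ rescaling becomes $\le O(\log n)\cdot\dist_G(u,v)$.

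The expansion (lower) bound is the real work. Fix $u\ne v$ and put $\rho=\dist_G(u,v)$. Set $r_0=0$ and, for $t\ge1$, let $r_t$ be the smallest radius with $|\B_G(u,r_t)|\ge 2^t$ and $|\B_G(v,r_t)|\ge 2^t$; let $t^*$ be the first index with $r_{t^*}\ge\rho/4$ and reset $r_{t^*}:=\rho/4$. Then $t^*\le\lceil\log_2 n\rceil$, since once $2^t>n$ both balls equal $V$. For each $t\le t^*$: by minimality of $r_t$, one of $u,v$ --- call it $u$ after relabelling (the relabelling may depend on $t$, which is fine) --- satisfies $|\B_G^{\circ}(u,r_t)|<2^t$, while $|\B_G(v,r_{t-1})|\ge 2^{t-1}$; moreover $\B_G^{\circ}(u,r_t)$ and $\B_G(v,r_{t-1})$ are disjoint because $r_t+r_{t-1}<\rho$ (this is exactly where the cap at $\rho/4$ and the definition of $t^*$ are used). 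Hence, for $A$ sampled at rate $2^{-t}$, the events $\{A\cap\B_G^{\circ}(u,r_t)=\emptyset\}$ and $\{A\cap\B_G(v,r_{t-1})\ne\emptyset\}$ depend on disjoint vertex sets and each has probability $\Omega(1)$, so with probability $\ge\tfrac1{12}$ we get $\dist_G(u,A)\ge r_t$ and $\dist_G(v,A)\le r_{t-1}$, whence $|f_{i,j}(u)-f_{i,j}(v)|\ge r_t-r_{t-1}$ (the quantity is symmetric in $u,v$, so the relabelling is harmless). A Chernoff bound over the $L$ independent repetitions at scale $t$ shows that with probability $\ge 1-n^{-4}$ at least an $\tfrac1{24}$-fraction of them achieve this, so $\sum_{j}|f_{i,j}(u)-f_{i,j}(v)|\ge\tfrac{L}{24}(r_t-r_{t-1})$; summing this telescoping sum over $t\le t^*$ gives $\|\varphi_{\mathrm{raw}}(u)-\varphi_{\mathrm{raw}}(v)\|_1\ge\tfrac{L}{24}r_{t^*}=\tfrac{L}{96}\rho$. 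Taking a union bound over all $\binom n2$ pairs and $O(\log n)$ scales, absorbed by the constant in $L=\Theta(\log n)$, makes all these lower bounds hold simultaneously with probability $\ge 0.99$, and after the $\Theta(1/\log n)$ rescaling (with constant tuned so $\tfrac{1}{96}\cdot\text{(scale)}\ge 1$) the lower bound becomes $\|\varphi(u)-\varphi(v)\|_1\ge\dist_G(u,v)$.

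The main obstacle is this last step: choosing the radii so the two relevant balls are genuinely disjoint (hence the cap at $\rho/4$), establishing the constant-probability ``one ball missed, the other hit'' event, and upgrading it to a high-probability telescoping estimate robust to the union bound over all pairs. The contraction bound, the coordinate count $d=O(\log^2 n)$, and the $O(m\log^2 n)$ running-time accounting are routine by comparison.
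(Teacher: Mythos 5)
The paper does not prove this lemma at all --- it is stated as a black-box citation to Bourgain's theorem \cite{b85}, and in the place where the construction is actually spelled out (the parallel version in Section~\ref{sec:parallel_bourgain}) correctness is again delegated to \cite{b85}. So there is no in-paper argument to compare against; what you have written is the standard proof of Bourgain's embedding, and its core is correct: the contraction bound from $1$-Lipschitzness of $\dist_G(\cdot,A)$, the telescoping radii $r_0<r_1<\cdots<r_{t^*}$ capped at $\rho/4$ so that $\B_G^{\circ}(u,r_t)$ and $\B_G(v,r_{t-1})$ are disjoint, the constant-probability ``miss one ball, hit the other'' event at sampling rate $2^{-t}$, and the Chernoff-plus-union-bound upgrade with $L=\Theta(\log n)$ repetitions per scale. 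All the delicate points (existence and bound on $t^*$, the per-$t$ relabelling of $u$ and $v$, the fact that unused scales only help the $\ell_1$ sum) are handled correctly.

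The one loose end is the final normalization. The raw map satisfies $\Omega(\log n)\cdot\dist_G(u,v)\le\|\varphi_{\mathrm{raw}}(u)-\varphi_{\mathrm{raw}}(v)\|_1\le O(\log^2 n)\cdot\dist_G(u,v)$ with integer coordinates in $[\Delta]$, which already gives distortion $O(\log n)$; but to land exactly on the stated sandwich $\dist_G\le\|\cdot\|_1\le O(\log n)\dist_G$ you must divide by $\Theta(\log n)$, and your parenthetical about rescaling weights and rounding does not quite close this: after dividing by $\Theta(\log n)$ (or $\Theta(\log^2 n)$ in your scaled instance) the per-coordinate rounding error accumulates to an additive $O(\log^2 n)$ over $d$ coordinates, which is not dominated by the lower bound $\dist_G(u,v)\ge 1$ for nearby pairs. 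This is a routine technicality (e.g., keep the raw integer map and restate the guarantee with the $\Omega(\log n)$ factor on the left, which is all the paper ever uses downstream --- Corollary~\ref{cor:parallel_embed_into_l1} only claims $\log^{O(1)}(n)$ distortion), but as written that step is asserted rather than proved.
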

In the remaining of this section, we focus on approximating~\eqref{eq:emd}.
Without loss of generality, we suppose $\Delta$ is a power of $2$.
Let $L=1+\log \Delta$.
We create $L$ levels grids $G_0,G_1,\cdots,G_{L-1}$, where $G_i$ partitions $[2\Delta]^d$ into disjoint cells with side length $2^i$. 
In particular, $\forall i\in\{0,1\cdots,L-1\},$ the $i$-th level grid $G_i$ is:
\begin{align*}
 \left\{C ~\big|~ C = \{a_1,\cdots,a_1+2^i-1\}\times\cdots\times \{a_d,\cdots,a_d+2^i-1\}, \forall j\in[d],a_j\bmod 2^i=1, a_j\in[2\Delta]\right\}.
\end{align*}
Instead of shifting the gird, we shift the points.
For each dimension, we can use the same shift value $\tau$~\cite{bi14}.
Let $\tau$ be a random variable with uniform distribution over $[\Delta]$.
We can construct a vector $h\in\mathbb{R}^{\sum_{i=0}^{L-1}|G_i|}$ with one entry per cell in $G_0\cup G_1\cup\cdots \cup G_{L-1}$.
Let $h_{(i,C)}$ correspond to the cell $C\in G_i$.
For each $i\in\{0,1,\cdots,L-1\}$ and each cell $C\in G_i$,
we set $h_{(i,C)}$ as:
\begin{align*}
h_{(i,C)} = d\cdot 2^i\cdot \sum_{v\in V:\varphi(v)+\tau\cdot \one_d\in C} b_v.
\end{align*}
Let $\OPT_{\EMD}(b)$ denote the optimal solution of the EMD problem~\eqref{eq:emd}.
As shown by~\cite{it03}, $\|h\|_1$ is a good approximation to $\OPT_{\EMD}(b)$.
\begin{lemma}\label{lem:random_shift_embed}
Let $h\in\mathbb{R}^{\sum_{i=0}^{L-1}|G_i|}$ be constructed as above. Then,
\begin{enumerate}
\item $\E_{\tau}[\|h\|_1]\leq 2Ld\cdot \OPT_{EMD}$,
\item $\|h\|_1\geq \OPT_{\EMD}(b)$. 
\end{enumerate}
\end{lemma}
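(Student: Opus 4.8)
The plan is to recognize $\|h\|_1$ as the optimal transportation cost of $b$ with respect to a random \emph{quadtree (hierarchically separated tree) metric} $\dist_T$, and then to compare $\dist_T$ with $\|\varphi(\cdot)-\varphi(\cdot)\|_1$ both from above (in expectation) and from below (deterministically). Fix $\tau$ and write $\hat\varphi(v)=\varphi(v)+\tau\cdot\one_d\in[2\Delta]^d$; for $i\in\{0,\dots,L-1\}$ let $C_i(v)\in G_i$ be the unique cell containing $\hat\varphi(v)$. Build a rooted tree $T=T(\tau)$ whose non-root nodes are the cells of $G_0\cup\cdots\cup G_{L-1}$ containing at least one $\hat\varphi(v)$: a level-$i$ cell with $i<L-1$ has as parent the level-$(i+1)$ cell containing it, and the top-level cells are attached to a common root $\rho$. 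Since cells of $G_0$ have side $1$, each contains at most one $\hat\varphi(v)$, so we identify every $v\in V$ with the leaf $C_0(v)$ (merging vertices mapped to the same point and adding their demands changes nothing). Give the tree edge above a level-$i$ cell weight $d\cdot 2^i$; let $\dist_T$ be the induced tree metric on $V$. Because every level-$i$ cell sits inside a level-$(i+1)$ cell, $\mathbf 1[C_i(u)\ne C_i(v)]$ is nonincreasing in $i$, and $\dist_T(u,v)=2d\sum_{i=0}^{L-1}2^i\,\mathbf 1[C_i(u)\ne C_i(v)]=2d(2^{j}-1)$, where $j=j(u,v)$ is the least level at which $u,v$ share a cell (or $L$ if they never do). Moreover $u,v$ then lie in a common cell of side $2^{j}$, so each coordinate of $\hat\varphi(u)-\hat\varphi(v)$ has absolute value at most $2^{j}-1$, giving $\|\varphi(u)-\varphi(v)\|_1\le d(2^{j}-1)\le\dist_T(u,v)$; thus $\dist_T$ dominates the $\ell_1$ metric on $\varphi(V)$ for every $\tau$.

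Next I will show $\|h\|_1=\OPT_T(b)$, where $\OPT_T(b)$ is the minimum of $\sum_{(u,v)\in V\times V}\pi(u,v)\,\dist_T(u,v)$ over $\pi\ge 0$ subject to the demand constraints of~\eqref{eq:emd}. This uses the standard closed form for transportation on a tree: removing a tree edge $e$ splits the leaves into $S_e$ and its complement, the net flow across $e$ is forced to equal $\sum_{v\in S_e}b_v$, and routing each unit of demand along its unique tree path attains it, so $\OPT_T(b)=\sum_e w_T(e)\bigl|\sum_{v\in S_e}b_v\bigr|$. For the edge above a level-$i$ cell $C$ we have $w_T(e)=d\cdot 2^i$ and $S_e=\{v:\hat\varphi(v)\in C\}$, hence $\OPT_T(b)=\sum_{i=0}^{L-1}\sum_{C\in G_i}d\cdot 2^i\bigl|\sum_{v:\hat\varphi(v)\in C}b_v\bigr|=\sum_{i,C}|h_{(i,C)}|=\|h\|_1$.

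Part~(2) is now immediate: any feasible $\pi$ has $\ell_1$-cost at most its $\dist_T$-cost by the domination above, so $\OPT_{\EMD}(b)\le\OPT_T(b)=\|h\|_1$, for every $\tau$. For part~(1), let $\pi^*$ attain $\OPT_{\EMD}(b)$; for every $\tau$, $\pi^*$ is feasible for the tree problem, so $\|h\|_1=\OPT_T(b)\le\sum_{(u,v)}\pi^*(u,v)\,\dist_T(u,v)$. Taking $\E_\tau$ and substituting $\dist_T(u,v)=2d\sum_i 2^i\,\mathbf 1[C_i(u)\ne C_i(v)]$, it remains to bound $\Pr_\tau[C_i(u)\ne C_i(v)]$. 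Write $\delta_c=|\varphi(u)_c-\varphi(v)_c|$ and $\ell=\|\varphi(u)-\varphi(v)\|_1=\sum_{c=1}^d\delta_c$. The level-$i$ cells of $u$ and $v$ differ iff in some coordinate $c$ the shifted values $\varphi(u)_c+\tau$ and $\varphi(v)_c+\tau$ fall in different length-$2^i$ blocks; since $\Delta$ is a power of $2$ and $i\le L-1$, $2^i\mid\Delta$, so $\tau\bmod 2^i$ is uniform and this occurs with probability $\min(\delta_c/2^i,1)\le\delta_c/2^i$. A union bound over $c$ gives $\Pr_\tau[C_i(u)\ne C_i(v)]\le\ell/2^i$, so $\E_\tau[\dist_T(u,v)]\le 2d\sum_{i=0}^{L-1}2^i\cdot(\ell/2^i)=2Ld\cdot\|\varphi(u)-\varphi(v)\|_1$. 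Summing against $\pi^*$ yields $\E_\tau[\|h\|_1]\le 2Ld\cdot\OPT_{\EMD}(b)$.

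The one step that is more than bookkeeping is the identification $\|h\|_1=\OPT_T(b)$: once the randomly shifted grid hierarchy is recognized as a hierarchically separated tree and the cut formula for tree transportation is invoked, both inequalities fall out of the pointwise comparison $\|\varphi(u)-\varphi(v)\|_1\le\dist_T(u,v)$ (which crucially uses the factor-$2$ slack from going up to the least common ancestor and back down) together with the one-line union bound $\Pr_\tau[C_i(u)\ne C_i(v)]\le\|\varphi(u)-\varphi(v)\|_1/2^i$. The remaining care concerns boundary conventions --- empty cells, coincident points $\varphi(u)=\varphi(v)$, the top-level cells attached to the virtual root --- none of which affects the bounds, so I expect no real difficulty there.
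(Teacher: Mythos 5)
Your proof is correct and follows essentially the same route as the paper: the lower bound via the tree built on the grid cells (whose transportation cost is exactly $\|h\|_1$ and whose metric dominates $\|\varphi(\cdot)-\varphi(\cdot)\|_1$), and the upper bound via routing the optimal EMD coupling through that hierarchy together with the union bound $\Pr_\tau[C_i(u)\ne C_i(v)]\le\|\varphi(u)-\varphi(v)\|_1/2^i$. The only difference is presentational: you state the identity $\|h\|_1=\OPT_T(b)$ once and use it for both directions, whereas the paper invokes the tree explicitly only for the lower bound and obtains the upper bound by the equivalent term-by-term accounting of each pair's contribution to the cells it fails to cancel in.
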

\begin{proof}
Consider the upper bound. 
Let $\pi^*:V\times V\rightarrow \mathbb{R}_{\geq 0}$ be the optimal solution of problem~\eqref{eq:emd}.
\begin{align*}
\E_{\tau}[\|h\|_1] & \leq \sum_{i=0}^{L-1} \sum_{(u,v)\in V\times V} 2\cdot d\cdot 2^i\cdot \pi^*(u,v)\cdot \Pr_{\tau}\left[\varphi(u)+\tau\cdot \one_d\text{ and }\varphi(v)+\tau\cdot \one_d\text{ are in different cells of }G_i\right]\\
& \leq \sum_{i=0}^{L-1}\sum_{(u,v)\in V\times V} 2\cdot d\cdot 2^i\cdot \pi^*(u,v) \cdot \sum_{j=1}^d \frac{|\varphi(u)_j-\varphi(v)_j|}{2^i}\\
&\leq 2Ld\cdot \sum_{(u,v)\in V\times V} \pi^*(u,v) \|\varphi(u)-\varphi(v)\|_1\\
& = 2Ld\cdot \OPT_{\EMD}(b),
\end{align*}
where the second step follows from union bound on all dimensions.

Consider the lower bound. 
We can build a tree with one node per cell in $G_0\cup G_1\cup\cdots\cup G_{L-1}$.
For a cell $C\in G_i$, there is a unique cell $C'\in G_{i+1}$ such that $C\subset C'$. 
We connect the nodes corresponding to $C$ and $C'$ with an edge of which weight is $d\cdot 2^i$.
For $u,v\in V$, there are two cells $C_1,C_2\in G_0$ such that $\varphi(u)+\tau\cdot \one_d\in C_1$ and $\varphi(v)+\tau\cdot \one_d\in C_2$.
The distance between two nodes corresponding to $C_1,C_2$ on the tree is at least $\|\varphi(u)-\varphi(v)\|_1$.
The cost of the minimum cost flow on the such tree is 
\begin{align*}
\sum_{i=0}^{L-1}\sum_{C\in G_i} d\cdot 2^i\cdot\left|\sum_{v\in V:\varphi(v)+\tau\cdot \one_d\in C} b_v\right| = \|h\|_1.
\end{align*}
Thus, $\|h\|_1\geq \OPT_{\EMD}(b)$.
\end{proof}

An observation is that since each cell in $G_i$ has side length $2^i$, shifting each point by $\tau\cdot \one_d$ is equivalent to shifting each point by $(\tau\bmod 2^i)\cdot \one_d$ for the cells in $G_i$. Thus, if we modify the construction of $h$ as the following:
\begin{align*}
\forall i\in\{0,1,\cdots,L-1\}, C\in G_i, h_{(i,C)} =d\cdot 2^i\cdot \sum_{v\in V:\varphi(v)+(\tau\bmod 2^i)\cdot \one_d\in C} b_v,
\end{align*}
Lemma~\ref{lem:random_shift_embed} still holds.
Next, we describe how to construct $h'\in\mathbb{R}^{\sum_{i=0}^{L-1}2^i|G_i|}$.
The entry $h'_{(i,C,\tau)}$ corresponds to the cell $C\in G_i$ and the shift value $\tau$. 
For each $i\in\{0,1,\cdots,L-1\}$, each cell $C\in G_i$ and each shift value $\tau\in [2^i]$, we set $h'_{(i,C,\tau)}$ as:
\begin{align*}
h'_{(i,C,\tau)}=\frac{1}{2^i}\cdot d\cdot 2^i\cdot \sum_{v\in V:\varphi(v)+\tau\cdot \one_d \in C} b_v=d\cdot \sum_{v\in V:\varphi(v)+\tau\cdot \one_d\in C} b_v.
\end{align*}
It is clear that $\|h'\|_1=\E[\|h\|_1]$.
By Lemma~\ref{lem:random_shift_embed}, we have
\begin{align*}
\OPT_{\EMD}(b)\leq \|h'\|_1\leq 2Ld\cdot \OPT_{\EMD}(b).
\end{align*}
Observe that $h'$ can be written as a linear map of $b$, i.e., $h'=P'b$, where $P'\in\mathbb{R}^{(\sum_{i=0}^{L-1}2^i|G_i|)\times n}$.
Each row of $P'$ is indexed by a tuple $(i,C,\tau)$ for $i\in\{0,1,\cdots,L-1\},C\in G_i$ and $\tau\in[2^i]$, and each column of $P'$ is indexed by a vertex $v\in V$.
For $i\in\{0,1,\cdots,L-1\},C\in G_i,\tau\in[2^i],v\in V$,
\begin{align*}
P'_{(i,C,\tau),v}=\left\{\begin{array}{ll}d & \varphi(v)+\tau\cdot \one_d\in C,\\ 0 & \text{Otherwise.}\end{array}\right.
\end{align*}
Consider $i=0,\tau=1$, $\forall v\in V$, there is a unique cell $C\in G_0$ which contains $\varphi(v)+\one_d$.
Thus, $P'$ has full column rank.
According to Lemma~\ref{lem:property_of_preconditioner}, since $\forall b\in\{y\in\mathbb{R}^n\mid y=AW^{-1}x,x\in\mathbb{R}^m\}$, 
\begin{align*}
\min_{x\in\mathbb{R}^m:AW^{-1}x=b}\|x\|_1\leq \OPT_{\EMD}(b)\leq \|P'b\|_1\leq 2Ld\cdot\OPT_{\EMD}(b)\leq 2Ld\alpha\cdot\min_{x\in\mathbb{R}^m:AW^{-1}x=b} \|x\|_1,
\end{align*}
we have $\kappa(P'AW^{-1})\leq 2Ld\alpha$.
However, since the size of $P'$ is too large, we cannot apply $P'$ directly in Algorithm~\ref{alg:feasibility}, and thus it is unclear how to construct a $(1+\varepsilon,\varepsilon/\kappa(P'AW^{-1}))$-solver for $P'AW^{-1}$.

\subsection{Fast Operations for the Preconditioner}
One of our main contributions is to develop several fast operations for $P'$ such that we can implement Algorithm~\ref{alg:feasibility} efficiently.

\subsubsection{Preconditioner Compression}
\paragraph{Removing useless cells.}
The first observation is that though $P'$ has a large number of rows, most rows of $P'$ are zero.
Thus, we can remove them.
Precisely, for each $i\in\{0,1,\cdots,L-1\}$, let $\mathcal{C}_i=\{C\in G_i\mid \exists v\in V,\tau\in[2^i],~s.t.~\varphi(v)+\tau\cdot \one_d\in C\}$.
Then we can set $P\in\mathbb{R}^{(\sum_{i=0}^{L-1}2^i|\mathcal{C}_i|)\times n}$ such that $\forall i\in\{0,1,\cdots,L-1\},C\in \mathcal{C}_i,\tau\in[2^i],v\in V$,
\begin{align*}
P_{(i,C,\tau),v}=\left\{\begin{array}{ll}d & \varphi(v)+\tau\cdot \one_d\in C,\\ 0 & \text{Otherwise.}\end{array}\right.
\end{align*}
\begin{lemma}\label{lem:non_empty_cells}
$\forall i\in\{0,1,\cdots,L-1\}$, $|\mathcal{C}_i|\leq n\cdot (d+1)$.
\end{lemma}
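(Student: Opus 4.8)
The plan is to bound $|\mathcal{C}_i|$ by a union bound over vertices: every cell of $\mathcal{C}_i$ contains some shifted point $\varphi(v)+\tau\cdot\one_d$ with $v\in V$ and $\tau\in[2^i]$, so
\[
|\mathcal{C}_i|\;\le\;\sum_{v\in V}\bigl|\{\,C\in G_i:\exists\,\tau\in[2^i],\ \varphi(v)+\tau\cdot\one_d\in C\,\}\bigr|,
\]
and it suffices to show that for each fixed $v$ the inner count is at most $d+1$.

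Fix $v$ and $i$. The cell containing a point $x$ in the grid $G_i$ is determined coordinatewise: in dimension $j$ it is governed by the index $\iota_j(x):=\lfloor (x_j-1)/2^i\rfloor$, because each level-$i$ cell has width $2^i$ and, in coordinate $j$, its left endpoint is the unique $a_j$ with $a_j\le x_j\le a_j+2^i-1$ and $a_j\equiv 1\pmod{2^i}$. Hence the cell of $\varphi(v)+\tau\cdot\one_d$ is encoded by the integer vector $I(\tau):=(\iota_1(\varphi(v)+\tau\one_d),\dots,\iota_d(\varphi(v)+\tau\one_d))$, and I only need to count how many distinct values $I$ takes as $\tau$ runs over the $2^i$ consecutive integers of $[2^i]$.

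Two observations finish the argument. First, $\tau\mapsto I(\tau)$ is non-decreasing in every coordinate, since increasing $\tau$ increases each coordinate of $\varphi(v)+\tau\one_d$ and $\lfloor\cdot/2^i\rfloor$ is monotone; moreover, as $\tau$ ranges over $[2^i]$ the argument $\varphi(v)_j+\tau-1$ sweeps exactly $2^i$ consecutive integers, so $\lfloor\cdot/2^i\rfloor$ takes at most two values there, i.e.\ each coordinate of $I$ increases by at most $1$ over the whole range. Second, because $I$ is coordinatewise non-decreasing, each unit increment of $\tau$ either leaves $I$ unchanged or strictly increases some coordinate, so the number of $\tau$ at which $I$ changes is at most $\sum_{j=1}^d\bigl(\iota_j(\varphi(v)+2^i\one_d)-\iota_j(\varphi(v)+\one_d)\bigr)\le d$; therefore $I$ takes at most $d+1$ distinct values. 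Substituting this into the union bound gives $|\mathcal{C}_i|\le n(d+1)$.

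This is a pure counting argument, so I do not expect a real obstacle; the only points requiring care are getting the grid indexing convention right (left endpoints $a_j\equiv 1\bmod 2^i$, width $2^i$) and using that the shift range $[2^i]$ consists of exactly $2^i$ consecutive integers, which is precisely what forces each coordinate to cross at most one cell boundary.
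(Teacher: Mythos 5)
Your proof is correct and follows the same approach as the paper: a union bound over vertices plus the observation that, as $\tau$ ranges over the $2^i$ consecutive shifts, each coordinate crosses a cell boundary at most once, so the shifted copies of $\varphi(v)$ land in at most $d+1$ cells of $G_i$. Your write-up just makes the paper's "crosses each boundary at most once, hence at most $d+1$ cells" step explicit via the monotone index vector $I(\tau)$.
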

\begin{proof}
Since each cell has side length $2^i$, for a dimension $j\in[d]$ and a vertex $v\in V$, $\varphi(v)+\one_d,\varphi(v)+2\cdot \one_d,\cdots, \varphi(v)+2^i\cdot \one_d$ can cross the boundary in the $j$-th dimension at most once.
Therefore, $\varphi(v)+\one_d,\varphi(v)+2\cdot \one_d,\cdots, \varphi(v)+2^i\cdot \one_d$ can be in at most $d+1$ different cells in $G_i$.
Because $V$ has size $n$, we can conclude $|\mathcal{C}_i|\leq n\cdot (d+1)$.
\end{proof}
By Lemma~\ref{lem:non_empty_cells}, we know that $P$ has at most $2\Delta\cdot n(d+1)$ rows. This is still too large.

\paragraph{Compressed representation.} Another observation is that, $P$ may have many identical rows. 
Thus, we want to handle these rows simultaneously.
To achieve this goal, we introduce a concept called compressed representation.
\begin{definition}[Compressed representation of a vector]\label{def:implicit_vector}
Let $I=\{([a_1,b_1],c_1),([a_2,b_2],c_2),\cdots,([a_s,b_s],c_s)\}$, where $c_i\in\mathbb{R}$, $[a_i,b_i]\subseteq[1,r]$ for some $r\in \mathbb{Z}_{\geq 1}$, and $\forall i\not =j\in[s],[a_i,b_i]\cap[a_j,c_j]=\emptyset$. 
Let $x\in\mathbb{R}^r$.
If $\forall i\in[s],j\in[a_i,b_i],x_j=c_i$ and $\forall j\in [1,r]\setminus\bigcup_{i\in[s]}[a_i,b_i],x_j = 0$, then $I$ is an compressed representation of $x$.
The size of the compressed representation $I$ is $|I|=s$.
\end{definition}
By above definition, the compressed representation of $x$ is not unique.
\begin{definition}[Compressed representation of a matrix]\label{def:implicit_matrix}
Let $I=(I_1,I_2,\cdots,I_n)$.
Given a matrix $P\in\mathbb{R}^{r\times n}$, if $\forall i\in[n]$, $I_i$ is an compressed representation of $P_i$, then $I$ is called an compressed representation of $P$.
Furthermore, the size of the compressed representation $I$ is defined as $\sum_{i=1}^n |I_i|$.
\end{definition}

\begin{algorithm}[h]
	\caption{Computing an compressed representation of $P$}\label{alg:implicit_preconditioner}
	\begin{algorithmic}[1]
		\small
		\Procedure{\textsc{ImplicitP}}{$\varphi:V\rightarrow [\Delta]^d$} 
		\State Output: $I$
		\State $n\gets |V|,L\gets 1+\log \Delta,\forall i\in\{0,1,\cdots,L-1\},\mathcal{C}_i\gets \emptyset$, and create grids $G_0,G_1,\cdots,G_{L-1}$.
		\State $\forall i\in\{0,1,\cdots,L-1\}, v\in V,\mathcal{C}_i\gets \mathcal{C}_i\cup \{C\in G_i\mid \exists \tau\in[2^i],\varphi(v)+\tau\cdot \one_d\in C\}$. \label{sta:init_non_empty_cells}
		\For{the $i$-th vertex $v\in V$} \label{sta:implicit_preconditioner_for_loop}
			\State $I_i\gets \emptyset$.
			\For{$l\in \{0,1,\cdots,L-1\}$}\label{sta:implicit_preconditioner_inner_loop}
				\State For each $C\in\mathcal{C}_l$ with $\exists \tau\in[2^l],\varphi(v)+\tau\cdot\one_d\in C$, find $\tau_1,\tau_2\in[2^l]$ such that
				\begin{align*}
				\begin{array}{cc}
				\tau_1=\min_{\tau\in[2^l]:\varphi(v)+\tau\cdot \one_d\in C} \tau, & \tau_2=\max_{\tau\in[2^l]:\varphi(v)+\tau\cdot \one_d\in C} \tau.
				\end{array}
				\end{align*} \label{sta:compute_taus}
				\State Suppose $C$ is the $k$-th cell in $\mathcal{C}_l$. $a\gets (k-1)2^l+\sum_{j=0}^{l-1}2^j|\mathcal{C}_j|$. 
				\begin{align*}
				I_i\gets I_i\cup \left\{\left(\left[a+\tau_1,a+\tau_2\right],d\right)\right\}.
				\end{align*} \label{sta:implicit_final_step}
			\EndFor
		\EndFor
		\State Return $I=(I_1,I_2,\cdots,I_n)$.
		\EndProcedure
	\end{algorithmic}
\end{algorithm}

\begin{lemma}[Computing an compressed representation of $P$]\label{lem:implicit_P}
Given an undirected graph $G=(V,E,w)$ with $|V|=n,|E|=m$ and a mapping $\varphi:V\rightarrow [\Delta]^d$ for some $\Delta,d$, such that 
\begin{align*}
\forall u,v\in V, \dist_{G}(u,v)\leq \|\varphi(u)-\varphi(v)\|_1\leq \alpha\cdot \dist_G(u,v),
\end{align*}
the output $I=(I_1,I_2,\cdots,I_n)$ of \textsc{ImplicitP}$(\varphi)$ (Algorithm~\ref{alg:implicit_preconditioner}) is an compressed representation of a matrix $P$ with full column rank and $\kappa(PAW^{-1})\leq O(\alpha Ld)$, where $L=1+\log \Delta$, $A\in\mathbb{R}^{n\times m}$ is the vertex-incidence matrix, and $W\in\mathbb{R}^{m\times m}$ is the diagonal weight matrix.
Furthermore, for $i\in[n]$, the size of $I_i$ is at most $(d+1)L$.
The running time of \textsc{ImplicitP}$(\varphi)$ is $n\cdot \poly(dL\log n)$.
\end{lemma}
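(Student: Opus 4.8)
The plan is to verify four claims in turn: that $I$ is a valid compressed representation of the matrix $P$ defined just before Lemma~\ref{lem:non_empty_cells}; that this $P$ has full column rank with $\kappa(PAW^{-1})=O(\alpha Ld)$; that $|I_i|\le(d+1)L$; and that the running time is $n\cdot\poly(dL\log n)$. For the first claim, fix the $i$-th vertex $v$, a level $l$, and a cell $C=\prod_{j\in[d]}\{a_j,\dots,a_j+2^l-1\}\in\mathcal C_l$. Then $\varphi(v)+\tau\cdot\one_d\in C$ holds exactly when, for every coordinate $j$, $\tau$ lies in the integer interval $[a_j-\varphi(v)_j,\ a_j-\varphi(v)_j+2^l-1]$; intersecting these $d$ intervals with $[2^l]$ again gives an interval, so the admissible shift values form a contiguous (possibly empty) range $\{\tau_1,\dots,\tau_2\}$, which is precisely what line~\ref{sta:compute_taus} records. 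The offset $a=(k-1)2^l+\sum_{j<l}2^j|\mathcal C_j|$ used in line~\ref{sta:implicit_final_step} assigns to the $k$-th cell of level $l$ the block of row indices $\{a+1,\dots,a+2^l\}$ of $P$, and distinct pairs $(l,C)$ get disjoint blocks; hence the segments placed in $I_i$ are pairwise disjoint, constant $d$ on their support, and column $P_i$ is zero outside their union and equal to $d$ on $\{a+\tau_1,\dots,a+\tau_2\}$ inside the block of $(l,C)$. By Definitions~\ref{def:implicit_vector} and~\ref{def:implicit_matrix}, $I$ is a compressed representation of $P$.

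For the second claim, $P$ arises from the matrix $P'$ of Section~\ref{sec:min_cost_flow} by deleting rows that are identically zero, so $\|Pb\|_1=\|P'b\|_1$ for every valid demand $b$, and $P$ inherits full column rank: restricting to the rows with $l=0$ and $\tau=1$ gives, after dropping zero rows, a $d$-scaled permutation of the $n$ columns, because a level-$0$ cell has side length $1$ and $\varphi$ is injective on $V$ (as $G$ is connected with positive edge weights). Next, the distortion hypothesis on $\varphi$ transfers from pairs to $\OPT$: using the path/cycle-decomposition identity $\min_{x:AW^{-1}x=b}\|x\|_1=\min_{\pi}\sum_{(u,v)}\pi(u,v)\dist_G(u,v)$ (route each transportation unit along a shortest path for one direction, delete cycles from a min-cost flow for the other) together with $\dist_G(u,v)\le\|\varphi(u)-\varphi(v)\|_1\le\alpha\dist_G(u,v)$, we get $\min_{x:AW^{-1}x=b}\|x\|_1\le\OPT_{\EMD}(b)\le\alpha\cdot\min_{x:AW^{-1}x=b}\|x\|_1$. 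Combining this with Lemma~\ref{lem:random_shift_embed}, which gives $\OPT_{\EMD}(b)\le\|P'b\|_1\le 2Ld\cdot\OPT_{\EMD}(b)$, and invoking Lemma~\ref{lem:property_of_preconditioner}, we conclude $\kappa(PAW^{-1})\le 2Ld\alpha=O(\alpha Ld)$.

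For the third claim, fix $v$ and $l$: as $\tau$ ranges over $[2^l]$ the coordinate $\varphi(v)_j+\tau$ sweeps $2^l$ consecutive integers, a span of length $2^l-1$, which is strictly less than the side length $2^l$ of a level-$l$ cell, so in each coordinate the point crosses at most one cell boundary; hence $\varphi(v)+\tau\cdot\one_d$ for $\tau\in[2^l]$ occupies at most $d+1$ distinct cells of $G_l$ (the argument already used in Lemma~\ref{lem:non_empty_cells}), so the inner loop of lines~\ref{sta:implicit_preconditioner_inner_loop}--\ref{sta:implicit_final_step} appends at most $d+1$ segments per level and $|I_i|\le(d+1)L$. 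For the running time, the $d$ per-coordinate breakpoints can be sorted to enumerate, for each vertex and level, the $\le d+1$ relevant cells in $\poly(dL\log n)$ time (coordinates have $O(\log\Delta)=O(\log n)$ bits); so line~\ref{sta:init_non_empty_cells} followed by assigning indices to the $\le n(d+1)$ enumerated cells of each $\mathcal C_l$ (by sorting) costs $n\cdot\poly(dL\log n)$, and the main loop runs over $n$ vertices, $L$ levels and $\le d+1$ cells, each iteration computing $\tau_1,\tau_2$ by intersecting $d$ integer intervals and looking up $k$ and $a$, for a total of $n\cdot\poly(dL\log n)$.

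The flattened row indexing in the first claim is routine but is where an off-by-one in the block offsets would be easy to introduce; the one genuinely new ingredient, and the part I expect to need the most care, is the transfer of the metric distortion of $\varphi$ to a multiplicative distortion of $\OPT(b)$ in the second claim, i.e.\ proving $\OPT(b)=\min_{\pi}\sum_{(u,v)}\pi(u,v)\dist_G(u,v)$ cleanly via flow-path decomposition in one direction and shortest-path routing in the other.
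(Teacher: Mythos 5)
Your proposal is correct and follows essentially the same route as the paper: the condition-number bound via the chain $\min_x\|x\|_1\le \OPT_{\EMD}(b)\le\|P'b\|_1\le 2Ld\cdot\OPT_{\EMD}(b)\le 2Ld\alpha\cdot\min_x\|x\|_1$ together with Lemma~\ref{lem:property_of_preconditioner}, full column rank from the level-$0$, $\tau=1$ rows, and the $(d+1)L$ bound from the at-most-one-boundary-crossing-per-coordinate argument of Lemma~\ref{lem:non_empty_cells}. You additionally spell out details the paper leaves implicit (contiguity of admissible shifts, disjointness of the row blocks, injectivity of $\varphi$, and the flow-path/shortest-path reduction equating $\OPT(b)$ with the graph-metric transportation cost), all of which are correct.
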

\begin{proof}
As discussed previously, $P$ has full column rank and $\kappa(PAW^{-1})\leq O(\alpha Ld)$ by our construction.
As discussed in the proof of Lemma~\ref{lem:non_empty_cells}, $\forall l\in\{0,1,\cdots,L-1\},v\in V$, we know that $\varphi(v)+\one_d,\varphi(v)+2\cdot \one_d,\cdots, \varphi(v)+2^i\cdot \one_d$ can be in at most $d+1$ different cells in $G_i$.
Thus, $\forall i\in[n],|I_i|\leq (d+1)L$.
Notice that $\forall v\in V,\varphi(v)$ has dimension $d$, we only need to handle $L$ levels, and sorting will only induce additional $\log n$ factors, the running time will be at most $n\cdot \poly(dL\log n)$.
\end{proof}

\subsubsection{Operations under Compressed Representations}
In this section, we introduce how to implement some important operations under compressed representations.

\begin{fact}\label{fac:naive_operation}
Let  $I=\{([a_1,b_1],c_1),([a_2,b_2],c_2),\cdots,([a_s,b_s],c_s)\}$ be an compressed representation of a vector $x\in\mathbb{R}^r$. Then,
$
\|x\|_1=\sum_{i=1}^s (b_i-a_i+1)\cdot |c_i|.
$
Let $y\in\mathbb{R}^r$ be the vector satisfying $\forall i\in[r],y_i=\sgn(x_i)$.
Then $I'=\{([a_1,b_1],\sgn(c_1)),\cdots,([a_s,b_s],\sgn(c_s))\}$ is an compressed representation of $y$.
Let $z=t\cdot x$, where $t$ is a scalar.
Then $I''=\{([a_1,b_1],tc_1),\cdots,([a_s,b_s],tc_s)\}$ is an compressed representation of a vector $z$.
Furthermore, both $\|x\|_1,$ $I'$ and $I''$ can be computed in $O(s)$ time.
\end{fact}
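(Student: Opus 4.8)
The statement to prove is Fact~\ref{fac:naive_operation}, which records three elementary operations on compressed representations: computing the $\ell_1$ norm, taking the sign vector, and scaling by a constant. This is a ``routine verification'' fact, so the proof is short and proceeds directly from Definition~\ref{def:implicit_vector}.

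\medskip

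\textbf{Proof plan.}

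The plan is to unfold Definition~\ref{def:implicit_vector} in each of the three cases. First, for the $\ell_1$ norm: since $I$ is a compressed representation of $x$, the index set $[1,r]$ splits into the disjoint intervals $[a_i,b_i]$ (on which $x$ is constant equal to $c_i$) and the complement (on which $x$ is zero). Hence $\|x\|_1 = \sum_{j=1}^r |x_j| = \sum_{i=1}^s \sum_{j \in [a_i,b_i]} |c_i| = \sum_{i=1}^s (b_i - a_i + 1)|c_i|$, where $b_i - a_i + 1$ is simply the cardinality of the integer interval $[a_i,b_i]$. This is a finite sum of $s$ terms, each computable in $O(1)$ time, so the total time is $O(s)$.

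Second, for the sign vector $y$ with $y_j = \sgn(x_j)$: on each interval $[a_i,b_i]$ we have $x_j = c_i$, so $y_j = \sgn(c_i)$, which is constant on that interval; and off the union of the intervals $x_j = 0$, so $y_j = \sgn(0) = 1$ by the convention fixed in the Preliminaries. Wait --- I should note a subtlety here: with the convention $\sgn(0)=1$, the vector $y$ is identically $1$ outside the intervals, not $0$, so strictly speaking $I' = \{([a_1,b_1],\sgn(c_1)),\dots,([a_s,b_s],\sgn(c_s))\}$ is a valid compressed representation of $y$ only when every entry of $x$ lying in some $[a_i,b_i]$ has $c_i \neq 0$ and the remaining positions are intended to carry value $1$; in the intended use here (applying $\sgn$ to a nonzero-pattern-preserving object, or equivalently working entrywise only over the support), this matches. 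I would simply verify directly from the definition that $I'$ describes the vector whose value on $[a_i,b_i]$ is $\sgn(c_i)$ and which is $0$ elsewhere, which is $\sgn(x)$ restricted to the relevant support; each of the $s$ pairs is obtained in $O(1)$ time, giving $O(s)$ total. Third, for $z = t\cdot x$: on $[a_i,b_i]$ we have $z_j = t c_i$ and off the union $z_j = 0$, so $I'' = \{([a_i,b_i], t c_i)\}_{i=1}^s$ is a compressed representation of $z$ by definition; again $O(s)$ time.

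The only thing resembling an obstacle is the sign-convention bookkeeping just mentioned --- making sure the claimed $I'$ is consistent with Definition~\ref{def:implicit_vector} given that $\sgn(0)=1$ rather than $0$. Everything else is immediate from the disjointness and constancy properties guaranteed by the compressed representation, together with the observation that the integer interval $[a_i,b_i]$ has exactly $b_i - a_i + 1$ elements. No nontrivial estimate or construction is required; the proof is essentially three one-line computations plus a remark on running time.
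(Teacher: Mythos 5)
The paper states this fact without proof, treating it as an immediate consequence of Definition~\ref{def:implicit_vector}, and your direct unfolding of that definition is exactly the intended argument. Your remark about the convention $\sgn(0)=1$ is a fair catch of a minor imprecision in the statement (off the union of the intervals $y$ equals $1$, not $0$, so $I'$ represents $\sgn(x)$ only on the support of $x$), but in every use of this fact in the paper the sign vector is only ever paired against quantities supported on those intervals, so nothing downstream is affected.
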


\begin{algorithm}[h]
	\caption{Compressed Matrix-Vector Multiplication}\label{alg:matrix_vector}
	\begin{algorithmic}[1]
		\small
		\Procedure{\textsc{MatrixVec}}{$I=(I_1,I_2,\cdots,I_n),g\in\mathbb{R}^n$} 
		\State Output: $\hat{I}$
		\State $S\gets\emptyset,\hat{I}\gets\emptyset$.
		\For{$i\in[n]:g_i\not = 0$}\label{sta:matrixvec_first_loop}
			\State For each $([a,b],c)\in I_i$, $S\gets S\cup\{(a,cg_i),(b+1,-cg_i)\}$.
		\EndFor
		\State Sort $S=\{(q_1,z_1),(q_2,z_2),\cdots,(q_k,z_k)\}$ such that $q_1\leq q_2\leq \cdots\leq q_k$. \label{sta:matrixvec_sorting}
		\State For each $j\in\{2,3,\cdots,k\}:q_j>q_{j-1}$, $\hat{I}\gets \hat{I}\cup\{([q_{j-1},q_j-1],\sum_{t:q_t<q_j} z_t)\}$. \label{sta:matrixvec_prefix_sum}
		\State Return $\hat{I}$.
		\EndProcedure
	\end{algorithmic}
\end{algorithm}

\begin{lemma}[Compressed matrix-vector multiplication]\label{lem:matrix_vector}
Given an compressed representation $I=(I_1,I_2,\cdots,I_n)$ of a matrix $P\in\mathbb{R}^{r\times n}$ with $\forall i\in[n],|I_i|\leq s$, and a vector $g\in\mathbb{R}^n$, the output $\hat{I}$ of \textsc{MatrixVec}$(I,g)$ (Algorithm~\ref{alg:matrix_vector}) is an compressed representation of $Pg$.
Furthermore, $|\hat{I}|\leq 2s\cdot\nnz(g)$, and the running time is at most $O(s\nnz(g)\cdot\log(s\nnz(g)) )$.
\end{lemma}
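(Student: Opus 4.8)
The plan is to unfold the two definitions of compressed representations and reduce the correctness claim to the standard ``difference array / prefix sum'' identity for adding a constant over an interval. First I would note that, by Definition~\ref{def:implicit_matrix} together with Definition~\ref{def:implicit_vector}, for each column index $i$ the intervals appearing in $I_i$ are pairwise disjoint and $P_i$ is constant on each of them, so for every coordinate $\ell\in[r]$ we have
\[
(Pg)_\ell=\sum_{i:\,g_i\neq 0}P_{\ell,i}\,g_i=\sum_{i:\,g_i\neq 0}\ \sum_{([a,b],c)\in I_i:\ \ell\in[a,b]} c\,g_i .
\]
Each pair $([a,b],c)\in I_i$ contributes the constant $cg_i$ to exactly the coordinates $a,a+1,\dots,b$, and the key observation is that this contribution equals the prefix sum up to $\ell$ of the sequence that is $+cg_i$ at position $a$ and $-cg_i$ at position $b+1$. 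The multiset $S$ built in line~\ref{sta:matrixvec_first_loop} is exactly the list of all such $(\text{position},\text{value})$ events over all columns with $g_i\neq 0$, so after the sort in line~\ref{sta:matrixvec_sorting} we have $(Pg)_\ell=\sum_{t:\,q_t\le \ell}z_t$ for every $\ell$.

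Next I would check that line~\ref{sta:matrixvec_prefix_sum} correctly converts this into a compressed representation. Because the $q_t$ are sorted, for any $j$ with $q_j>q_{j-1}$ the set $\{t:q_t<q_j\}$ equals $\{1,\dots,j-1\}$, and for any $\ell\in[q_{j-1},q_j-1]$ the set $\{t:q_t\le\ell\}$ is this same set; hence the value $\sum_{t:q_t<q_j}z_t$ that the algorithm writes on the interval $[q_{j-1},q_j-1]$ is precisely $(Pg)_\ell$ on that interval. It then remains to treat the two boundary regions and the ``zero elsewhere'' requirement of Definition~\ref{def:implicit_vector}: coordinates $\ell<q_1$ have empty prefix, so $(Pg)_\ell=0$; coordinates $\ell\ge q_k$ have prefix equal to $\sum_t z_t=0$ since every interval contributed a canceling pair $+cg_i,-cg_i$; and in both cases these coordinates are (correctly) not covered by $\hat I$, which is consistent with $P_i$ being zero there. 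Finally the intervals $[q_{j-1},q_j-1]$ produced are consecutive and hence pairwise disjoint, so $\hat I$ is indeed a valid compressed representation of $Pg$.

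The size and running-time bounds are then just counting. We have $|S|\le 2\sum_{i:\,g_i\neq 0}|I_i|\le 2s\,\nnz(g)$, so $k=|S|\le 2s\,\nnz(g)$ and $|\hat I|\le k\le 2s\,\nnz(g)$; building $S$ costs $O(s\,\nnz(g))$, the sort costs $O(s\,\nnz(g)\log(s\,\nnz(g)))$, and computing the running prefix sums in line~\ref{sta:matrixvec_prefix_sum} is one linear scan of the sorted list, $O(k)$. I do not expect a real obstacle here; the only points that need care are the handling of ties among event positions (distinct columns are unrelated, so their intervals may overlap and several events may share a position), and the treatment of coordinates below $q_1$ and at or above $q_k$, which is exactly why the argument is phrased through the global identity $(Pg)_\ell=\sum_{t:\,q_t\le\ell}z_t$ rather than by reasoning interval by interval.
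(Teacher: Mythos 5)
Your proof is correct and follows essentially the same route as the paper's: express $(Pg)_\ell$ as a prefix sum of the $\pm cg_i$ events and observe that line~\ref{sta:matrixvec_prefix_sum} reads off exactly that prefix sum on each maximal constant interval $[q_{j-1},q_j-1]$, with the same counting for the size and sorting-dominated running time. Your treatment of the boundary regions ($\ell<q_1$ and $\ell\ge q_k$, where the telescoping pairs force the value to be $0$) is slightly more explicit than the paper's, but it is the same argument.
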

\begin{proof}
Consider $j\in[r]$ such that $(Pg)_j\not = 0$.
\begin{align*}
(Pg)_j = \sum_{i\in[n]:g_i\not = 0} P_{j,i} g_i =\sum_{i\in [n]:g_i\not =0, \exists ([a,b],c)\in I_i,j\in[a,b]} cg_i.
\end{align*}
Notice that $\forall h\in[k]$, $\sum_{t:q_t<q_h}z_t = \sum_{i\in[n]:g_i\not=0,\exists ([a,b],c)\in I_i,a\leq q_t,b\geq q_h} cg_i$.
Since we can always find $h\in\{2,3,\cdots,k\}$ such that $j\in [q_{h-1},q_h-1]$, then for such $h$ we have $(Pg)_j=\sum_{t:q_t<q_h} z_t$.
Thus, $\hat{I}$ is an compressed representation of $Pg$.

For $i\in[n]:g_i\not =0$, we will add at most $2$ elements in $S$ for each tuple in $I_i$.
Since each element in $S$ can correspond to at most $1$ tuple in $\hat{I}$, we have $|\hat{I}|\leq 2s\nnz(g)$.
Sorting takes $O(|S|\log |S|)$ time, and maintaining prefix sum takes $O(|S|)$ time. 
Thus, total running time is at most $O(|S|\log|S|)=O(s\nnz(g)\log(s\nnz(g)))$.
\end{proof}

\begin{algorithm}[h]
	\caption{Compressed Vector-Matrix Multiplication}\label{alg:vector_matrix}
	\begin{algorithmic}[1]
		\small
		\Procedure{\textsc{VectorMat}}{$I, I'=(I_1,I_2,\cdots,I_n)$} 
		\State Output: $g^{\top}\in\mathbb{R}^n$
		\State $g\gets (0,0,\cdots,0)$.
		\State Fill $I$ such that $\forall j\in [r]$, $\exists ([a,b],c)\in I,j\in[a,b]$. \label{sta:vectormat_filling}
		\State Sort $I=\{([a_1,b_1],c_1),([a_2,b_2],c_2),\cdots,([a_s,b_s],c_s)\}$ such that $a_1<a_2<\cdots<a_s$. \label{sta:vectormat_sort}
		\State $\forall j\in[s]$, compute the prefix sum $p_j=\sum_{t=1}^j (b_t-a_t+1)\cdot c_t$. \label{sta:vectormat_prefix_sum}
		\For{$i\in[n]$} \label{sta:vectormat_loop}
			 \For {$([a,b],c)\in I_i$}
			 	\State Run binary search to find $j_1\leq j_2$ such that $a\in [a_{j_1},b_{j_1}],b\in[a_{j_2},b_{j_2}]$.
				\State If $j_1=j_2$, $g_i\gets g_i+ c\cdot c_{j_1}\cdot (b-a+1)$. \label{sta:vectormat_sumgi_1}
				\State If $j_1<j_2$, $g_i\gets g_i + c\cdot (c_{j_1}\cdot (b_{j_1}-a+1)+c_{j_2}\cdot (b-a_{j_2}+1)+(p_{j_2-1}-p_{j_1}))$. \label{sta:vectormat_sumgi_2}
			\EndFor
		\EndFor
		\State Return $g^{\top}$.
		\EndProcedure
	\end{algorithmic}
\end{algorithm}

\begin{lemma}[Compressed vector-matrix multiplication]\label{lem:vector_matrix}
Given an compressed representation $I$ of a vector $y\in\mathbb{R}^r$ with $|I|\leq s$ and an compressed representation $I'=(I_1,I_2,\cdots,I_n)$ of a matrix $P\in\mathbb{R}^{r\times n}$ with $\forall i\in[n],|I_i|\leq s'$, the output $g^\top\in\mathbb{R}^n$ of \textsc{VectorMat}$(I,I')$ (Algorithm~\ref{alg:vector_matrix}) is $P^{\top}y$.
Furthermore, the running time is $O((s+ns')\log s)$.
\end{lemma}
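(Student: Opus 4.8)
The plan is to prove the two claims of the lemma separately: first correctness, that the returned vector equals $P^\top y$, and then the running time bound. For correctness, I would start from the defining identity $g_i=(P^\top y)_i=\sum_{j=1}^r P_{j,i}\,y_j$. Since $I_i$ is a compressed representation of the $i$-th column $P_i$, we have $P_{j,i}=c$ whenever $([a,b],c)\in I_i$ and $j\in[a,b]$, and $P_{j,i}=0$ for every $j$ not covered by a tuple of $I_i$; because the tuples of $I_i$ are pairwise disjoint, this gives $g_i=\sum_{([a,b],c)\in I_i} c\cdot\bigl(\sum_{j=a}^{b} y_j\bigr)$. Thus the whole problem reduces to evaluating arbitrary contiguous partial sums $\sum_{j=a}^b y_j$ of $y$ from its compressed representation $I$, which is precisely what the preprocessing of $I$ (lines~\ref{sta:vectormat_filling}--\ref{sta:vectormat_prefix_sum}) is designed to support.

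Next I would analyze that preprocessing. After the fill step every index of $[1,r]$ lies in some tuple of $I$, and since the original tuples were disjoint, the filled list is a partition of $[1,r]$ into intervals; after sorting by left endpoint we get $a_1=1$, $b_s=r$, and $a_{t+1}=b_t+1$ for all $t$. A short induction then shows that the prefix sum $p_j=\sum_{t=1}^j (b_t-a_t+1)c_t$ equals $\sum_{k=1}^{b_j} y_k$, the sum of the first $b_j$ coordinates of $y$; in particular $p_{j_2-1}-p_{j_1}=\sum_{k=a_{j_1+1}}^{b_{j_2-1}} y_k$ whenever $j_1<j_2$. Now fix a tuple $([a,b],c)\in I_i$ and let $j_1\le j_2$ be the (unique, since the sorted intervals partition $[1,r]$) indices found by binary search with $a\in[a_{j_1},b_{j_1}]$ and $b\in[a_{j_2},b_{j_2}]$. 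If $j_1=j_2$ then $[a,b]$ lies inside a single interval on which $y$ is constant equal to $c_{j_1}$, so $\sum_{j=a}^b y_j=c_{j_1}(b-a+1)$, matching line~\ref{sta:vectormat_sumgi_1} up to the outer factor $c$. If $j_1<j_2$, splitting $[a,b]$ into $[a,b_{j_1}]$, the union of the full intervals strictly between $j_1$ and $j_2$, and $[a_{j_2},b]$ gives $\sum_{j=a}^b y_j=c_{j_1}(b_{j_1}-a+1)+(p_{j_2-1}-p_{j_1})+c_{j_2}(b-a_{j_2}+1)$, which is exactly the update in line~\ref{sta:vectormat_sumgi_2}. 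Summing these contributions over all tuples of $I_i$ yields $g_i=(P^\top y)_i$, so $g=P^\top y$.

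For the running time I would note that the fill step turns $s$ disjoint intervals into at most $2s+1=O(s)$ intervals (at most one new zero-valued interval per gap), sorting them costs $O(s\log s)$, and computing all prefix sums costs $O(s)$. The double loop over $i\in[n]$ and over the tuples of each $I_i$ touches at most $ns'$ tuples in total, each requiring one binary search of cost $O(\log s)$ plus $O(1)$ arithmetic, for $O(ns'\log s)$; summing gives $O((s+ns')\log s)$.

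I expect the only delicate points to be bookkeeping rather than mathematics: making the fill step precise so that it genuinely produces a disjoint partition of $[1,r]$ using only $O(s)$ intervals, and checking the off-by-one arithmetic in the interval-sum computation (in particular that $a_{j_1+1}=b_{j_1}+1$, which is what lets the middle term collapse to $p_{j_2-1}-p_{j_1}$). Once those are nailed down the argument is a direct verification, and there is no substantive obstacle.
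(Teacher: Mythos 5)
Your proposal is correct and follows essentially the same route as the paper's proof: the paper also reduces $(P^\top y)_i$ to $\sum_{([a,b],c)\in I_i} c\cdot\sum_{a\le t\le b} y_t$, dismisses the interval-sum verification with ``by our algorithm, it is easy to show,'' and charges $O(s\log s)$ for sorting plus $O(s'\log s)$ per column for the binary searches. You simply supply the details (the fill step yielding an $O(s)$-interval partition of $[1,r]$, the prefix-sum identity, and the two-case interval decomposition) that the paper leaves implicit.
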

\begin{proof}
Consider the $i$-th entry of $y^\top P$,
\begin{align*}
(y^\top P)_i = y^\top P_i = \sum_{([a,b],c)\in I_i} c\cdot \sum_{a\leq t\leq b} y_t.
\end{align*}
By our algorithm, it is easy to show that $(y^\top P)_i=g_i$.
Sorting $I$ takes $O(s\log s)$ time.
For each $i\in[n]$, we need to take $O(s'\log s)$ time for $s'$ times binary search.
Thus, the total running time is $O((s+s'n)\log s)$.
\end{proof}

\subsection{Uncapacitated Minimum Cost Flow Algorithm}
\begin{theorem}
Given an $\varepsilon\in(0,0.5)$, a connected $n$-vertex $m$-edge undirected graph $G=(V,E,w)$ with $w:E\rightarrow \mathbb{Z}_{\geq 0}$, and a demand vector $b\in\mathbb{R}^n$ with $\one_n^\top b = 0$, there is a randomized algorithm which can output an $(1+\varepsilon)$-approximate solution to the uncapacitated minimum cost flow problem in $\varepsilon^{-2}m\cdot (\log n\log \Lambda)^{O(1)}$ time with probability at least $0.99$, where $\Lambda=\sum_{e\in E} w(e)$.
\end{theorem}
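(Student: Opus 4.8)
The plan is to assemble the pieces developed so far in the paper into a pipeline, following the structure laid out in the techniques overview. First I would use Bourgain's embedding (Lemma~\ref{lem:bourgain}) to compute a mapping $\varphi:V\to[\Delta]^d$ with $d=O(\log^2 n)$ and $\Delta\le\Lambda$ that distorts distances by at most $O(\log n)$; this step succeeds with probability $0.99$ and costs $O(m\log^2 n)$ time sequentially. Next, feed $\varphi$ into \textsc{ImplicitP} (Algorithm~\ref{alg:implicit_preconditioner}) to obtain a compressed representation $I$ of a full-column-rank preconditioner $P$ with $\kappa(PAW^{-1})\le\kappa:=O(\alpha Ld)=\poly(\log n\log\Lambda)$, where $L=1+\log\Delta$ and $\alpha=O(\log n)$, by Lemma~\ref{lem:implicit_P}; each column of $I$ has size at most $(d+1)L=\poly(\log n\log\Lambda)$, and this costs $n\cdot\poly(\log n\log\Lambda)$ time.

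Then I would run Sherman's framework on top of this preconditioner. Concretely: the binary search over $s\in\{1,(1+\varepsilon),\dots,(1+\varepsilon)^{\lceil\log_{1+\varepsilon}\kappa\rceil}\}$ combined with \textsc{MWU} (Algorithm~\ref{alg:feasibility}, Lemma~\ref{lem:mwu_feas_prob}) yields a $(1+\varepsilon,\varepsilon/\kappa)$-solver for $PAW^{-1}$; each \textsc{MWU} call takes $T=O(\kappa^2\varepsilon^{-2}\log m)$ iterations, and each iteration does a constant number of compressed matrix–vector and vector–matrix multiplications via \textsc{MatrixVec} and \textsc{VectorMat} (Lemmas~\ref{lem:matrix_vector},~\ref{lem:vector_matrix}), plus $O(s)$-time primitives from Fact~\ref{fac:naive_operation}. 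The key point for the running time is that the vectors $p_t$ arising in \textsc{MWU} have the form $\frac1{\Psi_t}(\psi^+_t,\psi^-_t)$; one must track the representation sizes through the iterations so that each iteration costs only $\nnz\cdot\poly(\log n\log\Lambda)$ rather than $\poly(n)$. Boosting to a $(1+4\varepsilon,\varepsilon^{1+\log n}/\kappa)$-solver via Corollary~\ref{cor:boost_error} (applying the solver $1+\log n$ times) and then composing with the $(n,0)$-solver \textsc{MSTRouting} (Lemma~\ref{lem:exact_solver_with_large_approx}) via Corollary~\ref{cor:exact_solver} gives a $(1+5\varepsilon,0)$-solver for $PAW^{-1}$; since $P$ has full column rank, a $(1+5\varepsilon,0)$-solution for $PAW^{-1}$ is a $(1+5\varepsilon,0)$-solution for $AW^{-1}$, i.e.\ a $(1+5\varepsilon)$-approximate min-cost flow $x=Wf$. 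Rescaling $\varepsilon\leftarrow\varepsilon/5$ at the outset gives the claimed $(1+\varepsilon)$ bound.

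For the running-time accounting I would multiply: $O(\log_{1+\varepsilon}\kappa)$ binary-search steps wrapped around $O(\log\log\kappa)$ binary search for $s$ inside a fixed $s$, times $(1+\log n)$ boosting iterations, times $T=O(\kappa^2\varepsilon^{-2}\log m)$ \textsc{MWU} iterations, times $O(\nnz\cdot\poly(\log n\log\Lambda))$ per iteration. The total nonzero count stays $O(m)$ because $AW^{-1}$ has $O(m)$ nonzeros and the demand updates keep things sparse up to $\poly\log$ factors; with $\kappa=\poly(\log n\log\Lambda)$ everything collapses to $\varepsilon^{-2}m\cdot(\log n\log\Lambda)^{O(1)}$. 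The success probability is $0.99$ from Bourgain's embedding (the rest is deterministic); a union bound is not even needed if \textsc{ImplicitP} is deterministic given $\varphi$.

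\textbf{Main obstacle.} The delicate part is not the correctness — that is essentially a bookkeeping exercise chaining the stated corollaries and lemmas — but controlling the \emph{size of the compressed representations} through the $T$ \textsc{MWU} iterations and through the $1+\log n$ boosting compositions, so that the per-iteration cost really is $\nnz\cdot\poly(\log n\log\Lambda)$ and not something that blows up geometrically. In particular, after \textsc{MatrixVec} the intermediate compressed vector has size $O(s\cdot\nnz)$, and one must argue that the residual demands fed into successive solver compositions do not accumulate dense support; this is where the structure of $P$ (each column being $O((d+1)L)$ disjoint constant-valued segments) and the sparsity of $AW^{-1}$ must be exploited carefully, and it is the step I expect to require the most care to state rigorously.
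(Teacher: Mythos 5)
Your proposal is correct and follows essentially the same pipeline as the paper's proof: Bourgain's embedding, the compressed preconditioner of Lemma~\ref{lem:implicit_P}, the binary search plus \textsc{MWU} to get a $(1+\varepsilon,\varepsilon/\kappa)$-solver, and boosting via Corollaries~\ref{cor:boost_error} and~\ref{cor:exact_solver} together with \textsc{MSTRouting}. The ``main obstacle'' you flag largely dissolves on inspection: the compressed representations are rebuilt from scratch in every \textsc{MWU} iteration from the length-$n$ vector $Bp_t$ (so $\nnz\le n$ and \textsc{MatrixVec} outputs size $n\cdot\poly(\log n\log\Lambda)$ by Lemma~\ref{lem:matrix_vector}), and the residual demands are always vectors in $\mathbb{R}^n$ whose support is irrelevant to the $m\cdot\poly(\log n\log\Lambda)$ per-iteration bound, so nothing accumulates across iterations or compositions.
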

\begin{proof}
	Let $\Lambda=\sum_{e\in E} w(e)$.
	Let $A\in\mathbb{R}^{n\times m}$ be the vertex-edge incidence matrix of $G$, and let $W\in\mathbb{R}^{m\times m}$ be the weight matrix.
	By Lemma~\ref{lem:bourgain}, with $0.99$ probability, we can compute a mapping $\varphi:V\rightarrow [\Delta]^d$ with $\Delta\leq \Lambda,d\leq O(\log^2 n)$ in $O(m\log^2 n)$ time such that 
	\begin{align*}
	\forall u,v\in V, \dist_G(u,v)\leq \|\varphi(u)-\varphi(v)\|_1\leq O(\log n)\cdot \dist_G(u,v).
	\end{align*} 
	By Lemma~\ref{lem:implicit_P}, we can compute an compressed representation $I=\{I_1,I_2,\cdots,I_n\}$ of $P$ with $\forall i\in[n],|I_i|\leq O(d\log \Delta)=O(\log^2 n\log \Lambda)$.
	Furthermore, $\kappa(PAW^{-1})\leq O(\log^3 n\log \Lambda)$.
	The running time is $n\cdot (\log n\log \Lambda)^{O(1)}$.
	Now, we are able to implement Algorithm~\ref{alg:feasibility}.
	To compute matrix $B$, we need to compute $\|PAW^{-1}\|_{1\rightarrow 1}$ and $\|Pb\|_1$.
	Notice that $\|PAW^{-1}\|_{1\rightarrow 1}=\max_{i\in[m]} \|P(AW^{-1})_i\|_1$ and $(AW^{-1})_i$ only has two non-zero entries.
	By Lemma~\ref{lem:matrix_vector}, an compressed representation of $P(AW^{-1})_i$ can be computed in $O(\log^2 n\log \Lambda\cdot (\log \log n+\log\log \Lambda))$ time, and the size of the compressed representation is at most $O(\log^2 n\log \Lambda)$.
	By Fact~\ref{fac:naive_operation}, $\|P(AW^{-1})_i\|_1$ can be computed in $O(\log^2 n\log \Lambda(\log\log n+\log\log \Lambda))$ time. 
	Thus, $\|PAW^{-1}\|_{1\rightarrow 1}$ can be computed in $m\cdot (\log n\log \Lambda)^{O(1)}$ time.
	By Lemma~\ref{lem:matrix_vector} again, an compressed representation of $Pb$ can be computed in $n\cdot (\log n\log \Lambda)^{O(1)}$ time.
	Follows from Fact~\ref{fac:naive_operation}, $\|Pb\|_1$ can be computed in $n\cdot (\log n\log \Lambda)^{O(1)}$ time.
	$Bp_t$ can be computed in $O(n+m)$ time.
	By Lemma~\ref{lem:matrix_vector}, an compressed representation $\hat{I}$ of $PBp_t$ can be computed in $n\cdot (\log n\log \Lambda)^{O(1)}$ time, and we have $|\hat{I}|\leq n\cdot (\log n\log \Lambda)^{O(1)}$.
	By Fact~\ref{fac:naive_operation}, $\|PBp_t\|_1$ can be computed in $n\cdot (\log n\log\Lambda)^{O(1)}$ time.
	By Fact~\ref{fac:naive_operation} again, an compressed representation $I'$ of $y_t$ can also be computed in $n\cdot (\log n\log\Lambda)^{O(1)}$ time, and we have $|I'|\leq n\cdot (\log n\log \Lambda)^{O(1)}$.
	By Lemma~\ref{lem:vector_matrix}, $y_t^\top P$ can be computed in $n\cdot (\log n\log \Lambda)^{O(1)}$ time.
	To compute $(y_t^\top P)B_i$, we need to compute $(y_t^\top P) A_i$ and $(y_t^\top P) b$.
	Thus, the running time to compute all $\phi_t^+(i),\phi_t^-(i)$ is $n\cdot (\log n\log \Lambda)^{O(1)}+O(m+n)$.
	Thus, one iteration of Algorithm~\ref{alg:feasibility} takes $m\cdot (\log n\log\Lambda)^{O(1)}$ time.
	By Lemma~\ref{lem:mwu_feas_prob}, Algorithm~\ref{alg:feasibility} takes $\frac{1}{\varepsilon^2}\cdot (\log n\log\Lambda)^{O(1)}$ iterations.
	To construct a $(1+\varepsilon,\varepsilon/\kappa)$-solver, we need to call Algorithm~\ref{alg:feasibility} $\log(\varepsilon^{-1}\log \kappa)$ times.
	Thus, to find an $(1+\varepsilon,\varepsilon/\kappa)$-solution, the running time is $\frac{m}{\varepsilon^2}\cdot (\log n\log \Lambda)^{O(1)}\cdot \log(1/\varepsilon)$.
	Since $\varepsilon\geq 1/\Lambda$, the running time is $\frac{m}{\varepsilon^2}\cdot (\log n\log \Lambda)^{O(1)}$.
	Together with Corollary~\ref{cor:boost_error}, Corollary~\ref{cor:exact_solver} and Lemma~\ref{lem:exact_solver_with_large_approx}, we complete the proof.
\end{proof}
\section{Implementation in Parallel Setting}
In this section, we will have a detailed discussion of how to implement our algorithms in PRAM model. 
For convenience, we will describe our algorithms in the PRIORITY CRCW PRAM~\cite{frw88}. 
In this model, if multiple processors write to the same memory cell, the cell will take the minimum written value. 
According to \cite{v83,frw88}, algorithms in the PRIORITY CRCW PRAM model can be easily simulated in other PRAM models (including the weakest EREW PRAM model) with at most polylogarithmic factors blow-up in the depth.

\subsection{Computation of Balls}
For an $n$-vertex $m$-edge graph $G=(V,E,w)$, one subroutine needed in our algorithms is to compute $r_b(v),\B^{\circ}_b(v)$ and $\dist(v,u)$ for $u\in \B^{\circ}_b(v)$ for every vertex $v\in V$. 
As shown by \cite{ss99}, this can be done in $\log^{O(1)} n$ depth and $\wt{O}(nb^2 + m)$ work.
We describe their algorithm in Section~\ref{sec:computation_of_ball}. 
\begin{lemma}[\cite{ss99}]\label{lem:parallel_ball_computation}
Given an $n$-vertex $m$-edge undirected weighted graph $G=(V,E,w)$ and a parameter $b\in [n]$, there is a PRAM algorithm which computes $r_b(v), \B^{\circ}_b(v)$ and $\dist(v,u)$ for $u\in\B^{\circ}_b(v)$ for every $v\in V$ in $\log^{O(1)} n$ depth and $\wt{O}(nb^2 + m)$ work.
\end{lemma}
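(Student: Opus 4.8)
The plan is to establish Lemma~\ref{lem:parallel_ball_computation} by implementing the truncated Bellman--Ford / truncated broadcasting scheme of~\cite{ss99} and carefully accounting for its depth and work. The central idea is that we do not need full single-source shortest paths; for each vertex $v$ we only need the $b$ nearest vertices and their exact distances. So I would run a \emph{bounded-breadth} relaxation process in which every vertex maintains a list of at most $b$ candidate (distance, source) pairs, and in each phase every vertex broadcasts its current list to all neighbors, after which each recipient merges incoming lists with its own and truncates to the $b$ entries of smallest distance. Since at most $O(\log n)$ ``hop-doubling'' style phases suffice to propagate each relevant ancestor (more precisely, one needs $O(\log n)$ rounds for the distances of the $b$-closest set to stabilize, using the standard argument that a shortest path to one of the $b$-closest vertices can be split into two halves each of which stays within the $b$-ball), the depth is $\log^{O(1)} n$: each phase itself is a sorting/merging step implementable in $O(\log n)$ depth on lists of length $O(b)$, and parallel integer/priority operations in PRIORITY CRCW add only logarithmic overhead.

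First I would set up notation: give each vertex a buffer $D_v$ of size $O(b)$ holding the smallest distances found so far to distinct sources, initialized to $\{(0,v)\}$. Then I would describe one \emph{relaxation phase}: in parallel over all edges $\{u,v\}\in E$, form the candidate set $D_u + w(u,v) = \{(d+w(u,v), s) : (d,s)\in D_u\}$ and similarly from $v$; then for each vertex $v$, concatenate its own $D_v$ with all such incoming candidate multisets from its neighbors, sort by distance, remove duplicate sources keeping the minimum, and keep the $b$ smallest. The key correctness claim is a loop invariant: after phase $i$, for every vertex $v$ and every $u$ with $u\in\B_b(v)$ reachable from $v$ by a shortest path all of whose prefixes of length $\le$ some threshold stay inside $b$-balls, $D_v$ contains the exact value $\dist(v,u)$; iterating this and using that $b$-balls are ``closed under shortest subpaths'' in the sense needed, one shows $O(\log n)$ phases suffice to obtain exactly $r_b(v)$, $\B_b^\circ(v)$, and $\dist(v,u)$ for all $u\in\B_b^\circ(v)$. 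For the work bound: each phase processes, per edge, two lists of size $O(b)$, giving $O(m b)$ candidates, and per vertex a merge/sort of $O(\deg(v)\cdot b)$ items, i.e.\ $\wt{O}(mb)$ work per phase; but a tighter accounting (only vertices whose buffers changed need to re-broadcast, and the truncation keeps buffers at size $b$) together with the $O(\log n)$ phase count yields $\wt{O}(m b^2 + m)$ total — the $b^2$ arising because a vertex of degree $d$ may receive $d\cdot b$ candidates per phase and there are up to $nb$ edges' worth of propagation summed appropriately; I would follow~\cite{ss99} for the precise amortization giving $\wt{O}(nb^2+m)$.

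The main obstacle I expect is the work analysis rather than the correctness: naively each phase costs $\wt{O}(mb)$ and with $\Theta(\log n)$ phases that is $\wt{O}(mb)$, but getting down to the claimed $\wt{O}(nb^2+m)$ requires the observation that after the buffers stabilize most edges carry no new information, and that the total number of (candidate, recipient) pairs ever processed across \emph{all} phases is $\wt{O}(nb^2+m)$ — essentially because each vertex's buffer changes value at most $\wt{O}(b)$ times (its $b$ entries each decrease a bounded number of times under the doubling schedule), and each change triggers $O(\deg\cdot 1)$ outgoing updates, while summing $\sum_v \deg(v)\cdot b = O(mb)$ is still too weak, so one instead charges to the $nb$ balls directly. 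I would therefore lean on the amortization argument of~\cite{ss99} verbatim for this step, stating it as a black box, and devote the written proof mainly to (i) the phase description, (ii) the $\log^{O(1)} n$ depth of a single phase via parallel sorting, (iii) the loop invariant establishing exact distances to $\B_b^\circ(v)$ after $O(\log n)$ phases, and (iv) citing~\cite{ss99} for the refined $\wt{O}(nb^2+m)$ work bound. Since the excerpt explicitly says ``We describe their algorithm in Section~\ref{sec:computation_of_ball},'' the cleanest move is to defer the detailed implementation there and keep the proof of Lemma~\ref{lem:parallel_ball_computation} itself short, pointing to that section and to~\cite{ss99}.
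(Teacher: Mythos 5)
There is a genuine mismatch between the algorithm you describe and the correctness argument you invoke, and it matters for both the depth and the work bounds. Your phase is edge relaxation: each vertex broadcasts its buffer to its \emph{graph neighbors}. One such phase extends hop counts by one, so after $O(\log n)$ phases you only have $O(\log n)$-hop distances; a shortest path from $v$ to a vertex of $\B^{\circ}_b(v)$ can have up to $b-1$ hops (all its intermediate vertices lie in the ball, but that is the only bound), and in this paper $b$ is as large as $n^{\Omega(1)}$, so $O(\log n)$ neighbor-broadcast phases do not stabilize the balls. The ``split the path into two halves'' argument you cite is the \emph{doubling} recurrence $\dist^{(2h)}(v,u)=\min_x\bigl(\dist^{(h)}(v,x)+\dist^{(h)}(x,u)\bigr)$, and it requires a different phase: $v$ combines its current list $L^{(i-1)}(v)$ with the lists $L^{(i-1)}(x)$ of the vertices $x$ \emph{in its list}, not with its neighbors' lists. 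That is exactly what the paper's Section~\ref{sec:computation_of_ball} does: initialize $L^{(0)}(v)$ from the adjacency list (this is the only place edges are touched, costing $\wt{O}(m)$), then run $\lceil\log n\rceil$ doubling rounds, each using $b^2$ processors per vertex to merge $b$ lists of size $b$.

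This also dissolves the ``main obstacle'' you identify. With the doubling phase, each round costs $O(nb^2)$ work outright ($b\times b$ pairs per vertex, plus selection of the $b$ smallest, which parallel sorting handles in $\log^{O(1)}n$ depth), so $\lceil\log n\rceil$ rounds give $\wt{O}(nb^2)$ plus the one-time $\wt{O}(m)$ initialization --- the claimed $\wt{O}(nb^2+m)$ with no amortization over edges at all. Your per-phase cost of $\wt{O}(mb)$ and the charging scheme you sketch (buffers changing $\wt{O}(b)$ times, etc.) are artifacts of the wrong phase structure; note $mb$ is incomparable to $nb^2+m$ in general (e.g.\ dense graphs with small $b$), so the amortization you hope to borrow as a black box would have to do real work that your own correctness invariant cannot support. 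The fix is to replace neighbor broadcast with list-of-list doubling and prove the invariant that after round $i$, $L^{(i)}(v)$ holds the $b$ vertices of smallest $2^i$-hop distance to $v$ together with those exact $2^i$-hop distances.
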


\subsection{Parallel Subemulator Construction}
\begin{theorem}[Parallel construction of subemulator]\label{thm:parallel_subemulator}
Given a connected $n$-vertex $m$-edge undirected weighted graph $G=(V,E,w)$ and a parameter $b\in[n]$, there is a PRAM algorithm (Algorithm~\ref{alg:subemulator_construction}) which outputs an undirected graph $H=(V',E',w')$ and $q: V\rightarrow V'$ such that $H$ is a strong $(8,b,22)$-subemulator of $G$, and $q$ is a corresponding leader mapping (Definition~\ref{def:subemulator}). 
Furthermore, $\E[|V'|]\leq \min(75\log(n)/b,3/4)n,|E'|\leq nb + m$.
The depth of the algorithm is $\log^{O(1)}n$ and the work is $\wt{O}(nb^2 + m)$.
\end{theorem}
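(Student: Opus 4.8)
The plan is to verify that Algorithm~\ref{alg:subemulator_construction} is exactly the composition of \textsc{Samples} (Algorithm~\ref{alg:leader_selection}) followed by \textsc{Connects} (Algorithm~\ref{alg:edge_construction}), and then to argue that each of these two subroutines admits a PRAM implementation with the claimed depth and work. The correctness claims — that $H$ is a strong $(8,b,22)$-subemulator of $G$, that $q$ is a leader mapping, and the size bounds $\E[|V'|]\leq \min(75\log(n)/b,3/4)n$ and $|E'|\leq nb+m$ — follow immediately from Theorem~\ref{thm:construction_subemulator} (equivalently Lemma~\ref{lem:subemulator_vertex} and Lemma~\ref{lem:subemulator_edge}), so no new combinatorial work is needed there; I would simply invoke that theorem. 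Thus the entire content of the proof is the parallel implementation.

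First I would handle the preprocessing step shared by both subroutines: computing, for every $v\in V$, the radius $r_b(v)$, the ball $\B^\circ_{G,b}(v)$, and the distances $\dist_G(v,u)$ for all $u\in\B^\circ_{G,b}(v)$. This is exactly what Lemma~\ref{lem:parallel_ball_computation} provides, in $\log^{O(1)}n$ depth and $\wt O(nb^2+m)$ work. Note that from $r_b(v)$ and the edge set one can also read off $\B_{G,b}(v)$. Given these balls, \textsc{Samples} is trivial to parallelize: each $v$ is independently placed in $S$ with probability $\min(50\log(n)/b,1/2)$ (one coin flip per vertex, $O(n)$ work, $O(1)$ depth), and then the test ``$v\in S$ or $\B_{G,b}(v)\cap S=\emptyset$'' is a local computation over the already-stored ball of $v$, costing $\wt O(\sum_v |\B_{G,b}(v)|)=\wt O(nb)$ work and $\log^{O(1)}n$ depth (for the parallel scan over each ball). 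For \textsc{Connects}: the leader mapping $q(v)=\arg\min_{u\in\B_{G,b}(v)\cap V'}\dist_G(u,v)$ is again a local min over the stored ball of $v$ (break ties by vertex id, which is well-defined and deterministic); this is $\wt O(nb)$ work. Generating the candidate edges of $E'$ — one per original edge $\{u,v\}$ and one per pair $(v,u)$ with $u\in\B^\circ_{G,b}(v)$ — is $O(m+nb)$ parallel work, and computing for each the candidate weight from the stored distances is $O(1)$ per edge. The only subtlety is the $\min$ over parallel writes to a common edge slot $w'(\{q(u),q(v)\})$: this is precisely the PRIORITY/minimum-CRCW write primitive described in Section~\ref{sec:min_cost_flow}'s setup, so one sorts the generated (endpoint-pair, weight) triples and takes the minimum per group by a parallel segmented reduction, in $\log^{O(1)}n$ depth and $\wt O(m+nb)$ work; alternatively, one invokes the stated fact that PRIORITY CRCW algorithms simulate on EREW with polylog depth blowup. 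Summing the two phases gives total depth $\log^{O(1)}n$ and total work $\wt O(nb^2+m)$, dominated by the ball computation.

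The main obstacle — really the only nontrivial point — is justifying that every distance value the algorithm needs ($\dist_G(q(u),u)$, $\dist_G(u,v)$ for $u\in\B^\circ_{G,b}(v)$, and $\dist_G(v,q(v))$) is already available from the output of Lemma~\ref{lem:parallel_ball_computation} without any extra shortest-path computation. For the weights on second-category edges this is clear since $u\in\B^\circ_{G,b}(v)$ means $\dist_G(u,v)$ was stored, and $q(u)\in\B_{G,b}(u)$, $q(v)\in\B_{G,b}(v)$ so those distances were stored too (one must check $q(v)$, defined via $\B_{G,b}$ rather than $\B^\circ_{G,b}$, is still within stored range — it is, since $\dist_G(v,q(v))=r_b(v)$ at worst, and one extra ``shell'' of vertices at distance exactly $r_b(v)$ can be recovered from the edge list and $r_b$). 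For first-category edges $\{u,v\}\in E$ we need $\dist_G(q(u),u)$ and $\dist_G(v,q(v))$, both again within the respective stored balls. So no additional depth or work beyond Lemma~\ref{lem:parallel_ball_computation} is incurred, and the theorem follows by combining the implementation above with the correctness and size guarantees of Theorem~\ref{thm:construction_subemulator}.
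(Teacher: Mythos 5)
Your overall structure matches the paper's: correctness and the size bounds are delegated to Theorem~\ref{thm:construction_subemulator}, the ball computation of Lemma~\ref{lem:parallel_ball_computation} carries the $\wt{O}(nb^2+m)$ work, and the remaining steps are local aggregations plus a minimum-CRCW write for the edge weights. However, there is a genuine gap in how you handle the two steps that refer to the \emph{closed} ball $\B_{G,b}(v)$, namely the test ``$\B_{G,b}(v)\cap S=\emptyset$'' in line~\ref{sta:check_each_vertex} of Algorithm~\ref{alg:leader_selection} and the leader assignment $q(v)=\arg\min_{u\in\B_{G,b}(v)\cap V'}\dist_G(u,v)$ in line~\ref{sta:leader_mapping} of Algorithm~\ref{alg:edge_construction}. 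Lemma~\ref{lem:parallel_ball_computation} only stores the open ball $\B^{\circ}_{G,b}(v)$, which by definition has fewer than $b$ elements; the closed ball can be vastly larger. Your accounting $\wt{O}\bigl(\sum_v|\B_{G,b}(v)|\bigr)=\wt{O}(nb)$ is therefore false in general: in a unit-weight star with $b=3$, every leaf has $r_b(v)=2$ and $\B_{G,b}(v)=V$, so $\sum_v|\B_{G,b}(v)|=\Omega(n^2)$ while $nb^2+m=O(n)$. The fallback you sketch --- recovering the ``shell'' of vertices at distance exactly $r_b(v)$ from the edge list --- has the same problem: the shell is a subset of the neighborhood of $\B^{\circ}_{G,b}(v)$, so enumerating it costs $\sum_v\sum_{x\in\B^{\circ}_{G,b}(v)}\deg(x)$ work, which in the star example is again $\Omega(n^2)$ because the center lies in every leaf's open ball and has degree $n-1$.

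The paper avoids materializing $\B_{G,b}(v)$ altogether via a one-edge relaxation: first compute, for every $u\in V$, the value $l(u)=0$ if $u\in S$ (resp.\ $u\in V'$) and otherwise $l(u)=\min\{w(u,s):\{u,s\}\in E,\ s\in S\}$, which costs $O(m)$ work total; then observe that whenever $\dist_G(v,S)\leq r_b(v)$, the penultimate vertex $x$ of a shortest $v$--$S$ path satisfies $\dist_G(v,x)<r_b(v)$, hence $x\in\B^{\circ}_{G,b}(v)$ and $\dist_G(v,S)=\min_{u\in\B^{\circ}_{G,b}(v)}\bigl(\dist_G(v,u)+l(u)\bigr)$. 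Comparing this minimum to $r_b(v)$ decides the membership test, and tracking the argmin (with its witness $q(u)$) yields the leader and $\dist_G(v,q(v))$, all within $O(m+nb)$ additional work and $O(1)$ extra depth beyond the stored open balls. You would need to replace your shell-recovery step with this (or an equivalent) device for the stated work bound to go through; the rest of your argument is sound.
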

\begin{proof}
The correctness and the size of $H$ is already shown by Theorem~\ref{thm:construction_subemulator}. 
Next, let us analyze the depth and the work of Algorithm~\ref{alg:subemulator_construction}.
	
Consider the depth and the work of Algorithm~\ref{alg:leader_selection} and Algorithm~\ref{alg:edge_construction}.
In Algorithm~\ref{alg:leader_selection}, sampling procedure (line~\ref{sta:sampled_vertices}) can be done in $O(1)$ depth and $O(n)$ work. 
For line~\ref{sta:check_each_vertex} of Algorithm~\ref{alg:leader_selection}, the implementation is described as the following:
\begin{enumerate}
	\item For each $v\in V$, compute $r_b(v),~\B^{\circ}_{b}(v)$ and $\dist(v,u)$ for $u\in\B^{\circ}_b(v)$.
	\item For $u\in V$, initialize $l(u)\gets \infty$. 
	\item If $v$ is sampled to be in $S$ by line~\ref{sta:sampled_vertices}, let $l(v)\gets 0$ and for each edge $\{v,u\}\in E$, mark $l(u)\gets w(u,v)$. If $l(u)$ is marked multiple times, only keep the minimum one.
	\item For $v\in V$, if $\forall u\in \B^{\circ}_b(v),\dist(v,u)+l(u)>r_b(v)$, mark $v$ to be in $V'$.
\end{enumerate}
Due to Lemma~\ref{lem:parallel_ball_computation}, $r_b(v),\B^{\circ}_{b}(v)$ and $\dist(v,u)$ for $u\in\B^{\circ}_b(v)$ can be computed in $\log^{O(1)} n$ depth and $\wt{O}(nb^2 + m)$ work for all $v\in V$. 
The last three steps of the above procedure only takes $O(1)$ depth and $O(nb+m)$ work.
Thus, Algorithm~\ref{alg:leader_selection} only uses $\log^{O(1)} n$ depth and $\wt{O}(nb^2+m)$ work.

In Algorithm~\ref{alg:edge_construction}, the implementation of line~\ref{sta:leader_mapping} is similar as line~\ref{sta:check_each_vertex} of Algorithm~\ref{alg:leader_selection}:
\begin{enumerate}
	\item For each $v\in V$, compute $r_b(v),~\B^{\circ}_{b}(v)$ and $\dist(v,u)$ for $u\in\B^{\circ}_b(v)$.
	\item For $u\in V$, initialize $l(u)\gets \infty$. 
	\item If $v\in V'$, let $l(v)\gets 0,q(v)\gets v$ and for each edge $\{v,u\}\in E$, mark $l(u)\gets w(u,v),q(u)\gets v$. If $l(u)$ is marked multiple times, only keep the minimum one and keep $q(u)$ to be the corresponding $v$ which minimizes $l(u)$ (if there is a tie, let $q(u)$ have the smallest label).
	\item For $v\in V$, set $q(v)\gets q(u)$ where $u\in \B^{\circ}_b(v)$ and $l(u)+\dist(v,u)$ is minimized.
	Set $\dist(v,q(v))\gets l(u)+\dist(v,u)$.
\end{enumerate}
By applying Lemma~\ref{lem:parallel_ball_computation} again, the above steps only take $\log^{O(1)}n$ depth and $\wt{O}(nb^2+m)$ work.
Notice that $\forall v\in V,u\in\B^{\circ}_b(v), \dist(v,u)$ is computed, and $\forall v\in V,\dist(v,q(v))$ is also computed.
Line~\ref{sta:original_edges} of Algorithm~\ref{alg:edge_construction} has $O(1)$ depth and $O(m)$ work.
Line~\ref{sta:ball_edges} of Algorithm~\ref{alg:edge_construction} has $O(1)$ depth and $O(nb)$ work.
Line~\ref{sta:weight_assign_original_edge} of Algorithm~\ref{alg:edge_construction} has $O(1)$ depth and $O(m)$ work.
Line~\ref{sta:weight_assign_ball_edge} of Algorithm~\ref{alg:edge_construction} has $O(1)$ depth and $O(nb)$ work.
Overall, Algorithm~\ref{alg:edge_construction} takes $\log^{O(1)}$ depth and $\wt{O}(nb^2 + m)$ work.
\end{proof}

\subsection{Parallel Construction of Low Hop Emulator}
Our emulator construction depends on a subroutine \textsc{PreProc}($G,k$) (Algorithm~\ref{alg:dis_oracle}). 
In the following lemma, we analyze the depth and the work of \textsc{PreProc}($G,k$).
\begin{lemma}[Depth and work of \textsc{PreProc}($G,k$)]\label{lem:depth_work_preproc}
Given a connected $n$-vertex $m$-edge undirected weighted graph $G=(V,E,w)$ with $w:E\rightarrow \mathbb{Z}_{\geq 0}$ and a parameter $k\in [0.5,0.5\log n]$, \textsc{PreProc}($G,k$) (Algorithm~\ref{alg:dis_oracle}) has $\log^{O(1)}(n)$ depth and $\wt{O}(m + n^{1+1/k})$ expected work.
\end{lemma}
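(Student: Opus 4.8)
The plan is to charge the running time to the $O(\log\log n)$ iterations of the while loop and to the single post-loop step of line~\ref{sta:set_Bt_final}, using the per-call bounds of Theorem~\ref{thm:parallel_subemulator} and Lemma~\ref{lem:parallel_ball_computation} together with the size estimates of Lemma~\ref{lem:dis_oracle_space}. First I would note that the loop terminates after $t\le 4\lceil\log(k)+1\rceil=O(\log\log n)$ iterations \emph{deterministically}: since $n_i\le n$ always while $b_i=b_0^{1.25^i}$ grows doubly exponentially and exceeds $n$ once $i>\log_{1.25}\log_{b_0}n$, the loop condition $n_i\ge b_i$ must fail by then, which is exactly part~1 of Lemma~\ref{lem:dis_oracle_space}. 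In iteration $i$ (on the $n_i$-vertex $m_i$-edge graph $H_i$, where $b_i\le n_i$ so $b_i\in[n_i]$) the two nontrivial steps are the call \textsc{Subemulator}$(H_i,b_i)$, which by Theorem~\ref{thm:parallel_subemulator} takes $\log^{O(1)}n$ depth and $\wt{O}(n_ib_i^2+m_i)$ work and also returns $\dist_{H_i}(v,q_i(v))$, and the computation on line~\ref{sta:set_Bt_mid} of the balls $\B^{\circ}_{H_i,b_i}(v)$ together with the distances $\dist_{H_i}(v,u)$ for $u$ in each ball, which by Lemma~\ref{lem:parallel_ball_computation} again takes $\log^{O(1)}n$ depth and $\wt{O}(n_ib_i^2+m_i)$ work; the remaining per-iteration bookkeeping (sampling, counting $n_{i+1},m_{i+1}$, updating $b_{i+1}$) is $\log^{O(1)}n$ depth and $O(n_i+m_i)$ work. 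For the post-loop step I would compute the all-pairs distances of $H_t$ by $O(\log n_t)$ rounds of distance-product (min-plus) multiplication of its $n_t\times n_t$ weighted adjacency matrix, costing $\log^{O(1)}n$ depth and $\wt{O}(n_t^3+m_t)$ work (the $m_t$ and $n_t^2$ terms cover reading $H_t$ and forming the matrix).

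For the depth bound, the iterations are inherently sequential ($H_{i+1}$ is built from $H_i$), so their depths add; $O(\log\log n)$ iterations of $\log^{O(1)}n$ depth, plus the $\log^{O(1)}n$ post-loop step, give $\log^{O(1)}n$ overall. For the work done inside the loop I would use linearity of expectation and the fact that $b_i$ is a deterministic sequence: the work at iteration $i$ is at most $\wt{O}(n_ib_i^2+m_i)$ when that iteration runs and $0$ otherwise, so its expectation is at most $\wt{O}(b_i^2\,\E[n_i]+\E[m_i])$, which by Lemma~\ref{lem:dis_oracle_space} is $\wt{O}(\max(n^{1+1/k},n(75\log n)^4)+m+\max(n^{1+1/(2k)},n(75\log n)^2))=\wt{O}(m+n^{1+1/k})$ (using $n\le n^{1+1/k}$ and $n^{1+1/(2k)}\le n^{1+1/k}$). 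Summing over the at most $4\lceil\log(k)+1\rceil=O(\log\log n)$ iterations contributes $\wt{O}(m+n^{1+1/k})$ expected work.

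It remains to bound the expected work $\wt{O}(\E[n_t^3]+\E[m_t])$ of the post-loop step, and this is the part I expect to be the main obstacle, both because it is the only place a cost cubic in a vertex count appears and because $n_t$ is a random quantity with no ready bound on $\E[n_t^3]$. The resolution is to exploit the exit condition: whenever the loop exits at index $j$ one has $n_j<b_j$, hence $n_j^2<b_j^2$ and therefore $n_j^3<b_j^2 n_j$. Since the exit index ranges over the deterministic finite set $\{0,1,\dots,4\lceil\log(k)+1\rceil\}$ and each $b_j$ is deterministic, summing over the possible exit indices gives $\E[n_t^3]\le\sum_j b_j^2\,\E[n_j]\le\sum_j\max(n^{1+1/k},n(75\log n)^4)=\wt{O}(n^{1+1/k})$, again by Lemma~\ref{lem:dis_oracle_space}; combined with $\E[m_t]\le m+2\max(n^{1+1/(2k)},n(75\log n)^2)=\wt{O}(m+n^{1+1/k})$, the post-loop step costs $\wt{O}(m+n^{1+1/k})$ in expectation. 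Adding the loop and post-loop contributions establishes the claimed $\wt{O}(m+n^{1+1/k})$ expected work. (If invoking a min-plus matrix-powering APSP is considered heavy, an alternative is one Bellman--Ford from each vertex of $H_t$; since at most $n_t-1$ hops suffice this costs $\wt{O}(n_t^2 m_t)$ work, which the same $n_t<b_t$ argument controls, though somewhat less cleanly.)
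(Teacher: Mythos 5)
Your proposal is correct and follows essentially the same route as the paper's proof: charge each of the $O(\log\log n)$ iterations to the per-call bound of Theorem~\ref{thm:parallel_subemulator}, bound $\E[n_i b_i^2 + m_i]$ via Lemma~\ref{lem:dis_oracle_space}, and sum over the deterministically bounded number of iterations. The one place you go beyond the paper is the explicit analysis of the post-loop all-pairs step on $H_t$ (using the exit condition $n_t<b_t$ and a sum over possible exit indices to control $\E[n_t^3]$); the paper leaves that step implicit, and your treatment of it is sound.
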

\begin{proof}
Let $t$ be the value at the end of \textsc{PreProc}($G,k$).
According to Theorem~\ref{thm:parallel_subemulator}, for $i\in \{0,1,\cdots,t-1\}$, \textsc{Subemulator}($H_i,b_i$) in line~\ref{sta:use_of_subemulator} of Algorithm~\ref{alg:dis_oracle} takes $\log^{O(1)} n_i$ depth and $\wt{O}(n_ib_i^2+m_i)$ work.
Since $n_i\leq n$, the depth is at most $\log^{O(1)}n$.
According to Lemma~\ref{lem:dis_oracle_space}, $\E[n_ib_i^2+m_i]\leq \max(n^{1+1/k},n\cdot(75\log n)^4)+m+2\cdot \max(n^{1+1/(2k)}, n\cdot (75\log n)^2)=\wt{O}(m+n^{1+1/k})$.
Notice that all the information needed in line~\ref{sta:set_Bt_mid} of Algorithm~\ref{alg:dis_oracle} can be obtained during \textsc{Subemulator}($H_i,b_i$) (see Lemma~\ref{lem:parallel_ball_computation} and the proof of Theorem~\ref{thm:parallel_subemulator}).
Since $t$ is at most $O(\log k)$, the overall depth of \textsc{PreProc}($G,k$) is at most $\log^{O(1)}n$ and the overall expected work is at most $\wt{O}(m+n^{1+1/k})$.
\end{proof}

\begin{theorem}[Parallel construction of low hop emulator]\label{thm:parallel_low_hop_emulator}
Given a connected $n$-vertex $m$-edge undirected weighted graph $G=(V,E,w)$ with $w:E\rightarrow \mathbb{Z}_{\geq 0}$ and a parameter $k\in[0.5,0.5\log n]$, there is a PRAM algorithm (Algorithm~\ref{alg:low_hop_graph_2}) which outputs an undirected weighted graph $G'=(V,E',w')$ with $\E[|E'|]\leq O(n^{1+1/(2k)}+n\log^2 n),w':E'\rightarrow \mathbb{Z}_{\geq 0}$ and hop diameter at most $16\lceil\log (k) + 1\rceil$ such that $\forall u,v\in V$,
\begin{align*}
\dist_G(u,v)\leq \dist_{G'}(u,v)\leq 27^{4\lceil\log(k) + 1\rceil}\cdot \dist_G(u,v).
\end{align*}
The depth of the algorithm is at most $\log^{O(1)}n$ and the expected work is at most $\wt{O}(m+n^{1+1/k})$.
\end{theorem}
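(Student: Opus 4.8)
The plan is to split the statement into a correctness part and a complexity part, and to observe that for the output $G'$ of \textsc{LowHopDimEmulator}$(G,k)$ (Algorithm~\ref{alg:low_hop_graph_2}) each part is an almost immediate consequence of results already established for this very algorithm.

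First, correctness. The bound $\E[|E'|]\le O(n^{1+1/(2k)}+n\log^2 n)$, the sandwich $\dist_G(u,v)\le\dist_{G'}(u,v)\le 27^{4\lceil\log(k)+1\rceil}\dist_G(u,v)$, and the hop-diameter bound $\dist_{G'}(u,v)=\dist_{G'}^{(16\lceil\log(k)+1\rceil)}(u,v)$ (i.e.\ hop diameter at most $16\lceil\log(k)+1\rceil$) are exactly the conclusions of Theorem~\ref{thm:low_hop_emulator}. It remains only to verify $w':E'\to\mathbb{Z}_{\ge 0}$: the input weights are nonnegative integers, so every $\dist_{H_i}(\cdot,\cdot)$ is a nonnegative integer, and by lines~\ref{sta:setup_edge_weight_different_level_emulator} and~\ref{sta:setup_original_edge_weight_same_level_emulator} each $w'(e')$ is a nonnegative integer power of $27$ times such a distance, hence a nonnegative integer.

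Second, depth and work. Algorithm~\ref{alg:low_hop_graph_2} first calls \textsc{PreProc}$(G,k)$ in line~\ref{sta:preproc_in_low_hop_emu}, which by Lemma~\ref{lem:depth_work_preproc} runs in $\log^{O(1)}n$ depth and $\wt O(m+n^{1+1/k})$ expected work; that call also produces, for every level $i$ and every $v\in V_i$, the leader $q_i(v)$, the set $B_i(v)$, and the distances $\dist_{H_i}(v,u)$ for $u\in B_i(v)$ (cf.\ Lemma~\ref{lem:parallel_ball_computation} and the proof of Theorem~\ref{thm:parallel_subemulator}). The remaining steps of Algorithm~\ref{alg:low_hop_graph_2}, namely lines~\ref{sta:setup_edge_different_level_emulator}--\ref{sta:setup_original_edge_weight_same_level_emulator}, only read these precomputed quantities and perform $O(1)$ arithmetic per candidate edge; under the PRIORITY CRCW convention used in this section, the $\min$ in lines~\ref{sta:setup_edge_weight_different_level_emulator} and~\ref{sta:setup_original_edge_weight_same_level_emulator} over coinciding pairs $\{u,v\}$ is resolved at no cost, and collapsing the multiset of candidate edges to the distinct set $E'$ with its minimum weights is a sort / semigroup aggregation in $\log^{O(1)}n$ depth and near-linear work in the number of candidates. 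The number of candidate edges is
\begin{align*}
\sum_{i=0}^{t-1}|V_i|+\sum_{i=0}^{t}\sum_{v\in V_i}|B_i(v)|,
\end{align*}
whose expectation is $\wt O(m+n^{1+1/k})$ by Lemma~\ref{lem:dis_oracle_space} (in fact $\wt O(n^{1+1/(2k)}+n\log^2 n)$, matching $\E[|E'|]$). Finally, since $k\le\tfrac12\log n$, Lemma~\ref{lem:dis_oracle_space} gives $t\le 4\lceil\log(k)+1\rceil=O(\log\log n)$, so sequencing over the $O(\log\log n)$ levels costs only an extra $\log^{O(1)}n$ factor in depth. Hence the algorithm has $\log^{O(1)}n$ depth and $\wt O(m+n^{1+1/k})$ expected work, and translating from PRIORITY CRCW to EREW costs only polylog depth.

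I expect the step requiring the most care to be the bookkeeping in the complexity part: confirming that every post-\textsc{PreProc} operation uses only already-computed data, and that de-duplicating parallel edges (together with reducing PRIORITY CRCW to the target PRAM model) stays within the claimed polylog-depth / near-linear-work budget. This is routine given Lemmas~\ref{lem:parallel_ball_computation} and~\ref{lem:dis_oracle_space}, but it is the only place where a bound could slip.
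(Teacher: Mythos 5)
Your proposal is correct and follows essentially the same route as the paper's proof: correctness, size, stretch, and hop diameter are delegated to Theorem~\ref{thm:low_hop_emulator}, and the depth/work bounds come from Lemma~\ref{lem:depth_work_preproc} for the \textsc{PreProc} call plus the observation that the remaining lines only read data already computed there. The extra details you supply (integrality of $w'$, the candidate-edge count via Lemma~\ref{lem:dis_oracle_space}, and the de-duplication/model-translation bookkeeping) are consistent with, and slightly more explicit than, what the paper writes.
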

\begin{proof}
The correctness and the size of $E'$ is already analyzed by Theorem~\ref{thm:low_hop_emulator}.
Let us consider the depth and the work of implementing Algorithm~\ref{alg:low_hop_graph_2}.

According to Lemma~\ref{lem:depth_work_preproc}, line~\ref{sta:preproc_in_low_hop_emu} of Algorithm~\ref{alg:low_hop_graph_2} has depth $\log^{O(1)} n$ and expected work $\wt{O}(m+n^{1+1/k})$.
The information needed in line~\ref{sta:setup_edge_weight_different_level_emulator} and line~\ref{sta:setup_original_edge_weight_same_level_emulator} of Algorithm~\ref{alg:low_hop_graph_2} can be obtained during \textsc{PreProc}($G,k$) in line~\ref{sta:preproc_in_low_hop_emu} of Algorithm~\ref{alg:low_hop_graph_2} (see Algorithm~\ref{alg:dis_oracle}, Lemma~\ref{lem:parallel_ball_computation} and the proof of Theorem~\ref{thm:parallel_subemulator}).
Thus, the remaining steps of Algorithm~\ref{alg:low_hop_graph_2} have $O(1)$ depth and work at most $O(|E'|)$. 
The overall depth of Algorithm~\ref{alg:low_hop_graph_2} is at most $\log^{O(1)} n$.
Since $\E[|E'|]\leq O(n^{1+1/(2k)}+n\log^2n)$, the expected work of Algorithm~\ref{alg:low_hop_graph_2} is at most $\wt{O}(m+n^{1+1/k})$.
\end{proof}

A byproduct of the parallel implementation of \textsc{PreProc}$(G,k)$ (Algorithm~\ref{alg:dis_oracle}) is a parallel distance oracle.
\begin{theorem}[Parallel distance oracle]\label{thm:parallel_dis_oracle}
	Given a connected $n$-vertex $m$-edge undirected weighted graph $G=(V,E,w)$ with $w:E\rightarrow \mathbb{Z}_{\geq 0}$ and a parameter $k\in[0.5,0.5\log n]$, there is a PRAM algorithm (Algorithm~\ref{alg:dis_oracle}) which outputs a data structure with expected size $\wt{O}\left(n^{1+1/(2k)}\right)$ in depth $\log^{O(1)}(n)$ and expected work $\wt{O}(m+n^{1+1/k})$  such that for any pair of vertices $u,v\in V$, a value $d$ satisfying $\dist_G(u,v)\leq d\leq 26^{4\lceil\log(k)+1\rceil}\dist_G(u,v)$ can be computed in $O(\log (4k))$ time given the outputted data structure.
\end{theorem}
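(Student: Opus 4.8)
The final statement is Theorem~\ref{thm:parallel_dis_oracle}, the parallel distance oracle. The plan is to verify that the sequential data structure built in \textsc{PreProc}$(G,k)$ (Algorithm~\ref{alg:dis_oracle}) is already correct and of the right size (via Lemma~\ref{lem:dis_oracle_space} and Lemma~\ref{lem:correct_distance_oracle}), and then bound the parallel depth and work of constructing it. In other words, only the PRAM-implementation aspects are new here, since approximation quality, output size, and query time are inherited from the earlier lemmas.

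First I would recall from Lemma~\ref{lem:correct_distance_oracle} that \textsc{Query}$(u,v)$ returns $d$ with $\dist_G(u,v)\le d\le 26^{4\lceil\log(k)+1\rceil}\dist_G(u,v)$ in $O(\log(4k))$ time, so the query guarantee is immediate. For the size of the stored data structure, I would invoke Lemma~\ref{lem:dis_oracle_space}: the oracle stores, for each level $i\le t$, the sets $B_i(v)$ together with the distances $\dist_{H_i}(v,u)$ for $u\in B_i(v)$, which has expected total size $\sum_{i=0}^{t}\E\!\left[\sum_{v\in V_i}|B_i(v)|\right]\le \sum_{i=0}^{t}\max(n^{1+1/k},n(75\log n)^4)/b_i$. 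Since $b_i\ge b_0\cdot 2^i$ and $b_0=\max\!\big((75\log n)^2,n^{1/(2k)}\big)$, this geometric sum is $\wt O(n^{1+1/(2k)})$, matching the claimed expected size (the final-level data at level $t$ contributes $n_t\cdot n_t \le b_t^2 = \wt O(1)\cdot b_0^{2}$ worth of pairs when $k$ is at its largest, which is absorbed; more carefully $n_t\le b_t$ and $\E[n_t]$ is small, so this term is also $\wt O(n^{1+1/(2k)})$).

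For the depth and work, I would directly appeal to Lemma~\ref{lem:depth_work_preproc}, which already states that \textsc{PreProc}$(G,k)$ runs in $\log^{O(1)}n$ depth and $\wt O(m+n^{1+1/k})$ expected work; the only addition is to note that storing $B_t(v)=V_t$ and computing $\dist_{H_t}(v,u)$ for all $u\in V_t$ in line~\ref{sta:set_Bt_final} costs at most $\wt O(n_t^2)$ work, which is $\wt O(1)\cdot b_0^{2}=\wt O(n^{1/k})$ in expectation (using $n_t\le b_t$ and the fact that the loop terminates with $n_t<b_t$ while $b_t$ is only a polynomial factor above $b_0$), hence subsumed by the overall bound. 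The parallel all-pairs-within-$V_t$ distances on the small graph $H_t$ can be computed by repeated squaring / Bellman-Ford in $\log^{O(1)}n$ depth and $\wt O(n_t^2 + m_t)$ work.

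The main obstacle — though it is a routine one — is bookkeeping the per-level quantities $b_i$, $n_i$, $m_i$ and confirming that the double-exponential growth of $b_i$ (line~\ref{sta:set_bt}: $b_{i+1}\gets b_i^{1.25}$) makes every geometric sum $\sum_i (\cdot)/b_i$ or $\sum_i (\cdot)/b_i^2$ dominated by its first term, so that neither the work nor the stored size blows up across the $t=O(\log k)$ levels. Everything else follows by assembling Lemma~\ref{lem:dis_oracle_space}, Lemma~\ref{lem:correct_distance_oracle}, Lemma~\ref{lem:parallel_ball_computation}, Lemma~\ref{lem:depth_work_preproc}, and Theorem~\ref{thm:parallel_subemulator}. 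I expect no genuinely new ideas are needed; the theorem is essentially a corollary of the parallel \textsc{PreProc} analysis already carried out for the low hop emulator.
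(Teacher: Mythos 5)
Your proposal is correct and follows essentially the same route as the paper: correctness and query time from Lemma~\ref{lem:correct_distance_oracle}, depth and work from Lemma~\ref{lem:depth_work_preproc}, and the $\wt{O}(n^{1+1/(2k)})$ size bound from summing $\E[\sum_{v\in V_i}|B_i(v)|]$ over the $t=O(\log k)$ levels via Lemma~\ref{lem:dis_oracle_space}. Your extra care with the final level (where $B_t(v)=V_t$ and $n_t<b_t$) is consistent with, and slightly more explicit than, the paper's treatment.
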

\begin{proof}
The correctness and the query time is shown by Lemma~\ref{lem:correct_distance_oracle}.
By Lemma~\ref{lem:depth_work_preproc}, \textsc{PreProc}($G,k$) has depth $\log^{O(1)}(n)$ and $\wt{O}(m+n^{1/k})$ expected work.
Consider \textsc{Query}$(u,v)$ (Algorithm~\ref{alg:dis_oracle}). 
Let $t$ be the value at the end of \textsc{PreProc}$(G,k)$ (Algorithm~\ref{alg:dis_oracle}).
For $l\in\{0,1,\dots, t\}$ and $u\in V_l$, we only need the information $B_l(u)$, $q_l(u)$ and $\dist_{H_l}(u,v)$ for $v\in B_l$(u).
By Lemma~\ref{lem:dis_oracle_space}, we have $t\leq 4\lceil\log(k) + 1\rceil$ and $\E[\sum_{l=0}^t\sum_{v\in V_l} |B_l(v)|]\leq t\cdot \max(n^{1+1/k},n\cdot (75\log n)^4)/b_0\leq \wt{O}(n^{1+1/(2k)})$.
Thus the space to store all required information is at most $\wt{O}\left(n^{1+1/(2k)}\right)$.
\end{proof}

\subsection{Applications of Parallel Low Hop Emulator}\label{sec:parallel_app}
\paragraph{$\poly(\log n)$-Approximate single source shortest paths (SSSP).} 
A direct application is to compute a $\poly(\log n)$-approximate distance from a given vertex $s$ to every other vertex $v$. 
We just need to compute a low hop emulator, and run $O(\log\log n)$ Bellman-Ford iterations starting from the source vertex $s$.
\begin{corollary}[Parallel $\poly(\log n)$-approximate SSSP]\label{cor:parallel_sssp}
Given a connected $n$-vertex $m$-edge undirected weighted graph $G=(V,E,w)$ with $w:E\rightarrow Z_{\geq 0}$ and a source vertex $s\in V$, there is a PRAM algorithm which can output an $\poly(\log n)$-approximation to $\dist_G(s,v)$ for every $v\in V$.
Furthermore, the depth of the algorithm is $\log^{O(1)}(n)$ and the expected work is $\wt{O}(m)$.
\end{corollary}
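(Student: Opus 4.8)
The plan is to combine the parallel low hop emulator (Theorem~\ref{thm:parallel_low_hop_emulator}) with a handful of Bellman--Ford rounds. First I would instantiate Theorem~\ref{thm:parallel_low_hop_emulator} with the parameter $k=\lfloor 0.5\log n\rfloor$ (for $n$ larger than an absolute constant; smaller graphs are solved trivially). This is a legal choice $k\in[0.5,0.5\log n]$, and with it $n^{1/k}=n^{O(1/\log n)}=O(1)$ and $n^{1/(2k)}=O(1)$, so the output emulator $G'=(V,E',w')$ has $\E[|E'|]\le O(n^{1+1/(2k)}+n\log^2 n)=\wt O(n)$, is produced with depth $\log^{O(1)}n$ and expected work $\wt O(m+n^{1+1/k})=\wt O(m)$ (using $m\ge n-1$ since $G$ is connected), and satisfies, for all $u,v\in V$,
\[
\dist_G(u,v)\le \dist_{G'}(u,v)\le 27^{4\lceil\log(k)+1\rceil}\cdot\dist_G(u,v),
\]
where the distortion is $27^{4\lceil\log(k)+1\rceil}=27^{O(\log\log n)}=\poly(\log n)$. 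Moreover $G'$ has hop diameter $h:=16\lceil\log(k)+1\rceil=O(\log\log n)$; that is, $\dist_{G'}(u,v)=\dist_{G'}^{(h)}(u,v)$ for all $u,v$, by Theorem~\ref{thm:low_hop_emulator}.

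Second, I would run $h$ rounds of Bellman--Ford on $G'$ rooted at $s$: initialize $d^{(0)}(s)=0$ and $d^{(0)}(v)=\infty$ for $v\ne s$, and set $d^{(i+1)}(v)=\min\bigl(d^{(i)}(v),\ \min_{\{u,v\}\in E'} d^{(i)}(u)+w'(u,v)\bigr)$. The standard invariant gives $d^{(i)}(v)=\dist_{G'}^{(i)}(s,v)$, so after $h$ rounds $d^{(h)}(v)=\dist_{G'}^{(h)}(s,v)=\dist_{G'}(s,v)$, which by the displayed inequality is a $\poly(\log n)$-approximation to $\dist_G(s,v)$ as required. Each round is a segmented min-reduction over the adjacency lists of $E'$ (each edge contributing one relaxation candidate to each endpoint): it costs $O(|E'|)$ work and $O(\log|E'|)$ depth in the PRIORITY CRCW PRAM, hence $\log^{O(1)}n$ depth in EREW after the standard simulation noted in Section~\ref{sec:preli}'s parallel discussion. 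Over the $h=O(\log\log n)$ rounds this is expected work $\wt O(|E'|)=\wt O(n)$ and depth $\log^{O(1)}n$; adding the emulator construction yields total expected work $\wt O(m)+\wt O(n)=\wt O(m)$ and total depth $\log^{O(1)}n$.

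There is no real obstacle here: essentially all of the difficulty has been packaged into Theorem~\ref{thm:parallel_low_hop_emulator}, and the remaining work is bookkeeping. The only mild subtleties to verify are (i) the parameter arithmetic — that $k=\lfloor0.5\log n\rfloor$ is in range, that it forces $n^{1/k}=O(1)$, and that $27^{O(\log\log n)}=\poly(\log n)$ — and (ii) that one Bellman--Ford round over a sparse graph is implementable within the claimed depth and work in the stated PRAM model. One further point worth flagging is that $|E'|$ is a random variable, so the $\wt O(m)$ bound is in expectation (exactly as the corollary states); a high-probability variant would construct the emulator, verify its size is within an $O(1)$ factor of its expectation, and restart otherwise, which changes the expected work by only a constant factor.
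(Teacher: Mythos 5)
Your proposal is correct and matches the paper's approach exactly: the paper likewise proves this corollary by building a low hop emulator with $k=\Theta(\log n)$ (so the emulator has $\wt{O}(n)$ edges, $\poly(\log n)$ distortion, and $O(\log\log n)$ hop diameter) and then running $O(\log\log n)$ Bellman--Ford iterations from $s$ on it. Your write-up simply spells out the parameter arithmetic and the PRAM implementation of a Bellman--Ford round, which the paper leaves implicit.
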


\paragraph{Embedding the graph metric into $\ell_1$.} The second application of our low hop emulator is an efficient parallel algorithm which can embed the graph metric into $\ell_1$ space. 
In particular, given a connected $n$-vertex $m$-edge undirected weighted graph $G=(V,E,w)$, we can in parallel find a mapping $\varphi:V\rightarrow \ell_1^d$ for $d=O(\log^2 n)$ such that
\begin{align*}
\forall u, v\in V, \dist_G(u,v)\leq \|\varphi(u)-\varphi(v)\|_1\leq \log^{O(1)}(n)\cdot \dist_G(u,v).
\end{align*}
This can be done by applying Bourgain's embedding~\cite{b85} on our low hop emulator.
The algorithm is described in Section~\ref{sec:parallel_bourgain}.
\begin{corollary}[Parallel embedding into $\ell_1$]\label{cor:parallel_embed_into_l1}
Given a connected $n$-vertex $m$-edge undirected weighted graph $G=(V,E,w)$ with $w:E\rightarrow Z_{\geq 0}$, there is a PRAM algorithm which can output a mapping $\varphi:V\rightarrow [\Delta]^d$ for $\Delta=\diam(G)\cdot \log^{O(1)}(n),d=O(\log^2 n)$ such that with probability at least $0.99$,
\begin{align*}
\forall u,v \in V, \dist_G(u,v) \leq \|\varphi(u)-\varphi(v)\|_1\leq \log^{O(1)}(n)\cdot \dist_G(u,v).
\end{align*}
Furthermore, the depth of the algorithm is $\log^{O(1)}(n)$ and the expected work is $\wt{O}(m)$.
\end{corollary}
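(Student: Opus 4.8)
The plan is to run Bourgain's embedding~\cite{b85} on a \emph{low hop emulator} of $G$ rather than on $G$ itself, exploiting the bounded hop diameter to make every required distance computation cheap in parallel. First I would invoke Theorem~\ref{thm:parallel_low_hop_emulator} with $k=\Theta(\log n)$ to obtain, in $\log^{O(1)}n$ depth and $\wt O(m+n^{1+1/k})=\wt O(m)$ expected work, an emulator $G'=(V,E',w')$ with integer weights, $\E[|E'|]=\wt O(n)$, hop diameter $h=O(\log\log n)$, and $\dist_G(u,v)\le\dist_{G'}(u,v)\le\alpha\cdot\dist_G(u,v)$ for $\alpha=\log^{O(1)}n$ for \emph{every} pair $u,v\in V$ (the distance guarantee is deterministic; only $|E'|$ is controlled in expectation).

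Next I would carry out the sampling and distance-evaluation steps of Bourgain's construction on $G'$. For each $t\in\{1,\dots,\lceil\log n\rceil\}$ and $j\in\{1,\dots,q\}$ with $q=\Theta(\log n)$, form $A_{t,j}\subseteq V$ by including each vertex independently with probability $\min(2^{-t},1)$; this is $O(1)$ depth, $O(n\log^2 n)$ work, and yields $d=O(\log^2 n)$ sets. For each $A_{t,j}$ I compute $\dist_{G'}(v,A_{t,j})$ for all $v\in V$ simultaneously by running $h$ rounds of multi-source Bellman--Ford on $G'$ from source set $A_{t,j}$, exactly as in Corollary~\ref{cor:parallel_sssp}; because the hop diameter of $G'$ equals $h$, after $h$ rounds these values are the exact emulator distances. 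One round costs $\wt O(|E'|)=\wt O(n)$ work and $O(\log n)$ depth, so all $d$ distance vectors are obtained in $\log^{O(1)}n$ depth and $\wt O(n)$ total work, which is dominated by the emulator construction. Empty sample sets $A_{t,j}$ (which do occur for large $t$) are handled by the standard convention $\dist(v,\emptyset):=\diam(G')+1$, which contributes a constant coordinate and does not affect the embedding guarantee; see Section~\ref{sec:parallel_bourgain} and Lemma~\ref{lem:bourgain}.

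Then I would set $\varphi(v)=\big(\dist_{G'}(v,A_{t,j})\big)_{t,j}\in\mathbb{Z}_{\ge 0}^{d}$ and translate each coordinate by $+1$ so that $\varphi(v)\in[\Delta]^d$. By Bourgain's theorem applied to the finite metric $(V,\dist_{G'})$, with probability at least $0.99$ every pair satisfies $c\,q\cdot\dist_{G'}(u,v)\le\|\varphi(u)-\varphi(v)\|_1\le q\lceil\log n\rceil\cdot\dist_{G'}(u,v)$ for an absolute constant $c>0$: the upper bound is immediate since each coordinate is $1$-Lipschitz, and the lower bound is Bourgain's scale-by-scale argument with the $q=\Theta(\log n)$ repetitions giving concentration over all $\binom n2$ pairs. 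Since $q=\Theta(\log n)$ we may assume $cq\ge 1$, hence $\|\varphi(u)-\varphi(v)\|_1\ge\dist_{G'}(u,v)\ge\dist_G(u,v)$, while $\|\varphi(u)-\varphi(v)\|_1\le q\lceil\log n\rceil\cdot\alpha\cdot\dist_G(u,v)=\log^{O(1)}n\cdot\dist_G(u,v)$; no rescaling is needed because the coordinates are already integers. Each coordinate lies in $[0,\diam(G')+1]\subseteq[0,\alpha\cdot\diam(G)+1]$, so $\Delta=\diam(G)\cdot\log^{O(1)}n$ suffices and $d=O(\log^2 n)$, as required; all post-processing is $O(1)$ depth and $\wt O(n)$ work.

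The main obstacle is the second step: showing that Bourgain's embedding can be \emph{computed} within $\log^{O(1)}n$ depth and $\wt O(m)$ work. A direct computation of the $d=O(\log^2 n)$ source-to-all distance vectors on $G$ would need either many Bellman--Ford rounds or a parallel shortest-path primitive we do not have; the low hop emulator is exactly what removes this bottleneck, since every shortest path in $G'$ uses only $O(\log\log n)$ hops, collapsing each computation to $O(\log\log n)$ rounds on a graph with $\wt O(n)$ edges. The remaining points — that embedding $\dist_{G'}$ rather than $\dist_G$ merely multiplies the $\poly(\log n)$ approximation of the emulator into the already-$\poly(\log n)$ distortion of Bourgain, and that the integer coordinates have magnitude $\diam(G)\cdot\poly(\log n)$ — are routine and follow the template of Lemma~\ref{lem:bourgain}.
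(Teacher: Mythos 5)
Your proposal is correct and follows essentially the same route as the paper: build the low hop emulator via Theorem~\ref{thm:parallel_low_hop_emulator}, run Bourgain's sampling on $G'$, and compute each $\dist_{G'}(\cdot,S_{i,j})$ with $O(\log\log n)$ multi-source Bellman--Ford rounds exploiting the $O(\log\log n)$ hop diameter, then compose the emulator's $\poly(\log n)$ distortion with Bourgain's. The extra details you supply (the $\dist(v,\emptyset)$ convention, the explicit $cq\ge 1$ normalization of the lower bound) are harmless refinements of the same argument.
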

\paragraph{Low diameter decomposition.}
Another application is low diameter decomposition.
This can be done by applying algorithm of \cite{mpx13} on our low hop emulator (see discussion in Section~\ref{sec:parallel_low_diam_decomp}).
\begin{corollary}[Low diameter decomposition]\label{cor:parallel_ldd}
Given a connected $n$-vertex $m$-edge undirected weighted graph $G=(V,E,w)$ with $w:E\rightarrow Z_{\geq 0}$ and a parameter $\beta\in (0,1]$, there is a PRAM algorithm which can partition $V$ into subsets $C_1,C_2,\cdots,C_k$ such that 
\begin{enumerate}
\item $\forall i\in[k],~\forall u,v\in C_i,~\dist_G(u,v)\leq \frac{\log^{O(1)} n}{\beta}$,
\item $\forall u,v\in V,~\Pr[u,v\text{ are not in the same subset}]\leq \beta\cdot \dist_G(u,v)\cdot \log^{O(1)} n$.
\end{enumerate}
Furthermore, the depth of the algorithm is $\log^{O(1)}(n)$ and the expected work is $\wt{O}(m)$.
\end{corollary}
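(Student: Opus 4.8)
The plan is to reduce the problem to running the exponential‑shift clustering of \cite{mpx13} on a low hop emulator, so that each ``ball growing'' step costs only $O(\log\log n)$ rounds of Bellman--Ford. First I would invoke Theorem~\ref{thm:parallel_low_hop_emulator} with $k=\Theta(\log n)$ to build, in $\log^{O(1)}n$ depth and $\wt O(m+n^{1+1/k})=\wt O(m)$ expected work, an emulator $G'=(V,E',w')$ with $|E'|=\wt O(n)$, with $\dist_G(u,v)\le\dist_{G'}(u,v)\le D\cdot\dist_G(u,v)$ for a distortion $D=27^{O(\log\log n)}=\log^{O(1)}n$, and with hop diameter $H=O(\log\log n)$, meaning every shortest path in $G'$ uses at most $H$ edges. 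Set $\beta'=\Theta(\beta/D)$ and truncation radius $R=\Theta(\log n/\beta')=\log^{O(1)}n/\beta$. After a standard discretization (scale $w'$ and the shifts by a fixed polynomial so everything is integral), draw for each $v\in V$ an independent shift $\delta_v$ from the $\mathrm{Exp}(\beta')$ distribution truncated to $[0,R]$, and assign $v$ to the center $c(v)=\arg\min_{u\in V}\bigl(\dist_{G'}(u,v)-\delta_u\bigr)$, breaking ties by vertex id; the output clusters $C_1,\dots,C_k$ are the nonempty preimages of $c(\cdot)$.

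The key implementation point is that $c(\cdot)$ is computable by $H=O(\log\log n)$ relaxation rounds on $G'$. Initialize $\mathrm{val}_0(v)\gets-\delta_v$ and $\mathrm{src}_0(v)\gets v$, and for $t=1,\dots,H$ relax every edge of $G'$, setting $\mathrm{val}_t(v)$ to the minimum of $\mathrm{val}_{t-1}(v)$ and $\min_{\{u,v\}\in E'}\bigl(\mathrm{val}_{t-1}(u)+w'(u,v)\bigr)$ while carrying along the attaining source id. Since $w'\ge 0$ and the shortest $G'$‑path from the optimal center to $v$ has at most $H$ hops by the hop‑diameter property, $\mathrm{val}_H(v)=\min_u(\dist_{G'}(u,v)-\delta_u)$ and $\mathrm{src}_H(v)=c(v)$ exactly. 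Each round touches $|E'|=\wt O(n)$ edges with $\log^{O(1)}n$ depth for the min‑aggregations, so the clustering costs $\log^{O(1)}n$ depth and $\wt O(n)$ work on top of the emulator construction; grouping vertices by $c(v)$ is one more $\wt O(n)$ work, $\log^{O(1)}n$ depth step. This is the same Bellman‑Ford‑on‑the‑emulator primitive underlying the distance oracle, now seeded with the shifts $-\delta_u$ rather than with a $0/\infty$ indicator.

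For correctness, item~1 follows by the usual MPX argument transported through the emulator: if $v\in C_u$ then $\dist_{G'}(u,v)-\delta_u=\mathrm{val}_H(v)\le-\delta_v\le 0$, hence $\dist_{G'}(u,v)\le\delta_u\le R$, so any two vertices in the same cluster are within $2R$ in $G'$, and since $\dist_G\le\dist_{G'}$ they are within $2R=\log^{O(1)}n/\beta$ in $G$. For item~2, applying the \cite{mpx13} analysis of truncated exponential shifts to the genuine metric $\dist_{G'}$ (here it is essential that $\dist_{G'}$ satisfies the triangle inequality, which is why the emulator works where an $(h,\varepsilon)$‑hopset does not) gives $\Pr[c(u)\ne c(v)]\le O(\beta'\cdot\dist_{G'}(u,v)\cdot\log n)$ for every pair $u,v\in V$; plugging in $\dist_{G'}(u,v)\le D\cdot\dist_G(u,v)$ and $\beta'=\Theta(\beta/D)$ yields $\Pr[c(u)\ne c(v)]\le O(\beta\cdot\dist_G(u,v)\cdot\log^{O(1)}n)$, as required. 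The truncation at $R$ affects the shifts only with probability $n^{-\Omega(1)}$ (it is the maximum of $n$ exponentials), so it does not spoil the separation bound.

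The main obstacle is the careful two‑sided bookkeeping of the emulator distortion together with the hop‑diameter property: one must check that rescaling $\beta\mapsto\beta'=\Theta(\beta/D)$ simultaneously preserves the deterministic cluster diameter $\log^{O(1)}n/\beta$ (using only $\dist_G\le\dist_{G'}$ in that direction) and the cut probability $O(\beta\,\dist_G(u,v)\log^{O(1)}n)$ (using $\dist_{G'}\le D\,\dist_G$ in the other direction), and that the shifted‑minimum assignment is genuinely realized within $H=O(\log\log n)$ Bellman--Ford rounds. Everything else — discretizing the exponential shifts, confirming the \cite{mpx13} bounds are insensitive to that rounding, and accounting the $\wt O(m)$ work / $\log^{O(1)}n$ depth — is routine.
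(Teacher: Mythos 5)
Your proposal is correct and follows essentially the same route as the paper: build a low hop emulator via Theorem~\ref{thm:parallel_low_hop_emulator}, run the exponential-shift clustering of \cite{mpx13} on it using $O(\log\log n)$ Bellman--Ford rounds, and transfer the diameter bound through $\dist_G\le\dist_{G'}$ and the separation bound through $\dist_{G'}\le\log^{O(1)}(n)\cdot\dist_G$. The only cosmetic difference is your rescaling $\beta\mapsto\beta'=\Theta(\beta/D)$, which the paper omits since the stated guarantees already absorb $\log^{O(1)}n$ factors on both sides.
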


\paragraph{Metric tree embedding.} By applying the parallel FRT embedding (Theorem 7.9 of \cite{fl18}) on our low hop emulator directly, we can obtain a more work-efficient parallel metric tree embedding algorithm.
\begin{corollary}[Metric tree embedding]\label{cor:parallel_metric_tree}
Given a connected $n$-vertex $m$-edge undirected weighted graph $G=(V,E,w)$ with $w:E\rightarrow \mathbb{Z}_{\geq 0}$, there is a PRAM algorithm which can output a tree $T=(V',E',w')$ where $V'\supseteq V$ such that $\forall u,v\in V$,
\begin{enumerate}
\item $\dist_G(u,v)\leq \dist_T(u,v)$,
\item $\E[\dist_T(u,v)]\leq \log^{O(1)}(n)\cdot \dist_G(u,v)$.
\end{enumerate}
The depth of the algorithm is $\log^{O(1)}(n)$ and the expected work is $\wt{O}(m\cdot \log(\diam(G)))$.
\end{corollary}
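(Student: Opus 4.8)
The plan is to reduce the problem to the low hop emulator and then feed that emulator into the parallel FRT construction of \cite{fl18}. First I would run Theorem~\ref{thm:parallel_low_hop_emulator} with $k=\Theta(\log n)$ to obtain, in $\log^{O(1)}(n)$ depth and $\wt{O}(m)$ expected work, a low hop emulator $G'=(V,E',w')$ with $\E[|E'|]=\wt{O}(n)$, hop diameter $O(\log\log n)$, and $\dist_G(u,v)\le \dist_{G'}(u,v)\le \poly(\log n)\cdot\dist_G(u,v)$ for all $u,v\in V$. The point of this step is that $\dist_{G'}$ is itself a metric on $V$ that $\poly(\log n)$-approximates $\dist_G$ from above and, crucially for efficiency, the \emph{exact} distances $\dist_{G'}$ can be extracted from $G'$ by only $O(\log\log n)$ Bellman--Ford rounds from any source set, each round costing $\wt{O}(|E'|)=\wt{O}(n)$ work and $\log^{O(1)}(n)$ depth (the same oracle idea used for Corollary~\ref{cor:parallel_sssp}, with a source set in place of a single source). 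Thus any subroutine that only needs multi-source distance queries in a metric can be run on $(V,\dist_{G'})$ at this per-query cost.

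Next I would invoke the parallel FRT embedding (Theorem 7.9 of \cite{fl18}) on $G'$. The FRT algorithm proceeds through $O(\log(\diam(G')))$ geometric scales; at each scale it computes a least-element--style labeling of the vertices with respect to a random permutation, which reduces to a bounded number of Bellman--Ford-type shortest-path computations on $G'$. Because $G'$ has hop diameter $O(\log\log n)$, each such computation finishes in $O(\log\log n)$ rounds and $\wt{O}(n)$ work, so the FRT step costs $\log^{O(1)}(n)$ depth and $\wt{O}(n\cdot\log(\diam(G')))$ work. Since $\diam(G')\le \poly(\log n)\cdot\diam(G)$, we have $\log(\diam(G'))=O(\log(\diam(G))+\log\log n)$, and adding the $\wt{O}(m)$ cost of building $G'$ gives total expected work $\wt{O}(m\cdot\log(\diam(G)))$ and depth $\log^{O(1)}(n)$.

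Finally I would assemble the distortion guarantees. The FRT tree $T$ has leaf set $V$ plus additional internal nodes, so $V'\supseteq V$, and it satisfies the standard HST bounds with respect to the metric it was built from: $\dist_{G'}(u,v)\le \dist_T(u,v)$ and $\E[\dist_T(u,v)]\le O(\log n)\cdot\dist_{G'}(u,v)$. Chaining with the emulator inequalities yields $\dist_G(u,v)\le \dist_{G'}(u,v)\le \dist_T(u,v)$ and $\E[\dist_T(u,v)]\le O(\log n)\cdot\dist_{G'}(u,v)\le \log^{O(1)}(n)\cdot\dist_G(u,v)$, which is exactly properties (1) and (2). The main obstacle I anticipate is not the distortion bookkeeping but verifying that \cite{fl18}'s parallel FRT machinery can in fact be driven purely by the emulator's $O(\log\log n)$-hop distance queries --- i.e.\ that its running time depends only on the hop diameter of the supplied graph, and that the property their analysis actually relies on (the $O(\log\log n)$-hop distances of $G'$ being genuine distances in a graph, hence satisfying the triangle inequality exactly) is available --- together with the mild scaling/rounding needed to keep the number of scales at $O(\log(\diam(G)))$.
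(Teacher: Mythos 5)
Your proposal is correct and matches the paper's approach exactly: the paper justifies this corollary in one line by applying the parallel FRT embedding (Theorem 7.9 of \cite{fl18}) directly to the low hop emulator, relying precisely on the two facts you identify --- that multi-source Bellman--Ford on the emulator terminates in $O(\log\log n)$ iterations of $\wt{O}(n)$ work each, and that the emulator's hop-bounded distances are genuine graph distances and hence satisfy the triangle inequality. Your accounting of the distortion chaining and of the $\wt{O}(m)+\wt{O}(n\log(\diam(G')))$ work bound fills in details the paper leaves implicit, but the route is the same.
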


\subsection{Parallel Uncapacitated Minimum Cost Flow}
\begin{fact}[Parallel $(n,0)$-solver]\label{fac:parallel_exact_solver}
Given a connected $n$-vertex $m$-edge undirected weighted graph $G$ and a demand vector $b\in \mathbb{R}^n$, \textsc{MSTRouting}$(G,b)$ (Algorithm~\ref{alg:mst_routing}) can be implemented in PRAM with depth $\log^{O(1)}(n)$ and $\wt{O}(m)$ work.
\end{fact}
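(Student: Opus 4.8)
The plan is to note that the \emph{correctness} of \textsc{MSTRouting} — that its output $f$ satisfies $Af=b$ and $\|Wf\|_1\le n\cdot\min_{f':Af'=b}\|Wf'\|_1$ — is already given by Lemma~\ref{lem:exact_solver_with_large_approx}, so the only thing to establish here is that each of its ingredients admits a PRAM implementation with $\log^{O(1)}(n)$ depth and $\wt O(m)$ work: (i) computing a minimum spanning tree $T=(V,E',w)$, (ii) rooting $T$ and determining parent--child orientation, and (iii) computing the per-edge flow values in line~\ref{sta:subtree_sum} of Algorithm~\ref{alg:mst_routing}. All three reduce to standard PRAM primitives, and we work in the PRIORITY CRCW PRAM model used in this section, which simulates the weaker models with only polylogarithmic overhead.

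First I would handle the MST: a standard parallel minimum spanning tree algorithm (a Bor\r{u}vka/contraction-based implementation) produces $T$ in $\log^{O(1)}(n)$ depth and $\wt O(m)$ work on a CRCW PRAM, and in $O(1)$ extra depth and $O(m)$ work every edge of $G$ is labeled as a tree edge or a non-tree edge; each non-tree edge immediately gets $f_i\gets 0$. Next I would root $T$ at an arbitrary vertex and build the Euler-tour structure on the $n-1$ tree edges: split each undirected tree edge into its two directed copies, link them into the Euler tour, and apply (optimal) parallel list ranking to obtain the rank of every directed edge in the tour. From the ranks one reads off, for each tree edge $\{u,v\}$, which endpoint is the parent (the one first entered), and for each vertex $v$ the contiguous interval $[\mathrm{in}(v),\mathrm{out}(v)]$ of the tour occupied by its subtree; list ranking and these derived quantities take $\log^{O(1)}(n)$ depth and $\wt O(n)$ work.

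Finally, the subtree sums $\sigma(v)=\sum_{z\text{ in subtree of }v} b_z$ needed in line~\ref{sta:subtree_sum}: place $b_z$ at the tour position where $z$ is first entered, run a parallel prefix-sum scan along the Euler tour ($\log^{O(1)}(n)$ depth, $\wt O(n)$ work), and set $\sigma(v)$ to the difference of the prefix sums at $\mathrm{out}(v)$ and just before $\mathrm{in}(v)$. Then, in parallel over all $m$ edges, each tree edge $\{u,v\}$ with $u<v$ reads $\sigma$ of its child endpoint and assigns $f_i$ the corresponding signed value, in $O(1)$ depth and $O(m)$ work. Combining the four stages and using a parallel sort where grouping/labeling is needed ($\wt O(m)$ work, polylog depth) yields the claimed bounds.

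The step I expect to be the main point of care is stage (iii), the parallel computation of subtree sums over a rooted tree, since a naive scheme would be serial in the tree depth; this is exactly what the Euler-tour-plus-list-ranking reduction to prefix sums is designed to circumvent. Everything else — parallel MST, list ranking, prefix scan, parallel sorting — is off-the-shelf, so once the Euler-tour bookkeeping is set up correctly the depth and work accounting is routine.
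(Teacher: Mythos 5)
Your proposal is correct and matches the paper's own argument: both compute the MST via a Bor\r{u}vka-style parallel algorithm, then use the Euler tour of the tree together with a parallel prefix sum to obtain all subtree sums, so that line~\ref{sta:subtree_sum} is evaluated in $O(1)$ additional depth. Your write-up merely spells out the list-ranking and $[\mathrm{in}(v),\mathrm{out}(v)]$ bookkeeping that the paper leaves implicit.
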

\begin{proof}
We can use Boruvka's algorithm to compute a minimum spanning tree $T$ of $G$, and it can be implemented in PRAM with $\log^{O(1)}(n)$ depth and $\wt{O}(m)$ work (see e.g., \cite{cc96}).
The Euler Tour of $T$ can be computed in $\log^{O(1)}(n)$ depth and $\wt{O}(n)$ work~\cite{tv84}.
A subtree of $T$ should appear in a consecutive subsequence of the Euler Tour.
Thus, the sum of weights in a subtree can be computed as a sum of weights of a subsequence of the Euler Tour.
We can use $\log^{O(1)}(n)$ depth and $\wt{O}(n)$ work to preprocess a prefix sum over the Euler Tour and hence the line~\ref{sta:subtree_sum} of Algorithm~\ref{alg:mst_routing} can be computed in $O(1)$ depth and $\wt{O}(n)$ work.
\end{proof}

As discussed in Section~\ref{sec:min_cost_flow}, we need to find a good preconditioner.

\begin{lemma}[Work and depth of parallel preconditioner construction]\label{lem:work_depth_implicit_P}
Given a mapping $\varphi: V\rightarrow [\Delta]^d$ for some $\Delta,d\in \mathbb{Z}_{\geq 0}$, \textsc{ImplicitP}$(\varphi)$ (Algorithm~\ref{alg:implicit_preconditioner}) can be implemented in PRAM with $(d\log(n\Delta))^{O(1)}$ depth and $n\cdot(d\log(n\Delta))^{O(1)}$ work.
\end{lemma}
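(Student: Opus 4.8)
The statement to prove is Lemma~\ref{lem:work_depth_implicit_P}: that \textsc{ImplicitP}$(\varphi)$ can be implemented in PRAM with $(d\log(n\Delta))^{O(1)}$ depth and $n\cdot(d\log(n\Delta))^{O(1)}$ work. The correctness and size/running-time bounds in the RAM model are already established in Lemma~\ref{lem:implicit_P}; what remains is purely to argue that each of the (few) steps of Algorithm~\ref{alg:implicit_preconditioner} parallelizes well. So the plan is to walk through the algorithm line by line, bound the depth and work of each step, and observe that everything reduces to sorting, prefix sums, and independent per-vertex / per-level local work, all of which are standard in NC with polylogarithmic depth.

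First I would handle line~\ref{sta:init_non_empty_cells} (computing the nonempty cell sets $\mathcal{C}_i$). For each vertex $v$, each level $l\in\{0,\dots,L-1\}$ with $L=1+\log\Delta$, and each shift $\tau\in[2^l]$, the cell containing $\varphi(v)+\tau\cdot\one_d$ is computed by $d$ independent integer divisions by $2^l$; naively this is $\sum_l 2^l \le 2\Delta$ shifts per vertex, but as shown in the proof of Lemma~\ref{lem:non_empty_cells} the shifted points $\varphi(v),\varphi(v)+\one_d,\dots,\varphi(v)+2^l\one_d$ occupy at most $d+1$ distinct cells of $G_l$, and the set of $\tau$'s landing in a given cell forms a contiguous interval (as the excerpt notes in the technical overview), so one can compute, per $(v,l)$, the at most $d+1$ relevant cells together with their $[\tau_1,\tau_2]$ ranges directly in $\poly(d\log(n\Delta))$ work and $\poly(d\log(n\Delta))$ depth by finding, in each coordinate, the unique $\tau$ at which a boundary is crossed (a single arithmetic computation). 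Collecting all these cell identifiers across all $v$ and $l$ gives $O(nL(d+1))$ cell/vertex records; sorting them lexicographically by $(l,\text{cell id})$ and removing duplicates yields the $\mathcal{C}_l$'s and assigns each cell its rank $k$ within $\mathcal{C}_l$. Parallel sorting of $N = n\cdot\poly(d\log(n\Delta))$ items takes $O(\log N)$ depth and $\wt O(N)$ work, which is within the claimed bounds. I would also compute, by a prefix sum over $l$, the offsets $\sum_{j=0}^{l-1}2^j|\mathcal{C}_j|$ needed for the index $a$ in line~\ref{sta:implicit_final_step}; a single prefix-sum is $O(\log L)$ depth, $O(L)$ work.

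Next, the double loop of lines~\ref{sta:implicit_preconditioner_for_loop}--\ref{sta:implicit_final_step}: the outer loop over the $n$ vertices and the inner loop over the $L$ levels are embarrassingly parallel, since the record for $(v,l)$ only needs $\varphi(v)$, the precomputed $\mathcal{C}_l$ with ranks, and the precomputed offsets. For each $(v,l)$ we emit at most $d+1$ tuples $([a+\tau_1,a+\tau_2],d)$; the values $\tau_1,\tau_2$ are obtained from the local per-coordinate boundary computation already described, and looking up the rank $k$ of a cell in $\mathcal{C}_l$ is a binary search into the sorted array, $O(\log(n\Delta))$ depth. So the whole loop body is $O(\log(n\Delta))$ depth and the total work is $O(nL(d+1))\cdot\poly(d\log(n\Delta)) = n\cdot(d\log(n\Delta))^{O(1)}$. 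Finally $I = (I_1,\dots,I_n)$ is assembled by bucketing tuples by their originating vertex, again a sort / segmented-prefix operation in $O(\log n)$ depth. Summing the depths of the $O(1)$ many phases (each $\poly(d\log(n\Delta))$) and the works (each $n\cdot(d\log(n\Delta))^{O(1)}$) gives the lemma.

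The only mild obstacle is the first phase: one must avoid the naive $\Theta(\sum_l 2^l)=\Theta(\Delta)$ enumeration of shifts and instead exploit the two structural facts from the discussion preceding the lemma — that the $\tau$-preimage of any fixed cell is a contiguous interval, and that across all $2^l$ shifts a single vertex touches at most $d+1$ cells per level — so that each $(v,l)$ pair is processed in $\poly(d\log(n\Delta))$ work. Once that observation is in place everything else is bookkeeping with parallel sorting and prefix sums, so I do not expect any real difficulty; I would state these two structural facts explicitly (citing the relevant paragraph and Lemma~\ref{lem:non_empty_cells}) and then let the standard NC primitives do the rest.
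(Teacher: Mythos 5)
Your proposal is correct and follows essentially the same route as the paper's proof: parallelize over the $(v,l)$ pairs, use the fact that each vertex touches at most $d+1$ cells per level (with contiguous $\tau$-intervals) to avoid enumerating all $2^l$ shifts, index the cells of each $\mathcal{C}_l$ by parallel sorting, and compute the offsets $\sum_{j<l}2^j|\mathcal{C}_j|$ by a prefix sum. The only cosmetic difference is that you retrieve cell ranks by binary search into the sorted array whereas the paper attaches ranks during the sort itself; the bounds and the structure of the argument are the same.
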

\begin{proof}
Let $L=1+\log\Delta $.
Consider line~\ref{sta:init_non_empty_cells} of Algorithm~\ref{alg:implicit_preconditioner}.
We can simultaneously handle each pair $(l,v)\in \{0,1,\cdots,L-1\}\times V$. 
As discussed in the proof of Lemma~\ref{lem:non_empty_cells}, $\forall l\in\{0,1,\cdots,L-1\},v\in V$, we know that $\varphi(v)+\one_d,\varphi(v)+2\cdot \one_d,\cdots, \varphi(v)+2^i\cdot \one_d$ can be in at most $d+1$ different cells in $G_i$.
Notice that each cell can be denoted by one of its corner point and the side length.
Thus, the depth of line~\ref{sta:init_non_empty_cells} of Algorithm~\ref{alg:implicit_preconditioner} is $O(d^2)$ and the work is $O(nLd^2)$.
For the outer loop and the inner loop started from line~\ref{sta:implicit_preconditioner_for_loop} and line~\ref{sta:implicit_preconditioner_inner_loop} of Algorithm~\ref{alg:implicit_preconditioner}, we can handle each $(v,l)$ simultaneously.
Again, there are at most $d+1$ cells in line~\ref{sta:compute_taus} will be considered, and each cell can be indicated by its side length and one corner point which has $d$ coordinates.
In addition, for one cell $C$, $\tau_1$ and $\tau_2$ can be computed in $O(d)$ time.
Thus, line~\ref{sta:compute_taus} has depth $O(d^2)$ and work $O(nLd^2)$.
To implement line~\ref{sta:implicit_final_step} of Algorithm~\ref{alg:implicit_preconditioner}, for each $l\in\{0,1,\cdots,L-1\}$, we need to index all cells in $\mathcal{C}_l$ before the loop started from line~\ref{sta:implicit_preconditioner_for_loop}.
This can be done by sorting.
Since $|\mathcal{C}_l|\leq (d+1)\cdot n$ and each cell is represented by size at most $O(d)$.
The sorting has depth at most $\log^{O(1)}(nd)$ and total work $nL\cdot (d\log n)^{O(1)}$.
Notice that we can compute $\sum_{j=0}^{l-1} 2^j |\mathcal{C}_j|$ for every $l\in\{0,1,\cdots,L-1\}$ using depth $O(L)$ and work $O(L)$.
After preprocessing steps, the depth of line~\ref{sta:implicit_final_step} is $O(d)$ and the work is $O(nLd)$.
To conclude, the depth of Algorithm~\ref{alg:implicit_preconditioner} is $(d\log(n\Delta))^{O(1)}$ and the total work is $n\cdot(d\log(n\Delta))^{O(1)}$.
\end{proof}

\begin{lemma}[Work and depth of parallel compressed matrix-vector multiplication]\label{lem:work_depth_matrixvec}
Given an compressed representation (see Definition~\ref{def:implicit_matrix}) $I=(I_1,I_2,\cdots,I_n)$ of a matrix $P\in\mathbb{R}^{r\times n}$ with $\forall i\in[n],|I_i|\leq s$, and a vector $g\in\mathbb{R}^n$, \textsc{MatrixVec}$(I,g)$ (Algorithm~\ref{alg:matrix_vector}) can be implemented in PRAM with depth $\log^{O(1)}(s\cdot \nnz(g))$ and work $\wt{O}(s\cdot\nnz(g))$.
\end{lemma}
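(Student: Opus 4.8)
The plan is to note that \textsc{MatrixVec}$(I,g)$ (Algorithm~\ref{alg:matrix_vector}) is built entirely from standard data-parallel primitives operating on the array $S$, and that $|S|\le 2s\cdot\nnz(g)$; implementing each primitive with its textbook PRAM bounds then yields the claim. Throughout, correctness is immediate from Lemma~\ref{lem:matrix_vector}, since we carry out exactly the sequential computation, only reorganized.

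First I would generate $S$ in parallel. Compute, by a prefix sum over the values $|I_i|\cdot\mathbf{1}[g_i\neq 0]$ for $i\in[n]$, a starting offset for the block of pairs contributed by each index $i$; then, independently for every $i$ with $g_i\neq 0$ and every tuple $([a,b],c)\in I_i$, write $(a,cg_i)$ and $(b+1,-cg_i)$ into the two slots reserved for that tuple. This has depth $\log^{O(1)}(s\cdot\nnz(g))$ and work $\wt{O}(s\cdot\nnz(g))$; here the work bound uses $\sum_{i:g_i\neq 0}|I_i|\le s\cdot\nnz(g)$, which is precisely the quantity controlling all subsequent steps.

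Next I would sort $S$ by its first coordinate $q$ with a parallel comparison sort of depth $\log^{O(1)}|S|$ and work $\wt{O}(|S|)$ (e.g.\ a parallel mergesort), breaking ties arbitrarily. Step~4 of Algorithm~\ref{alg:matrix_vector} is then a segmented prefix-sum computation: identify the maximal runs of equal $q$-value in the sorted array (a boundary occurs at $j$ precisely when $q_{j-1}<q_j$, detectable in $O(1)$ depth by comparing adjacent keys), take the exclusive prefix sum of the $z$-values so that the entry just before run $j$ holds $\sum_{t:q_t<q_j} z_t$, and at each boundary emit the tuple $([q_{j-1},q_j-1],\sum_{t:q_t<q_j} z_t)$. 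Finally compact the emitted tuples into $\hat I$ via a filter, i.e.\ a prefix sum over the $0/1$ run-boundary indicator. Each of prefix sum, segmented scan and compaction runs in $O(\log|S|)$ depth and $O(|S|)$ work, so in total the algorithm has depth $\log^{O(1)}(s\cdot\nnz(g))$ and work $\wt{O}(s\cdot\nnz(g))$.

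I do not anticipate a genuine obstacle. The two points requiring a little care are (i) routing the emitted pairs into a contiguous conflict-free array, which is handled by the offset prefix sum above, and (ii) the machine model: the description is stated for PRIORITY CRCW PRAM, but by the standard simulation noted at the start of this section the same bounds hold in EREW PRAM up to polylogarithmic overhead, which is absorbed into the $\log^{O(1)}$ and $\wt{O}(\cdot)$ notation.
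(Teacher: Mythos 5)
Your proposal is correct and follows essentially the same route as the paper's proof: build $S$ in parallel, sort it, and compute the interval values via a prefix sum, with the bound $|S|=O(s\cdot\nnz(g))$ controlling everything. The only difference is that you spell out the EREW-friendly details (offset computation via prefix sum, run-boundary detection, compaction) that the paper leaves to the PRIORITY CRCW model and the standard simulation.
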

\begin{proof}
For the loop started from line~\ref{sta:matrixvec_first_loop} of Algorithm~\ref{alg:matrix_vector}, the set $S$ can be created in depth $O(1)$, and the work is $O(|S|) $.
The sorting in line~\ref{sta:matrixvec_sorting} has depth $\log^{O(1)} |S|$ and work $|S|\cdot \log^{O(1)}|S|$.
For line~\ref{sta:matrixvec_prefix_sum}, we need to preprocess a prefix sum $\sum_{t:q_t<q_j} z_t$ for $j\in \{2,3,\cdots, k\}$.
This can be done in depth $\log^{O(1)} |S|$ and work $|S|\log^{O(1)} |S|$.
Once we preprocess the prefix sum, we can implement line~\ref{sta:matrixvec_prefix_sum} in $O(1)$ depth and $O(|S|)$ work.
Since $|S| = O(s\cdot \nnz(g))$, the overall depth of Algorithm~\ref{alg:matrix_vector} is $\log^{O(1)}(s\cdot \nnz(g))$ and the work is $\wt{O}(s\cdot \nnz(g))$.
\end{proof}

\begin{lemma}[Work and depth of parallel compressed vector-matrix multiplication]\label{lem:work_depth_vectormat}
Given an compressed representation $I$ of a vector $y\in\mathbb{R}^r$ with $|I|\leq s$ and an compressed representation $I'=(I_1,I_2,\cdots,I_n)$ of a matrix $P\in\mathbb{R}^{r\times n}$ with $\forall i\in[n], |I_i|\leq s'$, \textsc{VectorMat}$(I,I')$ (Algorithm~\ref{alg:vector_matrix}) can be implemented in PRAM with depth $\log^{O(1)}(ss')$ and work $\wt{O}(s+ns')$. 
\end{lemma}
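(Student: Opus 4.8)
The plan is to account, phase by phase, for the PRAM cost of each line of Algorithm~\ref{alg:vector_matrix}, reusing the correctness already established in Lemma~\ref{lem:vector_matrix} so that only the depth and work need to be verified. The two preprocessing lines act only on the compressed representation $I$ of $y$, so the first task is to bound its size after line~\ref{sta:vectormat_filling}. I would implement the filling by first parallel-sorting the $\le s$ intervals of $I$ by left endpoint (parallel comparison sort: depth $\log^{O(1)}(s)$, work $\wt{O}(s)$), and then using a single parallel scan to detect each maximal gap $[b_t+1,a_{t+1}-1]$ together with the two boundary gaps $[1,a_1-1]$ and $[b_s+1,r]$, inserting one zero-valued tuple for every nonempty gap. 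Because the input intervals of a compressed representation are pairwise disjoint (Definition~\ref{def:implicit_vector}), this adds at most $s+1$ new tuples, so afterwards $|I|=O(s)$ and the tuples of $I$ cover all of $[r]$; note that $[r]$ is never materialized, so this is independent of the (possibly $\poly(n)$-sized) dimension $r$. The sort in line~\ref{sta:vectormat_sort} is then another parallel sort of $O(s)$ items (depth $\log^{O(1)}(s)$, work $\wt{O}(s)$), and the prefix sums $p_j$ of line~\ref{sta:vectormat_prefix_sum} are one parallel scan over $O(s)$ values (depth $O(\log s)$, work $O(s)$).

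For the double loop beginning at line~\ref{sta:vectormat_loop}, the key point is that all iterations are independent. There are at most $\sum_{i=1}^n|I_i|\le ns'$ tuples $([a,b],c)$ over all $I_i$, and these can be processed simultaneously: for each tuple we run two parallel binary searches into the sorted, gap-free array of $O(s)$ intervals to find $j_1$ and $j_2$, each costing $O(\log s)$ depth and $O(\log s)$ work, and then evaluate the closed form in line~\ref{sta:vectormat_sumgi_1} or line~\ref{sta:vectormat_sumgi_2} in $O(1)$ further work using the precomputed $p_j$. This phase has depth $O(\log s)$ and work $\wt{O}(ns')$. Finally, each $g_i$ is the sum of the $\le s'$ contributions tagged with index $i$; performing these $n$ segmented reductions in parallel costs $O(\log s')$ depth and $O(ns')$ work, after which $g^\top$ is returned.

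Adding up the phases gives depth $\log^{O(1)}(s)+O(\log s)+O(\log s')=\log^{O(1)}(ss')$ and work $\wt{O}(s)+\wt{O}(ns')+O(ns')=\wt{O}(s+ns')$, which is the claimed bound. I do not anticipate a real obstacle: every ingredient is a standard EREW/CRCW primitive (parallel sorting, parallel prefix sum, parallel binary search, parallel segmented reduction), and the only point requiring mild care is the filling step, where disjointness of the input intervals is exactly what keeps the blown-up representation at size $O(s)$ so that all subsequent sorts and scans still run over $O(s)$ elements. As in the other parallel lemmas of this section, working in the PRIORITY CRCW PRAM model and invoking the standard simulation results lets us phrase all depth bounds as $\log^{O(1)}$ without tracking constants.
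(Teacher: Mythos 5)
Your proposal is correct and matches the paper's proof: both bound the preprocessing (fill, sort, prefix sum) by parallel sorting and scanning over $O(s)$ tuples, process all $ns'$ tuples of $I'$ in parallel with $O(\log s)$-depth binary searches, and finish with a parallel reduction of the at most $s'$ contributions to each $g_i$. Your explicit observation that the filling step adds at most $s+1$ zero-valued tuples (so the representation stays $O(s)$ and the array $[r]$ is never materialized) is a detail the paper leaves implicit, but it does not change the argument.
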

\begin{proof}
By sorting, we can implement line~\ref{sta:vectormat_filling} and line~\ref{sta:vectormat_sort} of Algorithm~\ref{alg:vector_matrix} in $\log^{O(1)}(s)$ depth and $s\log^{O(1)} s$ depth.
Line~\ref{sta:vectormat_prefix_sum} computes the prefix sum $p_j$ for $j\in[s]$.
Thus, it needs $\log^{O(1)}(s)$ depth and $s\cdot \log^{O(1)}(s)$ work.
For the loop started from line~\ref{sta:vectormat_loop}, we can handle each $i\in[n]$ and $([a,b],c)\in I_i$ simultaneously.
The binary search takes $\log^{O(1)} s$ depth.
Line~\ref{sta:vectormat_sumgi_1} and line~\ref{sta:vectormat_sumgi_2} compute $g_i$ which is the sum of at most $|I_i|\leq s'$ values.
Thus, Line~\ref{sta:vectormat_sumgi_1} and line~\ref{sta:vectormat_sumgi_2} have depth $\log^{O(1)}(s')$.
To conclude, the overall depth of Algorithm~\ref{alg:vector_matrix} is $\log^{O(1)}(ss')$ and the work is $\wt{O}(s+ns')$.
\end{proof}

\begin{theorem}[Parallel uncapacitated minimum cost flow]\label{thm:parallel_min_cost_flow}
Given an $\varepsilon\in (0,0.5)$, a connected $n$-vertex $m$-edge undirected weighted graph $G=(V,E,w)$ with $w:E\rightarrow \mathbb{Z}_{\geq 0}$, and a demand vector $b\in\mathbb{R}^n$ with $\one_n^{\top} b = 0,$ there is a PRAM algorithm which can output an $(1+\varepsilon)$-approximate solution to the uncapacitated minimum cost flow problem with probability at least $0.99$.
Furthermore, the depth is at most $\varepsilon^{-2}\log^{O(1)}(n\Lambda)$ and the expected work is at most $\varepsilon^{-2}m\cdot \log^{O(1)}(n\Lambda)$, where $\Lambda=\sum_{e\in E} w(e)$.
\end{theorem}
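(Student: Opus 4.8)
The plan is to take the sequential algorithm of the previous section essentially verbatim and replace every subroutine by its PRAM counterpart, tracking depth and work throughout. Fix $\Lambda=\sum_{e\in E}w(e)$ and let $A$, $W$ be the vertex--edge incidence matrix and the diagonal weight matrix of $G$. First I would invoke the parallel Bourgain embedding of Corollary~\ref{cor:parallel_embed_into_l1} to produce, with probability at least $0.99$, a map $\varphi:V\to[\Delta]^d$ with $\Delta\le\Lambda\cdot\log^{O(1)}n$, $d=O(\log^2 n)$, and $\dist_G(u,v)\le\|\varphi(u)-\varphi(v)\|_1\le\log^{O(1)}n\cdot\dist_G(u,v)$, in depth $\log^{O(1)}n$ and expected work $\wt O(m)$; this is the only randomized step that can fail, so it alone accounts for the $0.99$ success probability. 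Then I would run \textsc{ImplicitP}$(\varphi)$ (Algorithm~\ref{alg:implicit_preconditioner}): by Lemma~\ref{lem:implicit_P} the output is a compressed representation $I=(I_1,\dots,I_n)$ of a full-column-rank preconditioner $P$ with $\kappa(PAW^{-1})\le\log^{O(1)}(n\Lambda)=:\kappa$ and $|I_i|\le(d+1)(1+\log\Delta)=\log^{O(1)}(n\Lambda)$ for every $i$, and by Lemma~\ref{lem:work_depth_implicit_P} this costs depth $\log^{O(1)}(n\Lambda)$ and work $n\cdot\log^{O(1)}(n\Lambda)$.

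Next I would implement one iteration of \textsc{MWU} (Algorithm~\ref{alg:feasibility}) in parallel. Assembling $p_t$ from the current weights is depth $O(1)$, work $O(m)$. The vector $Bp_t$ is computed in depth $\log^{O(1)}n$ and work $O(m+n)$, exploiting that $B$ is the sparse matrix $\pm AW^{-1}/\|PAW^{-1}\|_{1\to1}$ plus the rank-one term $-\tfrac1s b\one_m^\top/\|Pb\|_1$, so only a sparse matrix--vector product and one weighted sum of $b$ are needed; note $\nnz(Bp_t)\le n$. A compressed representation $\hat I$ of $P(Bp_t)$ is then obtained by \textsc{MatrixVec}, which by Lemma~\ref{lem:work_depth_matrixvec} has depth $\log^{O(1)}(n\Lambda)$ and work $n\cdot\log^{O(1)}(n\Lambda)$, with $|\hat I|\le n\cdot\log^{O(1)}(n\Lambda)$. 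By Fact~\ref{fac:naive_operation}, $\|PBp_t\|_1$ and a compressed representation of the sign vector $y_t$ are read off from $\hat I$ in work $O(|\hat I|)$ and depth $\log^{O(1)}(n\Lambda)$ (a prefix sum). Then $g^\top=y_t^\top P$ is computed by \textsc{VectorMat}, which by Lemma~\ref{lem:work_depth_vectormat} has depth $\log^{O(1)}(n\Lambda)$ and work $\wt O(|\hat I|+n\log^{O(1)}(n\Lambda))=n\cdot\log^{O(1)}(n\Lambda)$. Finally each $\phi_t^+(i)=g^\top B_i/2$ and $\phi_t^-(i)=g^\top B_{i+m}/2$ equals $\pm g^\top(AW^{-1})_i/(2\|PAW^{-1}\|_{1\to1})-g^\top b/(2s\|Pb\|_1)$, so all of them together take depth $\log^{O(1)}n$ (one dot product $g^\top b$, then $O(1)$ work per column) and work $O(m+n)$, and the update of all $2m$ weights is depth $O(1)$, work $O(m)$. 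Thus one iteration has depth $\log^{O(1)}(n\Lambda)$ and expected work $m\cdot\log^{O(1)}(n\Lambda)$; the one-time preprocessing of $\|PAW^{-1}\|_{1\to1}=\max_i\|P(AW^{-1})_i\|_1$ and of $\|Pb\|_1$ (a compressed representation of $Pb$ computed once via \textsc{MatrixVec}) fits in the same bounds. By Lemma~\ref{lem:mwu_feas_prob}, \textsc{MWU} terminates in $O(\kappa^2\varepsilon^{-2}\log m)=\varepsilon^{-2}\log^{O(1)}(n\Lambda)$ iterations.

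To assemble the full solver I would follow the reduction of Section~\ref{sec:min_cost_flow}: binary search over $s\in\{1,1+\varepsilon,\dots,(1+\varepsilon)^{\lceil\log_{1+\varepsilon}\kappa\rceil}\}$ multiplies the depth by $O(\log(\varepsilon^{-1}\log\kappa))=\log^{O(1)}(n\Lambda)$ and yields a $(1+\varepsilon,\varepsilon/\kappa)$-solver for $PAW^{-1}$; applying it $1+\log n$ times (Corollary~\ref{cor:boost_error}) gives a $(1+4\varepsilon,\varepsilon^{1+\log n}/\kappa)$-solver at an extra $O(\log n)$ factor; composing with the parallel $(n,0)$-solver of Fact~\ref{fac:parallel_exact_solver} (depth $\log^{O(1)}n$, expected work $\wt O(m)$) via Corollary~\ref{cor:exact_solver} gives a $(1+5\varepsilon,0)$-solver for $PAW^{-1}$, hence --- since $P$ has full column rank --- a $(1+5\varepsilon)$-approximate solution to problem~\eqref{eq:min_cost_flow}. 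Rescaling $\varepsilon$ by a constant gives $(1+\varepsilon)$. Multiplying the per-iteration bounds by the iteration count and the outer $\log^{O(1)}(n\Lambda)$ factors, total depth is $\varepsilon^{-2}\log^{O(1)}(n\Lambda)$ and total expected work is $\varepsilon^{-2}m\cdot\log^{O(1)}(n\Lambda)$; since $\varepsilon\ge1/\Lambda$ may be assumed (otherwise exact routing via Fact~\ref{fac:parallel_exact_solver} already $(1+\varepsilon)$-approximates within the claimed bound), every $\log(1/\varepsilon)$ factor is absorbed into $\log^{O(1)}(n\Lambda)$.

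\textbf{Main obstacle.} The delicate point is to run all $\varepsilon^{-2}\log^{O(1)}(n\Lambda)$ iterations within $\varepsilon^{-2}m\cdot\log^{O(1)}(n\Lambda)$ total work while never materializing the $\poly(n)$-dimensional matrix $P$ or the $\poly(n)$-dimensional intermediate vectors $Bp_t$, $PBp_t$, $y_t$, $y_t^\top P$. This requires simultaneously bounding the sizes of the compressed representations produced by \textsc{MatrixVec} and consumed by \textsc{VectorMat} at $n\cdot\log^{O(1)}(n\Lambda)$, and handling the rank-one part of $B$ separately so that both $PBp_t$ and $g^\top B$ reduce to a single compressed multiplication plus $O(1)$-depth arithmetic against the precomputed compressed representation of $Pb$. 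Once these structural facts are in place, the depth and work accounting is routine bookkeeping on top of the already-established correctness of Sherman's framework.
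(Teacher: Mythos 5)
Your proposal is correct and follows essentially the same route as the paper's proof: parallel Bourgain embedding, the compressed preconditioner from \textsc{ImplicitP}, per-iteration accounting of \textsc{MWU} via the compressed matrix--vector and vector--matrix primitives (handling the rank-one part of $B$ separately so that $b\cdot\one_m^\top p_t=b$), and the final composition through the binary search, Corollaries~\ref{cor:boost_error} and~\ref{cor:exact_solver}, and the parallel \textsc{MSTRouting} solver. The only slip is the parenthetical justification of $\varepsilon\ge 1/\Lambda$: \textsc{MSTRouting} is an $(n,0)$-solver, not an exact one, so it does not by itself give a $(1+\varepsilon)$-approximation for tiny $\varepsilon$; the paper simply assumes $\varepsilon\ge 1/\Lambda$ to absorb $\log(1/\varepsilon)$ factors, and this does not affect the main argument.
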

\begin{proof}
Let $A\in\mathbb{R}^{n\times m}$ be the vertex-edge incidence matrix of $G$, and let $W\in\mathbb{R}^{m\times m}$ be the weight matrix.
By Corollary~\ref{cor:parallel_embed_into_l1}, with probability at least $0.99$, we can compute a mapping $\varphi:V\rightarrow [\Delta]^d$ with $\Delta\leq \Lambda\cdot \log^{O(1)} n,d\leq O(\log^2 n)$ in $\log^{O(1)}(n)$ depth and $\wt{O}(m)$ expected work such that
\begin{align*}
\forall u,v\in V, \dist_G(u,v)\leq \|\varphi(u)-\varphi(v)\|_1\leq \log^{O(1)}(n)\cdot \dist_G(u,v).
\end{align*}
By Lemma~\ref{lem:implicit_P}, we can compute an compressed representation $I=\{I_1,I_2,\cdots,I_n\}$ of $P$ with $\forall i\in[n],|I_i|\leq O(d\log \Delta)=\log^{O(1)}(n\Lambda)$.
Furthermore, $\kappa(PAW^{-1})\leq \log^{O(1)}(n\Lambda)$.
By Lemma~\ref{lem:work_depth_implicit_P}, the depth of computing such compressed representation is $\log^{O(1)}(n\Lambda)$  and the work is $n\cdot\log^{O(1)}(n\Lambda)$.
Now, we are able to implement Algorithm~\ref{alg:feasibility} in parallel.
To compute matrix $B$, we need to compute $\|PAW^{-1}\|_{1\rightarrow 1}$ and $\|Pb\|_1$.
Notice that $\|PAW^{-1}\|_{1\rightarrow 1}=\max_{i\in[m]}\|P(AW^{-1})_i\|_1$ and $(AW^{-1})_i$ only has two non-zero entries.
By Lemma~\ref{lem:matrix_vector} and Lemma~\ref{lem:work_depth_matrixvec}, an compressed representation of $P(AW^{-1})_i$ can be computed in $\log^{O(1)}(n\Lambda)$ depth.
We can compute $P(AW^{-1})_i$ for all $i\in[m]$ simultaneously.
By Lemma~\ref{lem:work_depth_matrixvec}, the total work needed is $m\cdot \log^{O(1)}(n\Lambda)$.
By Lemma~\ref{lem:matrix_vector}, the size of the compressed representation of $P(AW^{-1})_i$ is at most $\log^{O(1)}(n\Lambda)$.
Thus, by Fact~\ref{fac:naive_operation}, $\|P(AW^{-1})_i\|_1$ can be computed in $\log^{O(1)}(n\Lambda)$ depth.
Thus, $\|PAW^{-1}\|_{1\rightarrow 1}$ can be computed in $\log^{O(1)}(n\Lambda)$ depth and $m\cdot \log^{O(1)}(n\Lambda)$ work.
By Lemma~\ref{lem:matrix_vector} and Lemma~\ref{lem:work_depth_matrixvec} again, an compressed representation of $Pb$ can be computed in $\log^{O(1)}(n\Lambda)$ depth and $n\cdot \log^{O(1)}(n\Lambda)$ work.
By Lemma~\ref{lem:matrix_vector}, the size of the compressed representation of $Pb$ is at most $n\cdot \log^{O(1)}(n\Lambda)$, thus $\|Pb\|_1$ can be computed by the summation of at most $n\cdot \log^{O(1)}(n\Lambda)$ values.
Such summation operation has at most $\log^{O(1)}(n\Lambda)$ depth and $n\cdot \log^{O(1)}(n\Lambda)$ work.
Algorithm~\ref{alg:feasibility} has $O(\varepsilon^{-2}\kappa^2\log n)=\varepsilon^{-2}\log^{O(1)}(n\Lambda)$ iterations.
In each iteration, we firstly need to compute $p_t$.
This can be done in $\log^{O(1)}(n)$ depth and $m\log^{O(1)}(n)$ work.
Notice that $AW^{-1}$ only has $O(m)$ non-zero entries and $b\cdot \one_m^{\top} p_t = b$. 
We can compute $Bp_t$ in depth $\log^{O(1)}(n)$ and work $m\cdot \log^{O(1)} n$.
By Lemma~\ref{lem:matrix_vector} and Lemma~\ref{lem:work_depth_matrixvec}, an compressed representation $\hat{I}$ of $PBp_t$ can be computed in depth $\log^{O(1)}(n\Lambda)$ and work $n\cdot \log^{O(1)}(n\Lambda)$.
By Lemma~\ref{lem:matrix_vector}, the size of $\hat{I}$ is at most $n\log^{O(1)}(n\Lambda)$. 
Similar as computing $\|Pb\|_1$, $\|PBp_t\|_1$ can be computed in depth $\log^{O(1)}(n\Lambda)$ and work $n\log^{O(1)}(n\Lambda)$.
Now, we want to compute an compressed representation $I'$ of $y_t$.
To achieve this, we can look at each $([a,b],c)\in \hat{I}$ simultaneously and put $([a,b],\sgn(c))$ into $I'$.
This step has depth $O(1)$ and work at most $n\log^{O(1)}(n\Lambda)$.
It is easy to see $|I'|=|\hat{I}|$.
By Lemma~\ref{lem:vector_matrix} and Lemma~\ref{lem:work_depth_vectormat}, $y_t^{\top}P$ can be computed in depth $\log^{O(1)}(n\Lambda)$ and work $n\cdot \log^{O(1)}(n\Lambda)$.
To compute $(y_t^{\top}P)B_i$, we need to compute $(y_t^{\top}P)A_i$ and $(y_t^{\top}P)b$.
Since $A_i$ has at most $2$ non-zero entries, we can simultaneously compute $(y_t^{\top}P)A_i$ for all $i\in[m]$ and $(y_t^{\top}P)b$.
It has depth $\log^{O(1)}(n\Lambda)$ and $m\log^{O(1)}(n\Lambda)$ work.
To construct a $(1+\varepsilon,\varepsilon/\kappa)$-solver, we need to invoke Algorithm~\ref{alg:feasibility} $\log(\varepsilon^{-1}\log\kappa)$ times. 
Thus, to find an $(1+\varepsilon,\varepsilon/\kappa)$-solution, the total depth is $\varepsilon^{-2}\log^{O(1)}(\varepsilon^{-1}n\Lambda)$ and the work is $\varepsilon^{-2}m\log^{O(1)}(\varepsilon^{-1}n\Lambda)$.
Since $\varepsilon\geq 1/\Lambda$, $\log(\varepsilon^{-1})\leq \log\Lambda$.
Together with Corollary~\ref{cor:boost_error}, Corollary~\ref{cor:exact_solver}, Lemma~\ref{lem:exact_solver_with_large_approx} and Fact~\ref{fac:parallel_exact_solver}, we complete the proof.
\end{proof}

\subsection{Parallel $s-t$ Approximate Shortest Path}
Given two vertices $s$ and $t$, a special case of uncapacitated minimum cost flow is when the demand vector $b$ only has $2$ non-zero entries: $b_s = 1$ and $b_t= -1$.
In this case, the value of the minimum cost flow is exactly the same as the distance between $s$ and $t$.
As shown previously, since we can compute a $(1+\varepsilon)$-approximation to the uncapacitated minimum cost flow, we can compute a $(1+\varepsilon)$-approximation to the distance between $s$ and $t$.
However, our flow algorithm can only output a flow from $s$ to $t$.
In this section, we will show how to obtain an $s-t$ path from the $s-t$ flow.

Before we present our algorithm, let us show a good property of the random walk corresponding to the flow.
Consider a connected $n$-vertex $m$-edge undirected weighted graph $G=(V,E,w)$.
Let $A\in\mathbb{R}^{n\times m}$ be the corresponding vertex-edge incidence matrix (see Section~\ref{sec:min_cost_flow}), and let $W\in\mathbb{R}^{m\times m}$ be the corresponding diagonal weight matrix.
Given a valid demand vector $b\in\mathbb{R}^n$, i.e., $\one^{\top}_n b = 0$, let $f\in\mathbb{R}^m$ be a feasible flow, i.e., $Af = b$.
Suppose $\{u,v\}\in E$ is the $i$-th edge.
We denote $f(u,v)$ as the flow from $u$ to $v$, i.e.,
\begin{align*}
f(u,v) = \left\{\begin{array}{ll} f_i& u < v,\\ -f_i & u > v.\end{array}\right.
\end{align*}
By the definition of $f(u,v)$, we have $f(u,v)= -f(v,u)$.
For $\{u,v\}\not\in E$, we denote $f(u,v)=0$.
If $f(u,v)$ is negative, then it means that there is $-f(u,v)$ units of flow from $v$ to $u$.  
Suppose the demand vector $b$ further satisfies $b_t = -1,$ and $\forall v\not =t, b_v\geq 0$.
Notice that since $\one_n^{\top} b = 0$, we have $\sum_{v\in V\setminus\{t\}} b_v = 1$.
We can generate a random walk by the following way:
\begin{enumerate}
	\item Set $i\gets 0$ and set $u_0$ to be $v\in V\setminus \{t\}$ with probability $b_v$.
	\item Set $u_{i+1}$ to be $v\in \{v'\in V\mid f(u_i,v')>0\}$ with probability $\frac{f(u_i,v)}{\sum_{v':f(u_i,v')>0}f(u_i,v')}$. \label{it:rand_walk_next}
	\item If $u_{i+1}\not=t$, set $i\gets i + 1$, and repeat step~\ref{it:rand_walk_next}. Otherwise, output the path $p=(u_0,u_1,\cdots,u_{i+1})$.
\end{enumerate}
We say the path $p$ is a random walk corresponding to the flow $f$.
If the flow $f$ contains no cycle, then the expected length of $p$ is exactly the same as the cost of $f$ (see e.g., \cite{bkkl17}).
The following lemma shows that if $f$ contains cycles then the expected length of $p$ is still the cost of $f$.

\begin{lemma}[Expected length of a random walk]\label{lem:expected_length_of_random_walk}
Consider a connected $n$-vertex $m$-edge undirected weighted graph $G=(V,E,w)$ with $w:E\rightarrow \mathbb{Z}_{\geq 0}$. 
Let $A\in\mathbb{R}^{n\times m}$ be the corresponding vertex-edge incidence matrix, and let $W\in\mathbb{R}^{m\times m}$ be the corresponding diagonal weight matrix.
Given a vertex $t\in V$ and a demand vector $b\in\mathbb{R}^n$ satisfying $b_t=-1$ and $\forall v\not=t,b_v\geq 0$, let $f\in\mathbb{R}^m$ be a feasible flow for $b$, i.e., $Af=b$.
The expected length of a random walk corresponding to the flow $f$ is $\|Wf\|_1$.
\end{lemma}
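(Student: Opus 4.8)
<br>

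The plan is to prove Lemma~\ref{lem:expected_length_of_random_walk} by setting up a careful accounting of the expected contribution of each edge to the total walk length, and then showing this sum equals $\|Wf\|_1$. Let me think about how to structure this.

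The key quantity is the expected number of times the random walk traverses a given directed edge. Let me denote by $N(u,v)$ the expected number of times the walk steps from $u$ to $v$ (over all time steps). Then the expected length of the walk is $\sum_{(u,v): f(u,v)>0} N(u,v) \cdot w(u,v)$, so I want to show $N(u,v) = f(u,v)$ whenever $f(u,v) > 0$. This is the heart of the matter.

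To establish $N(u,v) = f(u,v)$, I would introduce $\rho(u)$ = expected number of times the walk visits vertex $u$ (counting the start, not counting arrival at $t$, since the walk stops there). For $u \neq t$, each visit to $u$ produces exactly one outgoing step, and the step goes to $v$ with probability $f(u,v) / \sum_{v': f(u,v')>0} f(u,v')$; let $F^+(u) = \sum_{v': f(u,v')>0} f(u,v')$ denote the total positive outflow at $u$. So $N(u,v) = \rho(u) \cdot f(u,v)/F^+(u)$. Thus it suffices to show $\rho(u) = F^+(u)$ for all $u \neq t$. Now $\rho(u)$ satisfies a balance equation: $\rho(u) = b_u + \sum_{v: f(v,u)>0} N(v,u) = b_u + \sum_{v: f(v,u)>0} \rho(v) f(v,u)/F^+(v)$. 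If I substitute the guess $\rho(v) = F^+(v)$, the right side becomes $b_u + \sum_{v: f(v,u)>0} f(v,u)$, which is $b_u$ plus the total positive inflow $F^-(u)$ at $u$. The flow conservation $Af = b$ at vertex $u$ says (net outflow) $= b_u$, i.e. $F^+(u) - F^-(u) = b_u$, so $b_u + F^-(u) = F^+(u)$. Hence $\rho(u) = F^+(u)$ is a fixed point of the balance equation.

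The main obstacle — and the step I'd spend the most care on — is justifying that this fixed point is actually the right value, i.e. that $\rho$ is well-defined (the expectations are finite) and uniquely determined by the balance equation. Cycles in $f$ are exactly what make this delicate: a priori the random walk could have infinite expected length, and the balance system could have multiple nonnegative solutions differing by a ``circulation''. I would handle finiteness by a potential/absorption argument: since $G$ is connected and $f$ routes one unit of demand into $t$, from every vertex reachable by the walk there is positive probability of reaching $t$ within $n$ steps (one can decompose $f$ restricted to the positive-outflow support into paths to $t$ plus cycles, and a path to $t$ always exists from any vertex the walk can occupy because total outflow must eventually reach the sink); this gives geometric tail bounds on the walk length, so all the $\rho(u)$, $N(u,v)$ are finite and the expected length is finite. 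Given finiteness, the balance equations are a genuine linear system with a spectral-radius-less-than-one structure (the substochastic transition matrix restricted to non-$t$ states, scaled), so the nonnegative solution is unique, and our verified fixed point $\rho(u) = F^+(u)$ is it. Assembling: expected length $= \sum_{u \neq t}\sum_{v: f(u,v)>0} N(u,v) w(u,v) = \sum_{u\neq t}\sum_{v: f(u,v)>0} f(u,v) w(u,v) = \sum_{i: f_i \neq 0} |f_i| w(e_i) = \|Wf\|_1$, where the last equality uses that each undirected edge contributes $|f(u,v)| \cdot w(u,v)$ exactly once from its positive-flow endpoint. I would also remark that a cleaner alternative is a coupling/inductive argument decomposing $f$ into flow paths from sources to $t$ together with cycles, observing cycles contribute zero to $b$ hence can be ``cancelled'' without changing conservation, but the linear-algebra fixed-point argument above is the most self-contained and I would present that as the main line.
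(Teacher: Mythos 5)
Your proof is correct, and it takes the dual route to the paper's. The paper works with the expected cost-to-go: it sets $d(u)$ to be the expected remaining length of the walk started at $u$, writes the one-step recursion $d(u)=\sum_{v:f(u,v)>0}\frac{f(u,v)}{F^+(u)}(d(v)+w(u,v))$, multiplies through by $F^+(u)=\sum_{v'}f(u,v')$, and sums over all $u$; the conservation identity $F^+(u)-F^-(u)=b_u$ then collapses the left-hand side to $\sum_u b_u d(u)$, which is the expected walk length since $d(t)=0$. You instead work with the occupation measure -- expected visit counts $\rho(u)$ and expected traversal counts $N(u,v)$ -- and show $N(u,v)=f(u,v)$ by checking that $\rho=F^+$ solves the forward balance equations, invoking exactly the same conservation identity at the key step. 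The two arguments are Markov-chain duals of comparable length; yours has the advantage of explicitly confronting finiteness (geometric tails via reachability of $t$ in the positive-flow digraph), which the paper assumes silently. One caveat applies to both proofs, and in fact to the lemma as literally stated: if $f$ contains a circulation supported on vertices the walk can never occupy (a positive cycle flow disjoint, in the reachability sense, from the support of the positive demands), then $N\equiv 0\neq f$ on that cycle, the substochastic transition matrix restricted to those states has spectral radius exactly $1$ (so your uniqueness claim needs to be restricted to the states reachable by the walk), and the expected length is strictly smaller than $\|Wf\|_1$; the paper's proof degenerates to an $\infty-\infty$ cancellation in the same situation. Restricting your fixed-point argument to the reachable component -- where your own reachability argument shows $I-Q$ is invertible -- repairs this, and the resulting identity on the reachable set is all that the downstream coupling argument uses.
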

\begin{proof}
For $u\in V$, let $d(u)$ denote the expected length of a random walk starting from the vertex $u$.
Notice that $d(t) = 0$.
For each vertex $u\in V$, we have the following equation:
\begin{align*}
d(u)=\sum_{v:f(u,v)>0}\frac{f(u,v)}{\sum_{v':f(u,v')>0} f(u,v')} \cdot (d(v) + w(u,v)).
\end{align*}
By reordering the terms, we have:
\begin{align*}
\left(\sum_{v':f(u,v')>0}f(u,v')\cdot d(u)\right) - \left(\sum_{v:f(u,v)>0} f(u,v) \cdot d(v)\right) = \sum_{v:f(u,v)>0} f(u,v)\cdot w(u,v).
\end{align*}
By summation over all vertices $u\in V$, we have:
\begin{align*}
&\sum_{u\in V}\left(\left(\sum_{v':f(u,v')>0}f(u,v')\cdot d(u)\right) - \left(\sum_{v:f(u,v)>0} f(u,v) \cdot d(v)\right)\right) = \sum_{u\in V}\sum_{v:f(u,v)>0} f(u,v)\cdot w(u,v)\\
\Rightarrow& \sum_{u\in V} d(u)\cdot \left(\left(\sum_{v':f(u,v')>0} f(u,v')\right)-\left(\sum_{v:f(v,u)>0} f(v,u)\right)\right) = \|Wf\|_1\\
\Rightarrow & \sum_{u\in V} d(u)\cdot b_u = \|Wf\|_1.
\end{align*}
Since $d(t)=0$, the expected length of a random walk corresponding to $f$ is $\sum_{u\in V\setminus\{t\}} d(u)\cdot b_u= \sum_{u\in V} d(u)\cdot b_u = \|Wf\|_1$.
\end{proof}
According to the above lemma, if $f$ is a $(1+\varepsilon)$-approximation to the optimal $s-t$ flow, then an $s-t$ random walk corresponding to $f$ is a $(1+\varepsilon)$-approximate shortest path.
However, simulating a random walk in parallel may not be easy. 
Instead, we will show how to in parallel find a path of which expected length is at most the expected length of a random walk (Algorithm~\ref{alg:stpath}).

\begin{algorithm}[h]
	\caption{Finding an $s-t$ Path} \label{alg:stpath}
	\begin{algorithmic}[1]
		\small
		\Procedure{\textsc{FindPath}}{$G=(V,E,w),s,t\in V,\varepsilon\in(0,0.5)$} 
		\State Output: $p=(u_0,u_1,u_2,\cdots,u_h)$
		\State If $s=t$, return $p=(s)$.
		\State $n\gets |V|, m\gets |E|$. Initialize a demand vector $b\in\mathbb{R}^n$: $b_s\gets 1,b_t\gets-1,\forall v\not=s,t,b_v\gets 0$.
		\State Compute a $(1+\varepsilon)$-approximate uncapacitated minimum cost flow $f$ satisfying $b$.\label{sta:compute_flow} \Comment{Theorem~\ref{thm:parallel_min_cost_flow}.}
		\State For each vertex $u\in V\setminus \{t\}$, set the pointer $\p(u)\gets v\in\{v'\in V\mid f(u,v')>0\}$ with probability 
		\begin{align*}
			\frac{f(u,v)}{\sum_{v':f(u,v')>0}f(u,v')}.
		\end{align*}\label{sta:select_out_edge}
		\State Let $G'=(V,E')$, where $E'=\{\{u,\p(u)\}\mid u\in V\setminus\{t\}\}$. \label{sta:construct_Gprime}
		\State Compute a spanning forest of $G'$. For $u\in V$, if $u$ is in the same connected component as $t$, set $\rt(u)\gets t$; otherwise set $\rt(u)\gets v$ where $v$ is in the same connected component as $u$ and the edge $\{v,\p(v)\}$ does not appear in the spanning forest. \label{sta:compute_spanning_forest}
		\State For $u\in V$, compute $l(u)\gets w(u,\p(u))+w(\p(u),\p(\p(u)))+\cdots+w(\p(\cdots\p(u)),\rt(u))$.\label{sta:dis_to_root}
		\State Set $V''\gets \{v\in V\mid \rt(v) = v\},E''\gets \{\{\rt(u),\rt(v)\}\mid\{u,v\}\in E\}$.
		\State For each $e''=\{u'',v''\}\in E''$, set 
		\begin{align*}
		\map(e'')\gets \arg\min_{\underset{\rt(u)=u'',\rt(v)=v''}{\{u,v\}\in E:}} l(u)+w(u,v)+l(v).
		\end{align*}
		\State For each $e''\in E''$, set $w''(e'')\gets l(u)+w(u,v)+l(v)$, where $\{u,v\}=\map(e'')$. \label{sta:setup_weight_wdoubleprime}
		\State $p''=(u''_0,u''_1,\cdots,u''_{h''})\gets$\textsc{FindPath}$(G''=(V'',E'',w''),\rt(s),\rt(t),\varepsilon)$.\label{sta:recursion} \Comment{Recursion.}
		\State Create $p$ by replacing $u''_0$ with 
		\begin{align*}
		(s,\p(s),\p(\p(s)),\cdots,\rt(s)),
		\end{align*}
		and replacing each edge $\{u''_{i-1},u''_i\}$ of $p''$ with a path
		\begin{align*}
		(\rt(x_i),\p(\cdots\p(x_i)),\cdots,\p(x_i),x_i,y_{i},\p(y_{i}),\p(\p(y_{i})),\cdots,\rt(y_{i})),
		\end{align*}
		where $\{x_i,y_i\}=\map(\{u''_{i-1},u''_i\})$. \label{sta:create_p}
		\State Shortcut all cycles of $p$ and return $p$. \label{sta:shortcut_cycle}
		\EndProcedure
	\end{algorithmic}
\end{algorithm}

\begin{lemma}[Work and depth of parallel $s-t$ approximate shortest path]\label{lem:work_depth_stpath}
Given an $\varepsilon\in(0,0.5)$, a connected $n$-vertex $m$-edge undirected graph $G=(V,E,w)$ with $w:E\rightarrow \mathbb{Z}_{\geq 0}$, and two vertices $s,t\in V$, \textsc{FindPath}$(G,s,t,\varepsilon)$ (Algorithm~\ref{alg:stpath}) can be implemented in PRAM with $\varepsilon^{-2}\poly(\log(n\Lambda))$ depth and expected $\varepsilon^{-2}m\poly(\log(n\Lambda))$ work, where $\Lambda=\max_{e\in E} w(e)$.
\end{lemma}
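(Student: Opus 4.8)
The plan is to separate the analysis into (i) a bound of $O(\log n)$ on the recursion depth of \textsc{FindPath}, and (ii) a bound of $\varepsilon^{-2}\poly(\log(n\Lambda))$ depth and expected $\varepsilon^{-2}m\poly(\log(n\Lambda))$ work for a single invocation excluding its recursive call; multiplying the two then yields the lemma, since the randomness only enters through the flow computation while the depth bound is deterministic.

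For the recursion depth, the key structural observation is that $G'=(V,E')$ built in line~\ref{sta:construct_Gprime} is the undirected image of a functional graph: every $u\neq t$ has exactly one out-pointer $\p(u)$ and $t$ has none, and $\{u,\p(u)\}\in E'$ keeps $u$ and $\p(u)$ in one component. Hence each component of $G'$ either contains $t$, in which case it is a tree rooted at $t$, or it does not, in which case following $\p$ from any of its vertices must cycle inside the component, so the component is unicyclic and has at least two vertices. Since line~\ref{sta:compute_spanning_forest} assigns exactly one root per component, $|V''|$ equals the number of components, which is at most $1+(n-1)/2=(n+1)/2$; this is at most $2n/3$ for $n\ge 3$, so the vertex count drops from $n$ to $1$ within $O(\log n)$ levels, and since $t$ is always a root the graph at the bottom is $\{t\}$ with $s=t$, triggering the base case. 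Two bookkeeping facts keep the logarithmic factors uniform across levels: $|E''|\le |E|$ because each edge of $G$ maps to at most one edge of $G''$, so the edge count never grows; and each $l(u)$ is a sum of at most $n$ weights, each at most $\Lambda$, so $w''(e'')=l(u)+w(u,v)+l(v)\le 3n\Lambda$, whence after $O(\log n)$ levels the maximum weight is at most $(3n)^{O(\log n)}\Lambda$ and its logarithm is still $\poly(\log(n\Lambda))$.

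For the per-level cost I would walk through Algorithm~\ref{alg:stpath}. Line~\ref{sta:compute_flow} costs $\varepsilon^{-2}\poly(\log(n\Lambda))$ depth and expected $\varepsilon^{-2}m\poly(\log(n\Lambda))$ work by Theorem~\ref{thm:parallel_min_cost_flow}, and this term dominates. Line~\ref{sta:select_out_edge} and the construction of $G'$ only need, for each $u$, the sum of its positive out-flows (a segmented sum over the edge list), so $\poly(\log n)$ depth and $\wt O(m)$ work. Line~\ref{sta:compute_spanning_forest} is a standard parallel connectivity/spanning-forest computation on a graph with at most $n-1$ edges (e.g.\ \cite{cc96}), in $\poly(\log n)$ depth and $\wt O(n)$ work, after which the unique non-forest edge of each non-$t$ component, hence its root, is identified in $\wt O(n)$ work. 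Line~\ref{sta:dis_to_root} computes $l(u)$ by pointer jumping with accumulation on the functional graph after installing self-loops at all roots, again $\poly(\log n)$ depth and $\wt O(n)$ work. Forming $V''$, $E''$, $\map$, $w''$ is a sort to de-duplicate contracted edges, $\poly(\log n)$ depth and $\wt O(m)$ work. The one step needing a small argument is lines~\ref{sta:create_p}--\ref{sta:shortcut_cycle}: since the returned $p''$ is already simple, its vertices $u''_0,\dots,u''_{h''}$ are distinct roots, so the subtrees $T_{u''_0},\dots,T_{u''_{h''}}$ are pairwise disjoint; inside $T_{u''_j}$ the reconstructed walk only traverses the root-to-$y_j$ and root-to-$x_{j+1}$ paths, of combined length at most $2|T_{u''_j}|$, and $\sum_j|T_{u''_j}|\le n$, so the walk $p$ has length $O(n)$. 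Materializing $p$ is then extraction of $O(1)$ root-chains per contracted edge by pointer jumping and concatenation by list ranking ($\poly(\log n)$ depth, $\wt O(n)$ work), and shortcutting all cycles of a length-$O(n)$ walk is done by sending each position to the position after the last occurrence of its vertex and pointer jumping along that map ($\poly(\log n)$ depth, $\wt O(n)$ work).

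Summing the steps gives $\varepsilon^{-2}\poly(\log(n\Lambda))$ depth and expected $\varepsilon^{-2}m\poly(\log(n\Lambda))$ work per invocation, and multiplying by the $O(\log n)$ recursion depth (using $m_i\le m$ and the quasi-polynomial weight bound at every level) gives the claimed $\varepsilon^{-2}\poly(\log(n\Lambda))$ total depth and expected $\varepsilon^{-2}m\poly(\log(n\Lambda))$ total work. I expect the main obstacles to be exactly the two places requiring genuine structural reasoning: the functional-graph/unicyclic-component argument that caps $|V''|$ at $(n+1)/2$ (hence the recursion depth), and the disjoint-subtree argument that caps the intermediate walk $|p|$ at $O(n)$ so that the path reconstruction and cycle-shortcutting do not blow up; everything else reduces to Theorem~\ref{thm:parallel_min_cost_flow} and routine low-depth primitives (connectivity, Euler tours, list ranking, pointer jumping with accumulation).
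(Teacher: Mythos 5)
Your proposal is correct and follows essentially the same route as the paper: bound the per-invocation cost by Theorem~\ref{thm:parallel_min_cost_flow} plus standard parallel primitives (connectivity, pointer jumping/doubling, prefix operations), and multiply by the $O(\log n)$ recursion depth coming from the fact that every non-$t$ component of $G'$ is unicyclic with at least two vertices, so $|V''|\leq\lceil n/2\rceil$. Your two refinements --- bounding the intermediate walk by $O(n)$ via disjoint subtrees rather than the paper's ``each edge appears at most twice,'' and explicitly checking that the edge weights of $G''$ grow at most quasi-polynomially across levels so the $\log\Lambda$ factors remain $\poly(\log(n\Lambda))$ --- are welcome extra care on points the paper treats tersely, not a different argument.
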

\begin{proof}
Line~\ref{sta:compute_flow} can be done in $\varepsilon^{-2}\poly(\log(n\Lambda))$ depth using expected $\varepsilon^{-2}m\poly(\log(n\Lambda))$ work by Theorem~\ref{thm:parallel_min_cost_flow}.
We can repeat line~\ref{sta:compute_flow} $\Theta(\log n)$ times to boost the success probability to $1-n^{-10}$.
It only increases the work by a $O(\log n)$ factor.
Line~\ref{sta:select_out_edge} can be done in $\poly(\log n)$ depth using $m\poly(\log n)$ work.
In line~\ref{sta:compute_spanning_forest}, computing connected components and a spanning forest can be done in $\poly(\log n)$ depth using $m\poly(\log n)$ work~\cite{sv82}.
In line~\ref{sta:dis_to_root}, we can use doubling algorithm to compute $l(u)$ for all $u\in V$ simultaneously in $\poly(\log n)$ depth using $n\poly(\log n)$ work (see e.g., the doubling algorithm mentioned in~\cite{asswz18}).
Suppose the number of hops of $p$ before shortcutting cycles is $h$.
Then the path $p$ in line~\ref{sta:create_p} can be obtained in $\poly(\log(nh))$ depth using $(n+h)\cdot \poly(\log(nh))$ work by a doubling algorithm (see e.g.,~\cite{asswz18}).
Notice that the path $p''$ obtained by line~\ref{sta:recursion} has no cycle and thus each edge $e\in E$ can appear in $p$ obtained by line~\ref{sta:create_p} at most twice. 
Hence, line~\ref{sta:create_p} can be implemented in $\poly(\log n)$ depth using $m\poly(\log n)$ work.
In line~\ref{sta:shortcut_cycle}, we use the following way to shortcut cycles of $p$:
\begin{enumerate}
\item For each vertex $v$, find its last appearance in $p$. 
\item For the vertex $u_i$ appeared in $p$, if there is $j\in\{0,1,\cdots,i-1\}$ such that the last appearance of $u_j$ is $u_i$ or after $u_i$, then remove $u_i$ from $p$. 
\end{enumerate}
To do the second step, we can use doubling algorithm to preprocess a prefix max which can be done in $\poly(\log n)$ depth using $m\poly(\log n)$ work.

Next, consider the number of recursions (line~\ref{sta:recursion}) needed.
Except vertex $t$, every vertex $u$ has an edge $\{u,\p(u)\}\in E'$. 
Thus, each connected component without $t$ in $G'$ has size at most $2$.
Notice that $|V''|$ is exactly the same as the number of connected components in $G'$. 
Thus, $|V''|\leq \lceil n/2\rceil$.
It implies that the number of recursions is at most $\lceil\log n\rceil$.
Thus, the overall depth is at most $\varepsilon^{-2}\poly(\log(n\Lambda))$ and the overall expected work is at most $\varepsilon^{-2}m\poly(\log(n\Lambda))$.
\end{proof}

\begin{lemma}[Correctness of parallel approximate $s-t$ shortest path]\label{lem:correctness_stpath}
Given an $\varepsilon\in (0,0.5)$, a connected $n$-vertex $m$-edge undirected graph $G=(V,E,w)$ with $w:E\rightarrow \mathbb{Z}_{\geq 0}$, and two vertices $s,t\in V$, let $p$ be the output of \textsc{FindPath}$(G,s,t,\varepsilon)$ (Algorithm~\ref{alg:stpath}). $\E[w(p)]\leq (1+2\varepsilon)^{\lceil \log n\rceil} \cdot  \dist_G(s,t)$. 
\end{lemma}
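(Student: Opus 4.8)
The plan is to induct on $n=|V|$, with the heart of the argument being a single estimate comparing the recursion's output to the cost of the flow $f$ computed in line~\ref{sta:compute_flow}. First reductions: shortcutting in line~\ref{sta:shortcut_cycle} never increases $w(p)$, so it suffices to bound the expected length of $p$ as produced in line~\ref{sta:create_p}. There $p$ is the recursively found path $p''$ with $u_0''$ replaced by a path from $s$ to $\rt(s)$ of length $l(s)$ and each edge $\{u_{i-1}'',u_i''\}$ replaced by a path of length exactly $w''(\{u_{i-1}'',u_i''\})$ (line~\ref{sta:setup_weight_wdoubleprime}); hence $w(p)\le l(s)+w''(p'')$, where $w''(p'')$ denotes the weight of $p''$ in $G''$. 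Now $\rt(t)=t$, $G''$ is connected, and every vertex other than $t$ is incident to an edge of $E'$, so each $G'$-component other than $t$'s has at least two vertices and $|V''|\le\lceil n/2\rceil$, whence $\lceil\log|V''|\rceil\le\lceil\log n\rceil-1$. Since the recursive call in line~\ref{sta:recursion} uses independent randomness, the induction hypothesis gives $\E[w''(p'')\mid G'']\le(1+2\varepsilon)^{\lceil\log n\rceil-1}\dist_{G''}(\rt(s),t)$, and as $(1+2\varepsilon)^{\lceil\log n\rceil-1}\ge1$,
\begin{align*}
\E[w(p)]\le(1+2\varepsilon)^{\lceil\log n\rceil-1}\,\E\big[\,l(s)+\dist_{G''}(\rt(s),t)\,\big].
\end{align*}
Since $f$ is a $(1+\varepsilon)$-approximate solution for the demand $b_s=1,b_t=-1$, whose optimum is $\dist_G(s,t)$, we have $\|Wf\|_1\le(1+\varepsilon)\dist_G(s,t)$. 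So (using $(1+\varepsilon)\le(1+2\varepsilon)$ and the trivial base case $s=t$) the lemma follows once we show
\begin{align*}
\E\big[\,l(s)+\dist_{G''}(\rt(s),t)\,\big]\le\|Wf\|_1 .
\end{align*}

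For this I invoke Lemma~\ref{lem:expected_length_of_random_walk}: since $b_t=-1$ and $b_v\ge0$ for $v\ne t$, the expected length of a random walk from $s$ corresponding to $f$ is exactly $\|Wf\|_1$. I couple such a walk with the pointers of line~\ref{sta:select_out_edge}: draw every $\p(u)$ as there, and run the walk so that on its first visit to a vertex $u$ it moves to $\p(u)$, and on any later visit it resamples an out-edge with the prescribed probabilities; because $\Pr[\p(u)=v]=f(u,v)/\sum_{v'}f(u,v')$ and the $\p(u)$ are independent, this is a genuine such walk. Starting at $s$, the walk traces the simple $\p$-path $s=a_0,a_1,\dots,a_m$ until it reaches $t$ or closes the cycle $a_j,\dots,a_m$ of $G'$ (line~\ref{sta:compute_spanning_forest}). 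If it reaches $t$, then $s$'s $G'$-component is the tree containing $t$, so $\rt(s)=t$, $\dist_{G''}(\rt(s),t)=0$, this $\p$-path is the forest path of length $l(s)$ (line~\ref{sta:dis_to_root}), and it is the entire coupled walk; hence $l(s)+\dist_{G''}(\rt(s),t)$ equals the walk length. Otherwise $\rt(s)$ is an endpoint of the discarded cycle edge, the forest path $s\to\rt(s)$ is again a prefix of the $\p$-path of length $l(s)$, and the coupled walk then continues from $\rt(s)$ around the cycle and thereafter as a fresh walk; this remaining walk is charged against $\dist_{G''}(\rt(s),t)$ by projecting it edge by edge into $G''$ via $w''(\{\rt(u),\rt(v)\})\le l(u)+w(u,v)+l(v)$ (line~\ref{sta:setup_weight_wdoubleprime}), using that consecutive walk vertices in a common $G'$-component contribute nothing to $\dist_{G''}$. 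Taking expectations over $\p$ yields the displayed inequality.

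The main obstacle is making the non-terminating branch of the coupling rigorous: one must identify precisely which cycle edge the spanning-forest routine of line~\ref{sta:compute_spanning_forest} discards (hence which cycle vertex becomes $\rt(s)$), verify that the forest path $s\to\rt(s)$ is indeed a $\p$-prefix of length $l(s)$, and — most delicately — show that the fresh random-walk tail can be absorbed into $\dist_{G''}(\rt(s),t)$ in a way that composes correctly through the recursion. I expect this to require a secondary induction, along a topological-type order of $V$ toward $t$ on the support of $f$, establishing $\E_{\p}\big[l(v)+\dist_{G''}(\rt(v),t)\big]\le d(v)$ for all $v$, where $d(v)$ is the expected walk length from $v$ to $t$ (so $d(s)=\|Wf\|_1$ and $v=s$ is the case needed). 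Minor points to dispatch along the way: the degenerate situation where $\p(u)=v$ and $\p(v)=u$ collapse to a single $G'$-edge, which perturbs only the component/root bookkeeping; and the at-most-$n^{-\Omega(1)}$-probability event that some recursive flow computation of line~\ref{sta:compute_flow} is not $(1+\varepsilon)$-approximate, whose contribution to $\E[w(p)]$ is dominated by the trivial bound $w(p)\le n\Lambda$ on a simple path (or eliminated by re-running, which the expected-work guarantee permits).
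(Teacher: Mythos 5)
Your skeleton matches the paper's: induction on $n$, the reduction $w(p)\le l(s)+w''(p'')$, and a coupling in which the random walk corresponding to $f$ follows $\p(u)$ on its first visit to $u$ and resamples on later visits. But your pivotal inequality $\E_{\p}\bigl[l(s)+\dist_{G''}(\rt(s),t)\bigr]\le \|Wf\|_1$ is too strong, and the step where you "project the walk edge by edge into $G''$" is exactly where it breaks. When the walk crosses from one $G'$-component to another along an edge $\{u,v\}$, the corresponding $G''$-edge costs up to $l(u)+w(u,v)+l(v)$; the terms $l(u)$ and $l(v)$ must be paid for by the walk's traversal of the forest paths $u\to\rt(u)$ and $v\to\rt(v)$, and when a component is entered at $u$ and exited at a different vertex $u'$, those two forest paths overlap and their common edges get charged \emph{twice}, while the walk traverses them only once. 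Concretely: route one unit $s\to c\to t$ and add a flow cycle $c\to d\to e\to c$ of value $\delta$ with $w(c,d)$ much larger than $w(d,e)+w(e,c)$. With probability $\frac{\delta}{1+\delta}$ one gets $\p(c)=d$, the component of $s$ is the cycle $c,d,e$ plus the pendant $s$, and if the discarded cycle edge makes $d$ the root then $l(s)+\dist_{G''}(\rt(s),t)=w(s,c)+2w(c,d)+w(c,t)$, whose expectation exceeds $\|Wf\|_1=w(s,c)+w(c,t)+\delta\bigl(w(c,d)+w(d,e)+w(e,c)\bigr)$ for small $\delta$. The same example defeats the per-vertex invariant $\E_{\p}[l(v)+\dist_{G''}(\rt(v),t)]\le d(v)$ that you propose for your secondary induction, so that route cannot be repaired as stated.

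The missing idea — and the reason the lemma carries a factor $(1+2\varepsilon)$ per level rather than $(1+\varepsilon)$ — is the paper's decomposition of the coupled walk's edges into \emph{crucial} and \emph{redundant} ones. The crucial edges alone form an $s$--$t$ path, so their weight is at least $\dist_G(s,t)$; since $\E[w(\wh p)]=\|Wf\|_1\le(1+\varepsilon)\dist_G(s,t)$, the expected total weight of redundant edges is at most $\varepsilon\dist_G(s,t)$. One then exhibits an explicit path $\wt p''$ in $G''$ from $\rt(s)$ to $t$ whose lift to $G$ uses each walk edge at most twice, with the doubly-used edges provably redundant; this yields $\E[l(s)+\dist_{G''}(\rt(s),t)]\le \E[w(\wh p)]+\varepsilon\dist_G(s,t)\le(1+2\varepsilon)\dist_G(s,t)$, which is the bound your induction actually needs. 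Without absorbing the double-counted forest-path edges into the $\varepsilon$-slack of the flow's cost, the argument does not close.
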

\begin{proof}
Our proof is by induction on the number of vertices of $G$.
When $G$ only has one vertex, the statement is obviously true.
Now suppose the statement is true for any graph with less than $n$ vertices and consider $G$ with $n$ vertices.

For $u\in V$, let $l(u)$ be the same as that in line~\ref{sta:dis_to_root}.
Let $p$ be the output of \textsc{FindPath}$(G,s,t,\varepsilon)$.
Let $p''=(u_0'',u_1'',\cdots,u_{h''}'')$ be the output of line~\ref{sta:recursion}.
By line~\ref{sta:recursion}, $u_0''=\rt(s),u_{h''}''=\rt(t)$.
By line~\ref{sta:compute_spanning_forest}, we further have $u_{h''}''=\rt(t)=t$.
\begin{claim}\label{cla:output_path}
$w(p)\leq w''(p'')+l(s)$.
\end{claim}
\begin{proof}
	Notice that 
	\begin{align*}
	w(p) & \leq w((s,\p(s),\p(\p(s)),\cdots,\rt(s)))\\
	&+\sum_{\underset{\{x_i,y_i\}=\map(\{u_{i-1}'',u_i''\})}{x_i,y_i:}} w((\rt(x_i),\cdots,\p(x_i),x_i,y_{i},\p(y_{i}),\cdots,\rt(y_{i})))\\
	&= l(s) + \sum_{\underset{\{x_i,y_i\}=\map(\{u_{i-1}'',u_i''\})}{x_i,y_i:}} l(x_i) + w(x_i,y_i) + l(y_i)\\
	&= l(s) + \sum_{i=1}^{h''} w''(u_{i-1}'',u_i'') \\
	& = l(s) + w''(p''),
	\end{align*}
where the first inequality follows from line~\ref{sta:shortcut_cycle}, the first equality follows from line~\ref{sta:dis_to_root}, and the second equality follows from line~\ref{sta:setup_weight_wdoubleprime}.
\end{proof}

\begin{claim}\label{cla:small_shortest_path}
$\E_{\p:V\setminus\{t\}\rightarrow V}[l(s)+\dist_{G''}(\rt(s),\rt(t))]\leq (1+2\varepsilon) \dist_G(s,t)$.
\end{claim}
\begin{proof}
Let $f$ be the flow obtained by line~\ref{sta:compute_flow} satisfying $\sum_{\{u,v\}\in E} w(u,v)\cdot |f(u,v)|\leq (1+\varepsilon)\dist_G(s,t)$. 
Our proof is by coupling.
By Lemma~\ref{lem:expected_length_of_random_walk}, we only need to prove that $\E[l(s)+\dist_{G''}(\rt(s),\rt(t))]$ is almost upper bounded by the expected length of a random walk corresponding to the flow $f$.

For $u\in V$, let $\p(u)$ be the same as that in Algorithm~\ref{alg:stpath}. 
We conceptually generate a random walk $\wh{p}$ corresponding to $f$ in the following way:
\begin{enumerate}
\item Set $i\gets 0$ and set $\wh{u}_0\gets s$.
\item If $\forall j\in\{0,1,\cdots,i-1\},\wh{u}_j\not=\wh{u}_i$, then set $\wh{u}_{i+1}\gets \p(\wh{u}_i)$. 
Otherwise, set $\wh{u}_{i+1}$ to be $ v\in\{v'\in V\mid f(\wh{u}_i,v')>0\}$ with probability $\frac{f(\wh{u}_i,v)}{\sum_{v':f(\wh{u}_i,v')>0} f(\wh{u}_i,v')}$.  \label{it:coupling_next_step}
\item If $\wh{u}_{i+1}\not =t$, set $i\gets i+1$, and repeat step~\ref{it:coupling_next_step}. 
Otherwise, $\hat{h}\gets i+1$, and output the path $\wh{p}=(\wh{u}_0,\wh{u}_1,\cdots,\wh{u}_{\wh{h}})$ as the random walk.
\end{enumerate}
It is easy to see that $\wh{p}$ is a random walk corresponding to the flow $f$ and thus $\E[w(\wh{p})] = \sum_{\{u,v\}\in E} w(u,v)\cdot |f(u,v)|$ by Lemma~\ref{lem:expected_length_of_random_walk}.
For each edge $\{\wh{u}_{i-1},\wh{u}_i\}$ in the path $\wh{p}$, if $\exists j\in\{0,1,\cdots,i-1\},k\in\{i,i+1,\cdots,\wh{h}\}$ such that $\wh{u}_j=\wh{u}_k$, we call edge $\{\wh{u}_{i-1},\wh{u}_i\}$ a \emph{redundant} edge. 
Otherwise, we call $\{\wh{u}_{i-1},\wh{u}_i\}$ a \emph{crucial} edge.
By the above construction, $\hat{p}$ has several good properties.
\begin{fact}\label{fac:first_appearance}
If $\wh{u}_i$ is the first appearance of a vertex $v$, then $\wh{u}_{i+1}$ must be $\p(v)$.
\end{fact} 

\begin{fact}\label{fac:appeared_edges}
If vertex $v\in V$ appears in $\wh{p}$ and $v\not=t$, all edges $\{v,\p(v)\},\{\p(v),\p(\p(v))\},$ $\{\p(\p(v)),\p(\p(\p(v)))\},\cdots,$ should be in $\wh{p}$. 
\end{fact}
\begin{proof}
It directly follows from Fact~\ref{fac:first_appearance}.
\end{proof}

\begin{fact}\label{fac:twice_redundant}
If vertex $v$ appears at least twice in $\wh{p}$, $\{v,\p(v)\}$ is a redundant edge.
\end{fact}
\begin{proof}
Suppose $\wh{u}_i$ is the first appearance of $v$, then $\wh{u}_{i+1}=\p(v)$ (Fact~\ref{fac:first_appearance}).
Suppose $\wh{u}_j$ is the second appearance of $v$, we have $\wh{u}_i=\wh{u}_j$ and $j\geq i+1$.
$\{v,\p(v)\}=\{\wh{u}_i,\wh{u}_{i+1}\}$ is a redundant edge by definition. 
\end{proof}

\begin{fact}\label{fac:follows_redundant}
If $\{v,\p(v)\}$ is a redundant edge, $\{\p(v),\p(\p(v))\}$ is a redundant edge.
\end{fact}
\begin{proof}
If $\p(v)$ appears in $\wh{p}$ at least twice, $\{\p(v),\p(\p(v))\}$ is a redundant edge due to Fact~\ref{fac:twice_redundant}.
Otherwise, suppose $\hat{u}_i$ is the first appearance of $v$. 
$\hat{u}_{i+1}$ must be $\p(v)$ due to Fact~\ref{fac:first_appearance}.
Since $\p(v)$ only appears once, $\hat{u}_{i+1}$ is the first appearance of $\p(v)$.
By Fact~\ref{fac:first_appearance} again, $\hat{u}_{i+2}$ must be $\p(\p(v))$.
Since $\{\wh{u}_i,\wh{u}_{i+1}\}$ is a redundant edge and $\wh{u}_{i+1}$ only appears once, by the definition of redundant edge, there is $j\leq i$ and $k\geq i+2$ such that $\wh{u}_j=\wh{u}_k$.
By the definition, $\{\p(v),\p(\p(v))\}=\{\wh{u}_{i+1},\wh{u}_{i+2}\}$ is a redundant edge.
\end{proof}

\begin{fact}\label{fac:cost_redundant_edge}
	\begin{align*}
	\E_{\wh{p}}\left[\sum_{i:\{\wh{u}_{i-1},\wh{u}_i\}\text{ is redundant}} w(\wh{u}_{i-1},\wh{u}_i)\right] \leq \varepsilon\cdot \dist_G(s,t).
	\end{align*}
\end{fact}
\begin{proof}
Observe that if we remove all redundant edges from $\wh{p}$, the remaining edges still form a path from $s$ to $t$.
Thus, we always have:
\begin{align*}
\sum_{i:\{\wh{u}_{i-1},\wh{u}_i\}\text{ is crucial}} w(\wh{u}_{i-1},\wh{u}_i) \geq \dist_G(s,t).
\end{align*}
On the other hand, we have:
\begin{align*}
 &\E\left[\sum_{i:\{\wh{u}_{i-1},\wh{u}_i\}\text{ is crucial}} w(\wh{u}_{i-1},\wh{u}_i)\right] + \E\left[\sum_{i:\{\wh{u}_{i-1},\wh{u}_i\}\text{ is redundant}} w(\wh{u}_{i-1},\wh{u}_i)\right]\\
  =&\E\left[w(\wh{p})\right]=  \sum_{\{u,v\}\in E} w(u,v)\cdot |f(u,v)| \leq (1+\varepsilon)\dist_G(s,t),
\end{align*}
where the second equality follows from Lemma~\ref{lem:expected_length_of_random_walk} and the last inequality follows from that $f$ is a $(1+\varepsilon)$-approximate uncapacitated minimum cost $s-t$ flow.
Thus,
\begin{align*}
\E\left[\sum_{i:\{\wh{u}_{i-1},\wh{u}_i\}\text{ is redundant}} w(\wh{u}_{i-1},\wh{u}_i)\right] \leq \varepsilon\cdot \dist_G(s,t).
\end{align*}
\end{proof}
Next, we show how to find a path in $G''$ of which cost is at most $\sum_{i:\{\wh{u}_{i-1},\wh{u}_i\}\text{ is crucial}} w(\wh{u}_{i-1},\wh{u}_i) + 2 \sum_{i:\{\wh{u}_{i-1},\wh{u}_i\}\text{ is redundant}} w(\wh{u}_{i-1},\wh{u}_i)$.
Let us conceptually construct a path $\wt{p}$ in $G$ using the following way:
\begin{enumerate}
\item Initialize $\wt{p}=(s,\p(s),\p(\p(s)),\cdots,\rt(s)),i\gets 0$. Let $\wt{u}_0''\gets \rt(s)$.
\item Let $\wh{u}_{k_i}$ be the last vertex in $\wh{p}$ such that $\wh{u}_{k_i}$ is in the same connected component as $\wt{u}''_i$ in $G'$ ($G'$ is constructed by line~\ref{sta:construct_Gprime}).
\item If $\wt{u}''_i = t$, let $\wt{h}''\gets i$ and finish the procedure. \label{it:check_target}
\item Otherwise, concatenate 
\begin{align*}
	(\rt(\wh{u}_{k_i}) = \wt{u}_i'', \cdots \p(\p(\wh{u}_{k_i})),\p(\wh{u}_{k_i}),\wh{u}_{k_i},\wh{u}_{k_i+1},\p(\wh{u}_{k_i+1}),\p(\p(\wh{u}_{k_i+1})),\cdots,\rt(\wh{u}_{k_i+1}))
\end{align*}
to $\wt{p}$ and set $\wt{u}_{i+1}''\gets \rt(\wh{u}_{k_i+1})$.
\item Set $i\gets i + 1$ and go to step~\ref{it:check_target}.
\end{enumerate}

\begin{fact}\label{fac:root_appears_once}
$\forall i\not=j\in\{0,1,\cdots,\wt{h}''\}, \wt{u}_i''\not=\wt{u}_j''$.
\end{fact}
\begin{proof}
This follows from that $\wh{u}_{k_i}$ is the last vertex in $\hat{p}$ such that $\wh{u}_{k_i}$ is in the same connected component as $\wt{u}_i''$ in $G'$ and $k_0<k_1<k_2<\cdots<k_{\wh{h}''}$.
\end{proof}
\begin{fact}\label{fac:apears_in_phat}
Each edge in $\wt{p}$ appears in $\wh{p}$.
\end{fact}
\begin{proof}
since $\wh{u}_0=s$, edges $\{s,\p(s)\},\{\p(s),\p(\p(s))\},\{\p(\p(s)),\p(\p(\p(s)))\},\cdots$ must appear in $\wh{p}$ by Fact~\ref{fac:appeared_edges}.
By the choice of $\wh{u}_{k_i}$ and $\wh{u}_{k_i+1}$, edge $\{\wh{u}_{k_i},\wh{u}_{k_i+1}\}$ appears in $\wh{p}$.
By Fact~\ref{fac:appeared_edges} again, edges $\{\wh{u}_{k_i},\p(\wh{u}_{k_i})\}, \{\p(\wh{u}_{k_i}),\p(\p(\wh{u}_{k_i}))\},$ $\{\p(\p(\wh{u}_{k_i})),\p(\p(\p(\wh{u}_{k_i})))\},\cdots$ and edges $\{\wh{u}_{k_i+1},\p(\wh{u}_{k_i+1})\}, \{\p(\wh{u}_{k_i+1}),\p(\p(\wh{u}_{k_i+1}))\},$ $\{\p(\p(\wh{u}_{k_i+1})),\p(\p(\p(\wh{u}_{k_i+1})))\},\cdots$ also appear in $\wh{p}$.
\end{proof}

\begin{fact}\label{fac:appear_at_most_twice}
Each edge in $\wt{p}$ can appear at most twice. In addition, if an edge $e$ in $\wt{p}$ appears twice, $e$ is a redundant edge.
\end{fact}
\begin{proof}
Suppose $e=\{u,v\}$ appears in $\wt{p}$.
By Fact~\ref{fac:root_appears_once}, $\rt(u),\rt(v)$ only appears once in $\wt{p}$.
If $u$ and $v$ are not in the same connected component in $G'$, then either $u=\wh{u}_{k_i},v=\wh{u}_{k_i+1}$ or $v=\wh{u}_{k_i},u=\wh{u}_{k_i+1}$ for some $i\in\{0,1,\cdots,\wt{h}''\}$.
In this case, $\{u,v\}$ only appears once in $\wt{p}$.
If $u$ and $v$ are in the same connected component, then either $v=\p(u)$ or $u=\p(v)$.
Without loss of generality, suppose $v=\p(u)$.
If $(u,\p(u))$ appears in $\wt{p}$, the subpath $(u,\p(u),\p(\p(u)),\cdots,\rt(u))$ appears in $\wt{p}$.
If $(\p(u),u)$ appears in $\wt{p}$, the subpath $(\rt(u),\cdots,\p(\p(u)),\p(u),u)$ appears in $\wt{p}$.
As mentioned previously, $\rt(u)$ only appears once in $\wt{p}$.
Thus, $(u,\p(u))$ can appear at most once in $\wt{p}$ and $(\p(u),u)$ can appear at most once in $\wt{p}$ which means that $\{u,\p(u)\}$ can appear at most twice in $\wt{p}$.

Suppose $\{u,\p(u)\}$ appears twice in $\wt{p}$, then by the construction of $\wt{p}$, the subpath 
\begin{align}\label{eq:subpath_in_tree}
(u,\p(u),\p(\p(u)),\cdots,\rt(u),\cdots,\p(\p(u)),\p(u),u)
\end{align}
must appear in $\wt{p}$. 
More precisely, the subpath \eqref{eq:subpath_in_tree} should appear in the subpath 
\begin{align*}
(\wh{u}_{k_i+1},\p(\wh{u}_{k_i+1}),\cdots,u,\p(u),\cdots,\rt(u),\cdots,\p(u),u,\cdots,\p(\wh{u}_{k_{i+1}}),\wh{u}_{k_{i+1}})
\end{align*}
for some $i\in\{0,1,2,\cdots,\wt{h}''-1\}$.
If $\wh{u}_{k_i+1}=\wh{u}_{k_{i+1}}$, it means that $\wh{u}_{k_i+1}$ appears twice in $\wh{p}$.
In this case, by Fact~\ref{fac:twice_redundant}, $\{\wh{u}_{k_i+1},\p(\wh{u}_{k_i+1})\}$ is a redundant edge.
By Fact~\ref{fac:follows_redundant}, 
\begin{align*}
\{\wh{u}_{k_i+1},\p(\wh{u}_{k_i+1})\},\{\p(\wh{u}_{k_i+1}),\p(\p(\wh{u}_{k_i+1}))\},\{\p(\p(\wh{u}_{k_i+1})),\p(\p(\p(\wh{u}_{k_i+1})))\},\cdots
\end{align*}
 are redundant edges. 
Thus, $\{u,\p(u)\}$ is a redundant edge.
In the case when $\wh{u}_{k_i+1}\not=\wh{u}_{k_{i+1}}$, we can find two vertices $x\not=y$ such that $x=\p(\p(\cdots\p(\wh{u}_{k_i+1})))$, $y=\p(\p(\cdots\p(\wh{u}_{k_{i+1}})))$ and $\p(x)=\p(y)$.
By Fact~\ref{fac:apears_in_phat}, both $\{x,\p(x)\}$ and $\{y,\p(y)\}$ appear in $\wh{p}$ which means that $\p(x)$ appears twice in $\wh{p}$. 
By Fact~\ref{fac:twice_redundant}, $\{\p(x),\p(\p(x))\}$ is a redundant edge.
Since $u=\p(\p(\cdots\p(x)))$, $\{u,\p(u)\}$ must be a redundant edge according to Fact~\ref{fac:follows_redundant}. 
\end{proof}
By Fact~\ref{fac:apears_in_phat} and Fact~\ref{fac:appear_at_most_twice}, we have
\begin{align}\label{eq:ub_in_Gprime}
w(\wt{p})\leq \sum_{i:\{\wh{u}_{i-1},\wh{u}_i\}\text{ is crucial}} w(\wh{u}_{i-1},\wh{u}_i) + 2\sum_{i:\{\wh{u}_{i-1},\wh{u}_i\}\text{ is redundant}} w(\wh{u}_{i-1},\wh{u}_i).
\end{align}
Let $\wt{p}''=(\wt{u}_0'',\wt{u}_1'',\wt{u}_2'',\cdots,\wt{u}_{\wt{h}''}'')$. 
It is obvious that $\wt{p}''$ is a path in $G''$ connecting $\rt(s)$ and $\rt(t)$.
In addition, we have $w(\wt{p})=l(s)+w''(\wt{p}'')$.
Thus, to conclude,
\begin{align*}
&\E\left[l(s)+\dist_{G''}(\rt(s),\rt(t))\right]\\
\leq & \E[l(s)+w''(\wt{p}'')] \\
\leq &\E\left[\sum_{i:\{\wh{u}_{i-1},\wh{u}_i\}\text{ is crucial}} w(\wh{u}_{i-1},\wh{u}_i) + 2\sum_{i:\{\wh{u}_{i-1},\wh{u}_i\}\text{ is redundant}} w(\wh{u}_{i-1},\wh{u}_i)\right]\\
\leq & \E[w(\wh{p})] + \varepsilon\cdot \dist_G(s,t)\\
= & \left(\sum_{\{u,v\}\in E} w(u,v)\cdot |f(u,c)|\right) + \varepsilon\cdot \dist_G(s,t)\\
\leq & (1+2\varepsilon) \dist_G(s,t),
\end{align*}
where the first step follows from that $\wt{p}''$ is a path in $G''$ connecting $\rt(s)$ and $\rt(t)$, the second step follows from Equation~\ref{eq:ub_in_Gprime}, the third step follows from Fact~\ref{fac:cost_redundant_edge}, the forth step follows from Lemma~\ref{lem:expected_length_of_random_walk}, and the last step follows from that $f$ is a $(1+\varepsilon)$-approximate uncapacitated minimum cost flow from $s$ to $t$.
\end{proof}
As proved in the proof of Lemma~\ref{lem:work_depth_stpath}, $|V''|\leq \lceil n/2\rceil$. 
By induction hypothesis, we have $\E[w''(p'')]\leq (1+2\varepsilon)^{\lceil\log n\rceil-1}\cdot  \E[\dist_{G''}(\rt(s),\rt(t))]$.
Thus, we have
\begin{align*}
\E[w(p)] & \leq \E[w''(p'') + l(s)] \\
&\leq (1+2\varepsilon)^{\lceil\log n\rceil-1}\E[\dist_{G''}(\rt(s),\rt(t))] + \E[l(s)]\\
&\leq (1+2\varepsilon)^{\lceil\log n\rceil-1}\E[\dist_{G''}(\rt(s),\rt(t))+l(s)]\\
&\leq (1+2\varepsilon)^{\lceil\log n\rceil}\dist_G(s,t),
\end{align*}
where the first step follows from Claim~\ref{cla:output_path}, the second step follows from induction hypothesis, the last step follows from Claim~\ref{cla:small_shortest_path}.
\end{proof}

\begin{theorem}[Parallel approximate $s-t$  shortest path]\label{thm:st_shortest_path}
Given an $\varepsilon\in(0,0.5)$, a connected $n$-vertex $m$-edge undirected graph $G=(V,E,w)$ with $w:E\rightarrow \mathbb{Z}_{\geq 0}$, and two vertices $s,t\in V$, there is a PRAM algorithm which takes $\varepsilon^{-2}\poly(\log(n\Lambda))$ parallel time using expected $\varepsilon^{-3}m\poly(\log(n\Lambda))$ work and with probability at least $0.99$ outputs an $s-t$ path satisfying $w(p)\leq (1+\varepsilon)\cdot \dist_G(s,t)$.
\end{theorem}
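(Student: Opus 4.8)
The plan is to derive Theorem~\ref{thm:st_shortest_path} essentially by assembling the two preceding lemmas about \textsc{FindPath} (Algorithm~\ref{alg:stpath}) after a parameter rescaling, followed by a standard expectation-to-high-probability boost. First I would fix $\varepsilon' = c\varepsilon/\lceil\log n\rceil$ for a small absolute constant $c$, chosen so that $(1+2\varepsilon')^{\lceil\log n\rceil}\le 1+\varepsilon/2$; this is the only place the $1/\log n$ factor enters, exactly as in the technique overview. A single run of \textsc{FindPath}$(G,s,t,\varepsilon')$ then, by Lemma~\ref{lem:correctness_stpath}, returns an $s$--$t$ path $p$ with $\E[w(p)]\le (1+2\varepsilon')^{\lceil\log n\rceil}\dist_G(s,t)\le (1+\tfrac{\varepsilon}{2})\dist_G(s,t)$.

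Second, I would amplify to a $0.99$ guarantee. Since the output of \textsc{FindPath} is always a genuine $s$--$t$ path (cycles are only shortcut, which cannot decrease below $\dist_G(s,t)$), the random variable $w(p)-\dist_G(s,t)$ is nonnegative, so Markov's inequality gives $\Pr[\,w(p)>(1+\varepsilon)\dist_G(s,t)\,]\le \frac{\E[w(p)]-\dist_G(s,t)}{\varepsilon\,\dist_G(s,t)}\le \tfrac12$. The proposed algorithm therefore runs \textsc{FindPath}$(G,s,t,\varepsilon')$ independently $R=\Theta(1/\varepsilon)$ times over fresh randomness and outputs the lightest of the returned paths; this fails with probability at most $2^{-R}\ll 0.01$ (a constant $R$ would already suffice by the Markov bound, but keeping $R=\Theta(1/\varepsilon)$ is harmless and is what yields the extra $\varepsilon^{-1}$ work factor stated in the theorem over Theorem~\ref{thm:parallel_min_cost_flow}). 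One also union-bounds over the internal failure events of the flow subroutine, which \textsc{FindPath} already drives down to probability $n^{-10}$ per call; over the $O(\log n)$ recursion levels and the $R$ outer repetitions this contributes at most $R\cdot\poly(\log n)\cdot n^{-10}$, negligible against $0.01$.

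Third, the resource bounds follow directly from Lemma~\ref{lem:work_depth_stpath} instantiated with accuracy $\varepsilon'$: one run has depth $(\varepsilon')^{-2}\poly(\log(n\Lambda)) = \varepsilon^{-2}\poly(\log(n\Lambda))$ and expected work $(\varepsilon')^{-2}m\poly(\log(n\Lambda)) = \varepsilon^{-2}m\poly(\log(n\Lambda))$, where I absorb the $\poly(\log n)$ coming from $\varepsilon'=\Theta(\varepsilon/\log n)$ into the $\poly\log$ and use $\Lambda=\max_e w(e)\le\poly(n)$ so that $\log\Lambda = O(\log n)$ (reconciling the two conventions for $\Lambda$ appearing in the cited statements). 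The $R=\Theta(1/\varepsilon)$ repetitions run in parallel, so the depth stays $\varepsilon^{-2}\poly(\log(n\Lambda))$ while the expected total work becomes $\varepsilon^{-3}m\poly(\log(n\Lambda))$, as claimed.

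The \textbf{main obstacle} is already encapsulated in the two lemmas I am invoking: Lemma~\ref{lem:work_depth_stpath} must certify that every step of \textsc{FindPath} parallelizes (the $O(\log n)$-deep recursion with $|V''|\le\lceil n/2\rceil$, connected components and spanning forests, pointer-doubling to compute the root distances $l(\cdot)$, expanding $p''$ into $p$ by doubling, and shortcutting cycles via a prefix-max), and Lemma~\ref{lem:correctness_stpath} must control the multiplicative length blow-up through the coupling with the flow-induced random walk (Lemma~\ref{lem:expected_length_of_random_walk}), where the delicate point is bounding the contribution of ``redundant'' edges so that each recursion level costs only a factor $(1+2\varepsilon')$. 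Granting those, the remaining content of this theorem — the rescaling $\varepsilon'=\Theta(\varepsilon/\log n)$ and the Markov boost — is routine, so I do not expect additional difficulty at this level.
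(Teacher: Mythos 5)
Your proposal is correct and takes essentially the same route as the paper: the paper likewise sets $\varepsilon'=\varepsilon/(20\log n)$, invokes Lemma~\ref{lem:work_depth_stpath} and Lemma~\ref{lem:correctness_stpath}, boosts the internal flow computations to success probability $1-n^{-10}$ and union-bounds over them, and repeats \textsc{FindPath} $\Theta(\varepsilon^{-1}\log n)$ times to convert the expectation bound into a $0.99$ guarantee, yielding the same $\varepsilon^{-3}m\poly(\log(n\Lambda))$ work. Your shifted-Markov observation (using $w(p)\geq \dist_G(s,t)$ so that each trial already succeeds with probability $\geq 1/2$) is a mild sharpening of the paper's repetition count, but does not change the argument or the stated bounds.
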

\begin{proof}
We invoke \textsc{FindPath}$(G,s,t,\varepsilon')$  $\Theta(\varepsilon^{-1}\log n)$ times, where  $\varepsilon'=\frac{\varepsilon}{20\log n}$.
The depth and work is shown by Lemma~\ref{lem:work_depth_stpath}.
As mentioned in the proof of Lemma~\ref{lem:work_depth_stpath}, we can repeat line~\ref{sta:compute_flow} $\Theta(\log n)$ times to boost the success probability of computing the flow to $1-n^{-10}$. 
By taking union bound, all the flow computation succeed with probability at least $0.999$.
Condition on success of all the flow computation, by Lemma~\ref{lem:correctness_stpath}, \textsc{FindPath}$(G,s,t,\varepsilon')$ outputs a path $p$ satisfies $\E[w(p)]\leq (1+2\varepsilon')^{\lceil\log n\rceil} \cdot\dist_G(s,t)\leq (1+\varepsilon/2)\cdot \dist_G(s,t)$.
By repeating $\Theta(\varepsilon^{-1}\log n)$ times, with probability at least $0.999$, we can find an $s-t$ path $p$ such that $w(p)\leq (1+\varepsilon)\dist_G(s,t)$.
\end{proof}



\section*{Acknowledgements}
We thank Aaron Bernstein, Yan Gu, Hossein Esfandiari, Jakub Łącki, Vahab Mirrokni and Ruosong Wang for very helpful discussions.
Part of this work was done while Peilin Zhong was an intern at Google New York.

\newpage

\addcontentsline{toc}{section}{References}
\bibliographystyle{alpha}
\bibliography{ref}

\newcommand{\etalchar}[1]{$^{#1}$}
\begin{thebibliography}{CHDKL19}

\bibitem[ABP18]{abp18}
Amir Abboud, Greg Bodwin, and Seth Pettie.
\newblock A hierarchy of lower bounds for sublinear additive spanners.
\newblock {\em SIAM Journal on Computing}, 47(6):2203--2236, 2018.

\bibitem[AHK12]{ahk12}
Sanjeev Arora, Elad Hazan, and Satyen Kale.
\newblock The multiplicative weights update method: a meta-algorithm and
  applications.
\newblock {\em Theory of Computing}, 8(1):121--164, 2012.

\bibitem[ANOY14]{ANOY14-parallel}
Alexandr Andoni, Aleksandar Nikolov, Krzysztof Onak, and Grigory Yaroslavtsev.
\newblock Parallel algorithms for geometric graph problems.
\newblock 2014.
\newblock Full version at \url{http://arxiv.org/abs/1401.0042}.

\bibitem[ASS{\etalchar{+}}18]{asswz18}
Alexandr Andoni, Zhao Song, Clifford Stein, Zhengyu Wang, and Peilin Zhong.
\newblock Parallel graph connectivity in log diameter rounds.
\newblock In {\em 2018 IEEE 59th Annual Symposium on Foundations of Computer
  Science (FOCS)}, pages 674--685. IEEE, 2018.

\bibitem[BDH18]{bdh18}
Soheil Behnezhad, Mahsa Derakhshan, and MohammadTaghi Hajiaghayi.
\newblock Brief announcement: Semi-mapreduce meets congested clique. corr,
  abs/1802.10297, 2018.
\newblock {\em arXiv preprint arXiv:1802.10297}, 2018.

\bibitem[Ber09]{b09}
Aaron Bernstein.
\newblock Fully dynamic (2+ $\varepsilon$) approximate all-pairs shortest paths
  with fast query and close to linear update time.
\newblock In {\em 2009 50th Annual IEEE Symposium on Foundations of Computer
  Science}, pages 693--702. IEEE, 2009.

\bibitem[BGST16]{bgst16}
Guy~E Blelloch, Yan Gu, Yihan Sun, and Kanat Tangwongsan.
\newblock Parallel shortest paths using radius stepping.
\newblock In {\em Proceedings of the 28th ACM Symposium on Parallelism in
  Algorithms and Architectures}, pages 443--454. ACM, 2016.

\bibitem[BI14]{bi14}
Art{\=u}rs Ba{\v{c}}kurs and Piotr Indyk.
\newblock Better embeddings for planar earth-mover distance over sparse sets.
\newblock In {\em Proceedings of the thirtieth annual symposium on
  Computational geometry}, page 280. ACM, 2014.

\bibitem[BKKL17]{bkkl17}
Ruben Becker, Andreas Karrenbauer, Sebastian Krinninger, and Christoph Lenzen.
\newblock Near-optimal approximate shortest paths and transshipment in
  distributed and streaming models.
\newblock In {\em 31st International Symposium on Distributed Computing (DISC
  2017)}. Schloss Dagstuhl-Leibniz-Zentrum fuer Informatik, 2017.

\bibitem[BKS13]{bks13}
Paul Beame, Paraschos Koutris, and Dan Suciu.
\newblock Communication steps for parallel query processing.
\newblock In {\em Proceedings of the 32nd ACM SIGMOD-SIGACT-SIGAI symposium on
  Principles of database systems}, pages 273--284. ACM, 2013.

\bibitem[Bou85]{b85}
Jean Bourgain.
\newblock On lipschitz embedding of finite metric spaces in hilbert space.
\newblock {\em Israel Journal of Mathematics}, 52(1-2):46--52, 1985.

\bibitem[BTZ98]{btz98}
Gerth~St{\o}lting Brodal, Jesper~Larsson Tr{\"a}ff, and Christos~D Zaroliagis.
\newblock A parallel priority queue with constant time operations.
\newblock {\em Journal of Parallel and Distributed Computing}, 49(1):4--21,
  1998.

\bibitem[CC96]{cc96}
Sun Chung and Anne Condon.
\newblock Parallel implementation of bouvka's minimum spanning tree algorithm.
\newblock In {\em Proceedings of International Conference on Parallel
  Processing}, pages 302--308. IEEE, 1996.

\bibitem[CHDKL19]{chdkl19}
Keren Censor-Hillel, Michal Dory, Janne~H. Korhonen, and Dean Leitersdorf.
\newblock Fast approximate shortest paths in the congested clique.
\newblock In {\em Proceedings of the 2019 ACM Symposium on Principles of
  Distributed Computing}, PODC '19, pages 74--83, New York, NY, USA, 2019. ACM.

\bibitem[CKM{\etalchar{+}}11]{ckmst11}
Paul Christiano, Jonathan~A Kelner, Aleksander Madry, Daniel~A Spielman, and
  Shang-Hua Teng.
\newblock Electrical flows, laplacian systems, and faster approximation of
  maximum flow in undirected graphs.
\newblock In {\em Proceedings of the forty-third annual ACM symposium on Theory
  of computing}, pages 273--282. ACM, 2011.

\bibitem[CMSV17]{cmsv17}
Michael~B Cohen, Aleksander M{\k{a}}dry, Piotr Sankowski, and Adrian Vladu.
\newblock Negative-weight shortest paths and unit capacity minimum cost flow in
  {\~o} (m 10/7 log w) time*.
\newblock In {\em Proceedings of the Twenty-Eighth Annual ACM-SIAM Symposium on
  Discrete Algorithms}, pages 752--771. SIAM, 2017.

\bibitem[Coh94]{c94}
Edith Cohen.
\newblock Polylog-time and near-linear work approximation scheme for undirected
  shortest paths.
\newblock In {\em Proceedings of the 26th Annual ACM SIGACT Symposium on Theory
  of Computing}, volume~26, pages 16--26, 1994.

\bibitem[Coh97]{c97}
Edith Cohen.
\newblock Using selective path-doubling for parallel shortest-path
  computations.
\newblock {\em Journal of Algorithms}, 22(1):30--56, 1997.

\bibitem[Coh00]{c00}
Edith Cohen.
\newblock Polylog-time and near-linear work approximation scheme for undirected
  shortest paths.
\newblock {\em Journal of the ACM (JACM)}, 47(1):132--166, 2000.

\bibitem[DG08]{dg08}
Jeffrey Dean and Sanjay Ghemawat.
\newblock Mapreduce: simplified data processing on large clusters.
\newblock {\em Communications of the ACM}, 51(1):107--113, 2008.

\bibitem[DN19]{dn19}
Michael Dinitz and Yasamin Nazari.
\newblock Brief announcement: Massively parallel approximate distance sketches.
\newblock In {\em 33rd International Symposium on Distributed Computing (DISC
  2019)}. Schloss Dagstuhl-Leibniz-Zentrum fuer Informatik, 2019.

\bibitem[DS08]{ds08}
Samuel~I Daitch and Daniel~A Spielman.
\newblock Faster approximate lossy generalized flow via interior point
  algorithms.
\newblock In {\em Proceedings of the fortieth annual ACM symposium on Theory of
  computing}, pages 451--460. ACM, 2008.

\bibitem[EN16]{en16}
Michael Elkin and Ofer Neiman.
\newblock Hopsets with constant hopbound, and applications to approximate
  shortest paths.
\newblock In {\em 2016 IEEE 57th Annual Symposium on Foundations of Computer
  Science (FOCS)}, pages 128--137. IEEE, 2016.

\bibitem[FL18]{fl18}
Stephan Friedrichs and Christoph Lenzen.
\newblock Parallel metric tree embedding based on an algebraic view on
  moore-bellman-ford.
\newblock {\em Journal of the ACM (JACM)}, 65(6):43, 2018.

\bibitem[FMS{\etalchar{+}}10]{FMSSS10-mad}
Jon Feldman, S.~Muthukrishnan, Anastasios Sidiropoulos, Clifford Stein, and
  Zoya Svitkina.
\newblock On distributing symmetric streaming computations.
\newblock {\em ACM Transactions on Algorithms}, 6(4), 2010.
\newblock Previously in SODA'08.

\bibitem[FN18]{sd18}
Sebastian Forster and Danupon Nanongkai.
\newblock A faster distributed single-source shortest paths algorithm.
\newblock In {\em 2018 IEEE 59th Annual Symposium on Foundations of Computer
  Science (FOCS)}, pages 686--697. IEEE, 2018.

\bibitem[FRT04]{frt04}
Jittat Fakcharoenphol, Satish Rao, and Kunal Talwar.
\newblock A tight bound on approximating arbitrary metrics by tree metrics.
\newblock {\em Journal of Computer and System Sciences}, 69(3):485--497, 2004.

\bibitem[FRW88]{frw88}
Faith~E Fich, Prabhakar Ragde, and Avi Wigderson.
\newblock Relations between concurrent-write models of parallel computation.
\newblock {\em SIAM Journal on Computing}, 17(3):606--627, 1988.

\bibitem[FT87]{ft87}
Michael~L Fredman and Robert~Endre Tarjan.
\newblock Fibonacci heaps and their uses in improved network optimization
  algorithms.
\newblock {\em Journal of the ACM (JACM)}, 34(3):596--615, 1987.

\bibitem[GSZ11]{gsz11}
Michael~T Goodrich, Nodari Sitchinava, and Qin Zhang.
\newblock Sorting, searching, and simulation in the mapreduce framework.
\newblock In {\em ISAAC}, volume 7074, pages 374--383. Springer, 2011.

\bibitem[HKN14]{hkn14}
Monika Henzinger, Sebastian Krinninger, and Danupon Nanongkai.
\newblock Decremental single-source shortest paths on undirected graphs in
  near-linear total update time.
\newblock In {\em 2014 IEEE 55th Annual Symposium on Foundations of Computer
  Science}, 2014.

\bibitem[HKN16]{hkn16}
Monika Henzinger, Sebastian Krinninger, and Danupon Nanongkai.
\newblock An almost-tight distributed algorithm for computing single-source
  shortest paths. 2016.
\newblock STOC, 2016.

\bibitem[HKN19]{hkn19}
Monika Henzinger, Sebastian Krinninger, and Danupon Nanongkai.
\newblock A deterministic almost-tight distributed algorithm for approximating
  single-source shortest paths.
\newblock {\em SIAM Journal on Computing}, (0):STOC16--98, 2019.

\bibitem[HKTZ15]{hktz15}
Thomas~Dueholm Hansen, Haim Kaplan, Robert~E Tarjan, and Uri Zwick.
\newblock Hollow heaps.
\newblock In {\em International Colloquium on Automata, Languages, and
  Programming}, pages 689--700. Springer, 2015.

\bibitem[HP19]{hp19}
Shang-En Huang and Seth Pettie.
\newblock Thorup--zwick emulators are universally optimal hopsets.
\newblock {\em Information Processing Letters}, 142:9--13, 2019.

\bibitem[IBY{\etalchar{+}}07]{ibybf07}
Michael Isard, Mihai Budiu, Yuan Yu, Andrew Birrell, and Dennis Fetterly.
\newblock Dryad: distributed data-parallel programs from sequential building
  blocks.
\newblock In {\em ACM SIGOPS operating systems review}, volume~41, pages
  59--72. ACM, 2007.

\bibitem[IT03]{it03}
Piotr Indyk and Nitin Thaper.
\newblock Fast image retrieval via embeddings.
\newblock In {\em Workshop on Statistical and Computational Theories of Vision
  (at ICCV)}, 2003.

\bibitem[KLOS14]{klos14}
Jonathan~A Kelner, Yin~Tat Lee, Lorenzo Orecchia, and Aaron Sidford.
\newblock An almost-linear-time algorithm for approximate max flow in
  undirected graphs, and its multicommodity generalizations.
\newblock In {\em Proceedings of the twenty-fifth annual ACM-SIAM symposium on
  Discrete algorithms}, pages 217--226. SIAM, 2014.

\bibitem[KNP19]{knp19}
Andrey~Boris Khesin, Aleksandar Nikolov, and Dmitry Paramonov.
\newblock Preconditioning for the geometric transportation problem.
\newblock {\em arXiv preprint arXiv:1902.08384}, 2019.

\bibitem[KNPR15]{knpr15}
Hartmut Klauck, Danupon Nanongkai, Gopal Pandurangan, and Peter Robinson.
\newblock Distributed computation of large-scale graph problems.
\newblock In {\em Proceedings of the twenty-sixth annual ACM-SIAM symposium on
  Discrete algorithms}, pages 391--410. Society for Industrial and Applied
  Mathematics, 2015.

\bibitem[KNZ14]{knz14}
R.~Krauthgamer, H.~Nguyen, and T.~Zondiner.
\newblock Preserving terminal distances using minors.
\newblock {\em SIAM Journal on Discrete Mathematics}, 28(1):127--141, 2014.

\bibitem[KS92]{ks92}
Philip~N Klein and Sairam Sairam.
\newblock A parallel randomized approximation scheme for shortest paths.
\newblock In {\em Proceedings of the 24th Annual ACM SIGACT Symposium on Theory
  of Computing}, volume~92, pages 750--758, 1992.

\bibitem[KS97]{ks97}
Philip~N Klein and Sairam Subramanian.
\newblock A randomized parallel algorithm for single-source shortest paths.
\newblock {\em Journal of Algorithms}, 25(2):205--220, 1997.

\bibitem[KSV10]{ksv10}
Howard Karloff, Siddharth Suri, and Sergei Vassilvitskii.
\newblock A model of computation for mapreduce.
\newblock In {\em Proceedings of the twenty-first annual ACM-SIAM symposium on
  Discrete Algorithms}, pages 938--948. Society for Industrial and Applied
  Mathematics, 2010.

\bibitem[LM10]{lm10}
F~Thomson Leighton and Ankur Moitra.
\newblock Extensions and limits to vertex sparsification.
\newblock In {\em Proceedings of the forty-second ACM symposium on Theory of
  computing}, pages 47--56. ACM, 2010.

\bibitem[LPSPP05]{lpspp05}
Zvi Lotker, Boaz Patt-Shamir, Elan Pavlov, and David Peleg.
\newblock Minimum-weight spanning tree construction in o (log log n)
  communication rounds.
\newblock {\em SIAM Journal on Computing}, 35(1):120--131, 2005.

\bibitem[LS14]{ls14}
Yin~Tat Lee and Aaron Sidford.
\newblock Path finding methods for linear programming: Solving linear programs
  in $\tilde{O} (\sqrt{\text{rank}})$ iterations and faster algorithms for
  maximum flow.
\newblock In {\em 2014 IEEE 55th Annual Symposium on Foundations of Computer
  Science}, pages 424--433. IEEE, 2014.

\bibitem[Mad13]{mad13}
Aleksander Madry.
\newblock Navigating central path with electrical flows: From flows to
  matchings, and back.
\newblock In {\em 2013 IEEE 54th Annual Symposium on Foundations of Computer
  Science}, pages 253--262. IEEE, 2013.

\bibitem[Moi09]{m09}
Ankur Moitra.
\newblock Approximation algorithms for multicommodity-type problems with
  guarantees independent of the graph size.
\newblock In {\em 2009 50th Annual IEEE Symposium on Foundations of Computer
  Science}, pages 3--12. IEEE, 2009.

\bibitem[MPVX15]{mpvx15}
Gary~L Miller, Richard Peng, Adrian Vladu, and Shen~Chen Xu.
\newblock Improved parallel algorithms for spanners and hopsets.
\newblock In {\em Proceedings of the 27th ACM symposium on Parallelism in
  Algorithms and Architectures}, pages 192--201. ACM, 2015.

\bibitem[MPX13]{mpx13}
Gary~L Miller, Richard Peng, and Shen~Chen Xu.
\newblock Parallel graph decompositions using random shifts.
\newblock In {\em Proceedings of the twenty-fifth annual ACM symposium on
  Parallelism in algorithms and architectures}, pages 196--203. ACM, 2013.

\bibitem[She13]{s13}
Jonah Sherman.
\newblock Nearly maximum flows in nearly linear time.
\newblock In {\em 2013 IEEE 54th Annual Symposium on Foundations of Computer
  Science}, pages 263--269. IEEE, 2013.

\bibitem[She17a]{s17b}
Jonah Sherman.
\newblock Area-convexity, $l_{\infty}$ regularization, and undirected
  multicommodity flow.
\newblock In {\em Proceedings of the 49th Annual ACM SIGACT Symposium on Theory
  of Computing}, pages 452--460. ACM, 2017.

\bibitem[She17b]{s17}
Jonah Sherman.
\newblock Generalized preconditioning and undirected minimum-cost flow.
\newblock In {\em Proceedings of the Twenty-Eighth Annual ACM-SIAM Symposium on
  Discrete Algorithms}, pages 772--780. SIAM, 2017.

\bibitem[Spe97]{s97}
Thomas~H Spencer.
\newblock Time-work tradeoffs for parallel algorithms.
\newblock {\em Journal of the ACM (JACM)}, 44(5):742--778, 1997.

\bibitem[SS99]{ss99}
Hanmao Shi and Thomas~H Spencer.
\newblock Time--work tradeoffs of the single-source shortest paths problem.
\newblock {\em Journal of algorithms}, 30(1):19--32, 1999.

\bibitem[SV82]{sv82}
Yossi Shiloach and Uzi Vishkin.
\newblock An o (logn) parallel connectivity algorithm.
\newblock {\em Journal of Algorithms}, 3(1):57--67, 1982.

\bibitem[Tho99]{t99}
Mikkel Thorup.
\newblock Undirected single-source shortest paths with positive integer weights
  in linear time.
\newblock {\em Journal of the ACM (JACM)}, 46(3):362--394, 1999.

\bibitem[TV84]{tv84}
Robert~Endre Tarjan and Uzi Vishkin.
\newblock Finding biconnected componemts and computing tree functions in
  logarithmic parallel time.
\newblock In {\em 25th Annual Symposium onFoundations of Computer Science,
  1984.}, pages 12--20. IEEE, 1984.

\bibitem[TZ05]{tz05}
Mikkel Thorup and Uri Zwick.
\newblock Approximate distance oracles.
\newblock {\em Journal of the ACM (JACM)}, 52(1):1--24, 2005.

\bibitem[TZ06]{tz06}
Mikkel Thorup and Uri Zwick.
\newblock Spanners and emulators with sublinear distance errors.
\newblock In {\em Proceedings of the seventeenth annual ACM-SIAM symposium on
  Discrete algorithm}, pages 802--809. Society for Industrial and Applied
  Mathematics, 2006.

\bibitem[Vis83]{v83}
Uzi Vishkin.
\newblock Implementation of simultaneous memory address access in models that
  forbid it.
\newblock {\em Journal of algorithms}, 4(1):45--50, 1983.

\bibitem[Wil12]{w12}
Virginia~Vassilevska Williams.
\newblock Multiplying matrices faster than coppersmith-winograd.
\newblock In {\em Proceedings of the forty-fourth annual ACM symposium on
  Theory of computing (STOC)}, pages 887--898. ACM, 2012.

\bibitem[ZCF{\etalchar{+}}10]{zcfss10}
Matei Zaharia, Mosharaf Chowdhury, Michael~J Franklin, Scott Shenker, and Ion
  Stoica.
\newblock Spark: Cluster computing with working sets.
\newblock {\em HotCloud}, 10(10-10):95, 2010.

\end{thebibliography}
\newpage

\appendix

\section{The Necessity of $2$ Types of Edges in the Subemulator}\label{sec:necessity_edges_subemulator}
We show that both types of edges constructed by line~\ref{sta:original_edges} and line~\ref{sta:ball_edges} in Algorithm~\ref{alg:edge_construction} are necessary for the construction of subemulator.
If we only contain the edges constructed by line~\ref{sta:original_edges} and miss the edges constructed by line~\ref{sta:ball_edges}, Figure~\ref{fig:only_original_edges} gives an example that the constructed graph can not be a subemulator.
\begin{figure}[t!]
	\centering
	\includegraphics[width=0.8\textwidth]{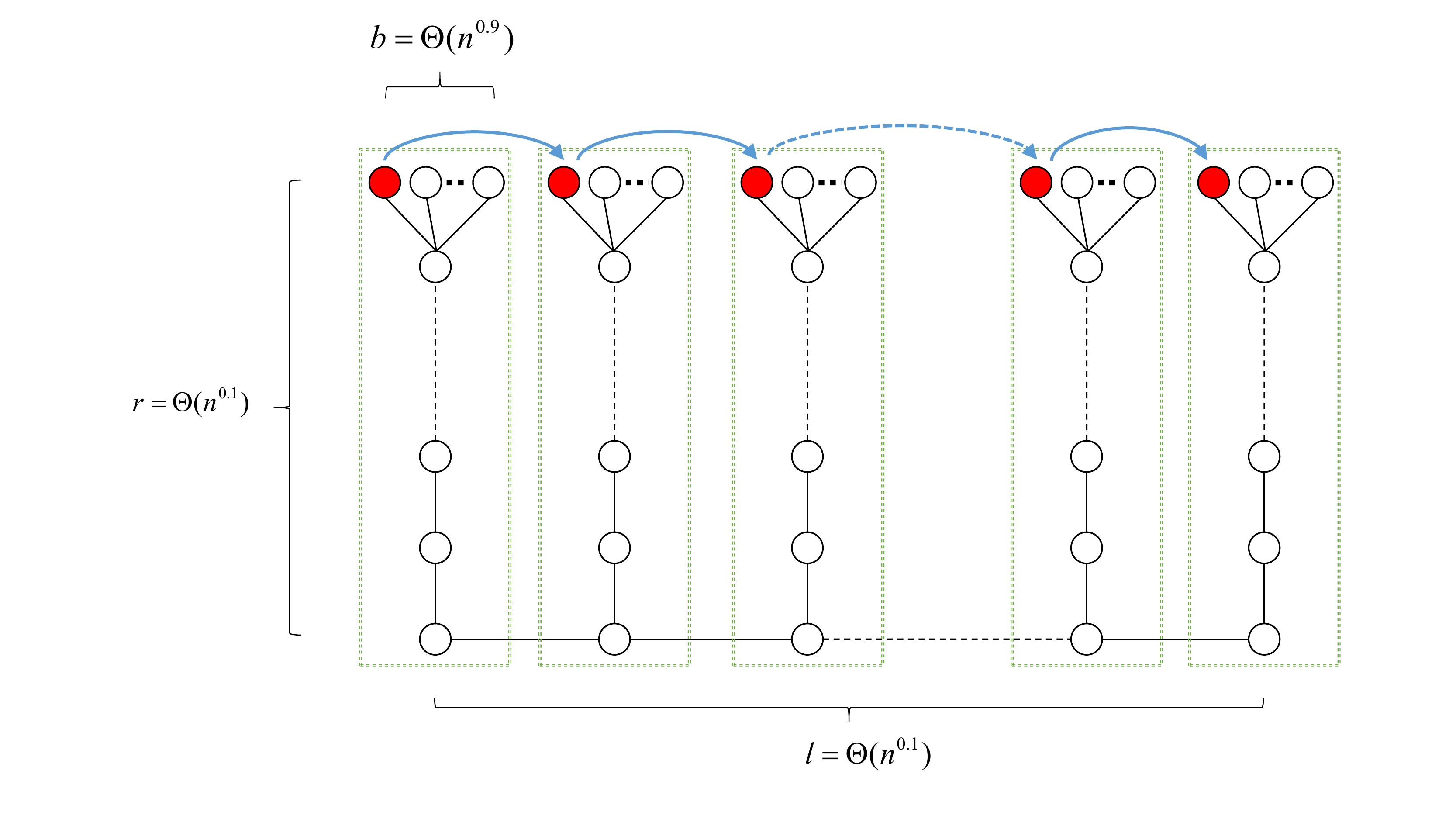}
	\caption{ \small The graph is unweighted and is a tree constructed by following steps. 
	We first create a path with length $l=\Theta(n^{0.1})$.
	For each vertex on the path, we create a brunch starting with a path with length $r=\Theta(n^{0.1})$ and ending with a star with $b=\Theta(n^{0.9})$ vertices.
	If we sample each vertex (solid red vertex) to be in the subemulator with probability $\log(n)/b$, with high probability, sampled vertices can only appear in stars and each brunch must have at least one sampled vertex.
	We condition on this event.
	It is clear that each vertex has at least one $(b+r)$-closest neighbor which is a sampled vertex, and that sampled vertex must be in the same dashed green box.
	If we only contain the edges constructed by line~\ref{sta:original_edges} of Algorithm~\ref{alg:edge_construction}, the result graph must be a length-$l$ path (represented by blue arcs) where each edge corresponds to an edge crossing two dashed green box above and has weight $2r+1$.
	Thus the diameter of the result graph is $l(2r+1)=\Theta(n^{0.2})$.
	However, the diameter of the original graph is $2r+l=\Theta(n^{0.1})$ which implies that the result graph is not a good subemulator.
 }\label{fig:only_original_edges}
\end{figure}
If we only contain the edges constructed by line~\ref{sta:ball_edges} and miss the edges constructed by line~\ref{sta:original_edges}, Figure~\ref{fig:only_ball_edges} gives an example that the constructed graph can not be a subemulator.
\begin{figure}[t!]
	\centering
	\includegraphics[width=0.8\textwidth]{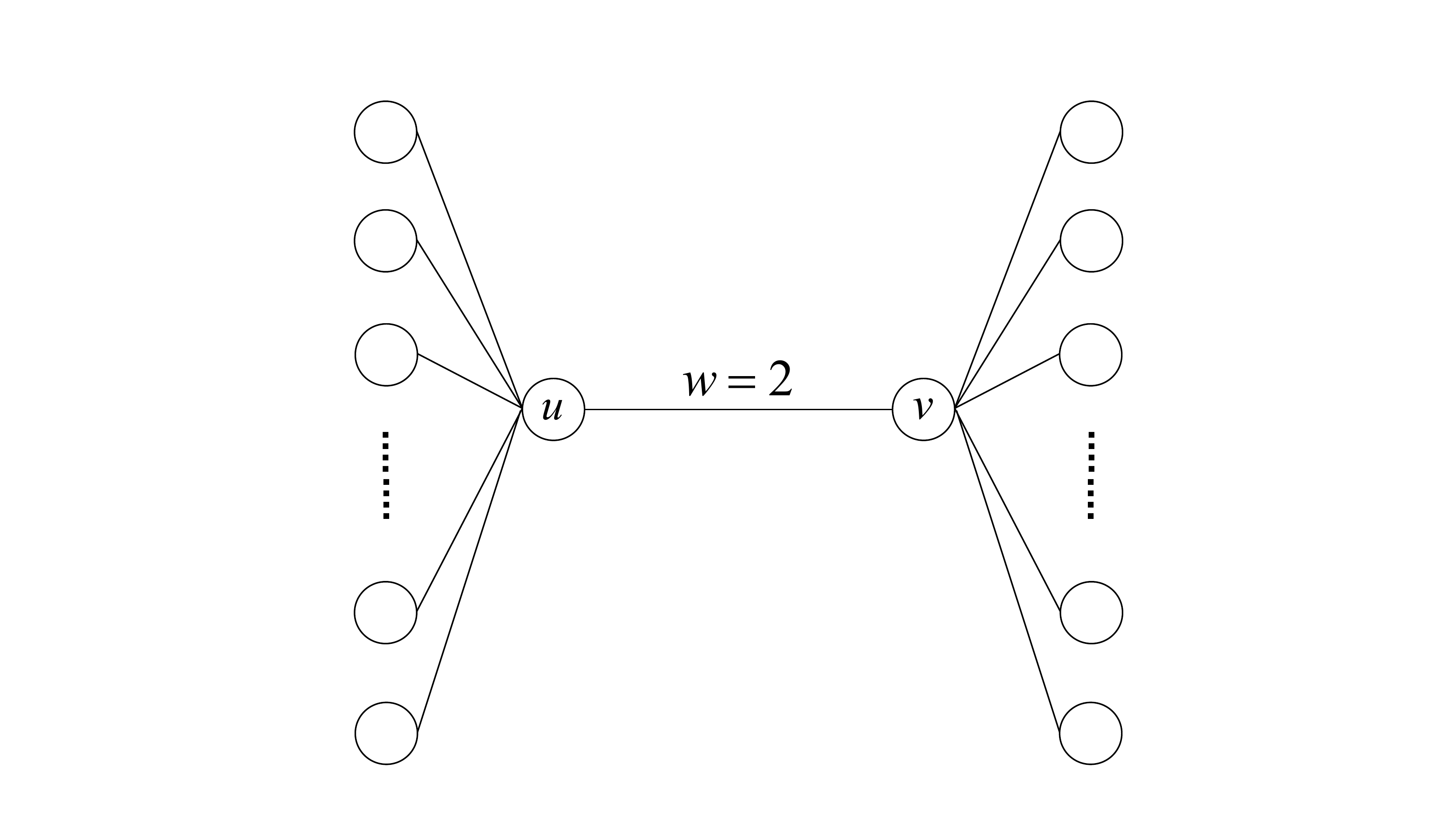}
	\caption{ \small The graph contains two stars connecting by an edge with weight $2$.
		Each star has $n/2$ vertices.
		One star has center $u$ and another has center $v$.
		Except the edge between $u,v$, all other edges have weights $1$.
		For $b<n/2$, neither $v$ is a $b$-closest neighbor of any vertex in the star with center $u$ nor $u$ is a $b$-closest neighbor of any vertex in the star with center $v$. 
		Thus, if we only contain the edges constructed by line~\ref{sta:ball_edges} of Algorithm~\ref{alg:edge_construction}, the result graph is disconnected which cannot be a subemulator.
	}\label{fig:only_ball_edges}
\end{figure}

\section{Proof of Lemma~\ref{lem:subemulator_vertex}} \label{sec:proof_of_subemulator_vertex}

\begin{proof}
	It is implied by \cite{asswz18}. For the completeness, we prove the lemma in the following.
	
	The correctness follows from line~\ref{sta:check_each_vertex} of Algorithm~\ref{alg:leader_selection}, i.e., $\forall v\in V,\B_{G,b}(v)\cap V'\not= \emptyset$.
	
	Let us consider the expected size of $V'$.
	We have 
	\begin{align*}
	\E[|V'|] = \E[|S|] + \E[|\{v\in V\mid \B_{G,b}(v)\cap S=\emptyset\}|].
	\end{align*}
	
	Consider the case when $1/2<50\log(n)/b$. 
	We have $\E[|S|]=1/2\cdot |V|$.
	Notice that $\B_{G,b}(v)\cap S=\emptyset$ implies that neither $v\in S$ nor $u\in S$, where $u\not=v$ is an arbitrary vertex in $\B_{G,b}(v)$.
	Thus, we have $\Pr[ \B_{G,b}(v)\cap S=\emptyset]\leq 1/4$, and it implies that $\E[|V'|]\leq 1/2\cdot |V|+ 1/4\cdot |V|=3/4\cdot n$.
	
	Consider the case when $1/2>50\log(n)/b$. 
	We have $\E[|S|]=50\log (n)/b\cdot |V|$.
	Since $\E[|\B_{G,b}(v)\cap S|]=b\cdot 50\log(n)/b=50\log(n)$,
	by Chernoff bound, we have
	\begin{align*}
	\Pr[\B_{G,b}(v)\cap S=\emptyset]\leq \Pr[ |\B_{G,b}(v)\cap S| \leq 25\log(n)]\leq e^{-50\log(n)/8}\leq 1/n^2.
	\end{align*}
	By union bound,
	\begin{align*}
	\Pr[\exists v,\B_{G,b}(v)\cap S=\emptyset]=1/n.
	\end{align*}
	Thus, $\E[|\{v\in V\mid \B_{G,b}(v)\cap S=\emptyset\}|]\leq 1/n\cdot n = 1$.
	We can conclude that $\E[|V'|]\leq 50\log(n)/b\cdot n + 1\leq 75\log(n)/b\cdot n$.
\end{proof}

\section{Proof of Lemma~\ref{lem:mwu_feas_prob}} \label{sec:proof_of_mwu_feas_prob}
\begin{proof}
	Let us firstly consider the case that \textsc{MWU}$(P,A,W,b,s,\varepsilon,\kappa)$ does not return FAIL.
	By line~\ref{sta:return_xprime}, we have $\|PBp_t\|_1\leq \frac{\varepsilon}{2\kappa}$, i.e.,
	\begin{align*}
	\left\|\sum_{i=1}^m \left( \frac{(PAW^{-1})_i}{\|PAW^{-1}\|_{1\rightarrow 1}}\cdot (p_t^+(i)-p_t^-(i))-\frac{1}{s}\cdot \frac{Pb}{\|Pb\|_1}\cdot(p_t^+(i)+p_t^-(i))\right)\right\|_1\leq \frac{\varepsilon}{2\kappa}.
	\end{align*}
	Since $\sum_{i=1}^m (p_t^+(i)+p_t^-(i))=1$ and $x'_i=p_t^+(i)-p_t^-(i)$, we have:
	\begin{align*}
	\left\|\frac{PAW^{-1}}{\|PAW\|_{1\rightarrow 1}}x'-\frac{1}{s}\cdot \frac{Pb}{\|Pb\|_1}\right\|_1\leq \frac{\varepsilon}{2\kappa}.
	\end{align*}
	Furthermore, because $\forall i\in[m],p_t^+(i),p_t^-(i)\geq 0$, we have $\|x'\|_1\leq 1$.
	Thus, $x'$ satisfies Equation~\eqref{eq:feas2}.
	
	Let us consider the case when \textsc{MWU}$(P,A,W,b,s,\varepsilon,\kappa)$ outputs FAIL.
	For $i\in[m],t\in[T]$, we have
	\begin{align*}
	|\phi_t^+(i)|&\leq \|y_t\|_{\infty}\|PB_i\|_1/2\\
	&=\left\|\frac{(PAW^{-1})_i}{\|PAW^{-1}\|_{1\rightarrow 1}}-\frac{1}{s}\cdot \frac{Pb}{\|Pb\|_1}\cdot \right\|_1/2\\
	&\leq \left\|\frac{(PAW^{-1})_i}{\|PAW^{-1}\|_{1\rightarrow 1}}\right\|_1/2+\left\|\frac{1}{s}\cdot \frac{Pb}{\|Pb\|_1}\cdot \right\|_1/2\\
	&\leq 1,
	\end{align*}
	where the first step follows from H\"oder's inequality, the second step follows from the construction of $B$, the third step follows from triangle inequality, and the last step follows from $\|(PAW^{-1})_i\|_1\leq \|PAW^{-1}\|_{1\rightarrow 1}$, $\|Pb/\|Pb\|_1\|_1=1$ and $s\geq 1$.
	Similarly, we also have $|\phi_t^-(i)|\leq 1$.
	By Theorem 2.1 of \cite{ahk12}:
	\begin{align*}
	&\forall j\in[m],\sum_{t=1}^T\sum_{i=1}^m (p_t^+(i)\phi_t^+(i) + p_t^-(i)\phi_t^-(i)) \leq \sum_{t=1}^T \phi_t^+ (j) + \eta \sum_{t=1}^T |\phi_t^+ (j)| + \frac{\ln(2m)}{\eta},\\
	&\forall j\in[m],\sum_{t=1}^T\sum_{i=1}^m (p_t^+(i)\phi_t^+(i) + p_t^-(i)\phi_t^-(i)) \leq \sum_{t=1}^T \phi_t^- (j) + \eta \sum_{t=1}^T |\phi_t^- (j)| + \frac{\ln(2m)}{\eta}.
	\end{align*}
	By the construction of $p_t^+,p_t^-,\phi_t^+,\phi_t^-$,
	\begin{align*}
	\sum_{t=1}^T\sum_{i=1}^m (p_t^+(i)\phi_t^+(i) + p_t^-(i)\phi_t^-(i)) = \sum_{t=1}^T y_t^\top PBp_t/2 > T\cdot \frac{\varepsilon}{4\kappa},
	\end{align*}
	where the inequality follows from $\forall l\in[r],\left(y_t\right)_l=\sgn\left((PBp_t)_i\right)$ and thus $y_t^\top PBp_t = \|PBp_t\|_1>\frac{\varepsilon}{2\kappa}$.
	Thus,
	\begin{align}
	&\forall j\in[m], T\cdot \frac{\varepsilon}{4\kappa} < \sum_{t=1}^T \phi_t^+ (j) + \eta T + \frac{\ln(2m)}{\eta},\label{eq:mwu_mid_1}\\
	&\forall j\in[m],T\cdot \frac{\varepsilon}{4\kappa} < \sum_{t=1}^T \phi_t^- (j) + \eta T + \frac{\ln(2m)}{\eta}.\label{eq:mwu_mid_2}
	\end{align}
	Let $\bar{y}=\frac{1}{T}\sum_{t=1}^T y_t$, then
	\begin{align*}
	&\forall j\in[m],\sum_{t=1}^T \phi_t^+(j) = T\cdot \bar{y}^\top \left(\frac{(PAW^{-1})_j}{\|PAW^{-1}\|_{1\rightarrow 1}} - \frac{1}{s} \cdot \frac{Pb}{\|Pb\|_1}\right),\\
	&\forall j\in[m],\sum_{t=1}^T \phi_t^-(j) = T\cdot \bar{y}^\top \left(-\frac{(PAW^{-1})_j}{\|PAW^{-1}\|_{1\rightarrow 1}} - \frac{1}{s} \cdot \frac{Pb}{\|Pb\|_1}\right).
	\end{align*}
	Recall that $\eta=\frac{\varepsilon}{8\kappa},T=\frac{64\kappa^2\ln(2m)}{\varepsilon^2}$. Thus, together with Equation~\eqref{eq:mwu_mid_1} and Equation~\eqref{eq:mwu_mid_2}, we have:
	
	\begin{align*}
	\forall j\in[m], 0< \bar{y}^\top \left(\frac{(PAW^{-1})_j}{\|PAW^{-1}\|_{1\rightarrow 1}} - \frac{1}{s} \cdot \frac{Pb}{\|Pb\|_1}\right), \\
	\forall j\in[m], 0< \bar{y}^\top \left(-\frac{(PAW^{-1})_j}{\|PAW^{-1}\|_{1\rightarrow 1}} - \frac{1}{s} \cdot \frac{Pb}{\|Pb\|_1}\right).
	\end{align*}
	For any $x'\in\mathbb{R}^m$ with $\|x'\|_1\leq 1$, we can always find $x'^+,x'^-\in\mathbb{R}^m$ such that $x'^+,x'^-\geq 0, x'=x'^+-x'^-$, and $\sum_{i=1}^m (x'^+_i+x'^-_i)=1$.
	If $x'$ satisfies Equation~\eqref{eq:feas2_ex}, then
	\begin{align*}
	0 & =\bar{y}^\top \left(\frac{PAW^{-1}}{\|PAW^{-1}\|_{1\rightarrow 1}}(x'^+-x'^-)-\frac{1}{s}\cdot \frac{Pb}{\|Pb\|_1}\right) \\
	& = \sum_{j=1}^m\left(\bar{y}^\top \left(\frac{(PAW^{-1})_j}{\|PAW^{-1}\|_{1\rightarrow 1}} - \frac{1}{s} \cdot \frac{Pb}{\|Pb\|_1}\right)\cdot x'^+_j + \bar{y}^\top \left(-\frac{(PAW^{-1})_j}{\|PAW^{-1}\|_{1\rightarrow 1}} - \frac{1}{s} \cdot \frac{Pb}{\|Pb\|_1}\right)\cdot x'^-_j\right)\\
	& > 0
	\end{align*}
	which leads to a contradiction.
\end{proof}

\section{Parallel Computation for $\B^{\circ}_b(v)$}\label{sec:computation_of_ball}
The PRAM algorithm is described as the following.
\begin{enumerate}
	\item For $v\in V$, initialize a list $L^{(0)}(v)$ containing $b$ closest neighbors (including $v$ itself) of $v$. 
	Let $t\gets \lceil\log n\rceil$.
	If $v$ has less than $b$ neighbors, let $L^{(0)}(v)$ be all of them. 
	For $u\in L^{(0)}(v)$, compute $\dist^{(1)}(v,u)$.
	\item For $i=1\rightarrow t$:
	\begin{enumerate}
		\item For  $v,u\in V$, initialize $\dist^{(2^i)}(v,u)\gets \infty$.
		\item Assign $b^2$ processors for each $v\in V$. Each processor reads a vertex $x\in L^{(i-1)}(v)$ and then reads a vertex $u\in L^{(i-1)}(x)$. 
		If $\dist^{(2^{i-1})}(v,x) + \dist^{(2^{i-1})}(x,u) < \dist^{(2^i)}(v,u)$, update $\dist^{(2^i)}(v,u)\gets \dist^{(2^{i-1})}(v,x) + \dist^{(2^{i-1})}(x,u)$.
		\item For $v\in V$, add $u$ to list $L^{(i)}(v)$ if $\dist^{(2^i)}(v,u)$ is one of the $b$ smallest values among $\dist^{(2^i)}(v,x)$ for $x\in V$. If there is a tie, take the vertex with a smaller label.
	\end{enumerate}
	\item For $v\in V$, suppose $u\in L^{(t)}(v)$ has the largest $\dist^{(2^{t})}(v,u)$. Let 
	\begin{align*}
	&r_b(v)\gets\dist^{(2^t)}(v,u),\\
	&\B^{\circ}_b(v)\gets \{x\in L^{(t)}(v) | \dist^{(2^{t})}(v,x) < r_b(v)\},\\
	&\forall x\in \B_b^{\circ}(v),\dist(v,x)\gets \dist^{(2^{t})}(v,x).
	\end{align*}
\end{enumerate}
By induction, it is easy to show that after $i$-th rounds, $L^{(i)}(v)$ contains vertices with $b$ smallest $2^i$-hop distance to $v$. Thus, after $t=\lceil\log n\rceil$ rounds, $L^{(t)}(v)$ contains $b$ closest vertices to $v$.

\section{Parallel Embedding into $\ell_1$} \label{sec:parallel_bourgain}
Consider a connected $n$-vertex $m$-edge undirected weighted graph $G=(V,E,w)$ with $w:E\rightarrow Z_{\geq 0}$. 
By Theorem~\ref{thm:parallel_low_hop_emulator}, we can use expected $\wt{O}(m)$ work and $\log^{O(1)}(n)$ depth to compute a low hop emulator $G'=(V,E',w')$, i.e., $\forall u,v \in V$,
\begin{align*}
\dist_G(u,v)\leq \dist_{G'}(u,v)\leq \log^{O(1)}(n) \cdot \dist_G(u,v),
\end{align*}
and the hop diameter of $G'$ is $O(\log\log n)$.
Obviously, the diameter of $G'$ is at most $\diam(G)\cdot \log^{O(1)}(n)$.
Furthermore, $\E[|E'|]\leq \wt{O}(n)$.
According to our construction of $G'$, since all weights in $G$ are integers, weights in $G'$ are also integers.

It suffices to embed $G'$ into $\ell_1$.
We apply Bourgain's embedding:
\begin{enumerate}
\item $t\gets \Theta(\log n) $.
\item For $i=1\rightarrow \lceil\log n\rceil,j = 1\rightarrow t$:
\begin{enumerate}
	\item Choose a set $S_{i,j}$ by sampling each $v\in V$ with probability $2^{-i}$.
	\item $\forall v\in V$, set the $((i-1)\cdot t + j)$-th coordinate of $\varphi(v)$ as $\dist_{G'}(S_{i,j},v)$, i.e., 
	\begin{align*}
	\varphi(v)_{(i-1)\cdot t + j}\gets \dist_{G'}(S_{i,j},v).
	\end{align*}\label{it:bellmanford}.
\end{enumerate}
\end{enumerate}
It is easy to see that for every $v\in V$ the coordinates of $\varphi(v)$ are non-negative integers and $\|\varphi(v)\|_{\infty}\leq \diam(G')\leq \diam(G)\cdot \log^{O(1)}(n)$.
The dimension of $\varphi(v)$ is $t\cdot \lceil\log n\rceil  \leq O(\log^2 n)$.
By Bourgain's theorem~\cite{b85}, with probability at least $0.999$, $\forall u,v \in V$,
\begin{align*}
\dist_{G'}(u,v)\leq \|\varphi(u)-\varphi(v)\|_1\leq \log^{O(1)}(n)\cdot \dist_{G'}(u,v),
\end{align*}
which implies that $\forall u,v\in V$,
\begin{align*}
\dist_G(u,v)\leq \|\varphi(u)-\varphi(v)\|_1\leq \log^{O(1)}(n)\cdot \dist_G(u,v).
\end{align*}
Now consider the depth and the work of the above procedure.
Step~\ref{it:bellmanford} can be implemented by Bellman-Ford algorithm. 
In particular, we add a super node which connects to every vertex in $S_{i,j}$ with weight zero.
By $h$ Bellman-Ford iterations, we can compute $\dist_{G'}^{(h)}(S_{i,j}, v)$ for every $v\in V$.
Since the hop diameter of $G'$ is $O(\log \log n)$, we only need $O(\log\log n)$ Bellman-Ford iterations in step~\ref{it:bellmanford}.
Thus, the depth of the above procedure is at most $\log^{O(1)}(n)$, and the work is at most $\wt{O}(|E'|)$.
Together with the computation of $G'$, the overall depth is $\log^{O(1)} n$, and the total work is at most $\wt{O}(m)$.

\section{Parallel Low Diameter Decomposition}\label{sec:parallel_low_diam_decomp}
By Theorem~\ref{thm:parallel_low_hop_emulator}, we can use expected $\wt{O}(m)$ work and $\log^{O(1)}(n)$ depth to compute a low hop emulator $G'=(V,E',w')$, i.e., $\forall u,v \in V$,
\begin{align*}
\dist_G(u,v)\leq \dist_{G'}(u,v)\leq \log^{O(1)}(n) \cdot \dist_G(u,v),
\end{align*}
and the hop diameter of $G'$ is $O(\log\log n)$.
Furthermore, $\E[|E'|]\leq \wt{O}(n)$.

It suffices to run low diameter decomposition \cite{mpx13} on $G'$:
\begin{enumerate}
\item For $v\in V$, draw $\delta_v$ independently from the exponential distribution with CDF $1-e^{-\beta x}$.
\item Compute the subset $C_u$ by assigning each $v$ to $u$ which minimizes $\dist_{G'}(v,u)-\delta_u$.
\item Remove empty subsets $C_u$ and return the remaining subsets $\{C_u\}$.
\end{enumerate}
By~\cite{mpx13}, $V$ will be partitioned into clusters such that
\begin{itemize}
\item for any two vertices $u,v$ from the same cluster, $\dist_{G'}(u,v)\leq O(\beta^{-1}\log n)$,
\item for any two vertices $u,v$, the probability that $u,v$ are not in the same cluster is at most $O(\beta\cdot \dist_{G'}(u,v))$.
\end{itemize}
Thus, it implies that the partition is also good for $G$:
\begin{itemize}
	\item for any two vertices $u,v$ from the same cluster, $\dist_{G}(u,v)\leq \beta^{-1} \log^{O(1)}(n)$,
	\item for any two vertices $u,v$, the probability that $u,v$ are not in the same cluster is at most $\beta\cdot \dist_{G}(u,v)\cdot \log^{O(1)}(n)$.
\end{itemize}
To implement the second step of the algorithm, we can add a super node which connects to every vertex $v$ with weight $ \max_{u\in V} \delta_u-\delta_v$.
Then we can use Bellman-Ford to compute the single source shortest path from the super node. 
Since the hop diameter of $G'$ is at most $O(\log\log n)$, the number of Bellman-Ford iterations is at most $O(\log\log n)$. 
Thus, the overall work is at most $\wt{O}(m)$ and the depth is at most $\log^{O(1)}(n)$.




\end{document}